\newcommand{\thickhline}{%
    \noalign {\ifnum 0=`}\fi \hrule height 1.5pt
    \futurelet \reserved@a \@xhline
}
\newcolumntype{"}{@{\hskip\tabcolsep\vrule width 1pt\hskip\tabcolsep}}
\newcommand{\Z}{\mathbb{Z}}
\newcommand{\R}{\mathbb{R}}
\newcommand{\Oh}[1]{O(#1)}
\newcommand{\etal}{et~al.\xspace}
\newcommand{\oh}[1]{o\left(#1\right)}
\newcommand{\norm}[1]{\left\lVert#1\right\rVert}
\newcommand{\eps}{\varepsilon}
\newcommand{\I}{\mathcal{I}}
\newcommand{\closedinterval}[1]{\left[#1\right]}
\newcommand*{\shortautoref}[1]{%
  \begingroup
    \def\sectionautorefname{Sec.}%
    \def\theoremautorefname{Thm.}%
    \def\corollaryautorefname{Cor.}%
    \def\lemmaautorefname{Lem.}%
    \def\propositionautorefname{Prop.}%
    \autoref{#1}%
  \endgroup
}
\def\O{O} 
\def\pmax{p_{\max}} 
\def\smax{s_{\max}} 
\def\dmax{d_{\max}} 
\def\wmax{w_{\max}} 
\def\nmax{n_{\max}} 
\def\m{m} 
\def\n{\norm{n}_1} 
\def\opt{\mathrm{OPT}} 
\crefname{cond}{Condition}{Conditions}
\crefname{cons}{Constraint}{Constraints}
\crefname{ineq}{Inequality}{Inequalities}
\crefname{equal}{Equality}{Equalities}
\renewcommand{\matrix}[1]{\begin{pmatrix}#1\end{pmatrix}}
\newcommand{\KS}{\textsc{Knapsack}\xspace}
\newcommand{\N}{\mathbb{N}}
\newcommand{\Nwithzero}{\mathbb{N}_0}
\newcommand{\Nwithoutzero}{\mathbb{N}_{>0}}
\newcommand{\mswbp}{\textsc{MinSum-Weighted-BinPacking}\xspace}
\newcommand{\rr}[1]{\multicolumn{1}{c|}{#1}}
\newcommand{\lrr}[1]{\multicolumn{1}{|c|}{#1}}
\newcommand{\one}{1}
\renewcommand{\P}{P}
\newcommand{\Q}{Q}
\newcommand{\unrel}{R}
\newcommand{\jobtypes}{d} 
\renewcommand{\dj}{d_j}
\newcommand{\class}{\textup{class}}
\newcommand{\setup}{s_j}
\newcommand{\capacity}{\textup{cap}}
\renewcommand{\vec}{\textup{vec}}
\newcommand{\cmax}{C_{\max}}
\newcommand{\cmin}{C_{\min}}
\newcommand{\envy}{C_{\textup{envy}}}
\newcommand{\sumwu}{\sum w_jU_j}
\newcommand{\sumwc}{\sum w_jC_j}
\newcommand{\elltwo}{\ell_2}
\newcommand{\tf}[3]{${#1|#2|#3}$}
\newcommand{\assip}{\textsc{Assignment ILP}\xspace}
\newcommand{\confip}{\textsc{Configuration ILP}\xspace}
\def\mlm{m} 
\def\nlm{\n} 
\def\M{\closedinterval{\mlm}} 
\def\J{\closedinterval{\nlm}} 
\def\jk{\mathcal{J}_k} 
\def\D{\closedinterval{\jobtypes}} 
\newcommand{\slotmilpdelta}{\operatorname{slot-MILP^{\delta}}}
\newcommand{\slotmilp}{\operatorname{slot-MILP}}
\newcommand{\enc}[1]{\langle#1\rangle}
\newcommand{\epsfrac}{\frac{1}{\eps}}
\newcommand{\fromvertex}{v_1}
\newcommand{\tovertex}{v_2}
\newcommand{\thirdvertex}{v_3}
\newcommand{\dprt}{\left( m \jobtypes \pmax \right)^{\O(\epsfrac)}}
\newcommand{\swaprt}{\epsfrac m^2 + \log(\jobtypes \pmax)}
\newcommand{\localsearchrt}{m^3 \jobtypes^3 \pmax \epsfrac \log(m \jobtypes \pmax)}
\newcommand{\additiveapproxrt}{ (m \pmax) ^{\O(\epsfrac)}}
\newcommand{\originaladditiveapproxrt}{ m \n^{\O(\epsfrac)}}
\renewcommand{\dj}{C_j\leq d_j}
			\def\labelstyle{\itshape}
\tikzset{
    cross/.pic = {
    \draw[rotate = 45] (-#1,0) -- (#1,0);
    \draw[rotate = 45] (0,-#1) -- (0, #1);
    }
}
\def\hlinewd#1{%
\noalign{\ifnum0=`}\fi\hrule \@height #1 \futurelet
\reserved@a\@xhline}
\newcommand{\pqmr}{PQ-R\xspace}
\newcommand{\pqmrs}{PQ-Rs\xspace}
\DeclarePairedDelimiterX\sett[2]{\lbrace}{\rbrace}{ #1 \,\delimsize| \,\mathopen{} #2 }
\begin{document}
%
\title{Exact and Approximate High-Multiplicity Scheduling on Identical Machines\thanks{Supported by the German Research Foundation (DFG) project JA 612/25-1.}}
\titlerunning{High-Multiplicity Scheduling on Identical Machines}
%
\author{Klaus Jansen \and Kai Kahler \and Esther Zwanger}
\authorrunning{Jansen et al.}
%
\institute{Department of Computer Science, Kiel University, Kiel, Germany\\\email{\{kj,kka\}@informatik.uni-kiel.de} \\ \email{stu222121@mail.uni-kiel.de}}
\maketitle              
\begin{abstract}
Goemans and Rothvoss (SODA'14) gave a framework for solving problems which can be described as finding a point in $\text{int.cone}(P\cap\Z^N)\cap Q$, where $P,Q\subset\R^N$ are (bounded) polyhedra. The running time for solving such a problem is $\enc{P}^{2^{\Oh{N}}}\enc{Q}^{\Oh{1}}$.
    This framework can be used to solve various scheduling problems, but the encoding length $\enc{P}$ usually involves parameters like the makespan or deadlines (which can be very large compared to the processing times). We describe three tools to improve the framework by Goemans and Rothvoss:
    \begin{itemize}
        \item Problem-specific preprocessing can be used to greatly reduce $\enc{P}$.
        \item By solving a certain LP relaxation, one can obtain bounds for the points in $P$. Combined with the classical result by Frank and Tardos (J.\ Comb.\ '87), these yield a more compact encoding of $P$ in general.
        \item A result by Jansen and Klein (SODA'17) changes the running time of the algorithm by Goemans and Rothvoss to $|V|^{2^{\Oh{N}}}\enc{P}^{\Oh{1}}\enc{Q}^{\Oh{1}}$, where $V$ is the set of vertices of the convex hull of $P\cap\Z^N$. We provide a new bound for $|V|$ that is similar to the one by Berndt \etal (SOSA'21) but better for our setting; this gives an alternative way to improve the framework.
    \end{itemize}
    For example, applied to the scheduling problems \tf{\P}{}{\{\cmax,\cmin,\envy\}}, these tools improve the running time from $(\log(\cmax))^{2^{\Oh{d}}}\enc{I}^{\Oh{1}}$ to the possibly much better $(\log(\pmax))^{2^{\Oh{d}}}\enc{I}^{\Oh{1}}$. Here, $\pmax$ is the largest processing time, $d$ is the number of different processing times, $\cmax$ is the makespan and $\enc{I}$ is the encoding length of the instance. 
    
    On the complexity side, we use reductions from the literature to provide new parameterized lower bounds for \tf{\P}{}{\cmax}. Finally, we show that the big open question asked by Mnich and van Bevern (Comput.\ Oper.\ Res.\ '18) whether \tf{\P}{}{\cmax} is FPT w.r.t. the number of job types $d$ has the same answer as the question whether \tf{\Q}{}{\cmax} is FPT w.r.t. the number of job and machine types $d+\tau$ (all in high-multiplicity encoding). The same holds for objective $\cmin$.
    

\keywords{scheduling \and parameterized complexity \and approximation algorithms \and high-multiplicity \and bin packing}
\end{abstract}

\section{Introduction}\label{sect:introduction}
Distributing jobs or tasks among workers or machines is one of the most natural optimization problems arising in practice. Such \emph{scheduling problems} have been systematically studied for over half a century (see e.g.\ \cite{LS20} for a great introduction to scheduling). Different machine models, job characteristics and objectives make for a plethora of different problems; some of them can be solved efficiently by scheduling jobs in a certain order, but many are NP-hard. Hence, a large part of research on scheduling is concerned with approximation algorithms; on the other hand, there are also many pseudo-polynomial algorithms. Another -- regarding the whole history of scheduling -- more recent research direction are \emph{parameterized algorithms}. Such algorithms encapsulate the complexity in certain parameters of the problem, and should these parameters be quite small, the algorithms can be very efficient~\cite{CFKLMPPS15}. 

In this work, we primarily focus on the parameter $d$, the number of different \emph{job types}. The exact definition may vary as the problems become more complex (with weights, classes or deadlines), but in a simple setting where a job $j$ only has a processing time $p_j$, the parameter $d$ describes the number of different processing times. This parameter choice seems natural in an industrialized world, where workers and machines have a large degree of specialization and hence do not work on too many different types of jobs or tasks. 
We use the standard three-field notation introduced by Graham \etal~\cite{GLLR79}; for a detailed description of our problems and parameters, see \autoref{sect:preliminaries}.

With the result by Goemans and Rothvoss~\cite{GR20} (or the one by Jansen and Klein~\cite{JK20}), one can solve problems that can be modelled by two polyhedra $P$ and $Q$ (with $P$ bounded) such that solutions of the problem directly correspond to a point $y\in\text{int.cone}(P\cap\Z^N)\cap Q$. Many scheduling problems fall into this category, e.g.\ makespan minimization on identical machines (\tf{\P}{}{\cmax}). However, when described by such a \emph{PQ-representation} (in short \pqmr), the makespan $\cmax$ appears in the system $Ax\leq b$ that represents the polytope $P$ and thus greatly influences the running time. 

\paragraph*{Our Contributions}
We describe three tools that can be used to improve the framework by Goemans and Rothvoss~\cite{GR20} such that the running time does not depend on $\norm{b}_\infty$, which in the case of scheduling problems eliminates parameters like the makespan or the largest due date from the running time.

The first tool (covered in \autoref{sect:toolone}) is a balancing result by Govzmann \etal~\cite{GMO23} that allows pre-scheduling many of the jobs, leaving us with a makespan value that is bounded by $2d\pmax$, where $\pmax$ is the largest processing time. Applying this preprocessing before using the algorithm by Goemans and Rothvoss yields the following running times for solving \tf{\P}{}{\{\cmax,\cmin,\envy\}}:
\begin{restatable}{theorem}{pcmaxminenvybalancing}\label{thm:pcmaxminenvy}
    The optimization problems \tf{\P}{}{\{\cmax,\cmin\}} can be solved in time $(d\log(\pmax)+\log(2d\pmax^2))^{2^{\Oh{d}}}\enc{I}^{\Oh{1}}$ and the optimization problem \tf{\P}{}{\envy} in time $\pmax(d\log(\pmax)+\log(2d\pmax^2))^{2^{\Oh{d}}}\enc{I}^{\Oh{1}}$.
\end{restatable}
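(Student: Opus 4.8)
The plan is to phrase each of the three problems as the task of finding a point in $\text{int.cone}(P\cap\Z^N)\cap Q$ and then feed it to the framework of Goemans and Rothvoss~\cite{GR20}, but only after shrinking the description of $P$ with the balancing result. For \tf{\P}{}{\cmax} I would first fix a candidate makespan $C$ and let $P$ be the polytope of machine configurations, $P=\{x\in\R_{\geq 0}^d:p^\top x\le C\}$, augmented by one coordinate that equals $1$ on every configuration so that the number of machines used becomes visible in the int.cone point; then $Q$ is the single point $(n_1,\dots,n_d,m)$ recording the job multiplicities and the number of machines, and a schedule of makespan at most $C$ corresponds exactly to a point of $\text{int.cone}(P\cap\Z^{d+1})\cap Q$. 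Here $N=d+1=\Oh{d}$, so the Goemans--Rothvoss running time $\enc{P}^{2^{\Oh{N}}}\enc{Q}^{\Oh{1}}$ has base $\enc{P}$ — and the trouble is that $C$, which can be of order $\sum_j p_jn_j/m$, sits in the right-hand side of $P$, so $\enc{P}$ carries a $\log(\cmax)$. To remove it I would invoke the balancing result of Govzmann \etal~\cite{GMO23}: it pre-assigns most of the jobs in polynomial time and leaves an equivalent residual instance on the same $d$ job types whose makespan-type parameter is at most $2d\pmax$. The residual $P$ then has $d$ coefficients below $\pmax$ and right-hand side $2d\pmax$, so a direct count gives $\enc{P}=\Oh{d\log\pmax+\log(2d\pmax^2)}$; since the answer is monotone in $C$, a binary search over $C\in\{0,\dots,\sum_j p_jn_j\}$ costs only an $\enc{I}^{\Oh{1}}$ factor, and together with $\enc{Q}^{\Oh{1}}=\enc{I}^{\Oh{1}}$ this yields the claimed bound.

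For \tf{\P}{}{\cmin} I would run the same scheme on the decision version ``does a schedule with minimum load at least $C$ exist?'', which is again monotone in $C$. The only real change is that the natural polytope $\{x\ge 0:p^\top x\ge C\}$ is unbounded, so — since Goemans--Rothvoss needs $P$ bounded — I would cap the machine load from above (by the total processing time, or, after balancing, by $C+2d\pmax$) to get a bounded $P$ of the same encoding length, and apply the balancing result in the covering setting. Everything else is verbatim, giving the same running time.

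For \tf{\P}{}{\envy} I would reduce to the previous two: an envy-optimal schedule has all machine loads in some window $[L,L+E]$ with $E$ the optimal envy, and after the balancing step the minimum residual load $L$ is pinned to one of only $\Oh{\pmax}$ possible values. So I would enumerate these $\Oh{\pmax}$ choices of $L$ — this is where the extra $\pmax$ factor enters — and for each of them run a feasibility query of the above kind (configurations restricted to load in $[L,L+E]$, $P$ suitably bounded), binary-searching over $E$; taking the smallest feasible $E$ over all guesses of $L$ returns an envy-optimal schedule in time $\pmax(d\log\pmax+\log(2d\pmax^2))^{2^{\Oh{d}}}\enc{I}^{\Oh{1}}$.

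The step I expect to be the main obstacle is the balancing itself: one must check that the pre-scheduling is genuinely polynomial-time, that it is without loss of optimality (an optimal residual solution plus the pre-schedule is globally optimal), and — crucially for the running time — that the residual description of $P$ no longer contains $C$ or $m$, so that the dependence on those quantities is confined to the harmless $\enc{I}^{\Oh{1}}$ and $\enc{Q}^{\Oh{1}}$ factors. A smaller technical point is keeping $P$ bounded in the $\cmin$ and $\envy$ variants without reintroducing large numbers into $\enc{P}$. Once the balancing result is available and stated as a reduction to a residual instance with parameter $2d\pmax$, the rest is a routine application of Goemans--Rothvoss wrapped in a binary search (plus, for $\envy$, the $\Oh{\pmax}$-fold enumeration).
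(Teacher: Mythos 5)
Your proposal is correct and follows essentially the same route as the paper: the same PQ-representation of configurations, the Govzmann et al.\ balancing kernel to bound the right-hand side (and hence $\enc{P}$) by $O(d\log\pmax+\log(2d\pmax^2))$, a binary search over the threshold contributing only $\enc{I}^{\Oh{1}}$, and for $\envy$ the enumeration of the $\Oh{\pmax}$ candidate minimum loads (the paper obtains this from a separate observation that some optimal schedule has all loads in $[L-\pmax,L+\pmax]$, which also lets it cap the $\cmin$ configurations at $\ell+2\pmax$ rather than your looser $C+2d\pmax$, though that makes no asymptotic difference). One small slip: the kernel bounds the machine load by $2d\pmax^{2}$, not $2d\pmax$ as you write at one point, but your final $\enc{P}$ estimate uses the correct value.
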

We also show lower bounds for \tf{\P}{}{\cmax} that somewhat resemble the running time of our algorithm, though there is still a gap (for a proof, see \autoref{sect:complexity}):
\begin{restatable}{theorem}{exactlowerbound}
	Let $\eps>0$. Unless the ETH fails, \tf{\P}{}{\cmax} cannot be solved in time $\enc{I}^{\Oh{d^{1-\eps}}}\pmax^{\Oh{1}}$, $\enc{I}^{\Oh{1}}d^{\Oh{d^{1-\eps}}}\pmax^{\Oh{1}}$, $\enc{I}^{\oh{\frac{d}{\log(d)}}}\pmax^{\Oh{1}}$ or $\pmax^{O(d^{1-\eps})}$.
\end{restatable}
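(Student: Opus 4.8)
The plan is to obtain all four statements from a single reduction, starting from \textsc{3-SAT} --- which, under the ETH, cannot be solved in time $2^{\oh{N}}$ on instances with $N$ variables and $\Oh{N}$ clauses --- and pushing it through a size-aware, deliberately \emph{non}-parameterized reduction into \tf{\P}{}{\cmax}. (It cannot be a genuine parameterized reduction, as that would settle the open FPT question; instead, the number of distinct processing times $d$ is allowed to grow with the source size, and the whole point is to control exactly how.) Concretely, I would route through the standard $k$-\textsc{Clique}/\textsc{Partitioned Subgraph Isomorphism} instances that ETH produces --- a graph on $n_G=2^{\Theta(N/k)}$ vertices that contains a $k$-clique iff the formula is satisfiable --- with the parameter set to $k=\Theta(N/\log N)$, and reduce such an instance to a \tf{\P}{}{\cmax} instance with $d=\Theta(N)$ distinct processing times, $\pmax=n_G^{\Oh{1}}$, $\enc{I}=n_G^{\Oh{1}}$ and a prescribed target makespan $T$ that is attainable iff the graph has a $k$-clique. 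This parameter choice yields the convenient relations $\enc{I}=\pmax^{\Theta(1)}=N^{\Theta(1)}=d^{\Oh{1}}$.

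The technical core is the reduction in the middle. The guiding idea is to encode each of the $k$ vertex slots of the (colored) clique by a block of $\Theta(\log n_G)$ job types whose occurrence pattern on a dedicated machine spells out the identifier of the chosen vertex in binary --- this binary spreading is what keeps the processing times polynomially bounded, and is precisely what costs the $\log$-factor in $d$ --- and to encode the adjacency and consistency requirements arithmetically, choosing the processing times from a no-carry, Sidon-type base-$B$ system so that a machine's knapsack inequality $\sum_i c_i p_i\le T$ together with the $d$ covering equalities $\sum_{\text{machines}} c_i = n_i$ is feasible exactly when the selected vertices pairwise form edges. The crucial restriction is that processing times and multiplicities must stay of size $\Oh{B}$, so one cannot afford a separate high-value slot per edge; this is why the reduction has to originate from a problem with only $\Oh{k}$ genuinely interacting coordinates, such as $k$-\textsc{Clique}, and not, say, \textsc{Grid Tiling}, whose $\Theta(k^2)$ cells would at best give the weaker range $\eps\ge 1/2$.

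Granting such a reduction, the four bounds drop out by substitution. An algorithm for \tf{\P}{}{\cmax} running in $\enc{I}^{\oh{d/\log d}}\pmax^{\Oh{1}}$ would, using $\enc{I}=\pmax^{\Theta(1)}=d^{\Oh{1}}$ and $d=\Theta(N)$, solve \textsc{3-SAT} in $d^{\oh{d/\log d}}\cdot d^{\Oh{1}}=2^{\oh{d}}=2^{\oh{N}}$, contradicting the ETH; the $\log d$ in the exponent is forced exactly because $\enc{I}$ is polynomial, not linear, in $d$, which is why this form is the tight one. The remaining three forms are coarser consequences of the same reduction: since $d^{1-\eps}\log d=\oh{d}=\oh{N}$, each of $\enc{I}^{\Oh{d^{1-\eps}}}\pmax^{\Oh{1}}$, $\enc{I}^{\Oh{1}}d^{\Oh{d^{1-\eps}}}\pmax^{\Oh{1}}$ and $\pmax^{\Oh{d^{1-\eps}}}$ also collapses to $2^{\oh{N}}$, again contradicting the ETH.

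The main obstacle is the reduction itself: \tf{\P}{}{\cmax} lives on \emph{identical} machines, so the only levers are the processing times, their multiplicities and the target makespan, and all the combinatorial content of satisfiability must be hidden in one knapsack inequality per machine plus $d$ covering equalities --- and this has to be done while keeping $d$ only a $\log$-factor above the clique parameter and $\pmax$ (hence $\enc{I}$) polynomial. Rather than construct such a gadget from scratch, I would reuse and re-examine the parameters of a reduction of this flavour already in the literature. What remains is routine: checking that the $\Oh{1}$ exponents on $\pmax$ and $\enc{I}$ in the statements are genuinely swallowed by the polynomial relationships above so that no bound is vacuous, and instantiating the $\oh{\cdot}$ and $\Oh{\cdot}$ quantifiers correctly, the contradiction being obtained by running a hypothetical faster algorithm on the infinite family of these instances, where its exponent undershoots $N$.
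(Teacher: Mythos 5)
Your target parameter regime --- $d=\Theta(N)$, $\pmax=N^{\Theta(1)}$, $\enc{I}=N^{\Theta(1)}$ --- and the final arithmetic are both correct: with those relations each of the four running-time forms collapses to $2^{\oh{N}}$, contradicting the ETH. This is in fact exactly the parameter profile achieved by the first reduction of Chen \etal~\cite{CJZ18} (\autoref{thm:chenreductionone}), which goes \emph{directly} from 3-\textsc{Sat} to \tf{\P}{}{\cmax} and is what the paper actually invokes (together with a second, more flexible reduction of theirs used for three of the bounds). So the conclusion of your proposal is sound, conditional on the existence of the central reduction.

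The gap is precisely that reduction, which you yourself flag as ``the main obstacle'' and then defer to the literature. The detour you sketch --- 3-\textsc{Sat} $\to$ colored $k$-\textsc{Clique} on $n_G=2^{\Theta(N/k)}$ vertices with $k=\Theta(N/\log N)$, then $\to$ \tf{\P}{}{\cmax} with $d=\Theta(k\log n_G)$ and $\pmax=n_G^{\Oh{1}}$ --- does not obviously terminate. The only reduction of that flavor you could reuse from the literature is the one from \textsc{Subgraph Isomorphism} to \tf{\P}{}{\cmax} by Jansen \etal~\cite{JKMS13}, and its largest processing time grows like $(|E_1||V_1|)^{\Oh{1}}|E_2|^{\Oh{|E_2|}}$, i.e.\ \emph{exponential} in the number of pattern edges, not polynomial in the input. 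Your intuition that $k$-\textsc{Clique} sidesteps the $\Theta(k^2)$ blowup you attribute to \textsc{Grid Tiling} is also off: a $k$-clique has $\binom{k}{2}=\Theta(k^2)$ edges that all need to be verified non-interferingly in the single knapsack inequality per machine, and that is exactly what forces the $|E_2|^{\Oh{|E_2|}}$ factor in the Sidon-style arithmetic. Constructing a $k$-\textsc{Clique}-to-\tf{\P}{}{\cmax} gadget with $\pmax=n_G^{\Oh{1}}$ and $d=\Theta(k\log n_G)$ would be a new result, not a routine reuse. The clean way to realize your parameter regime is to bypass $k$-\textsc{Clique} entirely and reduce from 3-\textsc{Sat} directly, which is what Chen \etal~\cite{CJZ18} do; once that reference is in hand your closing calculation goes through verbatim.
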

Note that the encoding length $\enc{I}$ includes $\log(\pmax)$ and $d$. In \autoref{sect:additiveapprox}, we show that the balancing result can be used to speed up an additive approximation scheme by Buchem \etal~\cite{buchem} in the case where the number of jobs $\n$ is larger than the number of machines $m$ times $\pmax$:
\begin{restatable}{theorem}{thmadditiveapprox}\label{thm:additiveapprox}
    There is an additive approximation scheme with error at most $\eps \pmax$ and running time $\additiveapproxrt$ for the problems \tf{\P}{}{\{\cmax,\cmin,\envy\}}.
\end{restatable}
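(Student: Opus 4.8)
The plan is to combine the balancing preprocessing of \autoref{sect:toolone} with the additive approximation scheme of Buchem \etal~\cite{buchem}, whose running time on its own is $\originaladditiveapproxrt$ and which already achieves additive error $\eps\pmax$ for each of the objectives $\cmax,\cmin,\envy$. We may assume $\eps\le 1$ (otherwise plain list scheduling already yields error below $\pmax$), and we distinguish two cases according to the number of jobs $\n$.

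If $\n\le m\pmax$, we simply invoke the scheme of Buchem \etal\ on the whole instance; its running time $\originaladditiveapproxrt$ is then at most $m(m\pmax)^{\O(\epsfrac)}=\additiveapproxrt$, absorbing the leading factor $m\le m\pmax$ into the exponent.

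If $\n>m\pmax$, we first balance: for every job type $k$ we pre-assign $\lfloor n_k/m\rfloor$ of its jobs to each of the $m$ machines, which leaves the same base load $L$ on every machine. Let $I'$ be the residual instance consisting of the $n_k\bmod m<m$ leftover jobs of each type, so $I'$ has fewer than $dm\le m\pmax$ jobs. By \autoref{sect:toolone} this restriction is without loss of generality for all three objectives: there is an optimal schedule of $I$ extending the uniform pre-assignment. Crucially, since the base load is \emph{the same} on every machine, a schedule of $I'$ with machine loads $\ell'_1,\dots,\ell'_m$ combines with the base into a schedule of $I$ with loads $L+\ell'_1,\dots,L+\ell'_m$; hence the combined schedule has $\cmax$ and $\cmin$ larger by exactly $L$ than those of the $I'$-schedule, and the same $\envy=\cmax-\cmin$. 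Correspondingly $\opt(I)=L+\opt(I')$ for $\cmax,\cmin$ and $\opt(I)=\opt(I')$ for $\envy$, so an additive $\eps\pmax$-approximate schedule of $I'$ lifts to an additive $\eps\pmax$-approximate schedule of $I$. We obtain one by running the scheme of Buchem \etal\ on $I'$ (it achieves error $\eps\pmax(I')\le\eps\pmax$), which costs $m(dm)^{\O(\epsfrac)}\le m(m\pmax)^{\O(\epsfrac)}=\additiveapproxrt$ since $d\le\pmax$; the balancing step itself is polynomial-time and does not affect this bound.

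The single point that needs real care — and the reason the preprocessing must be \emph{balanced} rather than an arbitrary partial schedule — is that we are transferring an \emph{additive} error bound: a non-uniform base would change the objective of a residual schedule by a machine-dependent amount of up to $\pmax$, i.e.\ of the same order as the permitted error. A uniform base turns this into a fixed additive shift (by $L$, resp.\ by $0$ for $\envy$), so the $\eps\pmax$ guarantee carries over without loss; together with the structural guarantee of \autoref{sect:toolone} that such a uniform base is consistent with some optimal schedule of $I$ for $\cmax$, $\cmin$ and $\envy$, this is the heart of the proof.
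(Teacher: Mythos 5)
There is a genuine gap, and it is precisely at the place you flagged as the ``single point that needs real care.'' Your pre-assignment of exactly $\lfloor n_k/m\rfloor$ jobs of each type to each machine is \emph{not} supported by the balancing result of \autoref{sect:toolone}. \autoref{lem:balancing} (via \autoref{lem:govzmanl5}) only guarantees the existence of an optimal schedule $\pi$ with $x_{i,j}^{(\pi)}\in\bigl[\lceil n_j/m\rceil-\pmax,\ \lfloor n_j/m\rfloor+\pmax\bigr]$ for all $i,j$. The lower end of that interval is $\lceil n_j/m\rceil-\pmax\le\lfloor n_j/m\rfloor$ (since $\pmax\ge 1$), so the constraint you impose, $x_{i,j}\ge\lfloor n_j/m\rfloor$, is strictly stronger than what the lemma yields, and it can exclude \emph{every} optimal and near-optimal schedule. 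Consequently your equality $\opt(I)=L+\opt(I')$ is false in general, and the additive error does not transfer.

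A concrete counterexample in the regime $\n>m\pmax$: take $m=2$, $d=2$, $p=(10,1)$, $n=(31,100)$, so $\pmax=10$ and $\n=131>20=m\pmax$. Your pre-assignment gives each machine $15$ type-$1$ and $50$ type-$2$ jobs, a base load of $L=200$, and a residual $I'$ with a single type-$1$ job; hence $L+\opt(I')=210$. But $\opt(I)=205$ (e.g.\ loads $15\cdot 10+55\cdot 1=205$ and $16\cdot 10+45\cdot 1=205$). Because $x_{2,2}=50$ is forced by your constraint yet $x_{1,2}+x_{2,2}=100$ with both entries $\ge 50$ forces $x_{1,2}=x_{2,2}=50$, every schedule respecting your pre-assignment has makespan $\ge 210$. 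The scheme applied to $I'$ returns $210$, i.e.\ error $5=\tfrac12\pmax$, which exceeds $\eps\pmax$ for any $\eps<\tfrac12$.

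The paper instead pre-assigns the \emph{smaller} amount $\max\{0,\lceil n_j/m\rceil-\pmax\}$ jobs of each type to each machine, which \autoref{lem:govzmanl5} does justify; the residual has at most $2\pmax$ jobs of each type on each machine, hence at most $O(m\,d\,\pmax)$ residual jobs overall, and the number of eligible solutions is not reduced. Substituting $\n\mapsto O(m\,d\,\pmax)$ into the Buchem \etal\ running time $\originaladditiveapproxrt$ and using $d\le\pmax$ then gives $\additiveapproxrt$. Your case distinction on $\n\lessgtr m\pmax$ is unnecessary, and the $\n\le m\pmax$ branch is fine as written, but the $\n>m\pmax$ branch must use the weaker (two-sided, $2\pmax$-slack) pre-assignment, not the exact floor division.
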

In \autoref{sect:omittedproofs}, we give a matching lower bound for objective $\cmax$ via ETH:
\begin{restatable}{theorem}{thmlowerboundapprox}\label{thm:lowerboundapprox}
Let $\delta>0$. Then \tf{\P}{}{\cmax} cannot be approximated with additive error at most $\eps \pmax$ in time $(m\pmax)^{\Oh{(\frac{1}{\varepsilon})^{1-\delta}}}$, unless the ETH fails.
\end{restatable}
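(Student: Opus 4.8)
The plan is to reduce from \textsc{3-SAT} and contradict the ETH. Fix $\delta>0$ and assume for contradiction that some algorithm $\Ah$, given a \tf{\P}{}{\cmax} instance together with a rational $\eps>0$, returns a schedule of makespan at most $\opt+\eps\pmax$ in time $(m\pmax)^{\Oh{(1/\eps)^{1-\delta}}}$.

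The first thing I would observe is that an additive $\eps\pmax$-approximation for \tf{\P}{}{\cmax} is automatically a multiplicative $(1+\eps)$-approximation: some machine must process the longest job, so $\opt\ge\pmax$, hence $\opt+\eps\pmax\le(1+\eps)\opt$. In particular, by running $\Ah$ with parameter $\eps/2$ and comparing the makespan of the returned schedule against the threshold $(1+\eps/2)T$, one can decide whether a given instance satisfies $\opt\le T$ or $\opt\ge(1+\eps)T$: in the first case $\opt+(\eps/2)\pmax\le T+(\eps/2)T=(1+\eps/2)T$, while in the second every schedule has makespan at least $(1+\eps)T>(1+\eps/2)T$.

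Next I would invoke an ETH-hardness reduction for $(1+\eps)$-approximating \tf{\P}{}{\cmax} of the kind developed by Chen, Jansen and Zhang: from a \textsc{3-SAT} formula $\phi$ on $n$ variables---by the sparsification lemma we may assume $\Oh{n}$ clauses, so the ETH rules out a $2^{\oh{n}}$-time decision algorithm---one builds in polynomial time a \tf{\P}{}{\cmax} instance $I_\phi$ and a target $T$ with $\opt(I_\phi)\le T$ if $\phi$ is satisfiable and $\opt(I_\phi)\ge(1+\eps_n)T$ otherwise, where $\eps_n=\Theta(1/n)$, and such that the number of machines $m$ and the largest processing time $\pmax$ of $I_\phi$ are both bounded by a polynomial in $n$. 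Running $\Ah$ on $I_\phi$ with parameter $\eps_n/2$ then decides satisfiability of $\phi$ in time $(m\pmax)^{\Oh{(1/\eps_n)^{1-\delta}}}=\parens*{n^{\Oh{1}}}^{\Oh{n^{1-\delta}}}=2^{\Oh{n^{1-\delta}\log n}}=2^{\oh{n}}$, which contradicts the ETH. Since this matches the running time $(m\pmax)^{\Oh{1/\eps}}$ of the scheme in \shortautoref{thm:additiveapprox} up to the $\delta$ in the exponent, it also shows that scheme is essentially best possible.

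The step I expect to be the main obstacle is the quantitative bookkeeping inside the reduction: one needs a reduction from \textsc{3-SAT} on $n$ variables whose inapproximability gap is already $(1+\Theta(1/n))$ while keeping both $m$ and $\pmax$ polynomial in $n$ --- if $1/\eps_n$ were forced to be a larger polynomial in $n$, or $\pmax$ were superpolynomial, the final count in the previous paragraph would stop being $2^{\oh{n}}$. Obtaining all of this at once (morally: spending only $\Oh{1}$ ``resolution'' per variable and an $\Oh{1}$-size verification gadget per clause, so that $1/\eps$ scales linearly with the formula size) is precisely the content of the Chen--Jansen--Zhang-type construction, so either citing it verbatim or adapting it to our encoding is the only non-routine ingredient.
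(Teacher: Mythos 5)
Your high-level idea (reduce from \textsc{3-Sat} via a Chen--Jansen--Zhang-type instance and count) is the right one, but the route you take is more complicated than the paper's, and one quantitative step in it is stated too optimistically. The paper does not convert the additive approximation into a multiplicative one and does not need a gap-reduction at all: it simply sets $\eps := \frac{1}{\pmax+1}$, so the additive error becomes $\eps\pmax<1$; since the makespan is an integer, the approximation algorithm then returns an \emph{exact} solution. One can therefore feed it directly the (exact, non-gap) decision instance from \autoref{thm:chenreductionone}, where $m=\Oh{N}$, $\pmax=\Oh{N^{1+\delta'}}$, and hence $\frac{1}{\eps}=\Oh{N^{1+\delta'}}$; choosing $\delta'$ small enough so that $(1+\delta')(1-\delta)<1$ yields $(m\pmax)^{\Oh{(1/\eps)^{1-\delta}}}=2^{\Oh{\log(N)\,N^{(1+\delta')(1-\delta)}}}=2^{\oh{N}}$, contradicting ETH. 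Your version instead passes through a multiplicative $(1+\eps)$-approximation and a gap reduction, and in doing so you posit $\eps_n=\Theta(1/n)$ while also demanding $\pmax=\textup{poly}(n)$. Those two requirements are in tension: because processing times and thresholds are integral, any gap in the reduction is at least $1$, so the relative gap is $\eps_n=\Theta(1/T)\le \Theta(1/\pmax)$, i.e.\ $1/\eps_n\ge\pmax=\Theta(N^{1+\delta'})$, not $\Theta(n)$. Your final count survives this correction only because of the $(\cdot)^{1-\delta}$ slack in the exponent --- which is precisely the calculation the paper carries out explicitly --- but as written the $\eps_n=\Theta(1/n)$ claim is not justified and is not what the cited reduction provides. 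In short: the simpler and fully rigorous path is to forgo the multiplicative/gap detour, make the approximation exact by choosing $\eps$ below $1/\pmax$, and apply the exact-decision reduction together with a small enough $\delta'$.
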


The second tool (covered in \autoref{sect:tooltwo}) involves solving a special relaxation of an integer linear program (ILP) associated with a given \pqmr. A proximity result by Cslovjecsek \etal~\cite{CEHRW20} then gives an upper bound for the points in $P$, which allows us to reduce the coefficients in the system $Ax\leq b$ describing $P$ with the famous result by Frank and Tardos~\cite{FT87}. Together, this then yields an algorithm for solving \pqmrs in a running time independent of $\norm{b}_\infty$:\footnote{Of course, the entries in $b$ still appear in the encoding length of $P$, but only logarithmically.}
\begin{restatable}{theorem}{maintheorem}\label{thm:main}
    If a problem has a \pqmr $(P,Q,m)$ given by a polytope $P=\left\{\left.x\in\R_{\geq0}^{N}\,\right|\,A^{(P)}x= b^{(P)}\right\}$ and $Q=\left\{\left.x\in\R_{\geq0}^{N}\,\right|\,A^{(Q)}x= b^{(Q)}\right\}$ with $M^{(P)}$ and $M^{(Q)}$ constraints, respectively, it can be solved in time
    \begin{align*}
        &\left(\left(M^{(P)} M^{(Q)}\log\left(\max\left\{\norm{A^{(P)}}_\infty, \norm{A^{(Q)}}_\infty\right\}\right)\right)^{2^{\Oh{N}}}\right.
        +2^{\Oh{(M^{(Q)})^2}}\bigg)\\&(\enc{P}\enc{Q}\log(m))^{\Oh{1}}.
    \end{align*}
\end{restatable}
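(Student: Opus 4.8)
The plan is to produce, from the given \pqmr $(P,Q,m)$, an equivalent representation whose description of $P$ no longer involves $\norm{b^{(P)}}_\infty$, and then to run the framework of Goemans and Rothvoss (equivalently, of Jansen and Klein) on it. I would proceed in four steps.

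First I would write down the integer program that the framework implicitly solves — choose up to $m$ configurations (integer points of $P$) whose sum is a point of $Q$ — and solve a suitable LP relaxation of it in polynomial time; this is the ``special relaxation'' alluded to. Second, and this is the heart of the argument, I would feed this program together with its relaxation into the proximity theorem of Cslovjecsek \etal to obtain the structural statement: \emph{if the \pqmr is feasible, then it admits a feasible solution (and, for the optimization version, an optimal one) all of whose configurations lie in the box $[0,\Delta]^N$, where $\log\Delta$ is polynomial in $N$, $M^{(P)}$, $M^{(Q)}$, $\log\norm{A^{(P)}}_\infty$ and $\log\norm{A^{(Q)}}_\infty$, and in particular is independent of $\norm{b^{(P)}}_\infty$ and $\norm{b^{(Q)}}_\infty$.} Equivalently, $\text{int.cone}(P\cap\Z^N)\cap Q$ and $\text{int.cone}\bigl((P\cap[0,\Delta]^N)\cap\Z^N\bigr)\cap Q$ contain the same relevant points. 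The quantitative proximity bound, and the small auxiliary search it forces, is what contributes the additive $2^{\Oh{(M^{(Q)})^2}}$ term in the running time.

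Third I would adjoin the constraints $x\le\Delta\mathbf 1$ to the system $A^{(P)}x=b^{(P)}$ (as equalities with slack variables, which only doubles $N$ and introduces tiny new coefficients), and then apply the Frank--Tardos theorem row by row: treating row $i$ as the rational vector $(A^{(P)}_{i,\cdot},b^{(P)}_i)$ evaluated against integer vectors $(x,-1)$ with $\norm{(x,-1)}_1\le N\Delta+1$, I obtain an integral replacement of magnitude $2^{\Oh{N^3}}(N\Delta)^{\Oh{N^2}}$ with the same sign on all these vectors. Since equalities are preserved, the resulting polytope $\bar P$ satisfies $\bar P\cap\Z^N=(P\cap[0,\Delta]^N)\cap\Z^N$, and $\log\norm{\bar A^{(P)}}_\infty,\log\norm{\bar b^{(P)}}_\infty$ are polynomial in $N$, $M^{(P)}$, $M^{(Q)}$, $\log\norm{A^{(P)}}_\infty$, $\log\norm{A^{(Q)}}_\infty$. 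I would treat $Q$ the same way, using that every relevant point of $Q$ is a sum of at most $m$ box configurations and hence lies in $[0,m\Delta]^N$, obtaining $\bar Q$ with $\enc{\bar Q}$ polynomial in $\enc{Q}$ and $\log m$.

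Fourth I would observe that $(\bar P,\bar Q,m)$ is a \pqmr of the same problem (Step 2 ensures feasible and optimal solutions survive, and $\bar P\cap\Z^N\subseteq P\cap\Z^N$ ensures nothing spurious is added) and run the Goemans--Rothvoss algorithm on it, at cost $\enc{\bar P}^{2^{\Oh{N}}}\enc{\bar Q}^{\Oh{1}}$ up to a $\poly(\log m)$ factor. Substituting the bounds on $\enc{\bar P}$ and $\enc{\bar Q}$ — and absorbing all polynomial factors and powers of $N$ into the $2^{\Oh{N}}$ exponent — turns the first factor into $\bigl(M^{(P)}M^{(Q)}\log\max\{\norm{A^{(P)}}_\infty,\norm{A^{(Q)}}_\infty\}\bigr)^{2^{\Oh{N}}}$, while the original $\enc{P}$ and $\enc{Q}$ re-enter only polynomially (through the LP solve of Step 1 and through $\enc{\bar Q}$); adding the $2^{\Oh{(M^{(Q)})^2}}$ term of Step 2 yields the stated running time.

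I expect Step 2 to be the main obstacle: one needs the box radius $\Delta$ to be independent of \emph{both} right-hand sides, since any dependence on $\norm{b^{(P)}}_\infty$ or $\norm{b^{(Q)}}_\infty$ would survive the Frank--Tardos reduction and reappear \emph{doubly-exponentially} in $N$ in the final bound. The proximity theorem of Cslovjecsek \etal is exactly the tool that bounds the LP-to-ILP distance purely in terms of the constraint matrix, but it must be set up on an integer program whose matrix involves only $A^{(P)}$ and $A^{(Q)}$ rather than the (a priori large) configurations themselves, so that no circularity arises.
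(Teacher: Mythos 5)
Your proposal follows the same three-step blueprint as the paper's proof: (i) pass to the associated $n$-fold ILP and solve a strong relaxation of it; (ii) invoke the proximity theorem of Cslovjecsek \etal{} to conclude that configurations can be assumed to lie in a box of radius $\Delta$ with $\log\Delta$ polynomial in $N,M^{(P)},M^{(Q)},\log\norm{A^{(P)}}_\infty,\log\norm{A^{(Q)}}_\infty$; (iii) add the box constraints, apply Frank--Tardos to compress $A^{(P)}$ and $b^{(P)}$, and run Goemans--Rothvoss. This matches the paper's Theorem~\ref{thm:balancinggeneral} followed by the present theorem.

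Two details in Step~1/2 are stated imprecisely, though they do not change the overall argument. First, the relaxation you need is not an arbitrary LP relaxation but specifically the \emph{convexified} relaxation of the $n$-fold ILP (local blocks replaced by their integer hulls); the ordinary LP relaxation has proximity growing with the number of blocks $m$, which would reintroduce the dependence you are trying to kill. Second, this convexified relaxation is \emph{not} solved in polynomial time: solving it (via a cutting-plane method plus a Megiddo-type LP over the oracle outputs) costs $2^{\Oh{(M^{(Q)}+N)^2}}$ times polynomial factors, and this is precisely where the additive $2^{\Oh{(M^{(Q)})^2}}$ term comes from (with $2^{\Oh{N^2}}$ absorbed into the first, doubly-exponential term). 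The proximity bound itself contributes no running-time term; it is a purely structural statement that fixes $\Delta$. The remaining deviation — also running Frank--Tardos on $Q$ using the induced bound $m\Delta$ — is unnecessary but harmless: the paper leaves $Q$ untouched, since $\enc{Q}^{\Oh{1}}$ already sits multiplicatively in the stated running time, whereas your variant introduces an extra $\log m$ into the replaced coefficients, which is likewise absorbed by the $\log(m)^{\Oh{1}}$ factor.
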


The third tool (covered in \autoref{sect:toolthree}) is an upper bound for the number of vertices of the integer hull of a polytope that is similar to the one by Berndt \etal~\cite{BJK21} but better for our specific purpose:
\begin{restatable}{theorem}{vertexboundimproved}\label{thm:vertexboundimproved}
    The integer hull of a polytope $P=\left\{\left.x\in\R^N_{\geq0}\,\right|\,Ax= b\right\}$ has at most $N^M\Oh{M\log(M\Delta)}^N$ vertices, where $M$ is the number of constraints and $\Delta=\norm{A}_\infty$.
\end{restatable}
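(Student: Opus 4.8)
The plan is to combine a bound on the number of vertices of $P$ itself with a local bound on how many vertices of the integer hull $P_I:=\mathrm{conv}(P\cap\Z^N)$ can be attributed to each vertex of $P$. We may assume w.l.o.g.\ that $A$ has full row rank $M$, since redundant equations can be removed without changing $P$. Then every vertex of $P$ is a basic feasible solution, determined by a choice of $M$ linearly independent columns of $A$ as a basis $B$ (the remaining $N-M$ variables being $0$); hence $P$ has at most $\binom{N}{M}\le N^M$ vertices. So it suffices to describe a way of charging each vertex of $P_I$ to a vertex of $P$ under which at most $\Oh{M\log(M\Delta)}^{N-M}$ vertices are charged to any single vertex $u$, since this already yields the claimed $N^M\Oh{M\log(M\Delta)}^N$.

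For the charging, fix a vertex $v$ of $P_I$. Because $P\cap\Z^N$ is finite, some rational objective $c$ has $v$ as its unique maximizer over $P\cap\Z^N$, and after a small generic perturbation of $c$ (small enough that $v$ is still that unique integer maximizer) we may assume that the LP $\max\{c^\top x:x\in P\}$ has a unique optimal vertex $u$, with basis $B$; charge $v$ to $u$. Let $d^{(j)}$, $j\notin B$, be the edge directions at $u$, given by $d^{(j)}_j=1$, $d^{(j)}_B=-A_B^{-1}A_j$ and $d^{(j)}_i=0$ otherwise. These form a lattice basis of $\Z^N\cap\ker A$, and because $u_j=0$ for every nonbasic $j$, every $x\in P$ can be written as $u+\sum_{j\notin B}x_j\,d^{(j)}$ with all $x_j\ge0$; in particular $v=u+\sum_{j\notin B}v_j\,d^{(j)}$ with each $v_j\in\Nwithzero$. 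Thus $v$ lies in the simplicial cone $K_u:=u+\mathrm{cone}\{d^{(j)}:j\notin B\}\supseteq P$.

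It remains to bound the number of vertices of $P_I$ charged to a fixed $u$. Put $q:=\lvert\det{A_B}\rvert$; Hadamard's inequality gives $q\le(\sqrt M\,\Delta)^M$, so $\log q=\Oh{M\log(M\Delta)}$, and by Cramer's rule each $q\,d^{(j)}$ is integral. Hence the integer points of $K_u$ are exactly the lattice points of a simplicial cone with rational apex $u$ and integral generators $q\,d^{(j)}$, which generate a sublattice of $\Z^N\cap\ker A$ of index $q^{N-M}$. A counting lemma for the integer hull of such a cone --- the technical heart of the approach of Berndt \etal~\cite{BJK21} --- bounds the number of vertices of $\mathrm{conv}(K_u\cap\Z^N)$ by $\Oh{\log q}^{N-M}$: shifting any candidate back along the generators by multiples of $q$ keeps it integral, which confines the extreme integer points to $\Oh{\log q}$ ``dyadic shells'' per generator direction, and an induction on the dimension $N-M$ of the cone closes the count. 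Since (by the localization step below) the vertices of $P_I$ charged to $u$ are among the vertices of $\mathrm{conv}(K_u\cap\Z^N)$, we obtain at most $\Oh{\log q}^{N-M}\le\Oh{M\log(M\Delta)}^{N-M}$ of them, and multiplying by the $\le N^M$ choices of $u$ proves the theorem.

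The step I expect to be the main obstacle is the localization --- showing that each vertex $v$ of $P_I$ really is a vertex of $\mathrm{conv}(K_u\cap\Z^N)$ for some cone $K_u$ with a small associated determinant $q$. This is delicate because $K_u$ strictly contains $P$ (it omits the nonnegativity constraints on the basic variables), so it may contain many integer points outside $P$, and in general a vertex of a set need not be a vertex of a superset. As in \cite{BJK21}, this is handled by an appropriate choice of the basis $B$ (equivalently, of the cone among the $K_u$'s) together with a careful choice and perturbation of the certifying objective, so that $v$ is simultaneously the unique integer optimum over $P$ and an extreme integer point of the chosen cone; once this is in place, the determinant estimate via Hadamard and the dyadic induction are routine.
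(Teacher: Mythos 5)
Your proposal does not follow the paper's route, and the hole you flag at the end is a genuine one --- you cannot defer it to Berndt~\etal, because Berndt~\etal (and the paper) do not use a cone localization at all. The paper's argument is proximity-based: it invokes the Eisenbrand--Weismantel bound that every vertex $z^*$ of $P_I$ lies within $\ell_1$-distance $M(2M\Delta+1)^M$ of some vertex $x^*$ of $P$, then rasterizes the $\ell_\infty$-box of side $\Oh{M(2M\Delta+1)^M}$ around each of the $\le\binom{N}{M}\le N^M$ LP vertices into dyadic cells à la Cook--Hartmann--Kannan--McDiarmid, each of which holds at most one vertex of $P_I$; this gives $\Oh{\log(M(2M\Delta+1)^M)}^N=\Oh{M\log(M\Delta)}^N$ cells per box and hence the stated bound. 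The simplicial-cone picture you set up is closer to the older Hartmann / Cook~\etal approach (the paper's \autoref{thm:hartmann}), which is precisely the one whose dependence on $\norm{b}_\infty$ this theorem is trying to avoid.

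The localization step you identify as ``the main obstacle'' is in fact a true gap, and no choice of basis $B$ or generic perturbation of $c$ closes it in the way you suggest. The cone $K_u=u+\mathrm{cone}\{d^{(j)}:j\notin B\}$ drops the non-negativity constraints on the basic variables, so $K_u\cap\Z^N$ contains integer points with negative basic coordinates. For a generic $c$ with $c^\top d^{(j)}<0$ for all $j\notin B$, the unique maximizer of $c$ over $K_u$ is $u$, but the integer point of $K_u$ closest to $u$ in the $c$-direction can easily be one of those infeasible points rather than your $v$; since $P_I\subsetneq\mathrm{conv}(K_u\cap\Z^N)$, a vertex of $P_I$ certified by $c$ can lie strictly inside $\mathrm{conv}(K_u\cap\Z^N)$ and will then simply not be counted by your cone lemma. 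In other words, ``unique integer optimum over $P$'' does not imply ``extreme in $K_u\cap\Z^N$,'' and making it so would require re-introducing exactly the kind of bound on how far $v$ can stray from $u$ that the proximity theorem supplies. Once you replace the cone with the Eisenbrand--Weismantel proximity ball around $u$ and count vertices of $P_I$ inside that ball by the standard box argument, the proof goes through and recovers the paper's bound; your $(N-M)$ exponent also reverts to $N$ at that point, since the rasterization is carried out in the ambient space rather than in the $(N-M)$-dimensional cone.
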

Note that this does not depend on $\norm{b}_\infty$. Previously known bounds either depend on $\norm{b}_\infty$ or are exponential in $\log(\norm{A}_\infty)$. Combined with the algorithm by Jansen and Klein~\cite{JK20}, this gives an alternative way of solving \pqmrs:
\begin{restatable}{theorem}{mainvertices}\label{thm:mainvertices}
    Let $P=\left\{\left.x\in\R_{\geq0}^{N}\,\right|\,A^{(P)}x= b^{(P)}\right\}$ with $M^{(P)}$ constraints and $Q=\left\{\left.x\in\R_{\geq0}^{N}\,\right|\,A^{(Q)}x= b^{(Q)}\right\}$ with $M^{(Q)}$ constraints. Then the \pqmr $(P,Q,m)$ can be solved in time
    \begin{align*}
        \left(N^{M^{(P)}}M^{(P)}\log\left(\norm{A^{(P)}}_\infty\right)\right)^{2^{\Oh{N}}}(\enc{P}\enc{Q}\log(m))^{\Oh{1}}.
    \end{align*}
\end{restatable}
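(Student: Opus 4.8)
The plan is to follow the same strategy as for \autoref{thm:main}, but to replace the proximity-plus-Frank--Tardos preprocessing by a direct call to the algorithm of Jansen and Klein~\cite{JK20}. That algorithm solves a \pqmr $(P,Q,m)$ in time $|V|^{2^{\Oh{N}}}(\enc{P}\enc{Q}\log(m))^{\Oh{1}}$, where $V$ is the set of vertices of $\mathrm{conv}(P\cap\Z^N)$, the integer hull of $P$. Here $\norm{b^{(P)}}_\infty$ enters only through $\enc{P}$, hence only logarithmically and only in the polynomial factor; the exponential part is governed entirely by $|V|$ and $N$. So the whole task reduces to bounding $|V|$ independently of $\norm{b^{(P)}}_\infty$, and this is exactly what \autoref{thm:vertexboundimproved} provides.

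Concretely, I would apply \autoref{thm:vertexboundimproved} to $P=\{x\in\R_{\geq0}^{N}\mid A^{(P)}x=b^{(P)}\}$, with $M\coloneqq M^{(P)}$ the number of rows of $A^{(P)}x=b^{(P)}$ and $\Delta\coloneqq\norm{A^{(P)}}_\infty$, which yields
\[
  |V|\;\leq\; N^{M^{(P)}}\cdot\Oh{M^{(P)}\log(M^{(P)}\Delta)}^{N}.
\]
Substituting this into the Jansen--Klein running time and using $\left(N^{M^{(P)}}\right)^{2^{\Oh{N}}}=N^{M^{(P)}2^{\Oh{N}}}$ together with $N\cdot 2^{\Oh{N}}=2^{\Oh{N}}$, the exponential factor becomes $N^{M^{(P)}2^{\Oh{N}}}\cdot\Oh{M^{(P)}\log(M^{(P)}\Delta)}^{2^{\Oh{N}}}$. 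It then remains to replace $\log(M^{(P)}\Delta)$ by $\log\Delta$ inside the exponent: since $M^{(P)}\log(M^{(P)}\Delta)\leq\bigl(2M^{(P)}\log\Delta\bigr)^{2}$ whenever $\Delta\geq2$ (the degenerate cases $\Delta\leq1$ or $M^{(P)}\log\Delta<2$ only affect hidden constants and are absorbed by reading $\log(\cdot)$ as $\max\{1,\log(\cdot)\}$), raising to the power $2^{\Oh{N}}$ and folding the constant into the $\Oh{N}$ in the exponent gives $\Oh{M^{(P)}\log(M^{(P)}\Delta)}^{2^{\Oh{N}}}\leq\bigl(M^{(P)}\log\Delta\bigr)^{2^{\Oh{N}}}$.

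Collecting the factors, the exponential part is at most $\bigl(N^{M^{(P)}}M^{(P)}\log(\norm{A^{(P)}}_\infty)\bigr)^{2^{\Oh{N}}}$, and multiplying by $(\enc{P}\enc{Q}\log(m))^{\Oh{1}}$ gives precisely the claimed running time. I do not expect a genuine obstacle: both ingredients -- the Jansen--Klein algorithm and \autoref{thm:vertexboundimproved} -- are already available in exactly the form required, so the only real work is the bookkeeping with the $2^{\Oh{N}}$ exponents, in particular checking that the factor $\Oh{M^{(P)}\log(M^{(P)}\Delta)}^{N}$ collapses to $\bigl(M^{(P)}\log\norm{A^{(P)}}_\infty\bigr)^{2^{\Oh{N}}}$ once it is raised to the power $2^{\Oh{N}}$.
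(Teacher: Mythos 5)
Your proposal is correct and follows exactly the route the paper takes: invoke the Jansen--Klein algorithm from \autoref{prop:jkgr}, bound $|V|$ via \autoref{thm:vertexboundimproved}, and absorb the resulting polynomial-in-$M^{(P)}$ and $N$-in-the-exponent factors into the $2^{\Oh{N}}$. The paper performs the same substitution and simply elides the bookkeeping that you spell out.
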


Both \autoref{thm:main} and \autoref{thm:mainvertices} can be used to obtain algorithms for various scheduling problems that are more efficient than a straightforward application of the results by Goemans and Rothvoss~\cite{GR20} and Jansen and Klein~\cite{JK20} with previously known bounds for the vertices of the integer hull.
In \autoref{sect:applications}, we give \pqmrs for various scheduling problems, \mswbp (MSWBP) and uniform $n$-fold ILPs. 

The theorems also yield $(\log(\pmax))^{2^{\Oh{d}}}\enc{I}^{\Oh{1}}$-time (respectively for objective $\envy$ with an additional $\pmax$-factor) algorithms for \tf{\P}{}{\{\cmax,\cmin,\envy\}}, but with worse constants than in \autoref{thm:pcmaxminenvy}. 
One might ask why such running times are interesting, as the parameters $\pmax$ and $d$ are still entangled. But with an inequality that can be found in an exercise from \cite{CFKLMPPS15} (Hint 3.18) and has been used by Koutecký and Zink~\cite{KZ20}, one can bound $(\log(\Delta))^{2^{\Oh{d}}}$ by $2^{2^{\Oh{d}}}\Delta^{\oh{1}}$ (see \autoref{lem:paralgbound}). 
So \autoref{thm:pcmaxminenvy} almost answers the open question by Mnich and van Bevern~\cite{MB18} for an algorithm solving \tf{\P}{}{\cmax} that is fixed-parameter tractable (FPT) w.r.t. $d$. This question had been partially answered by Koutecký and Zink~\cite{KZ20}, who gave an $f(d)n^{\oh{1}}\enc{I}^{\Oh{1}}$-time algorithm, where $f$ is some computable function. 
Actually, one can analyze the function $(\log(\Delta))^{2^{\Oh{d}}}$ even further. Define the tower function for any $\alpha\in\N$, $\beta\in\R$ recursively as follows: $\textup{Tower}(\alpha,\beta)=\beta$ if $\alpha=0$ and $\textup{Tower}(\alpha,\beta)=2^{\textup{Tower}(\alpha-1,\beta)}$ otherwise.
For example, $\text{Tower}(1,\beta)=2^\beta$ and $\text{Tower}(2,\beta)=2^{2^\beta}$. We generalize the inequality used by Koutecký and Zink~\cite{KZ20} to the following:
\begin{lemma}\label{lem:logbound}
	For $\alpha,\beta,\gamma\in\N_{>0}$, we have $\beta^\gamma\leq \max\left\{\textup{Tower}(\alpha,\gamma)^\gamma, \beta^{\log^{(\alpha)}(\beta)}\right\}.$
\end{lemma}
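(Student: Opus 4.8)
The plan is to prove the bound by a two-case split according to whether $\beta$ is at most or greater than $\textup{Tower}(\alpha,\gamma)$, after first isolating the one elementary fact that makes everything work: that $\alpha$ applications of $\log$ undo an $\alpha$-fold tower, i.e.\ $\log^{(\alpha)}(\textup{Tower}(\alpha,\gamma))=\gamma$.

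First I would establish, by induction on $\alpha$, the identity $\log^{(k)}(\textup{Tower}(\alpha,\gamma))=\textup{Tower}(\alpha-k,\gamma)$ for all $0\le k\le\alpha$, which is immediate from $\log(\textup{Tower}(j,\gamma))=\textup{Tower}(j-1,\gamma)$; specialising to $k=\alpha$ gives $\log^{(\alpha)}(\textup{Tower}(\alpha,\gamma))=\textup{Tower}(0,\gamma)=\gamma$. Alongside this I would note that because $\gamma\ge1$ we have $\textup{Tower}(j,\gamma)\ge1$ for every $j\ge0$, so every intermediate value appearing when $\log$ is iterated on a number $\ge\textup{Tower}(\alpha,\gamma)$ stays $\ge1$; hence each $\log^{(k)}$ with $0\le k\le\alpha$ is monotone nondecreasing on $[\,\textup{Tower}(\alpha,\gamma),\infty)$, and in particular $\beta\ge\textup{Tower}(\alpha,\gamma)$ implies $\log^{(\alpha)}(\beta)\ge\log^{(\alpha)}(\textup{Tower}(\alpha,\gamma))=\gamma$.

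With these two observations the cases are routine. If $\beta\le\textup{Tower}(\alpha,\gamma)$, then $\beta^\gamma\le\textup{Tower}(\alpha,\gamma)^\gamma$ and the first term of the maximum already dominates. If $\beta>\textup{Tower}(\alpha,\gamma)$, then by the monotonicity observation $\log^{(\alpha)}(\beta)\ge\gamma$, and since $\beta\ge1$ enlarging the exponent can only increase $\beta^\gamma$, so $\beta^\gamma\le\beta^{\log^{(\alpha)}(\beta)}$ and the second term dominates. Combining the two cases yields $\beta^\gamma\le\max\{\textup{Tower}(\alpha,\gamma)^\gamma,\beta^{\log^{(\alpha)}(\beta)}\}$, which is the claim.

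The only delicate point — and the reason I would state the monotonicity observation explicitly rather than wave at it — is that the iterated logarithm is an order-preserving operation only on inputs large enough for all its iterates to remain $\ge1$; the hypothesis $\gamma\ge1$ (hence $\textup{Tower}(\alpha,\gamma)\ge\textup{Tower}(\alpha,1)\ge1$) is exactly what guarantees this on the range we use, so there is no genuine obstacle once this bookkeeping is in place.
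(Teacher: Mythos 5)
Your proof is correct and is essentially the paper's proof: the paper splits on $\gamma \geq \log^{(\alpha)}(\beta)$ versus $\gamma < \log^{(\alpha)}(\beta)$, which is exactly your split on $\beta \leq \textup{Tower}(\alpha,\gamma)$ versus $\beta > \textup{Tower}(\alpha,\gamma)$ viewed through the monotone bijection $\textup{Tower}(\alpha,\cdot)\leftrightarrow\log^{(\alpha)}(\cdot)$. You simply spell out the monotonicity/well-definedness bookkeeping for $\log^{(\alpha)}$ that the paper leaves implicit.
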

\begin{proof}
	Case 1: $\gamma\geq \log^{(\alpha)}(\beta)$. Then $\textup{Tower}(\alpha,\gamma) \geq \beta$ and $\beta^\gamma\leq \textup{Tower}(\alpha,\gamma)^\gamma$.
	
	Case 2: $\gamma< \log^{(\alpha)}(\beta)$. Then $\beta^\gamma<\beta^{\log^{(\alpha)}(\beta)}$.
\end{proof}
So inserting $\beta=\log(\pmax)$, $\gamma=2^{\Oh{d}}$, we get for any $\alpha\in\N_{>0}$:
\[\log(\pmax)^{2^{\Oh{d}}}\leq \max\left\{\text{Tower}(\alpha,2^{\Oh{d}})^{2^{\Oh{d}}}, \log(\pmax)^{\log^{(\alpha+1)}(\pmax)}\right\}\]
Interestingly enough, $\log^{(\alpha+1)}(\pmax)\leq 1$ for $\alpha=4$, as long as $\pmax\leq 2^{65536}$. So one could argue that for any reasonable value of $\pmax$ ($\leq 2^{65536}$), the running time $\log(\pmax)^{2^{\Oh{d}}}\enc{I}^{\Oh{1}}$ is FPT w.r.t. $d$. Of course, it is a bit hypocritical to argue about \enquote{reasonable} values (which would be encountered in practice) if at the same time the running time then includes a tower of height $5$. Also, the assumption $\pmax\leq 2^{65536}$ directly yields an FPT running time. It still seems to us that this is an interesting observation, as this might be as close to an FPT running time as one can get.

In \autoref{sect:complexity}, we give a reduction from \tf{\Q}{}{\{\cmax,\cmin\}} to \tf{\P}{}{\{\cmax,\cmin\}}, producing instances with $d+\tau$ job types, where $\tau$ is the number of machine types. This shows that the big open question whether \tf{\P}{}{\cmax} is FPT w.r.t. parameter $d$ (see \cite{MB18}) and the question whether \tf{\Q}{}{\{\cmin} is FPT w.r.t. parameters $d$ and $\tau$ have the same answer:
\begin{restatable}{theorem}{thmfptrelated}\label{thm:fptrelated}
    The following statements are equivalent:
    \begin{enumerate}
        \item \tf{\P}{}{\{\cmax,\cmin\}} is FPT w.r.t. parameter $d$.
        \item \tf{\Q}{}{\{\cmax,\cmin\}} is FPT w.r.t. parameters $d$ and $\tau$.
    \end{enumerate}
\end{restatable}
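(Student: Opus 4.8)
The plan is to prove the two implications separately; one is immediate and the other is the substance of the theorem. For $(2)\Rightarrow(1)$: \tf{\P}{}{\{\cmax,\cmin\}} is exactly the restriction of \tf{\Q}{}{\{\cmax,\cmin\}} to instances in which all machine speeds are equal, i.e. to instances with $\tau=1$ machine type. Hence an FPT algorithm for \tf{\Q}{}{\{\cmax,\cmin\}} with parameter $d+\tau$, applied to such instances, runs in time $f(d+1)\cdot\enc{I}^{\Oh{1}}=f'(d)\cdot\enc{I}^{\Oh{1}}$, which is FPT with respect to $d$. So the real work is $(1)\Rightarrow(2)$, for which I would give a parameterized reduction from the decision version of \tf{\Q}{}{\{\cmax,\cmin\}} to the decision version of \tf{\P}{}{\{\cmax,\cmin\}} that increases the number of job types from $d$ to $d+\tau$, keeps the encoding length polynomial, and then wrap it in a binary search to recover the optimization problems.

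For the makespan objective, first scale so that the $\tau$ distinct speeds $\sigma_1,\dots,\sigma_\tau$ are positive integers (this only rescales the objective, not the combinatorics), and say there are $m_k$ machines of speed $\sigma_k$, so $\sum_k m_k=m$. Given a target value $T$, I build a \tf{\P}{}{\cmax} instance with the same $m$ (now identical) machines, the original $d$ job types, and $\tau$ additional ``filler'' job types: filler type $k$ has multiplicity $m_k$ and processing time $T^\ast-\sigma_k T$, where $T^\ast$ is chosen large enough (e.g.\ $T^\ast=3\sigma_{\max}T$) that two fillers never fit together on one machine; clearing the common denominator of $T$ makes everything integral while keeping all numbers $\enc{I}^{\Oh{1}}$-bounded over the relevant range of $T$. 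The key correctness observation is a counting argument: there are exactly $m$ fillers and $m$ machines and at most one filler per machine, so in any feasible schedule every machine carries exactly one filler, and a machine carrying filler $k$ has residual capacity exactly $\sigma_k T$. Consequently a schedule of makespan $\le T^\ast$ exists iff the original (regular) jobs can be distributed so that the ``images'' of the speed-$\sigma_k$ machines each receive processing volume $\le\sigma_k T$, i.e.\ iff \tf{\Q}{}{\cmax}$\le T$ is feasible; jobs that are simply too large are handled correctly because $p_j>\sigma_k T$ forbids $j$ both on a uniform speed-$\sigma_k$ machine and on an identical machine already holding filler $k$.

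For the machine-covering objective $\cmin$ I would use the same filler gadget --- filler type $k$ has size $T^\ast-\sigma_k T$ --- but since there is now no capacity bound preventing a machine from taking two fillers, I instead choose $T^\ast$ to additionally exceed the total processing volume $V:=\sum_j n_j p_j$ of the regular jobs (e.g.\ $T^\ast=\sigma_{\max}T+V+1$), so that no machine can reach load $\ge T^\ast$ on regular jobs alone; combined with the counting argument this again forces exactly one filler per machine, and ``residual load $\ge T^\ast-(T^\ast-\sigma_k T)=\sigma_k T$'' translates the covering condition just as above. In both cases the target $T$ in the decision version ranges over rationals whose numerator and denominator are $\enc{I}^{\Oh{1}}$-bounded (e.g.\ after scaling $\opt\le V$ with denominator dividing some $\sigma_k\le\sigma_{\max}$), so a binary search making a polynomial number of calls converts these decision reductions into algorithms for the optimization problems. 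Every call is on an instance of encoding length $\enc{I}^{\Oh{1}}$ with parameter $d+\tau$, so composing with the assumed FPT algorithm for \tf{\P}{}{\{\cmax,\cmin\}} yields an $f(d+\tau)\cdot\enc{I}^{\Oh{1}}$ algorithm for \tf{\Q}{}{\{\cmax,\cmin\}}, as required.

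The main obstacle is making the reduction \emph{tight}, i.e.\ guaranteeing that in every feasible identical-machine schedule each machine receives exactly one filler, since otherwise a fillerless machine would be governed by the wrong constraint. For $\cmax$ this is achieved by a pairwise capacity argument (the fillers are too large for two to coexist); for $\cmin$, where no such bound is available, it is achieved instead by inflating $T^\ast$ beyond the total job volume so that a fillerless machine cannot be covered at all. A secondary but routine technical point is the bookkeeping for rational speeds and the high-multiplicity encoding --- scaling speeds to integers, clearing denominators of $T$, and bounding the binary-search range --- all of which only contribute polynomial factors.
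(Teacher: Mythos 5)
Your proposal is correct and takes essentially the same route as the paper: for both objectives the paper introduces one dummy job per original machine, of size $u'-s_iu$ (resp.\ $\ell'-s_i\ell$) with threshold $u'=2\smax u+1$ (resp.\ $\ell'=S+\smax\ell+1$, $S$ the total processing time), and uses exactly your two forcing mechanisms --- pairwise capacity for $\cmax$ and exceeding the total job volume for $\cmin$ --- together with the counting argument to pin down one dummy per machine, so the residual load encodes $s_i u$ (resp.\ $s_i\ell$). The only cosmetic differences are your choice of constants ($T^\ast=3\sigma_{\max}T$ vs.\ $2\smax u+1$) and your explicit handling of rational speeds and the binary-search wrapper, which the paper treats as routine.
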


\paragraph*{Related Work}
In their paper~\cite{GR20}, Goemans and Rothvoss showed the following result:
\begin{theorem}[Goemans \& Rothvoss~\cite{GR20}]\label{thm:goemansrothvoss}
    Given rational polyhedra $P,Q\subset\R^N$ where $P$ is bounded, one can find a vector $y\in\text{int.cone}(P\cap\Z^N)\cap Q$ and a vector $x\in\N^{P\cap\Z^N}$ such that $y=\sum_{c\in P\cap\Z^N}cx_c$ in time $\enc{P}^{2^{\Oh{N}}}\enc{Q}^{\Oh{1}}$ or decide that no such $y$ exists. Here, $\enc{P}$ and $\enc{Q}$ are the encoding lengths of $P$ and $Q$, respectively. 
\end{theorem}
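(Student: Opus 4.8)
The plan is to follow the standard two-phase strategy for integer-cone membership: first prove a \emph{structure theorem} which says that whenever a witness $y\in\text{int.cone}(P\cap\Z^N)\cap Q$ exists at all, some witness already exists in a highly restricted form, and then search for such a restricted witness by enumeration combined with a bounded-dimension integer program. Throughout, write $X:=P\cap\Z^N$, let $P_I:=\text{conv}(X)$ be the integer hull of $P$, and let $V$ be the vertex set of $P_I$; since $P$ is bounded with, say, $m$ facets and coefficients bounded by $\Delta$, one has $|V|\le\enc{P}^{\Oh{N}}$ (a sharp bound of this flavour is \autoref{thm:vertexboundimproved}) and each $v\in V$ lies in $X$.

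The heart of the argument is a decomposition lemma of the following shape: every $y\in\text{int.cone}(X)$ can be written as $y=\sum_{v\in V}\lambda_v v+r$ with $\lambda_v\in\Z_{\ge0}$, where at most $2^N$ of the $\lambda_v$ are nonzero, and the remainder $r$ again lies in $\text{int.cone}(X)$ but admits a representation $r=\sum_{x\in X}\mu_x x$ whose support has size only $2^{\Oh{N}}\log(2N\Delta)$. The $\lambda_v$ may be astronomically large; the point is that $r$ is ``small''. I would obtain this by an iterative peeling argument: starting from any representation $y=\sum_x\mu_x x$ with large total multiplicity $\Lambda=\sum_x\mu_x$, the normalized point $y/\Lambda$ lies in $P_I=\text{conv}(V)$ and hence is a convex combination of vertices; a vertex $v^\star$ carrying at least a $1/|V|$ share of that combination can be peeled off $\lfloor c\Lambda\rfloor$ times while keeping the result in $\text{int.cone}(X)$, where admissibility of this ``long step'' is verified by a Steinitz-type rearrangement together with a flatness estimate bounding the fractional leftover. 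Iterating, and carefully tracking how the leftover accumulates, yields the claimed bound on $|\mathrm{supp}(r)|$. This lemma is where the doubly-exponential $2^{\Oh{N}}$ dependence is born, and proving it with a remainder support that is only singly-exponential in $N$ up to logarithmic factors is the main obstacle.

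Given the structure theorem, the algorithm is: (i) compute $V$; (ii) guess the at-most-$2^N$ vertices used with positive multiplicity, which gives at most $|V|^{2^N}\le\enc{P}^{2^{\Oh{N}}}$ cases; (iii) for each guess, set up a feasibility integer program with variables $\lambda_v$ (one per guessed vertex) and auxiliary variables encoding a bounded-support representation of the remainder $r$, demanding $\sum_v\lambda_v v+r\in Q$ and $r\in\text{int.cone}(X)$. Because $r$ is confined to a box of side $2^{\Oh{N}}\Delta\log(2N\Delta)$, the predicate $r\in\text{int.cone}(X)$ can be pre-decided by a dynamic program over that box, after which what remains is a feasibility ILP in $\Oh{N}$ dimensions (restrict to the affine hull of the guessed vertices together with $r$) whose constraint data has encoding length $\poly{\enc{P},\enc{Q}}$; a Lenstra-type algorithm then solves it in time $\enc{Q}^{\Oh{1}}$ per guess. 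Expanding the guessed vertices back into elements of $X$ turns any feasible $(\lambda,\mu)$ into the desired $x\in\N^{X}$.

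Since the structure theorem is an equivalence — a witness exists iff a structured witness exists — exhaustively searching all structured witnesses either produces one or certifies that none exists, which gives the ``or decide that no such $y$ exists'' clause. The remaining items (bounding $\enc{P_I}$ and the coordinates of points of $X$, checking that the per-guess ILP really has bounded dimension, and that the remainder dynamic program fits within the stated time) are routine once the peeling lemma is in hand; the genuine difficulty throughout is that lemma, which requires the delicate interplay of convexity (vertex representation of the normalized point), integrality (rounding while staying inside the cone), and a volumetric termination bound.
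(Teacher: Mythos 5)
Your high-level plan --- a structure theorem saying that any witness $y$ can be resparsified so that all but $2^{\Oh{N}}$ of its generating lattice points are vertices of $P_I$, followed by enumerating the used vertices and solving a bounded-dimension integer program --- is exactly the route Goemans and Rothvoss take, and you correctly flag the sparsification/peeling lemma as the crux. But the algorithmic phase contains a step that fails. You propose to pre-decide $r\in\text{int.cone}(X)$ by a dynamic program over a box of side $2^{\Oh{N}}\Delta\log(2N\Delta)$. Even granting that bound on the box, it contains $\Delta^{\Theta(N)}2^{\Oh{N^2}}$ lattice points; since $\log\Delta\le\enc{P}$, the factor $\Delta^N=2^{N\log\Delta}$ is singly exponential in $\enc{P}$, whereas the target $\enc{P}^{2^{\Oh{N}}}=2^{2^{\Oh{N}}\log\enc{P}}$ is only polynomial in $\enc{P}$ for fixed $N$. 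The box bound itself is also unjustified: $r$ is a sum of $2^{\Oh{N}}$ points of $X$ with multiplicities you have not bounded, and already a single lattice point of $P$ can have coordinates of order $\Delta^{\Theta(N)}\norm{b}_\infty$ by Cramer's rule, so $\norm{r}_\infty$ need not be $\poly{\Delta}$.

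The missing idea, and what Goemans and Rothvoss actually do, is to strengthen the structure lemma so that it \emph{also} bounds the multiplicities on the non-vertex part by $2^{\Oh{N}}$, and then never to materialise $r$ as a concrete vector at all. Instead introduce $K=2^{\Oh{N}}$ integer vector variables $z_1,\dots,z_K\in\Z^N$, each constrained to lie in $P$ (i.e.\ $A^{(P)}z_i\le b^{(P)}$), alongside the scalar multiplicity variables $\lambda_v$ for the guessed vertices, and impose the single membership constraint $\sum_v\lambda_v v+\sum_i z_i\in Q$. This is one feasibility ILP with $2^{\Oh{N}}\cdot N+2^{\Oh{N}}=2^{\Oh{N}}$ integer variables whose data has bit-size $(\enc{P}\enc{Q})^{\Oh{1}}$; a Lenstra/Kannan-type solver handles it in $2^{2^{\Oh{N}}}(\enc{P}\enc{Q})^{\Oh{1}}$ time per guess, and the $|V|^{2^{\Oh{N}}}\le\enc{P}^{2^{\Oh{N}}}$ vertex-subset guesses yield the stated overall bound. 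Your structure-theorem sketch is in the right spirit but, as you note, is where all the real work lives; the peeling step in particular must certify that after extracting $\lfloor c\Lambda\rfloor$ copies of a vertex the residue remains in $\text{int.cone}(X)$, and this requires a careful LP-duality or rearrangement argument that the sketch does not supply.
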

They then briefly showed how this can be used directly to solve various scheduling problems. For example, they get an algorithm for \tf{\P}{}{\cmax} that runs in time $(\log(\cmax))^{2^{\Oh{d}}}\enc{I}^{\Oh{1}}$. The structural result has been generalized by Jansen and Klein:
\begin{theorem}[Jansen \& Klein~\cite{GR20}]\label{thm:jansenklein}
    Given rational polyhedra $P,Q\subset\R^N$ where $P$ is bounded, one can find a vector $y\in\text{int.cone}(P\cap\Z^N)\cap Q$ and a vector $x\in\N^{P\cap\Z^N}$ such that $y=\sum_{c\in P\cap\Z^N}cx_c$ in time $|V|^{2^{\Oh{N}}}\enc{P}^{\Oh{1}}\enc{Q}^{\Oh{1}}$ or decide that no such $y$ exists. Here, $\enc{P}$ and $\enc{Q}$ are the encoding lengths of $P$ and $Q$, respectively and $V$ is the set of vertices of the integer hull of $P$.
\end{theorem}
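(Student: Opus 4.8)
The plan is to follow the two-level strategy behind \autoref{thm:goemansrothvoss}---a structural theorem about the integer cone, followed by a guess-and-solve procedure---but to strengthen the structural part so that the generators used with \emph{large} multiplicity can always be chosen among the vertices $V$ of the integer hull $P_I:=\mathrm{conv}(P\cap\Z^N)$. Once this is available, the outer enumeration of the algorithm ranges over $V$ instead of over all of $P\cap\Z^N$, which is precisely what turns the $\enc{P}^{2^{\Oh{N}}}$ of \autoref{thm:goemansrothvoss} into $|V|^{2^{\Oh{N}}}\enc{P}^{\Oh{1}}$.

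\emph{Structural step.} Let $\Delta$ be an a~priori bound on the coordinates of points of $P$ (available from $\enc{P}$ via Cramer's rule, so $\log\Delta=\enc{P}^{\Oh{1}}$). I would prove: if $\text{int.cone}(P\cap\Z^N)\cap Q\neq\emptyset$, then there is a witness $x\in\N^{P\cap\Z^N}$ with $y:=\sum_c x_c c\in Q$ such that $(i)$ $|\mathrm{supp}(x)|\le 2^{\Oh{N}}\log(N\Delta)$, and $(ii)$ there is $S\subseteq V$ with $|S|\le 2^N$ and $x_c\le\pi:=2^{\Oh{N}}\log(N\Delta)$ for every $c\notin S$. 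For $(i)$ I would reuse the Goemans--Rothvoss argument: among all witnesses minimizing $\sum_c x_c$, group the generators by the digit class at each of the $\Oh{\log(N\Delta)}$ binary scales (at most $2^N$ classes per scale), and observe that an oversized class contains an integral circuit of small $\ell_\infty$-norm which can be added and subtracted to strictly decrease $\sum_c x_c$, a contradiction. For $(ii)$, pass to $\bar y:=y/\Lambda$ with $\Lambda:=\sum_c x_c$; then $\bar y\in P_I$, so by Carath\'eodory $\bar y$ is a convex combination of at most $N+1$ vertices of $P_I$. Scaling back by $\Lambda$ and rounding the vertex coefficients down leaves an integral residual of $\ell_\infty$-norm $\Oh{N\Delta}$; after a careful redistribution of the fractional parts---a parity/flow argument over the $2^N$ orthants, which is where the bound $2^N$ (rather than $N+1$) enters---this residual can be rewritten as a bounded-multiplicity combination of integer points of $P$ and merged into $(i)$. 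I expect this rounding-while-staying-inside-the-cone to be the main obstacle.

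\emph{Algorithmic step.} Enumerate the vertices of $P_I$ (with polynomial delay, or equivalently guess each heavy generator among them); for each subset $S\subseteq V$ with $|S|\le 2^N$---there are $|V|^{2^{\Oh{N}}}$ of them---we must decide whether there are multiplicities $\mu_s\in\N$ ($s\in S$) and a low-multiplicity, small-support combination $z=\sum_{c\notin S}x_c c$ with $x_c\le\pi$ and $\sum_{s\in S}\mu_s s+z=y\in Q$. Since the $\mu_s$ are the only unbounded unknowns and the structural step forces $\|z\|_\infty\le\pi\,2^{\Oh{N}}\Delta$, this residual feasibility problem can be formulated as an integer program in $2^{\Oh{N}}$ variables (the multiplicities of the $2^{\Oh{N}}\log(N\Delta)$ relevant generators), which is solved by the Lenstra--Kannan algorithm in time $(\enc{P}\,\enc{Q})^{\Oh{1}}$; an accepting guess returns both $y$ and $x$, and if every guess fails then the structural step certifies that no $y$ exists. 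Multiplying the $|V|^{2^{\Oh{N}}}$ outer guesses by the polynomial inner cost yields the claimed running time.
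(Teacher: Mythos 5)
First, a point of reference: the paper does not prove \autoref{thm:jansenklein} at all --- it is imported as a black box from Jansen and Klein (the bracketed citation in the theorem header even points to the wrong reference), so there is no in-paper proof to compare against line by line. That said, your two-phase plan --- a structure theorem forcing the generators of large multiplicity to be vertices of the integer hull $P_I$, followed by guessing a set $S\subseteq V$ of at most $2^{\Oh{N}}$ heavy vertices and solving a bounded residual problem with Lenstra--Kannan --- is exactly the strategy of the actual Jansen--Klein proof, and the outer factor $|V|^{2^{\Oh{N}}}$ arises precisely as you describe.

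The genuine gap is in your structural step $(ii)$. Writing $\bar y=y/\Lambda=\sum_i\alpha_i v_i$ by Carath\'eodory and rounding the coefficients down leaves the residual $r=\sum_i\{\Lambda\alpha_i\}v_i$, which is an integer point of $k\cdot P_I$ for some $k\le N$ --- but an integer point of a dilate of a lattice polytope need not decompose into lattice points of that polytope: this is the integer decomposition property, which fails for general polytopes already in dimension $3$ (Reeve-type simplices). So the sentence ``this residual can be rewritten as a bounded-multiplicity combination of integer points of $P$'' is exactly the unproved claim, and the announced ``parity/flow argument over the $2^N$ orthants'' is a placeholder rather than an argument; you correctly flag this as the main obstacle, but it is not a technicality one can route around by redistributing fractional parts. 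The standard repair --- and the one Jansen and Klein use --- is a \emph{local exchange} that never leaves the integer cone: if a non-vertex generator $p$ appears with multiplicity at least $D$, where $D$ is a common denominator of the coefficients expressing $p$ as a convex combination of at most $N+1$ vertices of $P_I$ (so $D\le(N\Delta)^{\Oh{N^2}}$ by Cramer's rule), then $Dp=\sum_i\beta_i v_i$ holds exactly with $\beta_i\in\N$ and $\sum_i\beta_i=D$, and one may swap $D$ copies of $p$ for $\beta_i$ copies of each $v_i$; this preserves both the sum $y$ and the total number of generators (the latter matters in the PQ-applications, where $Q$ fixes that count). Iterating caps every non-vertex multiplicity and yields the decomposition your algorithm needs. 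A secondary imprecision in the algorithmic step: the light generators range over all of $P\cap\Z^N$, not over $V$, so they cannot be handled by ``multiplicity variables of relevant generators''; their \emph{coordinates} must themselves be ILP variables constrained by $A^{(P)}x\le b^{(P)}$, giving $N\cdot 2^{\Oh{N}}$ variables, with their bounded multiplicities guessed to keep the system linear --- Kannan then still runs within the allotted $(\enc{P}\enc{Q})^{\Oh{1}}\cdot 2^{2^{\Oh{N}}}$ budget.
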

Kowalik \etal~\cite{KLMPS24} showed that the doubly-exponential dependency on $N$ is necessary, unless the ETH fails.
Knop \etal~\cite{KKLMO21} gave a framework for solving configuration ILPs and depending on the objective, the algorithms have different running times. As part of their framework, they also generalize the algorithm by Goemans and Rothvoss~\cite{GR20} to multiple types of P-polytopes, which allows modelling uniform and unrelated machines quite efficiently. The right-hand-sides of the systems describing the polytopes (and hence such terms as the makespan or due dates) still appear in the running time in the same way as they do in the algorithm by Goemans and Rothvoss~\cite{GR20}. Knop \etal~\cite{KKLMO19} then show how this framework can be used to obtain parameterized algorithms for various scheduling problems. For identical machines however, their running time that is not parameterized by $\pmax$ does not provide an improvement over the running time by Goemans and Rothvoss~\cite{GR20}. 
When one considers $\pmax$ a parameter, there are plenty of other algorithms, mostly using ILPs as a subroutine. Mnich and Wiese~\cite{MW15} use a balancing result in combination with solving an ILP in small dimension. Knop and Koutecký~\cite{KK18} use the normal assignment ILP and solve it with an $n$-fold algorithm. The framework by Knop \etal~\cite{KKLMO23} uses huge $n$-fold ILPs and proximity; a similar approach was described by Koutecký and Zink~\cite{KZ20}. Govzmann \etal~\cite{GMO23} bound the coefficients in the configurations and use an algorithm for ILPs with few constraints. 

As mentioned above, Koutecký and Zink~\cite{KZ20} gave an algorithm for \tf{\P}{}{\cmax} with running time $2^{2^{\Oh{d}}}\n^{\Oh{1}}$, brilliantly combining the algorithm by Goemans and Rothvoss~\cite{GR20} with the result by Frank and Tardos~\cite{FT87} to bound the coefficients in terms of $\n$ (the total number of jobs). Moreover, they stated in a footnote (without proof) that the result by Goemans and Rothvoss gives a $(\log(\pmax))^{f(d)}(\log(\n))^{\Oh{1}}$-time algorithm for \tf{\P}{}{\cmax}. The running time stated by Goemans and Rothvoss~\cite{GR20} is $(\log(\max\{\cmax,\n\}))^{f(d)}$ and it is clear that $\n$ does not have to be part of the basis, as $\n$ only appears in the encoding of $Q$, but it is not clear at all how the dependency on $\cmax$ can be avoided if one uses the algorithm by Goemans and Rothvoss. Using either of our three tools, however, yields a running time like the one claimed by Koutecký and Zink~\cite{KZ20}. As mentioned above, Mnich and van Bevern~\cite{MB18} stated the parameterized complexity of \tf{\P}{}{\cmax} in high-multiplicity encoding with parameter $d$ as an open problem.

\section{Preliminaries}\label{sect:preliminaries}
In this section, we introduce the studied problems, notation and terminology. We denote by $[k]:=\{1,\hdots,k\}$ the numbers up to $k$ and by $v_{\max}:=\norm{v}_{\infty}=\max_{i=1,\hdots,N}\{|v_i|\}$ the largest (absolute) entry in a vector $v\in\R^N$. As it is custom in parameterized complexity, we sometimes write $f(k)$ to represent \emph{any} computable function that only depends on the parameter $k$. Hence, it might not be explicitly specified and it might change, e.g.\ by consuming additional factors that depend only on $k$. A problem is fixed-parameter tractable (FPT) w.r.t. parameter $k$ if it can be solved in time $f(k)\enc{I}^{\Oh{1}}$. For any problem instance, we always assume that all input numbers are integral. This includes the matrix and right-hand-side entries in polyhedra.


\paragraph*{Applications}
To show how \autoref{thm:main} and \autoref{thm:mainvertices} can be useful, we apply them to various problems, mainly from the area of scheduling, to obtain faster algorithms.  
Following the notation introduced by Graham \etal~\cite{GLLR79}, we denote scheduling problems by a triple \tf{\alpha}{\beta}{\gamma}, where $\alpha$ describes the machine setting, $\beta$ describes a list of additional job constraints and $\gamma$ describes the objective. In the main part of this paper, we only consider the problems \tf{\P}{}{\{\cmax,\cmin,\envy\}}; other problems that appear in \autoref{sect:applications} are described later. For the problems \tf{\P}{}{\{\cmax,\cmin,\envy\}}, we are given a job vector $n\in\N^d_{>0}$, a corresponding processing time vector $p\in\N^d_{>0}$ and a number $m$ of identical machines. Formally, the task is to define an assignment $\pi:[d]\mapsto([m],[\n])$ that maps job types to machines (and multiplicities) such that the maximum load $\cmax=\max_{i\in[m]}\{L_{\pi,i}\}$ over the machines is minimized (for objective $\cmax$), the minimum load $\cmin=\min_{i\in[m]}\{L_{\pi,i}\}$ is maximized (for objective $\cmin$) or the difference between $\cmax$ and $\cmin$ is minimized (for objective $\envy$).\footnote{Note that for the problems considered in this paper, the order or starting times do not matter; only the load values (and the schedulability of early jobs) are important.} The load $L_{\pi,i}$ of a machine $i\in[m]$ w.r.t. schedule $\pi$ is defined as the sum of all processing times of the jobs assigned to $i$, i.e., $L_{\pi,i}:=\sum_{\substack{j\in[d]\\\pi(j)=(i,k)}}k p_j$. Of course, we may not schedule more jobs of a type than we are given. 
An important concept in scheduling are \emph{configurations}. A configuration is a selection of jobs (represented by a $d$-dimensional vector) that can be scheduled together on a machine. In the simple case of \tf{\P}{}{\cmax}, the set of configurations is defined as $\mathcal{C}:=\{c\in\N^d\,|\,p^Tc\leq u\}$ when all jobs have to be completed by time $u$.

As the reader might have noticed, the input to our scheduling problems is given by vectors of dimension $d$ and multiplicities. Classically, in the low-multiplicity encoding, the $\n$ jobs would all be listed together with their processing times and other characteristics, even if all jobs were identical. In this paper, all inputs are high-multiplicity encoded. The encoding length of an instance (or similarly for a polytope) $I$ is denoted by~$\enc{I}$. e.g.\ for \tf{\P}{}{\cmax}, we have $\enc{I}\leq \Oh{d\log(\pmax+\nmax)+\log(m)}$ as opposed to the classical encoding, which might have size $\Oh{\n\log(\pmax)}$. Note that one can assume that $m\leq\n$, because more than $\n$ machines are not necessary, so in the classical encoding, the part that encodes the number of machines vanishes. Depending on the given instance, $\n$ might be exponential in $\enc{I}$ and $m$ might be as well. This is the reason why the $f(d)\n^{\Oh{1}}$-time algorithm by Koutecký and Zink~\cite{KZ20} is not FPT in the high-multiplicity setting.

Note that the scheduling problems described above are optimization problems. However, the framework can only really model decision problems. Fortunately, the optimization problems can be easily solved by solving several decision problems as part of a binary search over some interval $[0,v]$. The key ingredient is that the optimum is always bounded by some value $v$ that depends on the numbers in the input and hence $\log(v)\leq\enc{I}^{\Oh{1}}$.\footnote{Note e.g.\ in the case of \tf{\P}{}{\cmax} that $\log(v)=\log(\sum_{j=1}^d p_jn_j)\leq\log(d\pmax\nmax)\leq\enc{I}^{\Oh{1}}$.} 

Throughout this paper, we only consider the decision problems, and as our algorithms all have an $\enc{I}^{\Oh{1}}$-term in the running time anyway, this directly also yields algorithms with the same running time for the optimization problem, except in the $\envy$-case. There, we get an additional factor, which is $\pmax$. Hence, we will always assume that we are given thresholds $\ell$ and/or $u$ for the load values when solving these problems. Clearly, (the logarithms of) these are then also part of the encoding. A detailed justification for only considering decision problems is given in \autoref{sect:omittedproofs}.

\paragraph*{Reducing Coefficients}
If the coefficients of an inequality (or equality) are much larger compared to the number of variables and the variables are also bounded by a rather small value, the coefficients can be reduced with the following classical result by Frank and Tardos: 
\begin{theorem}[Frank \& Tardos~\cite{FT87}]\label{thm:franktardos}
    For every $w\in\R^N$ and $\Delta\in\N$, there exists a $\bar{w}\in\Z^N$ such that $\norm{\bar{w}}_{\infty}\leq (N\Delta)^{\Oh{N^3}}$ and $\text{sign}(w^Tx)=\text{sign}(\bar{w}^Tx)$ for every $x\in\Z^N$ with $\norm{x}_1\leq \Delta-1$. Moreover, $\bar{w}$ can be computed in time $N^{\Oh{1}}$.
\end{theorem}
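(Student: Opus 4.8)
The plan is to prove this by lattice basis reduction: the workhorse is the classical simultaneous Diophantine approximation algorithm of Lenstra, Lenstra and Lov\'asz, and sign-preservation on \emph{all} short integer vectors is obtained through a recursion on the dimension $N$. First I would normalize: scaling $w$ by any positive real changes no sign $\text{sign}(w^Tx)$, so assume $\norm{w}_\infty=1$; for the algorithmic claim one takes $w\in\mathbb{Q}^N$, while for real $w$ only the existence part is needed and follows by the same argument (or by continuity). Recall the LLL guarantee: for any $0<\eps<1$ one computes in polynomial time $q\in\Z_{>0}$ and $p\in\Z^N$ with $\norm{qw-p}_\infty\le\eps$ and $1\le q\le 2^{\Oh{N^2}}\eps^{-N}$.

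The core of the argument: set $\eps=\tfrac{1}{2\Delta}$ and compute $q,p$ as above. For every $x\in\Z^N$ with $\norm{x}_1\le\Delta-1$ one has $|(qw-p)^Tx|\le\eps\norm{x}_1<\eps\Delta=\tfrac12$. If $p^Tx\neq 0$, then $p^Tx$ is a nonzero integer, so $|p^Tx|\ge 1>|(qw-p)^Tx|$; hence $qw^Tx=p^Tx+(qw-p)^Tx$ has the same sign as $p^Tx$, and since $q>0$, $\text{sign}(w^Tx)=\text{sign}(p^Tx)$. The remaining short vectors lie in the sublattice $L:=\{x\in\Z^N\mid p^Tx=0\}$, which has rank $N-1$ (note $p\neq\zero$, since otherwise $\norm{qw}_\infty=q\ge 1$ would contradict $\norm{qw-p}_\infty<\tfrac12$). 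I would compute a suitable basis $b_1,\dots,b_{N-1}$ of $L$ (via Hermite normal form, so with entries polynomially bounded in $\norm{p}_\infty$ and such that short vectors of $L$ have short coordinate vectors), then recurse on the $(N-1)$-dimensional instance $w'\in\mathbb{R}^{N-1}$, $w'_k:=w^Tb_k$, obtaining $\bar w'$ with $\text{sign}(\bar w'^Ty)=\text{sign}(w'^Ty)$ on all short $y$. Pulling back, $\widehat w:=\sum_k \bar w'_k b_k$ then satisfies $\text{sign}(\widehat w^Tx)=\text{sign}(w^Tx)$ for every short $x\in L$, because $x=\sum_k y_k b_k$ gives $w^Tx=w'^Ty$ and $\widehat w^Tx=\bar w'^Ty$. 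Finally set $\bar w:=Mp+\widehat w$ with $M$ an integer exceeding $\Delta\norm{\widehat w}_\infty$: for short $x$ with $p^Tx\neq 0$ the term $Mp^Tx$ dominates $\widehat w^Tx$ and fixes $\text{sign}(\bar w^Tx)=\text{sign}(p^Tx)=\text{sign}(w^Tx)$, while for short $x\in L$ we have $\bar w^Tx=\widehat w^Tx$, which already carries the correct sign. The base case $N=1$ is trivial: $\bar w=\text{sign}(w)$.

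For the size bound I would unroll the recursion, tracking $B(N):=\norm{\bar w}_\infty$. The Diophantine step gives $q,\norm{p}_\infty\le 2^{\Oh{N^2}}\Delta^{N}$; the sublattice basis entries are polynomial in $\norm{p}_\infty$; the recursive vector satisfies $\norm{\widehat w}_\infty\le N\cdot(\text{basis bound})\cdot B(N-1)$; and $M\le \Delta\norm{\widehat w}_\infty+1$. Substituting into $\norm{\bar w}_\infty\le M\norm{p}_\infty+\norm{\widehat w}_\infty$ yields a recurrence of the form $B(N)\le 2^{\Oh{N^2}}\Delta^{\Oh{N}}\,B(N-1)$, so after $N$ levels $B(N)\le 2^{\Oh{N^3}}\Delta^{\Oh{N^2}}\le (N\Delta)^{\Oh{N^3}}$. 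The running time is $N$ invocations of polynomial-time lattice reduction together with Hermite normal form computations, hence $N^{\Oh{1}}$.

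The step I expect to be the main obstacle is the recursion bookkeeping rather than the Diophantine approximation itself. One must (i) choose a basis of $\{x\mid p^Tx=0\}$ that is simultaneously small in norm \emph{and} guarantees that every short vector of $L$ has a short coordinate representation, so that the pulled-back $\widehat w$ really preserves signs on all relevant $x$ (this may force the recursion parameter $\Delta$ to grow mildly per level, which one must absorb into the exponent); (ii) verify that the restricted problem is genuinely an $(N-1)$-dimensional instance of the same problem, so the induction closes; and (iii) keep the norm bound from compounding too fast, so that the final exponent stays $\Oh{N^3}$. The degenerate cases ($p\neq\zero$ and $q\ge 1$ at every level) are taken care of automatically by the normalization $\norm{w}_\infty=1$ and the choice $\eps<1$.
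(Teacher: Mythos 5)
This theorem is cited in the paper from Frank and Tardos~\cite{FT87}; the paper does not include a proof, so there is no in-paper argument to compare against. What follows is an assessment of your reconstruction on its own merits.

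Your overall framework --- normalize $\norm{w}_\infty=1$, apply LLL-based simultaneous Diophantine approximation with $\eps=\tfrac{1}{2\Delta}$ to obtain $q,p$, conclude $\mathrm{sign}(w^Tx)=\mathrm{sign}(p^Tx)$ whenever $p^Tx\neq 0$, and recurse on the remaining vectors --- is the right starting point and matches the Frank--Tardos strategy in spirit. The gap, which you yourself flag as the ``main obstacle,'' is genuine and not a matter of bookkeeping: your recursion passes to the sublattice $L=\{x\in\Z^N : p^Tx=0\}$ via a basis $b_1,\dots,b_{N-1}$, so the inductive hypothesis only controls signs for coordinate vectors $y$ with small $\norm{y}_1$. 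But writing a short $x\in L$ as $x=\sum_k y_k b_k$ can inflate the $\ell_1$-norm by a factor that depends polynomially on $\norm{p}_\infty$, and $\norm{p}_\infty$ is already $2^{\Oh{N^2}}\Delta^{N}$ at the first level. Even under favourable assumptions about the basis (HNF or LLL-reduced), the per-level inflation is at least exponential in $N$, so $\Delta$ grows by a multiplicative factor of roughly $2^{\Oh{N^2}}\Delta^{\Oh{N}}$ per level; compounding over $N$ levels yields a bound that is doubly exponential in $N$, not $(N\Delta)^{\Oh{N^3}}$. Your ``absorb into the exponent'' remark is therefore not justified as stated, and the claimed recurrence $B(N)\le 2^{\Oh{N^2}}\Delta^{\Oh{N}}B(N-1)$ silently assumes $\Delta$ does not grow.

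What Frank and Tardos actually do avoids this entirely. After normalizing $\norm{w}_\infty=1$, some coordinate $w_j=\pm 1$; since $|qw_j-p_j|\le\eps<\tfrac12$ and $qw_j$ is an integer, the approximation forces $p_j=qw_j$, so the residual $w^{(1)}:=qw-p$ satisfies $w^{(1)}_j=0$. On the set $\{x : p^Tx=0\}$ one has $\mathrm{sign}(w^Tx)=\mathrm{sign}(w^{(1)T}x)$, and because $w^{(1)}_j=0$ the inner product only involves the other $N-1$ coordinates of $x$. One then recurses on the $(N-1)$-dimensional projection of $w^{(1)}$ (normalized), and crucially the relevant $x$-vectors are simply projected coordinate-wise, which can only decrease $\norm{x}_1$; the parameter $\Delta$ stays fixed across all $N$ levels. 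This is what keeps the size recurrence at $B(N)\le 2^{\Oh{N^2}}\Delta^{\Oh{N}}B(N-1)$ with a constant $\Delta$, yielding the $(N\Delta)^{\Oh{N^3}}$ bound. If you want your recursion to close, replace the sublattice basis change by this coordinate-projection argument; the rest of your proposal (the Diophantine step, the sign-comparison for $p^Tx\neq 0$, the final combination $\bar w=Mp+\widehat w$, and the base case) then goes through as you describe.
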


\begin{restatable}[Stated in a similar form by Etscheidt \etal~\cite{EKMR17}]{corollary}{franktardos}\label{cor:franktardos}
    For every $w\in\R^N$, $b\in\R$, $\Delta\in\N$, one can compute $\bar{w}\in\Z^N$, $\bar{b}\in\Z$ with $\norm{\bar{w}}_{\infty},|\bar{b}|\leq (N\Delta)^{\Oh{N^3}}$ in time $N^{\Oh{1}}$ such that for every $x\in[-\Delta,\Delta]^N$, $w^Tx\leq b \iff \bar{w}^Tx\leq \bar{b}$.
\end{restatable}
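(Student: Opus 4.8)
The plan is to derive Corollary~\ref{cor:franktardos} from Theorem~\ref{thm:franktardos} by the standard trick of appending the right-hand side as an extra coordinate and applying the sign-preservation guarantee to a slightly enlarged bound.

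First I would define the augmented vector $w' := (w, -b) \in \R^{N+1}$ and, for a candidate point $x \in [-\Delta, \Delta]^N$, the augmented point $x' := (x, 1) \in \Z^{N+1}$. Then $w'^T x' = w^T x - b$, so the inequality $w^T x \leq b$ is equivalent to $\mathrm{sign}(w'^T x') \in \{-1, 0\}$, i.e.\ $w'^T x' \leq 0$. The point $x'$ has $\norm{x'}_1 = \norm{x}_1 + 1 \leq N\Delta + 1$. So I would apply Theorem~\ref{thm:franktardos} to $w'$ with dimension $N+1$ and bound parameter $\Delta' := N\Delta + 2$ (so that $\Delta' - 1 \geq N\Delta + 1 \geq \norm{x'}_1$ for all admissible $x'$). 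This yields $\bar{w}' = (\bar{w}, -\bar{b}) \in \Z^{N+1}$ with $\norm{\bar{w}'}_\infty \leq ((N+1)(N\Delta+2))^{\Oh{(N+1)^3}} = (N\Delta)^{\Oh{N^3}}$ and $\mathrm{sign}(w'^T x') = \mathrm{sign}(\bar{w}'^T x')$ for all $x'$ with $\norm{x'}_1 \leq \Delta' - 1$, hence in particular for all $x' = (x,1)$ with $x \in [-\Delta,\Delta]^N$.

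Reading off the two components of $\bar{w}'$ gives $\bar{w} \in \Z^N$ and $\bar{b} \in \Z$ with $\norm{\bar{w}}_\infty, |\bar{b}| \leq (N\Delta)^{\Oh{N^3}}$. Finally, for any $x \in [-\Delta,\Delta]^N$ we have $w^T x \leq b \iff w'^T x' \leq 0 \iff \mathrm{sign}(w'^T x') \in \{-1,0\} \iff \mathrm{sign}(\bar{w}'^T x') \in \{-1,0\} \iff \bar{w}^T x - \bar{b} \leq 0 \iff \bar{w}^T x \leq \bar{b}$, which is the claim. The running time is $(N+1)^{\Oh{1}} = N^{\Oh{1}}$ by Theorem~\ref{thm:franktardos}.

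There is no real obstacle here; the only thing to be careful about is the bookkeeping on the bound parameter, namely ensuring that the $\ell_1$-norm of the augmented integer points actually falls under the $\Delta' - 1$ threshold required by Theorem~\ref{thm:franktardos} (this is why one must pass $\Delta' = N\Delta + 2$ rather than $\Delta$ itself), and then checking that the resulting coefficient bound $((N+1)\Delta')^{\Oh{(N+1)^3}}$ still simplifies to $(N\Delta)^{\Oh{N^3}}$, which it does since $\Delta' \leq (N\Delta)^{\Oh{1}}$ and the exponent $(N+1)^3 = \Oh{N^3}$.
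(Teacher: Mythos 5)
Your proof is correct and takes essentially the same approach as the paper: augment $w$ with the right-hand side as an extra coordinate, apply \autoref{thm:franktardos} to the augmented $(N+1)$-dimensional vector with a slightly inflated bound parameter, and read off the two parts. The only cosmetic difference is the sign convention (the paper uses $(w,b)$ tested against $(x,-1)$, you use $(w,-b)$ tested against $(x,1)$), and your choice $\Delta'=N\Delta+2$ is in fact slightly more careful than the paper's $N\Delta+1$, which is off by one relative to the $\norm{x'}_1\leq\Delta-1$ requirement of \autoref{thm:franktardos} (harmless under the $\Oh{\cdot}$ bound in either case).
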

A proof is given in \autoref{sect:omittedproofs}.
Note that due to the equivalence, zeros are mapped to zeros, negative entries are mapped to negative entries and positive entries are mapped to positive entries. This can be seen by setting $x$ to be a unit vector in \autoref{thm:franktardos}.


Another way to greatly reduce a given problem instance (and the set of configurations) is provided by the following result by Govzman \etal~\cite{GMO23}. It is the key ingredient used in \autoref{sect:additiveapprox} to improve the running time of the additive approximation scheme from~\cite{buchem} in a high-multiplicity setting and can also be used to reduce the coefficients in a \pqmr. For completeness, we include a proof in \autoref{sect:omittedproofs}, as the result has not been published yet. A similar but exponential bound had already been shown by Mnich and Wiese~\cite{MW15}.
\begin{restatable}[Govzmann \etal \cite{GMO23}]{lemma}{balancinglemma}\label{lem:balancing}
	For \tf{\P}{}{\{\cmax,\cmin,\envy\}}, there exists a kernel where the number of jobs of a specific type on a specific machine is bounded by $2 \pmax$. So the load of every machine is bounded by $2 \pmax^2d$. The kernelization runs in $\O(d)$ time. 
\end{restatable}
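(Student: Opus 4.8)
The plan is to describe an explicit kernelization (balancing) procedure and then argue that it preserves the optimal value for all three objectives $\cmax,\cmin,\envy$. First I would observe that the only information relevant for these objectives is the multiset of machine loads, and a load is determined by how many jobs of each type sit on a machine. So the kernel will be described by telling, for each job type $j\in[d]$, how its $n_j$ jobs are split among the $m$ machines, subject to never exceeding $n_j$ and never placing more than $2\pmax$ jobs of type $j$ on one machine; the claimed load bound $L_{\pi,i}\le\sum_{j}2\pmax\cdot p_j\le 2d\pmax^2$ is then immediate since $p_j\le\pmax$.

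The core step is the balancing argument. Suppose in an optimal schedule some machine $i$ holds at least $2\pmax$ jobs of some type $j$ (say it holds $a_{i,j}$ of them with $a_{i,j}\ge 2\pmax$), while another machine $i'$ holds $a_{i',j}$ jobs of type $j$ with $a_{i',j}<a_{i,j}$. I want to move jobs of type $j$ from $i$ to $i'$ to reduce this imbalance without increasing $\cmax$ or decreasing $\cmin$. The key quantitative point is that any load value that occurs in some optimal solution can be taken to lie in the interval $[\opt_{\cmax}-\pmax,\ \opt_{\cmax}]$ (for $\cmax$; symmetric statements hold for $\cmin$ and $\envy$) — otherwise a single job could be shifted to improve the objective — so the spread of loads across machines is at most $\pmax$. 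Hence if machine $i$ has $a_{i,j}\ge 2\pmax$ and machine $i'$ has strictly fewer, moving roughly $(a_{i,j}-a_{i',j})/2$ jobs of type $j$ from $i$ to $i'$ changes each of the two loads by at most $\approx p_j\cdot(a_{i,j}-a_{i',j})/2$; one shows that because the total load on the two machines stays fixed and the loads were already within $\pmax$ of each other, after the move both loads remain within the feasible window, so the objective does not worsen. Formally I would phrase this as: define the potential $\Phi = \sum_{j}\sum_{i}a_{i,j}^2$ (or $\sum_j \max_i a_{i,j}$), show that as long as some $a_{i,j}\ge 2\pmax$ exceeds another $a_{i',j}$ a balancing move strictly decreases $\Phi$ while keeping the schedule optimal, and conclude by finiteness that an optimal schedule exists in which $a_{i,j}\le 2\pmax$ for all $i,j$ (if every machine holding $\ge 2\pmax$ copies of type $j$ has the \emph{same} count, then that count times $p_j\le\pmax$ is at most the load, forcing the count below $2\pmax$ anyway — so no machine can hold $\ge 2\pmax$ copies).

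The main obstacle, and the part needing the most care, is showing that a balancing move never violates feasibility or optimality — one has to control simultaneously that $\cmax$ does not rise, $\cmin$ does not fall, the $\le n_j$ constraint is respected, and that the argument works uniformly for the three objectives (for $\envy$ one must track both endpoints of the load window at once). A clean way to handle all three is to fix the optimal objective value, derive the $\pmax$-width window $[\ell,u]$ in which all machine loads must lie, and then note every balancing move keeps the sum of the two affected loads constant and only pushes them toward each other, so both stay in $[\ell,u]$; hence it preserves an optimal (indeed, window-respecting) schedule. Finally, for the running time: the kernel is specified by $O(d)$ numbers (the description of how each of the $d$ types is distributed, together with $m$), and producing it from the input $(n,p,m)$ requires only $O(d)$ arithmetic operations — essentially a greedy fill that distributes each type's jobs as evenly as possible — which gives the claimed $\O(d)$ bound. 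I would remark that since the balancing is purely combinatorial and independent of the actual (possibly huge) multiplicities, it runs in time depending only on $d$, not on $\log n$ or $\log m$.
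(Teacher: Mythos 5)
The proposal has a genuine gap in the central balancing step. When machine $i$ holds $a_{i,j}\ge 2\pmax$ jobs of type $j$ and $i'$ holds fewer, you propose to move roughly $(a_{i,j}-a_{i',j})/2$ jobs of type $j$ from $i$ to $i'$ and argue that "because the total load on the two machines stays fixed and the loads were already within $\pmax$ of each other, after the move both loads remain within the feasible window." This does not follow: moving $k$ jobs of size $p_j$ changes each of the two loads by $kp_j$, and with $k\approx(a_{i,j}-a_{i',j})/2$ this quantity can be on the order of $\pmax\cdot p_j$, far beyond the $\pmax$-width window from \autoref{lem:completionbounds}. The window controls how far apart the \emph{loads} can be, not how many jobs of any particular \emph{type} a machine may hold; equalizing the type-$j$ counts can drastically unbalance the loads, and your potential argument does not address this. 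The parenthetical at the end of your second paragraph ("that count times $p_j\le\pmax$ is at most the load, forcing the count below $2\pmax$") is also not a valid inference.

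The paper's proof avoids exactly this difficulty by never moving jobs of a single type unilaterally when the loads are close. If $L_i-L_{i'}\ge p_j$, it does move one type-$j$ job (this pushes the two loads toward each other, within the window). But if $L_i-L_{i'}<p_j$ it instead finds, via a pigeonhole argument (\autoref{obs:obs2} and \autoref{lem:govzmanl3}), a multiset of jobs on $i'$ whose total processing time is \emph{exactly} $\alpha p_j$ for some $\alpha\le\pmax$, and swaps those for $\alpha$ jobs of type $j$ from $i$. This swap keeps both machine loads unchanged, so feasibility and optimality are preserved trivially for all three objectives, while the type-$j$ imbalance strictly decreases; progress is tracked by the pair consisting of the gap size $\mu(\pi)$ and the number of gap constellations $|I(\pi)|$, not a quadratic potential. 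Finally, the kernel itself is derived by comparing the resulting schedule with the uniform fractional schedule $\hat\pi$ assigning $n_j/m$ jobs of each type to every machine: the bounded-gap property yields $|x_{i,j}-n_j/m|\le\pmax$, so one may preassign $\lceil n_j/m\rceil-\pmax$ jobs of each type to every machine, leaving at most $2\pmax$ jobs per type per machine to place, and this preassignment takes $\Oh{d}$ arithmetic operations. Your sketch never establishes that $|x_{i,j}-n_j/m|\le\pmax$, which is the fact the preassignment relies on.
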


\paragraph*{\pqmrs}
\enquote{\pqmrs} are formally defined as follows:
\begin{definition}[PQ-Representation]\label{def:pqrep}
    Given a problem $X$, a \emph{\pqmr} of $X$ is a triple $(P,Q,m)$ of polytopes $P,Q\subset\R^N$ and a number $m$ such that for every instance $I$ of $X$, $I$ is positive if and only if there exists a $y=y^{(1)}+\hdots+y^{(m)}\in\text{int.cone}(P\cap\Z^N)\cap Q$ such that for all $i\in[m]$, $y^{(i)}\in P\cap\Z^N$.
\end{definition}
Problems that have a \pqmr can be solved with the algorithm by Goemans and Rothvoss~\cite{GR20} or the algorithm by Jansen and Klein~\cite{JK20}:
\begin{restatable}[See \cite{JK20} and \cite{GR20}]{proposition}{thmpqrep}\label{prop:jkgr}
    A problem with a \pqmr $(P,Q,m)$, where $P,Q\subset\R^N$, can be solved in $|V|^{2^{\Oh{N}}}\enc{P}^{\Oh{1}}\enc{Q}^{\Oh{1}}(\log(m))^{\Oh{1}}$ or in $\enc{P}^{2^{\Oh{N}}}\enc{Q}^{\Oh{1}}(\log(m))^{\Oh{1}}$,
    where $V$ is the set of vertices of the integer hull of $P$.
\end{restatable}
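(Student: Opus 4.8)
The plan is to reduce to \autoref{thm:goemansrothvoss} and \autoref{thm:jansenklein} by the standard homogenization trick that folds the requirement ``$y$ is a sum of \emph{exactly} $m$ integer points of $P$'' into a single extra coordinate. Concretely, given a \pqmr $(P,Q,m)$ with $P,Q\subseteq\R^N$, I would pass to dimension $N+1$ and set $P':=\{(x,1)\mid x\in P\}$ and $Q':=\{(x,m)\mid x\in Q\}$. Then $P'\cap\Z^{N+1}=(P\cap\Z^N)\times\{1\}$, so every nonnegative integer combination $y'=\sum_{c'\in P'\cap\Z^{N+1}}x_{c'}\,c'$ has last coordinate $\sum_{c'}x_{c'}$; intersecting with $Q'$ forces this sum to equal $m$. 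Hence a witness $y'\in\text{int.cone}(P'\cap\Z^{N+1})\cap Q'$ projects to a $y=y^{(1)}+\dots+y^{(m)}\in Q$ with all $y^{(i)}\in P\cap\Z^N$ (each chosen $c'$ contributing $x_{c'}$ copies of its projection), and conversely any such $y$ lifts back. By \autoref{def:pqrep}, the instance $I$ is therefore positive iff $\text{int.cone}(P'\cap\Z^{N+1})\cap Q'\neq\emptyset$, which is exactly the emptiness question that the algorithms of Goemans--Rothvoss and Jansen--Klein decide (and, via the returned vector $x$, solve constructively, which lets one read off an actual solution of $X$).

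The remaining work is parameter bookkeeping. The ambient dimension grows from $N$ to $N+1$, and $2^{\Oh{N+1}}=2^{\Oh{N}}$. Pinning the new coordinate to $1$ only appends a zero column to the constraint matrix of $P$ and one new equality, so $\enc{P'}=\Oh{\enc{P}}$; moreover the integer hull of $P'$ is the integer hull of $P$ lifted to height $1$, so its vertex set $V'$ satisfies $|V'|=|V|$. Pinning the new coordinate to $m$ likewise appends a constant number of rows and contributes $\Oh{\log m}$ bits for the value $m$, so $\enc{Q'}=\Oh{\enc{Q}+\log m}$. Substituting into the running times $\enc{P'}^{2^{\Oh{N+1}}}\enc{Q'}^{\Oh{1}}$ of \autoref{thm:goemansrothvoss} and $|V'|^{2^{\Oh{N+1}}}\enc{P'}^{\Oh{1}}\enc{Q'}^{\Oh{1}}$ of \autoref{thm:jansenklein} gives the two claimed bounds $\enc{P}^{2^{\Oh{N}}}\enc{Q}^{\Oh{1}}(\log m)^{\Oh{1}}$ and $|V|^{2^{\Oh{N}}}\enc{P}^{\Oh{1}}\enc{Q}^{\Oh{1}}(\log m)^{\Oh{1}}$.

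There is no real obstacle here; the proof is a direct application of the two cited theorems, and the only points worth stating explicitly are the two sanity checks built into the reduction: first, the lift/project correspondence is multiplicity-preserving, so ``exactly $m$ summands counted with repetition'' is captured faithfully; and second, whatever convention one uses for $\text{int.cone}$ (empty combination allowed or not), the zero vector is never a valid witness here because $m\geq 1$ and $Q'$ forces the last coordinate to be $m>0$, so no degenerate solution can slip through.
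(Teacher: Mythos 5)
Your proof is correct, and since the paper gives no explicit proof for \autoref{prop:jkgr} (it simply cites \cite{JK20,GR20}), your homogenization argument supplies exactly the missing link: it converts the ``exactly $m$ summands'' requirement of \autoref{def:pqrep} into a plain membership question $\text{int.cone}(P'\cap\Z^{N+1})\cap Q'\neq\emptyset$, which \autoref{thm:goemansrothvoss} and \autoref{thm:jansenklein} decide as stated. The parameter bookkeeping ($N\mapsto N+1$ absorbed into $2^{\Oh{N}}$, $\enc{P'}=\Oh{\enc{P}}$, $|V'|=|V|$, $\enc{Q'}=\Oh{\enc{Q}+\log m}$) is also accurate, so nothing is missing.
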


\paragraph*{Bounds on the Vertices of the Integer Hull}
Clearly, the number of vertices of the integer hull plays a big role in the running time of \autoref{prop:jkgr}. Formally, the integer hull of a polytope $P\subset\R^N$ is defined as the convex hull of $P\cap\Z^N$. There are several upper and lower bounds for the number of vertices of the integer hull, e.g.\ the one by Berndt \etal~\cite{BJK21}, which improved the one by Aliev \etal~\cite{ADEOW18}:
\begin{theorem}[Berndt \etal~\cite{BJK21}]\label{thm:berndtgeneral}
    The number of vertices of the integer hull of a polytope $P=\left\{\left.x\in\R^N_{\geq0}\,\right|\,Ax= b\right\}$ is at most $(NM\log(M\Delta))^{\Oh{M\log(\sqrt{M}\Delta)}},$ where $\Delta=\norm{A}_\infty$ and $M$ is the number of constraints.
\end{theorem}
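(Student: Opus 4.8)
The plan is to adapt the classical argument of Cook, Hartmann, Kannan and McDiarmid for bounding the number of vertices of an integer hull, with the twist that every dimension‑dependent estimate (an $N$ in the exponent) is traded for a constraint‑dependent one (an $M$, together with a logarithmic subdeterminant factor). This trade‑off is exactly what distinguishes the claimed bound from both the older estimates and from the bound in \autoref{thm:vertexboundimproved}, which keeps $N$ in the exponent.

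\emph{Step 1: localize each vertex of $P_I$ near a vertex of $P$.} Every vertex $v$ of the integer hull $P_I=\mathrm{conv}(P\cap\Z^N)$ is the unique minimizer over $P\cap\Z^N$ of some objective $c\in\R^N$; by a small perturbation (or by \autoref{thm:franktardos}) we may take $c$ integral, so $v$ is an integer optimum of $\min\{c^Tx:Ax=b,\ x\ge0\}$. A standard proximity bound for integer programs in standard form (Cook et al.; Eisenbrand–Weismantel) yields an optimal vertex $x^\star$ of the LP relaxation with $\norm{v-x^\star}_1\le\rho$, where $\rho$ depends only on $M$ and $\Delta$ — concretely $\rho\le M(2M\Delta+1)^M$, and, crucially, $\rho$ is independent of $N$ and of $\norm{b}_\infty$. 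Since every vertex of $P$ is supported on at most $\mathrm{rank}(A)\le M$ coordinates and is then determined by that support, $P$ has at most $N^{\Oh{M}}$ vertices; hence it suffices to bound the number of vertices of $P_I$ that become ``attached'' (in the above sense) to a fixed vertex $x^\star$ of $P$ and to multiply this by $N^{\Oh{M}}$.

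\emph{Step 2: a staircase count near $x^\star$, with an induction on the subdeterminant.} The naive bound — counting all integer points of the $\ell_1$‑ball of radius $\rho$ around $x^\star$ — is hopeless: it is both exponential in $N$ and has $\rho$, hence an $M$, sitting in the exponent of an exponent. Instead one argues that the vertices of $P_I$ attached to $x^\star$ lie in the tangent cone of $P$ at $x^\star$, whose shape is governed not by the $N$ coordinate directions but by the at most $M$ dual multipliers of the system $Ax=b$; inside each two‑dimensional ``slice'' spanned by such directions the attached vertices of $P_I$ form a geometrically spaced staircase (Hayes–Larman type), contributing only $\Oh{\log\Delta_{\mathrm{sub}}}$ vertices, where $\Delta_{\mathrm{sub}}\le M^{M/2}\Delta^M$ is the Hadamard bound on the $M\times M$ subdeterminants of $A$ (this is the origin of the $\sqrt M\,\Delta$ under the logarithm). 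Recursing on the faces of $P$ through $x^\star$ — a recursion whose depth one shows is $\Oh{\log\Delta_{\mathrm{sub}}}=\Oh{M\log(\sqrt M\,\Delta)}$ and whose branching at each level is polynomial in $N$, $M$ and $\log\Delta$ — gives at most $(NM\log(M\Delta))^{\Oh{M\log(\sqrt M\,\Delta)}}$ vertices of $P_I$ attached to $x^\star$, and therefore the same bound for $P_I$ after multiplying by $N^{\Oh{M}}$.

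I expect the main obstacle to be precisely Step 2: turning the per‑base‑vertex count into something polynomial in $N$ whose exponent depends only on $M$ and $\log(\sqrt M\,\Delta)$. In the original Cook et al. analysis the tangent cone at a vertex of $P$ can have up to $N$ extreme rays, which forces an $N$ into the exponent; avoiding this requires genuinely exploiting the standard form $\{x\ge0:Ax=b\}$ so that only the $M$ equality constraints contribute directions, and then controlling the recursion by a subdeterminant/lattice‑index argument that simultaneously keeps the branching polynomial and forces the relevant subdeterminant to drop geometrically, so the recursion stops after $\Oh{\log\Delta_{\mathrm{sub}}}$ rounds. Once this bookkeeping is in place, the claimed bound follows by multiplying the per‑level branching over the $\Oh{M\log(\sqrt M\,\Delta)}$ levels together with the $N^{\Oh{M}}$ choices of base vertex.
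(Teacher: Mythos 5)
This theorem is imported from Berndt~\etal~\cite{BJK21}; the paper does not prove it, and its own proof sketch of the neighbouring \autoref{thm:vertexboundimproved} only reproduces the first half of the argument (proximity plus rasterization), explicitly deferring the exponent reduction to \cite{BJK21}. Your Step~1 matches that first half: proximity attaches each vertex of $P_I$ to one of the at most $\binom{N}{M}\le N^{\Oh{M}}$ vertices of $P$, and a geometric rasterization of the proximity ball gives $\Oh{M\log(M\Delta)}$ boxes \emph{per coordinate direction}, i.e.\ $\Oh{M\log(M\Delta)}^{N}$ boxes per LP vertex. That is exactly \autoref{thm:vertexboundimproved}, with $N$ in the exponent.

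The gap is Step~2, and you have correctly identified it as the crux but not closed it. The mechanism that replaces the exponent $N$ by $\Oh{M\log(\sqrt{M}\Delta)}$ is not a tangent-cone/staircase recursion: the tangent cone of $\left\{x\ge 0\mid Ax=b\right\}$ at a vertex lives in the $(N-M)$-dimensional kernel of $A$ and can have up to $N$ extreme rays (one per nonbasic variable), so it is \emph{not} ``governed by the at most $M$ dual multipliers,'' and your proposed recursion has neither a proven polynomial branching factor nor a proven depth bound. What Berndt~\etal actually use is the \emph{support bound for optimal solutions of integer programs in standard form} (Eisenbrand--Shmonin; Aliev~\etal): every vertex of $P_I$ is the unique optimum of some ILP $\max\{c^Tx\mid Ax=b,\ x\ge0,\ x\in\Z^N\}$, and since some optimum has support at most $s=\Oh{M\log(\sqrt{M}\Delta)}$, uniqueness forces the vertex itself to have support at most $s$ (this is also where the $\sqrt{M}$ under the logarithm originates, via the Hadamard bound inside the support theorem). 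One then enumerates the $\binom{N}{s}\le N^{\Oh{s}}$ possible supports and performs the rasterization of Step~1 only inside the $s$ chosen coordinates, giving $\Oh{M\log(M\Delta)}^{s}$ boxes per support pattern and per LP vertex, for a total of $\left(NM\log(M\Delta)\right)^{\Oh{M\log(\sqrt{M}\Delta)}}$. Without the support bound (or an equivalent substitute), your argument does not yield the claimed exponent.
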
 
Note that this does not depend on the right-hand-side but has a stronger dependency on the largest coefficient in the matrix compared to the older result by Hartmann~\cite{Hartmann89} and Cook \etal~\cite{CHKM92}, which appears to be nearly optimal in general~\cite{Zolotykh06}:
\begin{theorem}[Hartmann~\cite{Hartmann89} and Cook \etal~\cite{CHKM92}]\label{thm:hartmann}
    Consider a rational polyhedron $P=\{x\in\R^N\,|\,Ax\leq b\}$ with $A\in\Z^{M\times N}$ and let $\Delta=\norm{A}_{\infty}$. Then the integer hull of $P$ has at most $(MN\phi)^{\Oh{N}}=(MN\log(\max\{\Delta,\norm{b}_{\infty}\}))^{\Oh{N}}$ vertices, where $\phi=\Oh{N\log(\max\{\Delta,\norm{b}_{\infty}\})}$ is the maximum encoding length of the inequalities.
\end{theorem}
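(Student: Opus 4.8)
My plan is to bound the number of vertices of the integer hull $P_I:=\text{conv}(P\cap\Z^N)$ by reducing to two estimates that can then be multiplied: how many \emph{translated simplicial cones} suffice to witness every vertex of $P_I$, and how many vertices the integer hull of one such cone can have. I would first argue for bounded $P$; for a general rational polyhedron one checks, via Cramer's rule and a standard proximity argument, that every vertex of $P_I$ lies within $\ell_\infty$-distance $2^{\Oh{\phi}}$ of a vertex of $P$, so only a bounded region of $P$ is relevant and all coefficients that occur stay of encoding length $\Oh{\phi}$.

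For the covering step, take a vertex $z$ of $P_I$ and an objective $c\in\R^N$ that is uniquely maximized over $P\cap\Z^N$ at $z$; let $v$ be a vertex of $P$ lying in the face of $P$ that maximizes $c$. The tangent cone $T_v:=v+\operatorname{cone}\{x-v:x\in P\}$ contains $P$ (hence $z$), and $c$ is non-positive on all recession directions of $T_v$. Triangulating $T_v$ into simplicial cones $v+C$ — roughly one per choice of $N$ of the at most $M$ facet directions at $v$ — gives $M^{\Oh{N}}$ cones per vertex $v$, hence $M^{\Oh{N}}$ in total (a polytope with $M$ facets has at most $\binom{M}{N}$ vertices), with $P\subseteq\bigcup(v+C)$. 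The point to establish is that $z$ is a vertex of $\text{conv}((v+C)\cap\Z^N)$ for the cone $v+C$ containing the direction from $z$ ``into $P$ along the optimal face''; this is where the choice of $c$ and $v$ enters, since enlarging $P$ to $v+C$ creates new integer points and one must argue none of them is $c$-better than $z$, so $c$ still certifies extremality of $z$ in the larger set.

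For a single translated simplicial cone $w+C$ with $C=\{x:Bx\ge 0\}$, $B\in\Z^{N\times N}$ nonsingular, $\norm{B}_\infty\le\Delta$ and $w$ of encoding length $\Oh{\phi}$, I claim $\text{conv}((w+C)\cap\Z^N)$ has at most $(\Oh{\phi})^{N-1}$ vertices, proved by induction on $N$. Choose an extreme ray direction $d$ of $C$ (a column of $B^{-1}$) and the complementary facet functional $\ell$ (the matching row of $B$), so that $\ell^Td=1$ and $\ell$ vanishes on the other rays; after clearing the denominator of $w$ (a polynomial blow-up of the coefficients) we may take $\ell$ integral on $w+C$. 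Partition the integer points into dyadic slabs $\{x:2^j\le\ell^T(x-w)<2^{j+1}\}$; only $\Oh{\phi}$ of them can contain a vertex of the integer hull, because beyond $\ell^T(x-w)=2^{\Oh{\phi}}$ the hull is, up to a lattice translation, periodic in the $d$-direction and vertex-free. Projecting a slab along $d$ yields an $(N-1)$-dimensional translated simplicial cone of the same type with coefficients still $2^{\Oh{\phi}}$, and a vertex of $\text{conv}((w+C)\cap\Z^N)$ inside the slab projects to a vertex of that lower-dimensional integer hull; by induction each slab contributes $(\Oh{\phi})^{N-2}$ vertices, for $(\Oh{\phi})\cdot(\Oh{\phi})^{N-2}=(\Oh{\phi})^{N-1}$ overall. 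Multiplying by the $M^{\Oh{N}}$ cones from the covering step gives $M^{\Oh{N}}(\Oh{\phi})^{N-1}=(MN\phi)^{\Oh{N}}$, as claimed.

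The main obstacle is making the covering step rigorous: extremality of a point is not preserved when one passes to the superset $(v+C)\cap\Z^N\supseteq P\cap\Z^N$, so one must use that $z$ is the \emph{unique} $c$-maximizer and that $c$ is non-positive on the recession cone of $T_v\supseteq v+C$, which forces all newly introduced integer points to lie weakly on the far side of the supporting hyperplane $\{x:c^Tx=c^Tz\}$ and hence leaves $c$ a valid extremality certificate for $z$ inside $v+C$. The dyadic-slab induction also needs care — the coefficient growth caused by projection and by clearing denominators has to be tracked so that $\phi$ increases only by a constant factor per recursion level — but that is routine bookkeeping rather than a genuine difficulty.
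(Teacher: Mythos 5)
The paper does not prove this statement; it is quoted as a known result of Hartmann and of Cook, Hartmann, Kannan and McDiarmid, so there is no internal proof to compare against. Your outline does follow the structure of that reference (decompose into simplicial tangent cones at LP-optimal vertices, then bound the vertices of the integer hull of a single translated simplicial cone by an inductive slab argument and multiply), so the overall plan is the right one.

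The justification you give for the covering step, however, does not work. You want to argue that a vertex $z$ of $P_I$ with extremality certificate $c$, lying in a simplicial cone $v+C\subseteq T_v$ at the LP-optimal vertex $v$, remains a vertex of $\mathrm{conv}\bigl((v+C)\cap\Z^N\bigr)$, and your reason is that non-positivity of $c$ on the recession cone of $T_v$ ``forces all newly introduced integer points to lie weakly on the far side of $\{x:c^Tx=c^Tz\}$.'' That is not what non-positivity on the recession cone yields: it gives $c^Tx\leq c^Tv$ for every $x\in T_v$, and in general $c^Tv$ strictly exceeds $c^Tz$ (LP optimum versus integer optimum), so there is room for integer points of $(v+C)\setminus P$ with $c$-value strictly between $c^Tz$ and $c^Tv$. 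Such points sit on the same side of $\{c^Tx=c^Tz\}$ as $v$, so $c$ no longer certifies extremality of $z$ in the enlarged set, and nothing in your sketch rules them out. The intended claim is indeed what Hartmann uses, but establishing it needs an argument tied to the LP-basis structure (or an entirely different decomposition), not the one-line monotonicity you invoke. A secondary, smaller gap sits in the slab induction: asserting that ``a vertex of $\mathrm{conv}\bigl((w+C)\cap\Z^N\bigr)$ inside a slab projects to a vertex of the lower-dimensional integer hull'' is false for general projections (projecting a cube to a square destroys four vertices), and the reference argument works with slices and a periodicity lemma rather than with a projection; that step is more than the ``routine bookkeeping'' you flag it as.
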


\section{Tool 1: The Balancing Result by Govzmann \etal}\label{sect:toolone}

How \autoref{prop:jkgr} can be useful for solving scheduling and packing problems can be seen at the example of \tf{\P}{}{\{\cmax,\cmin,\envy\}}. Remember that we are given load thresholds $\ell$ and/or $u$, depending on the objective: 
\begin{restatable}{lemma}{pqpcmaxminenvy}\label{lem:pqpcmaxminenvy}
    \tf{\P}{}{\{\cmax,\cmin,\envy\}} has a \pqmr with $m$ set to the number of machines, $Q_{\cmax}=\left\{\left.\begin{pmatrix}c\\t\end{pmatrix}\in\R_{\geq0}^{d+1}\,\right|\,c=n\right\}$, \\
    $Q_{\cmin}=Q_{\envy}=\left\{\left.\matrix{c\\t_1\\t_2}\in\R_{\geq0}^{d+2}\,\right|\,c=n\right\}$ and respectively
    
    \begin{align*}
        &P_{\cmax}=\left\{\left.\matrix{c\\t}\in\R_{\geq0}^{d+1}\,\right|\,p^Tc+t=u\right\},\\
        &P_{\cmin}=\left\{\left.\matrix{c\\t_1\\t_2}\in\R_{\geq0}^{d+2}\,\right|\,\matrix{p^T & 1 & 0 \\ -p^T & 0 & 1}\matrix{c\\t_1\\t_2}= \matrix{\ell+2\pmax \\ -\ell}\right\} \quad \text{and}\\
        &P_{\envy}=\left\{\left.\matrix{c\\t_1\\t_2}\in\R_{\geq0}^{d+2}\,\right|\,\matrix{p^T & 1 & 0 \\ -p^T & 0 & 1}\matrix{c\\t_1\\t_2}= \matrix{u \\ -\ell}\right\}.
    \end{align*}
\end{restatable}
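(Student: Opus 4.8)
The plan is to verify that the triple $(P,Q,m)$ in each case satisfies \Cref{def:pqrep}: namely that an instance of \tf{\P}{}{\{\cmax,\cmin,\envy\}} is a yes-instance if and only if there is a decomposition $y = y^{(1)} + \dots + y^{(m)} \in \text{int.cone}(P \cap \Z^N) \cap Q$ with each $y^{(i)} \in P \cap \Z^N$. The natural correspondence is: each machine $i$ gets a configuration $c^{(i)} \in \N^d$ of jobs, and the vector $y^{(i)}$ records $c^{(i)}$ together with slack variables that encode how the load $p^T c^{(i)}$ relates to the thresholds. Summing over machines, the job-count part of $y$ must equal $n$ (every job is scheduled exactly once), which is exactly the constraint imposed by $Q$; the slack part is left free in $Q$ since it may differ per machine.

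For the $\cmax$ case I would argue: a point $\binom{c}{t} \in P_{\cmax} \cap \Z^{d+1}$ with $t \ge 0$ is precisely a configuration $c \in \N^d$ with $p^T c \le u$, i.e.\ a valid single-machine assignment respecting the makespan bound $u$; the slack $t = u - p^T c \ge 0$. A decomposition into $m$ such points whose $c$-parts sum to $n$ is exactly a feasible schedule of makespan at most $u$. Conversely any such schedule yields the decomposition. For $\cmin$, the two-row system says $p^T c + t_1 = \ell + 2\pmax$ and $p^T c = \ell + t_2$, so with $t_1, t_2 \ge 0$ this encodes $\ell \le p^T c \le \ell + 2\pmax$; here the balancing result (\Cref{lem:balancing}) is what makes the upper bound $\ell + 2\pmax$ correct — after kernelization every machine load lies within $2\pmax$ of... actually within the stated window, so requiring $p^T c \ge \ell$ on every machine is equivalent to $\cmin \ge \ell$, and the artificial upper bound $\ell + 2\pmax$ does not cut off any feasible solution. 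For $\envy$, the system gives $\ell \le p^T c \le u$ on every machine, which is exactly $\cmin \ge \ell$ and $\cmax \le u$ simultaneously, i.e.\ $\cmax - \cmin \le u - \ell$.

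The key steps in order: (1) state the bijection between points of $P \cap \Z^N$ (with the implicit nonnegativity and integrality) and valid single-machine configurations satisfying the relevant load constraints, checking that the slack variables are forced to be exactly the nonnegative quantities claimed; (2) check that membership of $y$ in $Q$ forces the $c$-coordinates of the sum to equal $n$ and places no further restriction — in particular that the slack coordinates of $Q$ are genuinely unconstrained (they range over all of $\R_{\ge 0}$), so that summing $m$ points of $P$ can indeed land in $Q$; (3) for the $\cmin$ and $\envy$ cases, invoke \Cref{lem:balancing} to justify that the artificial upper bound on the machine loads ($\ell + 2\pmax$ resp.\ $u$) is without loss of generality, so that restricting to the kernelized instance preserves yes/no status; (4) conclude both directions of the equivalence.

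The main obstacle I anticipate is the $\cmin$ direction — specifically arguing cleanly that imposing the two-sided load constraint $\ell \le p^T c^{(i)} \le \ell + 2\pmax$ on \emph{every} machine does not lose feasible solutions. Without the balancing result, a feasible schedule could have a heavily loaded machine (load $\gg \ell$), which would fail the artificial upper bound; one must first apply \Cref{lem:balancing} to pass to a kernel in which all machine loads lie in a window of width $2\pmax$, and then one still needs to check that the \emph{position} of that window is compatible with $[\ell, \ell + 2\pmax]$ — i.e.\ that if $\cmin \ge \ell$ in the kernelized schedule then in fact every load is $\le \ell + 2\pmax$. This is the one place where the specific constant $2\pmax$ matters and where a careless argument would break; everything else (the $\cmax$ and $\envy$ cases, and the routine ``sum of per-machine vectors'' bookkeeping) is straightforward.
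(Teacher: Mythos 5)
Your overall plan matches the paper's: establish the bijection between integral points of $P$ (configurations plus slack) and valid single-machine load vectors, then observe that $Q$ forces the $c$-parts to sum to $n$ while leaving the slacks free. The $\cmax$ and $\envy$ cases are done correctly and essentially as the paper does them (the paper only writes out $\cmax$ and says the others are analogous).

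The one genuine gap is in your justification of the artificial upper bound $\ell + 2\pmax$ for the $\cmin$ case. You invoke \Cref{lem:balancing}, but that result does not deliver what you need: it bounds the number of jobs \emph{of each type} on each machine by $2\pmax$ (so the total load per machine only by $2d\pmax^2$), and its kernelization says nothing about the load being within $2\pmax$ of $\ell$. If you tried to push your argument through with \Cref{lem:balancing} alone you would only get an upper bound like $\ell + 2d\pmax^2$, which is not the constant in the statement. The paper instead relies on a different, simpler observation (\Cref{lem:completionbounds} in the appendix): by repeatedly moving a job from the most loaded to the least loaded machine, one shows there is always an optimal schedule whose machine loads all lie in $[L-\pmax, L+\pmax]$, where $L$ is the average load. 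Combined with the fact that one may assume $\ell \in [L-\pmax, L+\pmax]$ without loss of generality (otherwise feasibility is decidable trivially), this gives $p^T c \le L+\pmax \le \ell + 2\pmax$ on every machine. Your own hedging (``within $2\pmax$ of\dots actually within the stated window'') is a sign you sensed this: the lemma you cite controls component-wise deviations of job counts, not the position of the load window relative to $\ell$, and the step as written does not close.
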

\begin{proof}
    We only show the claim for \tf{\P}{}{\cmax}, so let $P:=P_{\cmax}$ and $Q:=Q_{\cmax}$. The other proofs work analogously, with the only difference being that we can restrict ourselves to configurations that have load at most $\ell+2\pmax$ in the case of $\cmin$, as we show in \autoref{sect:omittedproofs}.
    First note that the $t$-entry fulfills the role of a slack variable, making it so we have an equality constraint.
    Clearly, $P$ is bounded. Consider an instance $I$ of \tf{\P}{}{\cmax} and suppose that it is positive, i.e., there is a schedule of all the jobs with makespan at most $u$. For each machine $k\in[m]$, create a $(d+1)$-dimensional vector $y^{(k)}$ such that $y^{(k)}_i$ is the number of jobs with processing time $p_i$ scheduled on machine $k$ for all $i\in[d]$ and $y^{(k)}_{d+1}=u-\sum_{i=1}^dp_ic_i$. Clearly, summing up the all the vectors $y^{(k)}$ projected to the first $d$ variables yields $\begin{pmatrix}n_1&\hdots&n_d\end{pmatrix}^T$, since all jobs are scheduled. Moreover, we constructed $m$ such vectors. Since the $y^{(k)}$ are integral, we have $\sum_{k=1}^my^{(k)}\in\textup{int.cone}(P\cap\Z^{d})\cap Q$.

    Conversely, suppose there is a vector $y\in\textup{int.cone}(P\cap\Z^{d})\cap Q$. Since $y\in\textup{int.cone}(P\cap\Z^{d})$, $y$ can be written as a conic combination of integral vectors in $P$. Let $S$ be the multi-set of these integral vectors, i.e., $\sum_{y^{(k)}\in S}y^{(k)}=y$. Each $y^{(k)}\in S$ corresponds to an integral $d+1$-dimensional vector. For each $y^{(k)}\in S$, pick an (idle) machine $k\in[m]$ and for every $i\in[d]$, schedule $y^{(k)}_i$ jobs of type $i$ on $k$. Since $y$ is a sum of $m$ vectors, we know that $|S|=m$ and since $y\in Q$, $\sum_{y^{(k)}\in S}y^{(k)}_i=n_i$ holds for every $i\in[d]$. So indeed we schedule all the jobs and have enough machines to do so. Since each $y^{(k)}\in S$ also lies in $P$, the jobs scheduled on the same machine have total processing time at most $u$. So the constructed schedule is feasible and the given instance $I$ is positive.
\end{proof}

So a straightforward application of \autoref{prop:jkgr} to the above formulations would yield algorithms running in time
$|V|^{2^{\Oh{d}}}\enc{P}^{\Oh{1}}\enc{Q}^{\Oh{1}}(\log(m))^{\Oh{1}}$, where $|V|$ is the number of vertices of $P$'s integer hull or in the other case running time
$\enc{P}^{2^{\Oh{d}}}\enc{Q}^{\Oh{1}}(\log(m))^{\Oh{1}}$.
Note however that without further consideration, bounds for $|V|$ either contain a term $(\log(\pmax))^{\Oh{\log(\pmax)}}$ or $\log(\ell)\leq\log(u)$, depending on whether we use \autoref{thm:berndtgeneral} or \autoref{thm:hartmann}. Moreover, $\enc{P}$ also depends on $\log(\ell)\leq\log(u)$. So these do not yet yield a running time $(\log(\pmax))^{2^{\Oh{d}}}\enc{I}^{\Oh{1}}$. This is where the balancing result by Govzmann \etal (\autoref{lem:balancing}) can help: By paying $\Oh{d}$ time (which vanishes in the other terms), we can assume that $\ell\leq u\leq 2d\pmax^2$, as there are at most $2\pmax$ jobs of each type on any machine. Then the running time $\enc{P}^{2^{\Oh{d}}}\enc{Q}^{\Oh{1}}(\log(m))^{\Oh{1}}$ becomes
\begin{align*}
    (d\log(\pmax)+\log(u))^{2^{\Oh{d}}}\enc{I}^{\Oh{1}}
    &= (d\log(\pmax)+\log(2d\pmax^2))^{2^{\Oh{d}}}\enc{I}^{\Oh{1}}
\end{align*}
and with the considerations about optimization (see \autoref{sect:omittedproofs}) in mind, we get the following result:
\pcmaxminenvybalancing*

\section{Tool 2: Proximity}\label{sect:tooltwo}
In some cases, we might not have such a fine preprocessing algorithm like the one by Govzmann~\cite{GMO23} for our problem but we still want to reduce the encoding length of $P$. In this section, we show how a similar preprocessing can be done for general \pqmrs. Each \pqmr also has an equivalent ILP and solving a relaxation of this ILP allows us to remove the dependency on the right-hand-side (e.g.\ $u$ in the case of \tf{\P}{}{\cmax}):
\begin{lemma}\label{lem:pqilp}
    Let $(P,Q,m)$ be a given \pqmr with $P=\left\{\left.x\in\R_{\geq0}^{N}\,\right|\,A^{(P)}x=b^{(P)}\right\}$ and $Q=\left\{\left.x\in\R_{\geq0}^{N}\,\right|\,A^{(Q)}x=b^{(Q)}\right\}$. Then the points $y\in\textup{int.cone}(P\cap\Z^N)\cap Q$ have a one-to-one correspondence to the solutions of the following ILP, projected to block $m+1$ and the other blocks correspond to vectors $y^{(1)},\hdots,y^{(m)}\in P\cap\Z^N$ such that $y=y^{(1)}+\hdots+y^{(m)}$:
    \begin{align*}
        \begin{pmatrix}
        \;
        \begin{array}{@{}*{5}{c}@{}}
        \cline{1-1}\cline{2-2}\cline{4-4}\cline{5-5}
        \lrr{0} & \rr{0} & \hdots & \lrr{0} & \rr{A^{(Q)}}\\
        \lrr{I} & \rr{I} & \hdots & \lrr{I} & \rr{-I} \\
        \cline{1-1}\cline{2-2}\cline{4-4}\cline{5-5}
        \lrr{A^{(P)}} &  &  &  & \\
        \cline{1-1}\cline{2-2}
         & \lrr{A^{(P)}} &  &  & \\
         \cline{2-2}
         &  & \ddots &  & \\
         \cline{4-4}
         &  &  & \lrr{A^{(P)}} & \\
         \cline{4-4}\cline{5-5}
         &  &  &  & \lrr{0} \\
         \cline{5-5}
         \end{array}
         \;\;
        \end{pmatrix}
        x &= \matrix{b^{(Q)}\\\mathbf{0}\\b^{(P)}\\b^{(P)}\\\vdots\\b^{(P)}\\\mathbf{0}}\\
        x&\in\N^{(m+1)N}
    \end{align*}
\end{lemma}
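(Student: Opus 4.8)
The plan is to observe that the displayed ILP is nothing but a block-by-block transcription of the conditions defining a point of $\textup{int.cone}(P\cap\Z^N)\cap Q$ together with a chosen decomposition into $m$ summands, so that the proof reduces to unpacking the block structure. First I would fix notation: write a candidate $x\in\N^{(m+1)N}$ as a stack of blocks $x^{(1)},\dots,x^{(m+1)}\in\N^N$, so that ``block $m+1$'' is $x^{(m+1)}$. Reading the constraint matrix row-block by row-block, feasibility of $x$ is equivalent to the four conditions $A^{(Q)}x^{(m+1)}=b^{(Q)}$ (top block), $\sum_{i=1}^m x^{(i)}-x^{(m+1)}=\mathbf 0$ (the $I\,\cdots\,I\,{-}I$ block), $A^{(P)}x^{(i)}=b^{(P)}$ for all $i\in[m]$ (the $m$ diagonal blocks), and $0\cdot x^{(m+1)}=\mathbf 0$ (the last block, which is vacuous and is included only to give the system the shape needed in the later steps). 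Combined with $x\ge\mathbf 0$ and integrality, these say exactly: $x^{(m+1)}\in Q\cap\Z^N$; $x^{(m+1)}=\sum_{i=1}^m x^{(i)}$; and $x^{(i)}\in P\cap\Z^N$ for every $i\in[m]$.

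Next I would prove the two directions. Given a feasible $x$, set $y:=x^{(m+1)}$ and $y^{(i)}:=x^{(i)}$; then $y=\sum_{i=1}^m y^{(i)}$ with each $y^{(i)}\in P\cap\Z^N$, so $y\in\textup{int.cone}(P\cap\Z^N)$, and $y=x^{(m+1)}\in Q$, i.e.\ $y$ is a point of the required form, witnessed by the decomposition $y^{(1)},\dots,y^{(m)}$. For the converse, starting from $y\in\textup{int.cone}(P\cap\Z^N)\cap Q$ with a decomposition $y=y^{(1)}+\dots+y^{(m)}$, $y^{(i)}\in P\cap\Z^N$ (as in \autoref{def:pqrep}), I would define $x$ by $x^{(i)}:=y^{(i)}$ for $i\in[m]$ and $x^{(m+1)}:=y$; all blocks are nonnegative integral since $P\cap\Z^N,\,Q\cap\Z^N\subseteq\R^N_{\geq 0}$, and the four conditions above hold by construction, so $x$ is feasible and its projection to block $m+1$ returns $y$ while the remaining blocks return the decomposition.

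Finally I would note that these two assignments are literally ``concatenate the blocks'' and ``split into blocks'', hence mutually inverse, which gives the claimed one-to-one correspondence between feasible ILP solutions and pairs consisting of a point $y$ and a decomposition $y=y^{(1)}+\dots+y^{(m)}$ into elements of $P\cap\Z^N$. I do not expect a genuine obstacle in this lemma; the only point deserving a careful word is the reading of $\textup{int.cone}(P\cap\Z^N)\cap Q$ — a single $y$ may admit several decompositions into $m$ summands, so the bijection is really with the decompositions (equivalently, with ILP solutions), and one recovers the statement exactly in the sense of \autoref{def:pqrep} by forgetting the decomposition.
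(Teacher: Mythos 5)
Your proposal is correct and takes essentially the same route as the paper's proof: both directions proceed by directly matching the blocks of $x$ to the decomposition $y^{(1)},\dots,y^{(m)},y$ and reading off the block rows of the system. You spell out the block-by-block unpacking in more detail and flag the (minor but real) point that the bijection is with decompositions rather than with the points $y$ alone, which the paper glosses over; otherwise the argument is the same.
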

\begin{proof}
    Suppose $y=y^{(1)}+\hdots+y^{(m)}\in\textup{int.cone}(P\cap\Z^N)\cap Q$ such that for all $i\in[m]$, $y^{(i)}\in P\cap\Z^N$. Then $x:=\matrix{y^{(1)}&\hdots&y^{(m)}&y}^T$ is a solution of the above ILP. 
    Now suppose we have a solution $x=\matrix{y^{(1)}&\hdots&y^{(m)}&y}^T$. Then by the local, non-negativity and integrality constraints, we have $y^{(i)}\in P\cap\Z^N$ for all $i\in[m]$. By the first set of constraints, we have $y\in Q$ and by the second set of constraints, we have $y=y^{(1)}+\hdots+y^{(m)}$. Hence, $y\in\textup{int.cone}(P\cap\Z^N)\cap Q$.
\end{proof}

The ILP in \autoref{lem:pqilp} has a so-called $n$-fold structure, as indicated by the framing (see e.g.~\cite{EHK18}).
We use a proximity result by Clsovjecsek \etal~\cite{CEW20} to bound the vectors in $P$, which allows us to reduce the coefficients in the right-hand-side.\footnote{We cite the first version of their paper here, because it is much easier to see how the approach for solving the relaxation can be adapted to the case where many blocks are identical.} The key idea is to solve a relaxation of the ILP corresponding to the \pqmr.
In general, this \textit{convexified relaxation} of an $n$-fold ILP looks like this:
\begin{align*}
    \max\,\, &\sum_{i=1}^n(c^{(i)})^Tx^{(i)} \\
    \sum_{i=1}^n C^{(i)}x^{(i)} &= b^{(0)}\\
    x^{(i)}\in Q^{(i)}&:=\text{int.hull}\left(\{x\in\R^t\,|B^{(i)}x=b^{(i)},\,\mathbf{0}\leq x\leq u^{(i)}\}\right) \quad \forall i\in[n]
\end{align*}
where the $C^{(i)}\in\Z^{r\times t}$ are the blocks at the top, $b^{(0)}\in\Z^r$ is the corresponding right-hand-side, the $B^{(i)}\in\Z^{s\times t}$ are the blocks on the diagonal, the $b^{(i)}\in\Z^s$ are the corresponding right-hand-sides, the $c^{(i)}\in\Z^t$ are the parts of the objective function and the $u\in\Z^t$ are upper bounds for the variables.
Cslovjecsek \etal~\cite{CEW20} show that the Lagrangean dual (w.r.t. the first $r$ constraints) of this problem can be solved with at most $r^{\Oh{1}}\log^2(ntK)$ calls to a separation oracle, where $K$ is the largest number appearing in the objective function, matrix, right-hand-side and upper bounds. 
The separation oracle has to solve
\begin{align*}
    L(\lambda)=\max_{\substack{x^{(i)}\in Q^{(i)}\\i=1,\hdots,n}}\left\{\sum_{i=1}^n(c^{(i)})^Tx^{(i)}+\lambda^T(\sum_{i=1}^nC^{(i)}x^{(i)}-b^{(0)})\right\}
\end{align*}
for a given $\lambda\in\R^r$. In general, this can be done by solving ILPs corresponding to the $n$ blocks $Q^{(i)}$ separately, which may take time $n t^{\Oh{t}} (st\log(K))^{\Oh{1}}$ using the algorithm by Kannan~\cite{Kannan87} $n$ times.\footnote{Cslovjecsek \etal~\cite{CEW20} use the algorithm by Eisenbrand and Weismantel~\cite{EW19} instead, but here it makes more sense to use the algorithm by Kannan~\cite{Kannan87}, as our running time will be exponential in $t$ anyway.} However, note that that if for two blocks $i$ and $j$, we have $c^{(i)}=c^{(j)}$, as well as $B^{(i)}=B^{(j)}$, $C^{(i)}=C^{(j)}$, $u^{(i)}=u^{(j)}$ and $b^{(i)}=b^{(j)}$, the ILP has the same optimal solution(s). 

When we have computed a solution of the Lagrangean dual, we still have to compute a solution of the primal, i.e., the convexified relaxation. This part is described in more detail in the second version of the paper by Cslovjecsek \etal~\cite{CEHRW20}: While solving the dual with the cutting plane algorithm, each call to the separation oracle yields an optimizer $z^{(j)}\in\Z^{nt}$ of the maximum in iteration $j$. One then only has to find a vector $x^*\in\R^{nt}$ in the convex hull of the $z^{(j)}$ that also fulfills the linking constraints $\sum_{i=1}^n C^{(i)}x^{(i)} = b^{(0)}$ and maximizes the objective $\max\sum_{i=1}^n(c^{(i)})^Tx^{(i)}$. This LP has many variables (equal to the number of calls to the separation oracle) but only $r+1$ constraints. So it can be solved e.g.\ with the algorithm by Megiddo~\cite{Megiddo84} (applied to the dual of this LP; it also provides a primal solution) in a time that is polynomial in the number of oracle calls and $2^{\Oh{r^2}}$.
Altogether, for $\tau$ different blocks, this yields running time
    \begin{align*}
        &\underbrace{r^{\Oh{1}}\log^2(ntK)}_{\text{number of oracle calls}}\underbrace{\tau t^{\Oh{t}} (st\log(K))^{\Oh{1}}}_{\text{time for one oracle call}}+\underbrace{(r^{\Oh{1}}\log^2(ntK))^{\Oh{1}}2^{\Oh{r^2}}}_{\text{time for solving the LP}}\\
        \leq&\tau t^{\Oh{t}}2^{\Oh{r^2}} (sr\log(Kn))^{\Oh{1}}
    \end{align*}
and we get the following result:
\begin{proposition}\label{prop:relaxation}
    Let $\tau$ be the number of different $\left(c^{(i)},B^{(i)},C^{(i)},b^{(i)},u^{(i)}\right)$-pairs, $n$ the number of blocks, $r$, $s$, $t$ the block dimensions and $K$ the largest number appearing in the objective function, matrix, right-hand-side and the upper bounds. Then the convexified relaxation can be solved in time
    $$\tau t^{\Oh{t}}2^{\Oh{r^2}} (sr\log(Kn))^{\Oh{1}}.$$
\end{proposition}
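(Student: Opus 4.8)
The plan is to follow the standard two-phase strategy for solving the convexified relaxation of an $n$-fold ILP: first solve its Lagrangean dual with respect to the $r$ linking constraints $\sum_{i=1}^n C^{(i)}x^{(i)}=b^{(0)}$, and then reconstruct an optimal primal point from the data produced during the dual computation. For the first phase I would invoke the cutting-plane method of Cslovjecsek \etal~\cite{CEW20}, which solves this Lagrangean dual using at most $r^{\Oh{1}}\log^2(ntK)$ calls to a separation oracle. A single oracle call must evaluate $L(\lambda)=\max_{x^{(i)}\in Q^{(i)}}\{\sum_{i=1}^n(c^{(i)})^Tx^{(i)}+\lambda^T(\sum_{i=1}^nC^{(i)}x^{(i)}-b^{(0)})\}$ for a (rational) $\lambda\in\R^r$; this maximum decomposes over the $n$ blocks, and since maximizing a linear function over $\text{int.hull}(S)$ coincides with maximizing it over $S\cap\Z^t$, each block is an integer program in $t$ variables with $s$ constraints, solvable in time $t^{\Oh{t}}(st\log K)^{\Oh{1}}$ by Kannan's algorithm~\cite{Kannan87} (the rational, $\lambda$-weighted objective causes no problem, its encoding staying polynomially bounded along the cutting-plane run).

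The efficiency gain over a naive bound comes from observing that whenever two blocks $i,j$ share the tuple $(c^{(i)},B^{(i)},C^{(i)},b^{(i)},u^{(i)})$, the corresponding subproblems inside the oracle are literally identical, so one optimizer can simply be copied. Hence a single oracle call requires only $\tau$ distinct integer-program solves rather than $n$, costing $\tau\,t^{\Oh{t}}(st\log K)^{\Oh{1}}$, and the dual phase costs $r^{\Oh{1}}\log^2(ntK)\cdot\tau\,t^{\Oh{t}}(st\log K)^{\Oh{1}}$ in total.

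For the second phase I would use the primal-recovery procedure described in the second version~\cite{CEHRW20}: each oracle call returns an integral optimizer $z^{(j)}\in\Z^{nt}$, and one then seeks $x^{*}\in\R^{nt}$ in the convex hull of the collected $z^{(j)}$ that satisfies $\sum_{i=1}^nC^{(i)}x^{*(i)}=b^{(0)}$ and maximizes $\sum_{i=1}^n(c^{(i)})^Tx^{*(i)}$. Writing $x^{*}=\sum_j\mu_j z^{(j)}$ turns this into a linear program in the $\mu_j$ with only $r+1$ constraints and a number of variables equal to the number of oracle calls, i.e.\ $r^{\Oh{1}}\log^2(ntK)$; passing to its dual (an LP with $\Oh{r}$ variables) and running Megiddo's algorithm~\cite{Megiddo84} solves it --- and yields a matching primal solution --- in time $(r^{\Oh{1}}\log^2(ntK))^{\Oh{1}}2^{\Oh{r^2}}$. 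Summing the two phases and absorbing lower-order terms (the factor $\log(ntK)$ being polynomially bounded in $sr\log(Kn)$ and $t$, and $t$ already dominated by $t^{\Oh{t}}$) gives the claimed $\tau\,t^{\Oh{t}}2^{\Oh{r^2}}(sr\log(Kn))^{\Oh{1}}$. The step I expect to need the most care is the correctness of this primal recovery: showing that $x^{*}$ is a genuine optimum of the convexified relaxation rather than merely feasible. This relies on LP strong duality together with the fact that the $z^{(j)}$ are precisely the vertices of the product $\prod_i Q^{(i)}$ activated by the dual cutting planes, so that the optimal dual certificate pins down a primal optimum supported on them.
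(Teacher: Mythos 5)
Your proposal follows the paper's argument step for step: solving the Lagrangean dual via the cutting-plane method of Cslovjecsek~\etal~\cite{CEW20} with $r^{\Oh{1}}\log^2(ntK)$ oracle calls, observing that only $\tau$ of the $n$ block subproblems are distinct so each oracle call costs $\tau\,t^{\Oh{t}}(st\log K)^{\Oh{1}}$ via Kannan~\cite{Kannan87}, and recovering a primal solution by solving the small-constraint LP over the collected optimizers with Megiddo's algorithm~\cite{Megiddo84} on its dual. This is essentially the same proof as in the paper; the brief additional remark on why primal recovery is correct is a welcome clarification but does not change the route.
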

Cslovjecsek \etal~\cite{CEHRW20} show that the convexified relaxation has nice proximity properties, namely they show the following:
\begin{theorem}[Cslovjecsek \etal~\cite{CEHRW20}]\label{thm:convrelproximity}
    For every solution $x^*$ of the convexified relaxation, there exists a solution $z^*$ of the corresponding integer program such that $\norm{x^*-z^*}_1\leq(r\Delta (s\Delta)^{\Oh{s}})^{\Oh{r}}$, where 
    $\Delta$ is the largest absolute value in the matrix.
\end{theorem}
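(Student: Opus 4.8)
The plan is a two-step argument, and the form of the claimed bound — free of both the number of blocks $n$ and the block width $t$ — already dictates what the proof must achieve: (i) show that the fractionality of $x^*$ is confined to at most $r$ of the $n$ blocks, and (ii) bound the rounding inside those blocks by a quantity seeing only $r$, $s$ and $\Delta$. Write $\mathcal C=\{x=(x^{(1)},\dots,x^{(n)})\mid x^{(i)}\in Q^{(i)}\text{ for }i\in[n],\ \sum_{i=1}^nC^{(i)}x^{(i)}=b^{(0)}\}$ for the feasible region of the convexified relaxation, with $Q^{(i)}=\text{int.hull}(\{x\in\R^t\mid B^{(i)}x=b^{(i)},\ \zero\le x\le u^{(i)}\})$, and assume the integer program is feasible (otherwise there is nothing to produce). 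I take $x^*$ to be a vertex of $\mathcal C$ — the case that occurs in the application (where $x^*$ is computed via an LP solver, cf.\ \autoref{prop:relaxation}), and the only one in which a bound free of $n$ can hold.

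Step 1 is a dimension count. Each $Q^{(i)}$ is a polytope all of whose vertices are integral (vertices of an integer hull are lattice points), and $Q^{(i)}\subseteq\{x\mid B^{(i)}x=b^{(i)}\}$. Let $F_i$ be the minimal face of $Q^{(i)}$ containing $x^{*(i)}$; since faces of products are products of faces, $x^*$ lies in the relative interior of the face $\prod_iF_i$ of $\prod_iQ^{(i)}$, of dimension $D:=\sum_i\dim F_i$. For any direction $v$ in the linear span of $\prod_iF_i-x^*$ there is $\eps>0$ with $x^*\pm\eps v\in\prod_iQ^{(i)}$; such a point also lies in $\mathcal C$ iff $v$ is parallel to the affine subspace cut out by the $r$ linking equations, which has codimension $\le r$. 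Were a nonzero such $v$ available, $x^*$ would be interior to a segment of $\mathcal C$ and not a vertex; hence that linear span meets this codimension-$\le r$ subspace only in $\zero$, forcing $D\le r$. In particular at most $r$ indices have $\dim F_i\ge1$, and for every other $i$ the face $F_i$ is a vertex of $Q^{(i)}$, so $x^{*(i)}\in\Z^t$.

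Step 2 rounds. Let $z^*$ be a feasible integer solution closest to $x^*$ in $\ell_1$; then $y:=x^*-z^*$ lies in the kernel of the $n$-fold constraint matrix, with $B^{(i)}y^{(i)}=0$ for all $i$ and $\sum_iC^{(i)}y^{(i)}=\zero$. Decompose $y$ conformally as $\sum_j\gamma_jg_j$ into Graver basis elements $g_j$ of the $n$-fold matrix that are sign-compatible with $y$; since $z^*$ is closest, every $\gamma_j<1$ (otherwise $z^*+g_j$ is feasible and closer). Two inputs from the $n$-fold literature then finish the estimate: every such Graver element satisfies $\norm{g_j}_1\le(r\Delta(s\Delta)^{\Oh{s}})^{\Oh{r}}$, which is free of $n$ and $t$ — Cramer's rule bounds the $\ell_1$-norm of an elementary vector of a single $s\times t$ block $B^{(i)}$ by $(s\Delta)^{\Oh{s}}$, and the Graver-complexity escalation over the $r$ linking rows contributes the outer exponent — and the block-sparsity from Step 1 keeps the number of Graver elements needed to decompose $y$ bounded in terms of $r$ and $s$ alone (rather than $n$ or $t$). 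Multiplying gives $\norm{x^*-z^*}_1=\norm{y}_1\le\sum_j\gamma_j\norm{g_j}_1\le(r\Delta(s\Delta)^{\Oh{s}})^{\Oh{r}}$.

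The hard part is twofold. First, making the Step 1 dimension argument airtight — relating a vertex of the intersection $\prod_iQ^{(i)}\cap H$ to the dimensions of the factor faces $F_i$ — and clarifying that "solution" has to mean a vertex solution, since for an interior $x^*$ the number of fractional blocks, and hence the bound, can blow up with $n$. Second, the bookkeeping in Step 2 that couples Step 1's block-sparsity to the conformal decomposition: the nearest integer solution need not agree with $x^*$ on the integral blocks, as a correction can propagate across blocks via Graver moves, so one has to track that this propagation stays confined to $\Oh{r}$ blocks (each touched only in an $\Oh{s}$-sparse way) in order to keep the term count — and thus the final bound — independent of $n$ and $t$. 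Everything else is the standard $n$-fold Graver machinery used as a black box.
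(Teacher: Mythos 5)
This theorem is imported verbatim from Cslovjecsek~\etal~\cite{CEHRW20}; the paper under review does not reprove it, so there is no in-paper proof to compare against. Your Step 1 --- the dimension count showing a vertex $x^*$ of the convexified relaxation is fractional in at most $r$ bricks --- is correct, and your observation that ``solution'' must be read as \emph{vertex} solution (otherwise the $\ell_1$ distance to the nearest integer point can grow with $n$) is a good catch about a genuine imprecision in the statement.

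The gap is where you flag it, and it is more than bookkeeping. In Step 2 you decompose $y = x^*-z^*$ conformally into $n$-fold Graver elements with coefficients $\gamma_j<1$ and then assert that the block-sparsity from Step 1 keeps the number of Graver summands bounded in $r$ and $s$ alone. Two independent problems. First, block-sparsity of $x^*$ does not transfer to $y$: the nearest feasible integer $z^*$ can, and in general must, differ from $x^*$ on the integral bricks, because rounding the $\le r$ fractional bricks perturbs $\sum_i C^{(i)}x^{*(i)}$ and the correction propagates; and you cannot simply choose $z^*$ to agree with $x^*$ off $I$, because the resulting $|I|$-block subproblem (find integer points in $Q^{(i)}$, $i\in I$, with the right $C$-sum) is only known to be fractionally feasible, not integer-feasible --- that needs its own argument. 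Second, even if $y$ were supported on $\Oh{r}$ bricks, the positive-sum property for conformal Graver decompositions gives a term count on the order of the ambient dimension of the support, i.e.\ $\Oh{rt}$, which is still $t$-dependent, while the target $(r\Delta(s\Delta)^{\Oh{s}})^{\Oh{r}}$ is $t$-free; term-counting cannot close the bound. A proof of the stated shape must bound $\norm{y}_1$ directly --- e.g.\ via a Steinitz-type rearrangement of the brick contributions $C^{(i)}y^{(i)}$ to the $r$ linking rows --- rather than by multiplying (number of Graver summands) $\times$ (Graver norm). Your Step 1 is the right preparation; the rounding engine you reach for is not strong enough.
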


By first solving the convexified relaxation using \autoref{prop:relaxation} and then applying the proximity bound from \autoref{thm:convrelproximity}, we can bound the solutions of the ILP. For a full proof, see \autoref{sect:omittedproofs}.
\begin{restatable}{theorem}{balancinggeneral}\label{thm:balancinggeneral}
    Let $P=\left\{\left.x\in\R_{\geq0}^N\,\right|\,A^{(P)}x=b^{(P)}\right\}$ with $M^{(P)}$ constraints and $Q=\left\{\left.x\in\R_{\geq0}^N\,\right|\,A^{(Q)}x=b^{(Q)}\right\}$ with $M^{(Q)}$ constraints, and consider the \pqmr $(P,Q,m)$. Then by solving the convexified relaxation of the corresponding ILP in time 
    \begin{align*}
        &2^{\Oh{(M^{(Q)}+N)^2}} \left(M^{(P)}\log\left(\max\left\{\norm{A^{(P)}}_\infty, \norm{A^{(Q)}}_\infty, \right.\right.\right.\\
        &\left.\left.\left.\norm{b^{(P)}}_\infty,\norm{b^{(Q)}}_\infty\right\}m\right)\right)^{\Oh{1}}
    \end{align*}
    and modifying $P$, we can assume that the points in $P$ have $\ell_\infty$-norm at most
    \[\left((M^{(Q)}+N)M^{(P)}\max\left\{\norm{A^{(P)}}_\infty, \norm{A^{(Q)}}_\infty\right\}\right)^{\Oh{(M^{(Q)}+N)M^{(P)}}}.\]
\end{restatable}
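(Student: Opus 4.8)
The plan is to turn the $\pqmr$ $(P,Q,m)$ into the $n$-fold ILP of \autoref{lem:pqilp}, read off its block parameters, apply \autoref{prop:relaxation} to solve the convexified relaxation of that ILP, apply the proximity result \autoref{thm:convrelproximity}, and finally translate a bound on the relaxation solution plus the proximity distance into a bound on the entries of the block-$(m+1)$ solution — which is exactly the point $y\in\textup{int.cone}(P\cap\Z^N)\cap Q$ — and, via the local constraints, on each $y^{(i)}\in P\cap\Z^N$. Once we know that every relevant point of $P$ that could ever occur in a decomposition has $\ell_\infty$-norm at most some bound $B$, we add the box constraints $0\le x\le B$ to the description of $P$ (this is the ``modifying $P$'' step): this does not change which instances are positive, because any feasible decomposition can be assumed to use only such points.

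\textbf{Identifying the block structure.} First I would match the ILP of \autoref{lem:pqilp} against the generic $n$-fold template used in \autoref{prop:relaxation}. There are $n=m+1$ blocks, each of width $t=N$. The linking block at the top has $r=M^{(Q)}+N$ rows (the $M^{(Q)}$ rows of $A^{(Q)}$ together with the $N$ rows enforcing $y=\sum_i y^{(i)}$), so $r=M^{(Q)}+N$. The diagonal blocks are $A^{(P)}$ (and a zero block for the last coordinate block), so $s=M^{(P)}$. The largest coefficient $K$ appearing anywhere — in the matrix, right-hand side, objective (which is trivial here, so contributes nothing), and the upper bounds — is bounded by $\max\{\norm{A^{(P)}}_\infty,\norm{A^{(Q)}}_\infty,\norm{b^{(P)}}_\infty,\norm{b^{(Q)}}_\infty\}$ times $m$ (the $m$ enters because, after moving slack/summation around, or because the only a priori bound on the $y$-coordinates before this lemma is $m$ times a coordinate of $b^{(Q)}$-ish quantities; in any case $\log K \le \Oh{1}\cdot\log(\Delta m)$ with $\Delta$ the stated max). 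The number $\tau$ of distinct $(c^{(i)},B^{(i)},C^{(i)},b^{(i)},u^{(i)})$-tuples is $\Oh{1}$: the first $m$ blocks are all identical and the $(m+1)$-st is the only other one. Plugging $n=m+1$, $t=N$, $r=M^{(Q)}+N$, $s=M^{(P)}$, $\tau=\Oh{1}$ into the bound of \autoref{prop:relaxation} gives
\[
N^{\Oh{N}}\,2^{\Oh{(M^{(Q)}+N)^2}}\,\bigl(M^{(P)}(M^{(Q)}+N)\log(\Delta m)\bigr)^{\Oh{1}},
\]
and absorbing $N^{\Oh{N}}$ into $2^{\Oh{(M^{(Q)}+N)^2}}$ and $(M^{(Q)}+N)^{\Oh{1}}$ into the same factor yields the claimed running time for solving the relaxation.

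\textbf{From the relaxation solution to the norm bound.} A basic optimal solution $x^*$ of the convexified relaxation is the convex combination of at most $r+1$ integer optimizers $z^{(j)}$ of the Lagrangean subproblems, and each such $z^{(j)}$ lies in some $Q^{(i)}=\text{int.hull}\{x\in\R^N : A^{(P)}x=b^{(P)},\,0\le x\le u^{(i)}\}$ (or the trivial last block), hence is itself a vertex of the integer hull of $P$ (intersected with a box), so its $\ell_\infty$-norm is at most a bound coming e.g. from \autoref{thm:hartmann} / Cramer's rule applied to an $M^{(P)}\times N$ system with coefficients $\le\norm{A^{(P)}}_\infty$; this is where a term of the shape $\bigl(M^{(P)}\norm{A^{(P)}}_\infty\bigr)^{\Oh{M^{(P)}}}$ or $(N\Delta)^{\Oh{N}}$ enters. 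Then $\norm{x^*}_\infty\le\max_j\norm{z^{(j)}}_\infty$, and \autoref{thm:convrelproximity} (with its $\Delta$ being $\norm{A^{(P)}}_\infty$, the ``$s$'' there being $M^{(P)}$ and the ``$r$'' there being $M^{(Q)}+N$) gives an integer solution $z^*$ with $\norm{x^*-z^*}_1\le\bigl((M^{(Q)}+N)\Delta(M^{(P)}\Delta)^{\Oh{M^{(P)}}}\bigr)^{\Oh{M^{(Q)}+N}}$, so $\norm{z^*}_\infty$ is bounded by the sum of these two terms, which is dominated by $\bigl((M^{(Q)}+N)M^{(P)}\max\{\norm{A^{(P)}}_\infty,\norm{A^{(Q)}}_\infty\}\bigr)^{\Oh{(M^{(Q)}+N)M^{(P)}}}$ — exactly the stated bound. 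Since the block-$m{+}1$ part of $z^*$ equals $y=\sum_i y^{(i)}$ and the local constraints force $0\le y^{(i)}\le y$, each $y^{(i)}\in P\cap\Z^N$ obeys the same bound, which is what licenses adding the box constraint to $P$. **The main obstacle** I expect is bookkeeping: matching the roles of $r,s,t,\Delta,K$ between \autoref{prop:relaxation}, \autoref{thm:convrelproximity} and the concrete $n$-fold of \autoref{lem:pqilp} without losing or double-counting the $m$-factor inside the logarithm, and checking that every integer optimizer returned by the oracle really does satisfy a small $\ell_\infty$-bound (rather than only the relaxation optimum being small) so that the proximity step has something to be applied to.
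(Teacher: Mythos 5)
Your block-parameter bookkeeping and the application of \autoref{prop:relaxation} to solve the relaxation look fine, and the running-time estimate essentially matches the paper. The genuine gap is in the final step, where you try to bound $\norm{x^*}_\infty$ (and hence $\norm{z^*}_\infty$) directly by arguing that each oracle-optimizer $z^{(j)}$ is a vertex of the integer hull of (a box-restricted copy of) $P$, and therefore has $\ell_\infty$-norm at most something like $(M^{(P)}\norm{A^{(P)}}_\infty)^{\Oh{M^{(P)}}}$. That claim is false: a vertex of the integer hull of $\{x\geq 0\,:\,A^{(P)}x=b^{(P)}\}$ can be as large as the right-hand side demands (take $N=1$, $A^{(P)}=(1)$, $b^{(P)}=(T)$; the unique vertex has norm $T$). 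Any Cramer/Hartmann-type bound necessarily involves $\norm{b^{(P)}}_\infty$ and $\norm{u}_\infty$, so the absolute norm of $x^*$ simply cannot be bounded by matrix entries alone, and adding the proximity distance to a large $\norm{x^*}_\infty$ does not give the claimed $b$-independent bound.

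What the paper does instead is crucially different: it never claims $\norm{x^*}_\infty$ is small. It first averages the first $m$ blocks of the relaxation solution (so that all $(x^*)^{(i)}$ are equal, a step you omit, and which is needed so that after shifting one obtains a \emph{single} modified polytope $P$ shared by all blocks), and then uses \autoref{thm:convrelproximity} only to conclude that the unknown integer solution $z^{(i)}$ lies within $\ell_1$-distance $D$ of the (possibly huge, but now \emph{known}) $(x^*)^{(i)}$. It then declares the floor $\max\{(x^*)^{(i)}-D,\mathbf{0}\}$ as a fixed part of the solution, replaces $b^{(P)}$ by $b^{(P)}-A^{(P)}\max\{(x^*)^{(i)}-D,\mathbf{0}\}$, and observes that the \emph{residual} variable $y^{(i)}=z^{(i)}-\max\{(x^*)^{(i)}-D,\mathbf{0}\}$ is what must satisfy $\norm{y^{(i)}}_\infty\leq 2D$. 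The $b$-independence of the final bound thus comes entirely from $D=(rs\Delta)^{\Oh{rs}}$ depending only on $r=M^{(Q)}+N$, $s=M^{(P)}$ and $\Delta=\max\{\norm{A^{(P)}}_\infty,\norm{A^{(Q)}}_\infty\}$, not from any absolute smallness of $x^*$. You would need to add both the averaging step and the shift-by-fractional-base construction to make your argument go through.
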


We now have all the tools to prove \autoref{thm:main}. The key idea is to combine \autoref{thm:balancinggeneral} with the result by Frank and Tardos (\autoref{cor:franktardos}). This reduces the coefficients in the inequalities describing $P$. In particular, it makes their size independent of the original right-hand-side. For a full proof, see \autoref{sect:omittedproofs}.
\maintheorem*

Note how this result also yields an algorithm for \tf{\P}{}{\{\cmax,\cmin,\envy\}} with running time $(\log(\pmax))^{2^{\Oh{d}}}\enc{I}^{\Oh{1}}$, but with worse constants.

\section{Tool 3: More Fitting Bounds for the Vertices of the Integer Hull}\label{sect:toolthree}
As mentioned above, given a polytope $P=\{x\in\R^N\,|\,Ax\leq b\}$, there are upper bounds for the number of vertices of $P$'s integer hull. But they either depend on $\log(\norm{b}_\infty)$ (like \autoref{thm:hartmann}) or are exponential in $\log(\norm{A}_\infty)$ (like \autoref{thm:berndtgeneral}). For using the algorithm by Jansen and Klein~\cite{JK20} -- which takes time $|V|^{2^{\Oh{d}}}\enc{P}^{\Oh{1}}\enc{Q}^{\Oh{1}}\log(m)^{\Oh{1}}$ to solve a \pqmr -- it would be nice to have a bound that does not depend on $\log(\norm{b}_\infty)$ but is also not exponential in $\log(\norm{A}_\infty)$. Fortunately, the proof by Berndt \etal~\cite{BJK21} can be adapted to get the following result:
\vertexboundimproved*
\begin{proof} (Sketch).
    Denote by $P_I$ the integer hull of $P$. It is well-known that $P$ has at most $\binom{N}{M}$ vertices (the basic feasible solutions). Eisenbrand and Weismantel~\cite{EW19} have shown that for each vertex $z^*$ of $P_I$, there exists a vertex $x^*$ of $P$ such that $\norm{x^*-z^*}_1\leq M(2M\Delta+1)^M$.\footnote{To be precise, they showed that for each optimal solution of the associated LP, there exists an optimal solution of the corresponding ILP with this distance. But by perturbing the linear objective function, every vertex of $P_I$ can be made a unique optimum.} 
    The idea of the proof by Berndt \etal~\cite{BJK21} is to use this proximity result to remove the dependency on $\norm{b}_\infty$. Each vertex of $P_I$ is associated with at least one vertex of $P$; if we were to draw balls around each vertex of $P$ with the radius set to the proximity bound, each vertex of $P_I$ would be in at least one of these circles. Instead of rasterizing the whole $P_I$, Berndt \etal only rasterize the regions around the vertices of $P$. This yields at most $\Oh{\log(M(2M\Delta+1)^M)}^N=\Oh{M\log(M\Delta)}^N$ boxes per vertex of $P$, each of which contains at most one vertex of $P_I$. So $P_I$ has at most
    \begin{align*}
        \binom{N}{M}\Oh{M\log(M\Delta)}^N&\leq N^M\Oh{M\log(M\Delta)}^N
    \end{align*}
    vertices. Berndt \etal then proceed to reduce the exponent so that it does not depend on $N$, which might in some cases be rather large compared to $\Delta$ and $M$.
\end{proof}

We can use this result to solve \pqmrs with the algorithm by Jansen and Klein~\cite{JK20}:
\mainvertices*
\begin{proof}
    We use the $|V|^{2^{\Oh{N}}}(\enc{P}\enc{Q}\log(m))^{\Oh{1}}$-time algorithm by Jansen and Klein~\cite{JK20} (see \autoref{prop:jkgr}), so we need to bound $|V|$, the number of vertices of $P$'s integer hull. For this, we use \autoref{thm:vertexboundimproved},
    which yields
    \begin{align*}
        |V|&\leq N^{M^{(P)}}O\left(M^{(P)}\log\left(M^{(P)}\norm{A^{(P)}}_\infty\right)\right)^{N}.
    \end{align*}
    In total, we get the following running time:
    \begin{align*}
        &|V|^{2^{\Oh{N}}}(\enc{P}\enc{Q}\log(m))^{\Oh{1}} \\
        =&\left(N^{M^{(P)}}O\left(M^{(P)}\log\left(M^{(P)}\norm{A^{(P)}}_\infty\right)\right)^{N}\right)^{2^{\Oh{N}}}(\enc{P}\enc{Q}\log(m))^{\Oh{1}}\\
        =&\left(N^{M^{(P)}}M^{(P)}\log\left(\norm{A^{(P)}}_\infty\right)\right)^{2^{\Oh{N}}}(\enc{P}\enc{Q}\log(m))^{\Oh{1}}
    \end{align*}
\end{proof}
Observe that this also yields an algorithm for \tf{\P}{}{\{\cmax,\cmin\envy\}} with running time $(\log(\pmax))^{2^{\Oh{d}}}\enc{I}^{\Oh{1}}$ (again, with worse constants than in \autoref{thm:pcmaxminenvy}).

\section{Conclusion and Future Work}\label{sect:conclusion}
In this work, we investigated three tools that improve the running time of the algorithm by Goemans and Rothvoss~\cite{GR20}. The first tool could be considered a problem-specific preprocessing, the second tool uses a special relaxation of an $n$-fold ILP and proximity and the third tool utilizes the framework by Jansen and Klein~\cite{JK20} together with new customized bounds for the vertices of the integer hull. Overall, each tool provides a $(\log(\pmax))^{2^{\Oh{d}}}\enc{I}^{\Oh{1}}$-time algorithm for \tf{\P}{}{\{\cmax,\cmin,\envy\}}, which is FPT w.r.t. $d$ if the processing times are given in unary. The second and third tool also yield similar algorithms for other problems.
The second tool provides a bound for the variables in $P$. In some cases, there are natural bounds for the variables in $P$ already, e.g.\, $\n$ for \tf{\P}{}{\{\cmax,\cmin,\envy\}}. Of course, one can also use such bounds instead of proximity, if they produce a better result.
On another note, it would be interesting to apply the improved framework to other scheduling and packing problems like the ones considered by Goemans and Rothvoss~\cite{GR20} and to see how eliminating the right-hand-side influences the running times. Another very interesting direction is incorporating our ideas into the framework by Knop \etal~\cite{KKLMO21} to handle multiple types of $P$-polytopes. This might then extend our results to scheduling on uniform and unrelated machines with and without deadlines or release dates, parallel task scheduling, 2-dimensional packing as well as other objectives in an efficient way.

%
%
%
\newpage
\bibliographystyle{splncs04}
%
\bibliography{bib}

\newpage

\appendix
\section{Applications}\label{sect:applications}

We now apply \autoref{thm:main} to \tf{\P}{\dj,\vec,\class,\capacity}{\{\cmax,\cmin,\envy\}} (a quite general problem with interesting special cases), the objective $\sumwu$, setup times, \mswbp and uniform $n$-fold ILPs.

The other objective we consider in this section is minimizing the weighted number of late jobs ($\sumwu$). Each job has a due date $d_j$ and a weight $w_j$ and the goal is to find a schedule such that the weight of the jobs missing their due date is minimized. Since it does not matter by how much late jobs miss their due date, they might as well not be scheduled at all (or way after the other jobs) and we may focus on finding a set of jobs that can be scheduled early. For those jobs, it then only matters on which machine they are scheduled, not in which order, as it is well-known that they may be scheduled in EDD (earliest due date) order~\cite{LS20}.

We also consider several job characteristics: Deadlines ($\dj$), multidimensional jobs ($\vec$), class constraints ($\class$), setup times ($\setup$) and capacity constraints ($\capacity$). If there are deadlines (not to be confused with the due dates from the $\sumwu$-objective), each job has a deadline $d_j$ and may not finish after it. In the $\vec$-variant (also referred to as \emph{vector scheduling}), the processing times $p_j\in\N^M_{>0}$ are $M$-dimensional vectors. If class constraints are present, each job has a class $i\in[K_{\#}]$, we are given a number $C^{\text{class}}\in\N_{>0}$ and there may only be jobs of at most $C^{\text{class}}$ different classes scheduled on each machine. This constraint has e.g.\ been studied in~\cite{JLM20}. Similarly, with setup times, each job has a class and starting a job from a new class invokes a setup time $s_j$ (see e.g.~\cite{DJ19}). The capacity constraints just forbid more than $C^{\text{cap}}\in\N_{>0}$ jobs to be scheduled on any machine. This has been studied in~\cite{CJWG16}.

Note that we can always apply \autoref{lem:logbound} to the running times stated in the theorems of this section, like we did in \autoref{sect:introduction} for \tf{\P}{}{\{\cmax,\cmin,\envy\}}.

We can simplify the running time of \autoref{thm:main} and \autoref{thm:mainvertices}. Usually, $N$ should be the largest of the three matrix dimensions, anyway:

\begin{corollary}\label{cor:pqviavertices}
    If a problem has a \pqmr $(P,Q,m)$ with $P=\{x\in\R_{\geq0}^{N}\,|\,A^{(P)}x= b^{(P)}\}$ and $Q=\{x\in\R_{\geq0}^{N}\,|\,A^{(Q)}x= b^{(Q)}\}$ with $A^{(P)}\in\Z^{M^{(P)}\times N}$, $A^{(Q)}\in\Z^{M^{(Q)}\times N}$, $b^{(P)}\in\Z^{M^{(P)}}$ and $b^{(Q)}\in\Z^{M^{(Q)}}$, it can be solved in time $$
        (\log(\norm{A^{(P)}}_\infty))^{2^{\Oh{K}}}(\enc{P}\enc{Q}\log(m))^{\Oh{1}},$$
    where $K:=\max\{N,M^{(P)},M^{(Q)}\}$.
\end{corollary}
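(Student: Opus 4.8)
The plan is to derive \autoref{cor:pqviavertices} as a direct specialization of \autoref{thm:main} (one could equally well use \autoref{thm:mainvertices}, but \autoref{thm:main} is the cleaner source). The strategy is: take the running-time bound from \autoref{thm:main}, introduce the single parameter $K := \max\{N, M^{(P)}, M^{(Q)}\}$, bound every occurrence of $N$, $M^{(P)}$, $M^{(Q)}$ by $K$, and fold all the resulting polynomial-in-$K$ factors into the doubly-exponential exponent, which only absorbs them into the $\Oh{\cdot}$. The only genuine point requiring an argument is that $\log(\max\{\norm{A^{(P)}}_\infty,\norm{A^{(Q)}}_\infty\})$ can be replaced by $\log(\norm{A^{(P)}}_\infty)$ alone; I address that below.

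First I would recall the bound of \autoref{thm:main}: the \pqmr $(P,Q,m)$ can be solved in time
\begin{align*}
    &\left(\left(M^{(P)} M^{(Q)}\log\left(\max\left\{\norm{A^{(P)}}_\infty, \norm{A^{(Q)}}_\infty\right\}\right)\right)^{2^{\Oh{N}}} +2^{\Oh{(M^{(Q)})^2}}\right)(\enc{P}\enc{Q}\log(m))^{\Oh{1}}.
\end{align*}
Since $M^{(P)}, M^{(Q)}, N \le K$, the base of the first power is at most $K^2\log(\max\{\norm{A^{(P)}}_\infty,\norm{A^{(Q)}}_\infty\})$ and the exponent $2^{\Oh{N}}$ is at most $2^{\Oh{K}}$; moreover $(K^2)^{2^{\Oh{K}}} = 2^{2^{\Oh{K}}}$, so $\bigl(K^2\log(\cdots)\bigr)^{2^{\Oh{K}}} = \bigl(\log(\cdots)\bigr)^{2^{\Oh{K}}}$ after enlarging the implicit constant. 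Likewise $2^{\Oh{(M^{(Q)})^2}} \le 2^{\Oh{K^2}} \le 2^{2^{\Oh{K}}} = \bigl(\log(\cdots)\bigr)^{2^{\Oh{K}}}$ (the last step absorbing a constant-depending-only-on-$K$ factor, which is harmless since $\log(\norm{A^{(P)}}_\infty)\ge 1$ as entries are integral, so we may as well assume it is at least $2$, or simply note the additive term is dominated). Hence both summands are bounded by $(\log(\max\{\norm{A^{(P)}}_\infty,\norm{A^{(Q)}}_\infty\}))^{2^{\Oh{K}}}$, and their sum is too.

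Finally, I would dispose of the asymmetry between $\norm{A^{(P)}}_\infty$ and $\norm{A^{(Q)}}_\infty$. The factor $\enc{Q}$ already appears polynomially in the running time, and $\log(\norm{A^{(Q)}}_\infty) \le \enc{Q}$; thus $(\log(\norm{A^{(Q)}}_\infty))^{2^{\Oh{K}}} \le \enc{Q}^{2^{\Oh{K}}}$, which can be merged into the $(\enc{P}\enc{Q}\log(m))^{\Oh{1}}$ term only if $2^{\Oh{K}}$ were a constant — it is not. The clean way is instead: $\log(\max\{\norm{A^{(P)}}_\infty,\norm{A^{(Q)}}_\infty\}) \le \log(\norm{A^{(P)}}_\infty) + \log(\norm{A^{(Q)}}_\infty) \le \log(\norm{A^{(P)}}_\infty)\cdot\enc{Q}$ (using $\log(\norm{A^{(Q)}}_\infty)\le\enc{Q}$ and $\log(\norm{A^{(P)}}_\infty)\ge 1$), so the first summand is at most $(\log(\norm{A^{(P)}}_\infty))^{2^{\Oh{K}}}\cdot\enc{Q}^{2^{\Oh{K}}}$, and here I would simply state the corollary with the understanding (consistent with the rest of the paper's conventions) that a $\poly(\enc{P},\enc{Q},\log m)$ multiplicative slack is acceptable in the $(\enc{P}\enc{Q}\log(m))^{\Oh{1}}$ factor — or, cleanest of all, observe that in all applications $\norm{A^{(Q)}}_\infty \le \norm{A^{(P)}}_\infty$ (the $Q$-polytopes in our \pqmrs have $0/1$ matrices or identity blocks), so the $\max$ is attained at $P$; I would make this remark and then the substitution $\log(\max\{\cdots\}) = \log(\norm{A^{(P)}}_\infty)$ is immediate. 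Combining the three observations yields the claimed running time $(\log(\norm{A^{(P)}}_\infty))^{2^{\Oh{K}}}(\enc{P}\enc{Q}\log(m))^{\Oh{1}}$. The main (and only) obstacle is this cosmetic handling of the $\norm{A^{(Q)}}_\infty$ term; everything else is mechanical absorption of $\poly(K)$ factors into a doubly-exponential exponent.
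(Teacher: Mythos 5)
Your mechanical absorption of $\poly(K)$ factors into the doubly-exponential exponent is correct, and you correctly identify the one non-trivial point: the source theorem must not leave a dependency on $\norm{A^{(Q)}}_\infty$ in the base of the big power. However, you then make the wrong choice of source theorem. You assert that \autoref{thm:main} is ``the cleaner source'' and that \autoref{thm:mainvertices} would work ``equally well'' — in fact the situation is reversed. The running time in \autoref{thm:mainvertices} is
\[
\left(N^{M^{(P)}}M^{(P)}\log\left(\norm{A^{(P)}}_\infty\right)\right)^{2^{\Oh{N}}}(\enc{P}\enc{Q}\log(m))^{\Oh{1}},
\]
which \emph{already} involves only $\norm{A^{(P)}}_\infty$ and has no additive $2^{\Oh{(M^{(Q)})^2}}$ term; bounding $N, M^{(P)} \le K$ and absorbing $K^{K}K = 2^{\poly(K)}$ into the $2^{\Oh{K}}$ exponent (using $\log\norm{A^{(P)}}_\infty \ge 1$ for integer matrices) gives \autoref{cor:pqviavertices} immediately and in full generality, with nothing left to argue. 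By contrast, starting from \autoref{thm:main} leaves you with the $\max\{\norm{A^{(P)}}_\infty, \norm{A^{(Q)}}_\infty\}$ problem you then have to patch.

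Your patch — ``observe that in all applications $\norm{A^{(Q)}}_\infty \le \norm{A^{(P)}}_\infty$'' — is indeed true for every \pqmr appearing in \autoref{sect:applications} (the $Q$-matrices there are $0/1$). But \autoref{cor:pqviavertices} is stated as a fully general corollary about arbitrary \pqmrs, without that hypothesis, so your argument via \autoref{thm:main} does not prove the corollary as written; it only proves a restricted version. Your other attempted fix (folding $\enc{Q}^{2^{\Oh{K}}}$ into $(\enc{P}\enc{Q}\log(m))^{\Oh{1}}$) you correctly recognize as unsound, since the exponent $2^{\Oh{K}}$ is not a constant. So the proof as proposed has a genuine gap that disappears the moment you switch to \autoref{thm:mainvertices} as the source — which is what the paper's framing (``We can simplify the running time of \autoref{thm:main} and \autoref{thm:mainvertices}'') points to: the corollary is the common simplified form, and \autoref{thm:mainvertices} is the one that yields it directly.
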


\subsection{A Quite General Scheduling Problem}
In \tf{\P}{\dj,\vec,\class,\capacity}{\{\cmax,\cmin,\envy\}} we are given a lower bound $\ell$ ($\cmin$), an upper bound $u$ for the load of each machine ($\cmax$), or both ($\envy$). So the goal is to schedule all $\n$ jobs on the $m$ machines while satisfying the machine load bounds and several other constraints: 
\begin{itemize}
    \item Jobs are $M$-dimensional, i.e., their processing times $p_j$ are $M$-dimensional vectors, as are the lower and upper bounds on the machine loads,
    \item each job $j$ has a deadline $d_j$ (which is \emph{not} an $M$-dimensional vector; otherwise, we might not be able to order jobs by their due date) it has to be finished before,
    \item each job has a class $i\in[K_{\#}]$ and there may at most be $C^{\textup{class}}$ different classes represented on each machine and
    \item each machine can only schedule $C^{\textup{cap}}$ jobs.
\end{itemize}
So the input consists of the load bounds $\ell$ and $u$ (depending on the objective), the number of machines, the capacities $C^{\textup{class}}$ and $C^{\textup{cap}}$ and finally the information about the jobs: For each distinct tuple of deadline, processing time vector and class, we have a multiplicity $n_i$. In this context, the number of job types $d$ refers to the number of these distinct tuples. So $d$ might be as large as $d_{\#}\pmax^MK_{\#}$, where $d_{\#}$ is the number of different deadlines, $\pmax$ is the largest entry in all the processing time vectors, $M$ is the dimension of the processing time vectors and $K_{\#}$ is the number of classes. We assume that the deadlines are ordered non-decreasingly, i.e., $d_1$ is the smallest deadline. 

\begin{lemma}\label{lem:pqcomplex}
    \tf{\P}{\dj,\vec,\class,\capacity}{\{\cmax,\cmin,\envy\}} has a \pqmr.
\end{lemma}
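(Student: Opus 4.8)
The plan is to give a \pqmr $(P,Q,m)$ in the style of \autoref{lem:pqpcmaxminenvy}, where an integer point of $P$ encodes the load of a single machine together with some bookkeeping coordinates and $Q$ only forces that all $\n$ jobs are used. Index job types by $j\in[d]$ (type $j$ has processing vector $p_j\in\N^M_{>0}$, deadline $d_j$, and belongs to some class $\class(j)\in[K_{\#}]$), machines by $i\in[m]$, classes by $k\in[K_{\#}]$, and write $d_1<\dots<d_{d_{\#}}$ for the distinct deadlines. A point $x$ of $P$ will have: a coordinate $c_j$ for each job type (how many type-$j$ jobs the machine gets); an \emph{opening} coordinate $z_k$ for each class (is class $k$ present on the machine); and one slack coordinate per inequality below, used only to turn the system into an equality system, exactly as the slack $t$ is used in \autoref{lem:pqpcmaxminenvy}. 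I set $m$ to the number of machines and $Q:=\{x\in\R_{\geq 0}^N\mid\text{the }c\text{-block of }x\text{ equals }n\}$, leaving the $z$- and slack-coordinates free just as the $t$-coordinate is free in \autoref{lem:pqpcmaxminenvy}; only $P$ has to be bounded.

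Next I would list the constraints defining $P$. (i) \emph{Load:} $\sum_j p_j c_j\leq u$ for $\cmax$, $\sum_j p_j c_j\geq\ell$ for $\cmin$, and both for $\envy$ (componentwise in $\R^M$; for $\cmin$ one may additionally cap the load, as in \autoref{lem:pqpcmaxminenvy}). (ii) \emph{Capacity:} $\sum_j c_j\leq C^{\textup{cap}}$, which already bounds every $c_j$. (iii) \emph{Deadlines:} using the standard fact that a job multiset is schedulable on one machine within all deadlines iff, in EDD order, every prefix completes in time, this amounts to the $d_{\#}$ inequalities stating that, along the time coordinate, the total work of the jobs with deadline at most $d_i$ placed on the machine is at most $d_i$, for $i\in[d_{\#}]$. (iv) \emph{Classes:} for each $k$ a linking inequality $\sum_{j:\,\class(j)=k} c_j\leq C^{\textup{cap}}\,z_k$ (so that a class-$k$ job forces $z_k=1$ on integer points), the budget inequality $\sum_k z_k\leq C^{\textup{class}}$, and $0\leq z_k\leq 1$. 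Adding one non-negative slack variable per inequality yields $P=\{x\in\R_{\geq 0}^N\mid A^{(P)}x=b^{(P)}\}$, and $P$ is bounded because the $c$- and $z$-blocks are bounded and every slack is bounded by its right-hand side. The equivalence of \autoref{def:pqrep} then follows as in \autoref{lem:pqpcmaxminenvy}: a feasible schedule gives one integer configuration $y^{(i)}\in P\cap\Z^N$ per machine (with $z_k=1$ for exactly the classes it uses, and the zero configuration, which lies in $P$ in the $\cmax$ case, for idle machines), and $\sum_i y^{(i)}$ has $c$-block $n$, hence lies in $\textup{int.cone}(P\cap\Z^N)\cap Q$; conversely, from $y=y^{(1)}+\dots+y^{(m)}$ with all $y^{(i)}\in P\cap\Z^N$ and $y\in Q$ one places the jobs counted by $y^{(i)}$ on machine $i$ in EDD order, where the $y^{(i)}$-constraints guarantee single-machine load, capacity, deadline and class feasibility while $y\in Q$ guarantees every job is placed exactly once.

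The step that needs the most care is the class gadget, because the linking inequalities only go one way: I must argue soundness explicitly, namely that over \emph{integer} configurations $z_k$ is forced to $1$ whenever a class-$k$ job is present precisely because $z_k$ is integral (this is exactly why \autoref{def:pqrep} decomposes over $P\cap\Z^N$ rather than over $P$), while a spurious choice $z_k=1$ with no class-$k$ job merely consumes budget and can therefore never make an infeasible configuration feasible. A second, smaller subtlety is how the scalar deadline $d_j$ is meant to interact with the $M$-dimensional loads: one fixes it to mean progress along a designated time coordinate (which is exactly why the problem keeps $d_j$ scalar, so that an EDD order is well defined), and only then is the EDD characterisation in (iii) available. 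The rest is the routine slack-variable bookkeeping already illustrated by \autoref{lem:pqpcmaxminenvy}.
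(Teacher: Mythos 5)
Your construction is correct in broad strokes and runs parallel to the paper's, but it uses a simpler class gadget, which is worth noting: the paper introduces an auxiliary counter $x_k=\sum_{\class(j)=k}c_j$ and forces a \emph{two-way} correspondence $z_k=1\iff x_k>0$ via the pair $x_k\n\geq z_k$ and $z_k\n\geq x_k$, whereas you use the single one-way inequality $\sum_{\class(j)=k}c_j\leq C^{\textup{cap}}z_k$ and argue that spurious $z_k=1$ is harmless because it only consumes class budget. That argument is sound (both in the decoding direction, where presence of a class-$k$ job still forces $z_k=1$ on integer points, and in the encoding direction, where one just sets $z_k$ exactly when the class is used), so you save the $K_\#$ auxiliary $x$-coordinates and one constraint family.

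There is, however, one genuine discrepancy. In the paper the deadline prefix constraint is enforced in every dimension: it reads $\sum_{k\le j}(p_{k,i})^Tc^{(k)}\le d_j$ for all $j\in[d_\#]$ \emph{and} $i\in[M]$, giving $d_\#M$ constraints, not $d_\#$. Your proof imposes the prefix bound only ``along a designated time coordinate'' and you justify this by claiming that the EDD characterisation is only available in one dimension, but that reasoning is off: because $d_j$ is scalar, the EDD order is well-defined, and the standard prefix argument works independently in each of the $M$ coordinates (the jobs with deadline at most $d_j$ form the EDD prefix ending at $j$, and they must all complete by $d_j$ in every coordinate). So the paper's per-dimension formulation is both correct and required to capture the vector variant; your formulation with $d_\#$ constraints models a weaker (less constrained) problem in which deadlines only bind one coordinate. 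Replacing your deadline block with the paper's $d_\#M$-row block fixes this, and the rest of your proof (boundedness of $P$ via the capacity and $0\le z_k\le 1$ constraints, the $Q$-polytope fixing the $c$-block to $n$, the encoding/decoding argument) goes through as written.
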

\begin{proof}
    Set $m$ to the number of machines. Let $N:=d+K_{\#}+K_{\#}+(2+d_{\#}M+2M+3K_{\#})$ be the dimension of $P$, which is defined as
    \[P=\left\{\left.\begin{pmatrix}c\\x\\z\\s\end{pmatrix}\in\R_{\geq0}^{N}\,\right|\,A\begin{pmatrix}c\\x\\z\\s\end{pmatrix}=b\right\}\] and let \[Q=\left\{\left.\begin{pmatrix}c&x&z&s\end{pmatrix}^T\in\R_{\geq0}^N\,\right|\,c=n\right\}.\]
    We describe $A$ and $b$ by a system of equalities and inequalities. The part $s\in\R^{2+d_{\#}M+2M+3K_{\#}}_{\geq0}$ of the vectors in $P$ are slack variables that we will not explicitly write down for the sake of visibility. Just note that there are $2+d_{\#}M+2M+3K_{\#}$ constraints that have inequalities instead of equalities.
    \begin{align*}
        \mathbf{1}^Tc&\leq C^{\textup{cap}} && &&&\textup{(cap)} \\
        \sum_{k=1}^j(p_{k,i})^Tc^{(k)}&\leq d_j && \forall j\in[d_{\#}],i\in[M] &&&\textup{($d_j$)} \\
        \mathbf{1}^Tc^{(i)}&=x_i && \forall i\in[K_{\#}] &&&\textup{(class)}\\
        x_i\n&\geq z_i && \forall i\in[K_{\#}] &&&\textup{(class)}\\
        z_i\n&\geq x_i && \forall i\in[K_{\#}] &&&\textup{(class)}\\
        z_i&\leq 1 && \forall i\in[K_{\#}] &&&\textup{(class)}\\
        \sum_{i=1}^{K_{\#}}z_i&\leq C^{\textup{class}} && &&&\textup{(class)}\\
        p_i^Tc&\leq \left\{u_i, \sum_{j=1}^dp_{j,i}\right\} && \forall i\in[M] &&&\textup{(obj.)}\\
        p_i^Tc&\geq \left\{\ell_i, 0\right\} && \forall i\in[M] &&&\textup{(obj.)}
    \end{align*}
    Here -- slightly abusing notation -- $c^{(k)}$ is the sub-vector of $c$ that corresponds to job types that have due date $d_k$, while $c^{(i)}$ is the sub-vector of $c$ that corresponds to job types that have class $i$. The processing time matrix corresponding to the jobs of due date $d_k$ is denoted by $p_k$ and $p_{k,i}$ refers to the processing times of the jobs with due date $k$ in dimension $i$. The actual values in the objective constraints depend on the objective function: For $\cmax$, we have $0\leq p_i^Tc\leq u_i$; for $\cmin$, we have $\ell_i\leq p_i^Tc\leq \sum_{j=1}^dp_{j,i}$ and for $\envy$, we have $\ell_i\leq p_i^Tc\leq u_i$. 
    
    To prove that $(P,Q,m)$ forms a \pqmr for our problem, we first argue that we have $y=\begin{pmatrix}c&x&z&s\end{pmatrix}^T\in P\cap\Z^{N}$ if and only if $c$ is a valid configuration. The rest of the proof is then easy.

    To this end, suppose that we have a point $y=\begin{pmatrix}c&x&z&s\end{pmatrix}^T\in P\cap\Z^{N}$. First, observe that due to the objective constraints, w.r.t. the load, $c$ is a valid configuration for either of the three objectives. Clearly, $c$ fulfills the capacity constraints. The deadline constraints $$\sum_{k=1}^j(p_{k,i})^Tc^{(k)}\leq d_j$$ assure that for each deadline, the jobs selected by $c$ that have this (or an earlier) deadline can be finished before the deadline, in each dimension. This constraint implicitly uses the fact that if a set of jobs can be scheduled on a machine without violating any due date, then they can be scheduled in EDD-order (i.e., in non-decreasing order of due dates). The class constraints are slightly more tricky: The first constraints set the value $x_i$ to the number of jobs selected by $c$ that belong to class $i$. The second, third and fourth groups of class constraints assure that whenever $x_i>0$, then $z_i=1$ and otherwise $z_i=0$: The fourth constraint just means that $z_i$ is either $0$ or $1$. If $x_i=0$, the second constraint forces $z_i=0$ and the third constraint is irrelevant. If $x_i>0$, the second constraint is irrelevant and the third one means that $z_i$ cannot be zero (and must hence be $1$). The fifth class constraint now assures that the number of classes where $x_i>0$ is at most $C^{\textup{class}}$. So $c$ fulfills the class constraints and hence is a valid configuration of the given problem.

    If we -- on the other hand -- are given a valid configuration, we can simply create a vector $y=\begin{pmatrix}c&x&z&s\end{pmatrix}^T$ by
    \begin{itemize}
        \item setting the entries in $c$ according to the number of jobs of each type appearing in the configuration,
        \item setting $x_i$ to the number of jobs of class $i$ that are in the configuration for all $i\in[K_{\#}]$,
        \item setting $z_i=1$ if $x_i>0$ and $z_i=0$ otherwise for all $i\in[K_{\#}]$ and
        \item setting the $s$-entries to the slack of the corresponding inequalities.
    \end{itemize}
    By the above arguments, $y\in P\cap\Z^{N}$.

    Now assume that there is a point $y=\sum_{i=1}^m y^{(i)}\in \textup{int.cone}(P\cap\Z^{N})\cap Q$ that is the sum of $m$ vectors $y^{(i)}\in P\cap\Z^{N}$. By the above observations, each $y^{(i)}$ corresponds to a set of jobs that may be scheduled together on a single machine. The number of configurations matches the number of machines. Moreover, the jobs in the configurations sum up to all the jobs in the scheduling instance (since the $c$-vectors sum up to the job vector $n$). So the vector $y$ gives rise to a feasible schedule.

    In the other direction, a schedule gives us $m$ configurations, each of which can be modelled as a vector $y^{(i)}\in P\cap\Z^N$. The job part sums up to $n$. So the sum of these vectors is a point in $\textup{int.cone}(P\cap\Z^{N})\cap Q$.

    It remains to show that $P$ is bounded. Due to the objective constraints, $c$ is bounded. Moreover, $x$ and $z$ are bounded because of the class constraints. If we add slack variables $s$, these are also bounded.
\end{proof}

\begin{theorem}\label{thm:complex}
    The problems \tf{\P}{\dj,\vec,\class,\capacity}{\{\cmax,\cmin,\envy\}} can be solved in time
    \[(\log(\max\{\n,\pmax\}))^{2^{\Oh{d+Md_{\#}}}}\enc{I}^{\Oh{1}},\]
    where the $\n$-term only appears due to the class constraints.
\end{theorem}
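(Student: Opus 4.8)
The plan is to combine the \pqmr from \autoref{lem:pqcomplex} with \autoref{cor:pqviavertices} (equivalently \autoref{thm:main} or \autoref{thm:mainvertices}) and then carefully track the three parameters $N$, $M^{(P)}$, $M^{(Q)}$ and the largest coefficient $\norm{A^{(P)}}_\infty$ that appear in the resulting running time. First I would recall from the proof of \autoref{lem:pqcomplex} that $N = d + 2K_{\#} + (2 + d_{\#}M + 2M + 3K_{\#})$, so $N = \Oh{d + M d_{\#} + M + K_{\#}}$; since a job type is a distinct (deadline, processing-time vector, class) tuple we have $K_{\#}\le d$ and $d_{\#}\le d$, hence $N = \Oh{d + M d_{\#}}$. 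Likewise the number of constraints $M^{(P)}$ is $\Oh{d_{\#}M + M + K_{\#}} = \Oh{d + M d_{\#}}$ (the (cap) constraint, the $(d_j)$ constraints over all $d_{\#}$ deadlines and $M$ dimensions, the five groups of class constraints over $K_{\#}$ classes, and the $2M$ objective constraints), and $M^{(Q)} = d$. Thus the quantity $K := \max\{N, M^{(P)}, M^{(Q)}\}$ from \autoref{cor:pqviavertices} satisfies $K = \Oh{d + M d_{\#}}$.

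Next I would bound $\norm{A^{(P)}}_\infty$. Inspecting the system in the proof of \autoref{lem:pqcomplex}, the matrix entries are: $1$'s from the capacity and class-counting rows; processing-time entries $p_{k,i}$ and $p_i$ in the deadline and objective rows, bounded by $\pmax$; and the entries $\n$ appearing in the two groups of class constraints $x_i\n\ge z_i$ and $z_i\n\ge x_i$. (The right-hand side contains $C^{\textup{cap}}, d_j, C^{\textup{class}}, u_i, \ell_i, \sum_j p_{j,i}$, but those do not enter $\norm{A^{(P)}}_\infty$ and are handled by Tool 2 / Tool 3 anyway.) Hence $\norm{A^{(P)}}_\infty \le \max\{\n, \pmax\}$, and the $\n$-factor enters solely through the class constraints, exactly as the theorem asserts. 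Plugging into \autoref{cor:pqviavertices} gives running time $(\log(\norm{A^{(P)}}_\infty))^{2^{\Oh{K}}}(\enc{P}\enc{Q}\log m)^{\Oh{1}} = (\log(\max\{\n,\pmax\}))^{2^{\Oh{d + M d_{\#}}}}(\enc{P}\enc{Q}\log m)^{\Oh{1}}$.

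Finally I would argue that the polynomial factor $(\enc{P}\enc{Q}\log m)^{\Oh{1}}$ is absorbed into $\enc{I}^{\Oh{1}}$: the encoding length of $P$ is $\Oh{N M^{(P)}\log(\max\{\n,\pmax\}) + M^{(P)}\log(\max\{C^{\textup{cap}}, \dmax, C^{\textup{class}}, u_{\max}\})}$, and each of these quantities is polynomial in $\enc{I}$ (note $\log(\n)\le\enc{I}$, $\log(\pmax)\le\enc{I}$, $N, M^{(P)}\le\enc{I}$, and the load bounds $\ell, u$ are part of the input as discussed in \autoref{sect:preliminaries}); similarly $\enc{Q}$ and $\log m$ are polynomial in $\enc{I}$. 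I would also note that the preprocessing for decision-vs-optimization and the $\Oh{d}$-time kernelization steps are subsumed. The main obstacle — really the only place care is needed — is the bookkeeping for $K$: one must confirm that each of $N$, $M^{(P)}$, $M^{(Q)}$ is genuinely $\Oh{d + M d_{\#}}$ using $K_{\#}, d_{\#}\le d$, so that the double-exponent collapses to $2^{\Oh{d + M d_{\#}}}$ rather than something larger.
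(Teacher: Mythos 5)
Your proof takes exactly the same route as the paper: instantiate \autoref{cor:pqviavertices} on the \pqmr from \autoref{lem:pqcomplex}, observe $K=\max\{N,M^{(P)},M^{(Q)}\}=\Oh{d+Md_{\#}}$ (using $K_{\#}\le d$) and $\norm{A^{(P)}}_\infty=\max\{\n,\pmax\}$ with $\n$ only occurring in the class constraints, and absorb $\enc{P},\enc{Q},\log(m)$ into $\enc{I}^{\Oh{1}}$. Your write-up is merely more explicit than the paper in itemizing $N$, $M^{(P)}$, $M^{(Q)}$ separately before taking the max, but the argument is identical.
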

\begin{proof}
    Consider the constraints describing $P$. The parameters in our \pqmr in \autoref{lem:pqcomplex} are: $K=\max\{M^{(P)},M^{(Q)},N\}=\Oh{d+K_{\#}+Md_{\#}}=\Oh{d+Md_{\#}}$ (the number of job types $d$ is at least as large as the number of classes $K_{\#}$), $\norm{A^{(P)}}_{\infty}=\max\{\n,\pmax\}$ and finally $\enc{P},\enc{Q},\log(m)=\enc{I}^{\Oh{1}}$. Note that $\n$ only appears in the class constraints.
    
    \autoref{cor:pqviavertices} yields running time:
    \begin{align*}                          &\left(\log\left(\norm{A^{(P)}}_\infty\right)\right)^{2^{\Oh{K}}}\enc{P}^{\Oh{1}}\enc{Q}^{\Oh{1}}\log(\m)^{\Oh{1}}\\
    =&\left(\log\left(\max\{\n,\pmax\}\right)\right)^{2^{\Oh{d+Md_{\#}}}}\enc{I}^{\Oh{1}}
    \end{align*}
\end{proof}
As noted above, the problem \tf{\P}{\dj,\vec,\class,\capacity}{\{\cmax,\cmin,\envy\}} is quite general, so it captures many special cases, for some of which we now give corollaries of \autoref{thm:complex}:

\begin{corollary}\label{cor:class}
    \tf{\P}{\class}{\{\cmax,\cmin,\envy\}} can be solved in time
    \[(\log(\max\{\n,\pmax\}))^{2^{\Oh{d}}}\enc{I}^{\Oh{1}}.\]
\end{corollary}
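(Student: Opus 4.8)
The plan is to derive this directly from \autoref{thm:complex} by viewing \tf{\P}{\class}{\{\cmax,\cmin,\envy\}} as the special case of \tf{\P}{\dj,\vec,\class,\capacity}{\{\cmax,\cmin,\envy\}} obtained by switching off the features we do not need. Concretely, I would set $M=1$ (scalar processing times), remove the deadlines by assigning every job the single common deadline $d_1:=\sum_{j}p_jn_j$ (an upper bound on the makespan of any schedule, so the constraint $p^Tc\leq d_1$ is never violated and $d_{\#}=1$), and remove the capacity constraint by setting $C^{\textup{cap}}:=\n$ (so $\mathbf{1}^Tc\leq C^{\textup{cap}}$ never binds, since no schedule ever puts more than $\n$ jobs on one machine). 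Under these choices the parameter $d$ of the general problem equals the number of distinct processing times of the \class-instance, and $\enc{I}$, $\enc{P}$, $\enc{Q}$ are unchanged up to constants.

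First I would check that this encoding is faithful: the added deadline and capacity rows are satisfied by every schedule by construction, so positive and negative instances are preserved, and the only rows of $A^{(P)}$ with content beyond the objective constraints are the class constraints already handled in the proof of \autoref{lem:pqcomplex}. In particular, $\norm{A^{(P)}}_\infty$ stays at $\max\{\n,\pmax\}$: the processing times contribute $\pmax$, the class-linking rows $x_i\n\geq z_i$ and $z_i\n\geq x_i$ contribute $\n$, and the large numbers $d_1$ and $C^{\textup{cap}}$ enter only the right-hand side, which does not affect the running time in \autoref{thm:complex}.

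Then I would substitute $M=1$ and $d_{\#}=1$ into the bound of \autoref{thm:complex}. The exponent $2^{\Oh{d+Md_{\#}}}$ collapses to $2^{\Oh{d+1}}=2^{\Oh{d}}$, while the base $\log(\max\{\n,\pmax\})$ and the factor $\enc{I}^{\Oh{1}}$ are unchanged, which gives the claimed running time $(\log(\max\{\n,\pmax\}))^{2^{\Oh{d}}}\enc{I}^{\Oh{1}}$; as before, the $\n$ in the base is present solely because of the class constraints. Since the whole argument is a specialization of an already-proved theorem, there is no real obstacle; the one point that deserves a sentence of care is exactly the observation above, that modelling ``no deadline'' and ``no capacity'' by sufficiently large constants does not push $\norm{A^{(P)}}_\infty$ above $\max\{\n,\pmax\}$.
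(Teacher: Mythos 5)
Your proposal is correct and is essentially the argument the paper intends: the corollary is stated as a direct specialization of \autoref{thm:complex}, and you carry out exactly that specialization ($M=1$, $d_{\#}=1$ via a common dummy deadline, trivializing the capacity constraint) while checking the only non-obvious point, namely that the large dummy constants enter the right-hand side $b^{(P)}$ only and therefore leave $\norm{A^{(P)}}_\infty = \max\{\n,\pmax\}$ unchanged. One small imprecision worth fixing: for \tf{\P}{\class}{\cdot} the parameter $d$ is the number of distinct (processing time, class) pairs rather than distinct processing times alone, but this matches the $d$ of the general theorem under your specialization and does not affect the argument.
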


\begin{corollary}\label{cor:capacity}
    \tf{\P}{\capacity}{\{\cmax,\cmin,\envy\}} can be solved in time
    \[(\log(\pmax))^{2^{\Oh{d}}}\enc{I}^{\Oh{1}}.\]
\end{corollary}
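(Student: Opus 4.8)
The plan is to derive \autoref{cor:capacity} as the special case of \autoref{thm:complex} in which there are no deadlines, the jobs are one-dimensional ($M=1$), and there are no class constraints. First I would revisit the PQ-representation built in the proof of \autoref{lem:pqcomplex} and specialize it to the capacity-only setting: deleting the deadline rows (the $\textup{($d_j$)}$-constraints), the class rows (the $\textup{(class)}$-constraints) together with their auxiliary variables $x,z$, leaves only the single capacity constraint $\mathbf 1^T c \le C^{\textup{cap}}$ together with the two load/objective constraints $p^T c \le \{u, \sum_j p_j\}$ and $p^T c \ge \{\ell, 0\}$ (no longer $M$-fold, since $M=1$). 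In this stripped-down representation the dimension is $N = \Oh{d}$ and the numbers of rows $M^{(P)}, M^{(Q)}$ are $\Oh{d}$ as well; crucially, the largest absolute entry of $A^{(P)}$ is now $\max_j p_j = \pmax$, because the capacity row has unit coefficients and the factor $\n$ that shows up in $A^{(P)}$ in \autoref{lem:pqcomplex} entered \emph{only} through the class constraints, which are now gone.

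Then I would invoke \autoref{cor:pqviavertices} (equivalently \autoref{thm:main} or \autoref{thm:mainvertices}) with $K = \max\{N, M^{(P)}, M^{(Q)}\} = \Oh{d}$ and $\norm{A^{(P)}}_\infty = \pmax$. Since the capacity $C^{\textup{cap}}$, the load bounds $\ell,u$, the multiplicities $n_i$ and the processing times $p_j$ all have encoding length $\enc{I}^{\Oh{1}}$, the factors $\enc{P}$, $\enc{Q}$, $\log(m)$ are $\enc{I}^{\Oh{1}}$, so the running time becomes $(\log(\pmax))^{2^{\Oh{d}}}\enc{I}^{\Oh{1}}$, as claimed.

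I do not expect a genuine obstacle: the substance is entirely in \autoref{thm:complex} and its PQ-representation, and what remains is the routine bookkeeping of checking that (i) removing the class rows eliminates every occurrence of $\n$ in $A^{(P)}$, so the basis of the exponential tower drops from $\log(\max\{\n,\pmax\})$ to $\log(\pmax)$, and (ii) setting $M=1$ and $d_{\#}=1$ collapses the exponent $2^{\Oh{d+M d_{\#}}}$ of \autoref{thm:complex} to $2^{\Oh{d}}$. If one prefers brevity, the corollary can simply be stated as ``instantiate \autoref{thm:complex} with $M = d_{\#} = 1$ and no classes,'' with exactly these two observations spelled out; one may additionally note, as in \autoref{sect:introduction}, that \autoref{lem:logbound} applies to the resulting bound.
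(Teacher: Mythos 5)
Your proof is correct and follows the intended argument: Corollary~\ref{cor:capacity} is just Theorem~\ref{thm:complex} instantiated with $M=1$, $d_{\#}=1$ (or $0$), and no class constraints, which collapses the exponent $2^{\Oh{d+Md_{\#}}}$ to $2^{\Oh{d}}$ and, since the factor $\n$ enters $A^{(P)}$ only through the (now absent) class rows, drops the base from $\log(\max\{\n,\pmax\})$ to $\log(\pmax)$. The paper does not spell this out, but the specialization you describe — remove the deadline and class rows with their auxiliary variables, observe that the capacity row has unit coefficients, and apply Corollary~\ref{cor:pqviavertices} with $K=\Oh{d}$ and $\norm{A^{(P)}}_\infty=\pmax$ — is exactly what it intends.
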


\begin{corollary}\label{cor:vector}
    \tf{\P}{\vec}{\{\cmax,\cmin,\envy\}} can be solved in time
    \[(\log(\pmax))^{2^{\Oh{d+M}}}\enc{I}^{\Oh{1}}.\]
\end{corollary}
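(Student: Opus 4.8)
The plan is to obtain this as a direct specialization of \autoref{thm:complex}. The problem \tf{\P}{\vec}{\{\cmax,\cmin,\envy\}} is exactly the case of \tf{\P}{\dj,\vec,\class,\capacity}{\{\cmax,\cmin,\envy\}} in which there are no deadlines, no class constraints and no capacity constraints, so the first step is to invoke \autoref{lem:pqcomplex} and observe that in this restricted setting the constructed \pqmr collapses: we may drop the constraint (cap), all constraints $(d_j)$, and all class constraints, together with the slack variables they introduce and the coordinates $x\in\R^{K_{\#}}$ and $z\in\R^{K_{\#}}$. What survives is a polytope $P$ in dimension $N=d+\Oh{M}$ whose only constraints are the $2M$ objective constraints bounding $p_i^Tc$ from above and below in each of the $M$ dimensions (plus the at most $2M$ slack variables needed to turn these into equalities), and $Q=\{(c,s)^T\in\R_{\geq0}^N\mid c=n\}$ with $d$ equality constraints.

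Next I would read off the three quantities that enter the running time of \autoref{cor:pqviavertices} (equivalently \autoref{thm:main}/\autoref{thm:mainvertices}): the exponent parameter $K=\max\{N,M^{(P)},M^{(Q)}\}$, the largest coefficient $\norm{A^{(P)}}_\infty$, and the encoding lengths $\enc{P},\enc{Q},\log(m)$. Here $M^{(P)}=\Oh{M}$, $M^{(Q)}=d$ and $N=d+\Oh{M}$, so $K=\Oh{d+M}$; this is precisely where the factor $d_{\#}$ that appears in \autoref{thm:complex} disappears, since without deadlines one may take $d_{\#}=1$ (or simply note that there are no $(d_j)$-constraints at all, only the $\Oh{M}$ objective constraints). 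The entries of $A^{(P)}$ are the processing-time values $p_{j,i}$ together with the $\pm1$ coefficients of the slacks, so $\norm{A^{(P)}}_\infty=\pmax$ — crucially there is no $\n$ appearing anywhere, because in \autoref{lem:pqcomplex} the $\n$-sized coefficients came only from the class constraints $x_i\n\geq z_i$ and $z_i\n\geq x_i$, which we have removed (this is exactly the sense in which \autoref{thm:complex} states that the $\n$-term is due only to the class constraints). Finally $\enc{P},\enc{Q},\log(m)$ are all $\enc{I}^{\Oh{1}}$ as usual.

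Plugging these parameters into \autoref{cor:pqviavertices} gives running time $(\log(\norm{A^{(P)}}_\infty))^{2^{\Oh{K}}}(\enc{P}\enc{Q}\log(m))^{\Oh{1}}=(\log(\pmax))^{2^{\Oh{d+M}}}\enc{I}^{\Oh{1}}$, which is the claim. I do not expect any genuine obstacle here — the argument is essentially a matter of specializing the \pqmr of \autoref{lem:pqcomplex} and tracking which terms survive. The only point requiring a little care is the bookkeeping for the slack variables: one must check that converting the $2M$ objective inequalities into equalities keeps $N$ at $d+\Oh{M}$ and does not introduce coefficients exceeding $\pmax$, both of which are immediate since each inequality needs only a single $\pm1$ slack. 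One could alternatively skip \autoref{thm:complex} and apply \autoref{thm:main} (or \autoref{thm:mainvertices}) directly to this simplified \pqmr; the bound obtained is the same.
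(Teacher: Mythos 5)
Your proposal is correct and matches the paper's own (implicit) reasoning: Corollary~\ref{cor:vector} is stated as a direct specialization of Theorem~\ref{thm:complex}, dropping the deadline, class, and capacity constraints from the \pqmr of Lemma~\ref{lem:pqcomplex}, which removes the $\n$-sized coefficients (coming only from the class constraints) and reduces the exponent parameter to $K=\Oh{d+M}$. Your bookkeeping of the surviving constraints and slack variables is accurate, and the observation that $\norm{A^{(P)}}_\infty=\pmax$ once the class constraints are gone is exactly the point the paper flags with the phrase ``the $\n$-term only appears due to the class constraints.''
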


\subsection{Setup Times}
Note that in the problem \tf{\P}{\setup}{\cmax,\cmin,\envy\}}, the order or starting time of jobs might matter in general, as one can increase the load of a machine by alternating between jobs of different classes to meet some $\cmin$ or $\envy$ requirement.\footnote{Of course, for the objective $\cmax$, it would not make sense to pay extra setups in an optimal solution.} In the following, we assume that no such unnecessary setups are paid (i.e., jobs of the same class are scheduled in a batch). 

So we assume that the setup $s_j$ of a class is only paid once per machine. In the context of \tf{\P}{\setup}{\cmax,\cmin,\envy\}}, the number $d$ of job types might be as large as the number of different processing times multiplied by the number of classes $s_{\#}$.

\begin{lemma}\label{cor:pqsetup}
    \tf{\P}{\setup}{\{\cmax,\cmin,\envy\}} has a \pqmr, assuming that each class induces only one setup per machine.
\end{lemma}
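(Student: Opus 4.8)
The plan is to adapt the construction from the proof of \autoref{lem:pqcomplex}, keeping only the class gadget and folding the setup times into the machine load. Write each job type as a pair (processing time, class), let $\sigma_i$ be the setup time of class $i\in[s_\#]$, and put $m$ equal to the number of machines. Take
\[Q=\left\{\left.\begin{pmatrix}c&x&z&s\end{pmatrix}^{T}\in\R_{\geq0}^{N}\,\right|\,c=n\right\}\]
and $P=\left\{(c,x,z,s)^{T}\in\R_{\geq0}^{N}\mid A(c,x,z,s)^{T}=b\right\}$, where $N=d+2s_\#+\Oh{s_\#}$ and (exactly as in \autoref{lem:pqcomplex}) every inequality listed below is turned into an equality by one coordinate of the slack block $s$. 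Writing $c^{(i)}$ for the sub-vector of $c$ belonging to class $i$, the constraints are the class-indicator constraints $\mathbf{1}^Tc^{(i)}=x_i$, $x_i\n\geq z_i$, $z_i\n\geq x_i$ and $z_i\leq 1$ for every $i\in[s_\#]$, together with the objective constraints on the load $p^Tc+\sigma^Tz$: for $\cmax$ we impose $0\leq p^Tc+\sigma^Tz\leq u$; for $\cmin$ we impose $\ell\leq p^Tc+\sigma^Tz\leq\sum_{j=1}^{d}p_jn_j+\sum_{i=1}^{s_\#}\sigma_i$; and for $\envy$ we impose $\ell\leq p^Tc+\sigma^Tz\leq u$.

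Next I would show that $y=(c,x,z,s)^{T}\in P\cap\Z^{N}$ holds if and only if $c$ is a valid single-machine configuration and $x,z,s$ are the forced auxiliary values. The four class-indicator constraints force $z_i\in\{0,1\}$ with $z_i=1$ exactly when $x_i\geq1$, by the same case distinction as in \autoref{lem:pqcomplex}; hence $\sigma^Tz$ equals precisely the setup cost charged by the jobs selected in $c$, and — invoking the standing assumption that jobs of one class are processed as a batch, so that each present class induces exactly one setup per machine — the load of the machine is exactly $p^Tc+\sigma^Tz$, independently of the processing order. The objective constraints then say precisely that this load meets the required bound(s). Conversely, a feasible batch schedule on one machine yields such a $y$ by reading off $c$, setting $x_i$ to its number of class-$i$ jobs, $z_i=1$ iff $x_i\geq1$, and the slack entries accordingly. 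With this equivalence in hand, the \pqmr property follows by the now-routine argument already used in \autoref{lem:pqcomplex}: a point $y=\sum_{i=1}^{m}y^{(i)}\in\textup{int.cone}(P\cap\Z^{N})\cap Q$ decomposes into $m$ configurations whose job parts sum to $n$, i.e.\ into a feasible schedule on $m$ machines, and conversely every feasible schedule gives such a point.

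Finally I would check that $P$ is bounded: the upper objective constraint bounds $c$ (using $p_j\geq1$), the constraint $\mathbf{1}^Tc^{(i)}=x_i$ then bounds $x$, the constraint $z_i\leq1$ bounds $z$, and the slack variables are bounded in turn. The one point that needs genuine care is the correctness of the indicator gadget across all three objectives simultaneously: a $z_i$ that is too small would wrongly certify an infeasible $\cmax$-configuration, whereas a $z_i$ that is too large would wrongly certify an infeasible $\cmin$-configuration, so a one-sided bound on $z_i$ does not suffice — it is exactly the two-sided characterization $z_i=1\iff x_i\geq1$ produced by the four constraints above that closes both directions at once. Feeding $K=\max\{M^{(P)},M^{(Q)},N\}=\Oh{d}$ and $\norm{A^{(P)}}_\infty\leq\max\{\n,\pmax,\smax\}$ into \autoref{cor:pqviavertices} then gives the corresponding running time.
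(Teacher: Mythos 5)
Your construction matches the paper's almost verbatim: both reuse the class-indicator gadget from \autoref{lem:pqcomplex} (the variables $x_i,z_i$ with the four constraints forcing $z_i=1\iff x_i\geq1$), drop the cap $\sum_i z_i\leq C^{\textup{class}}$, and fold the setup contribution $\sum_j s_j z_j$ into the load/objective constraints. The paper gives only a short sketch while you additionally spell out the gadget verification, the \pqmr equivalence, and boundedness of $P$, but the underlying approach is the same.
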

\begin{proof}
    We do not go into too much detail here, as the proof is quite similar to the one of \autoref{lem:pqcomplex}. The vectors in $P$ have a configuration part $c$ and parts $x$ and $z$ for the setups. These parts fulfill a similar role as in the class constraints: For a class $i\in[s_{\#}]$, the variable $x_i$ is set to the number of jobs from class $i$ that are in the configuration $c$. Then if $x_i>0$, $z_i$ is set to $1$ and otherwise $z_i=0$. This works with the same constraints as in \autoref{lem:pqcomplex}. But instead of upper bounding the sum of all $z_i$, we change the objective constraints to:
    \begin{align*}
        \left\{\ell_i, 0\right\}\leq p_i^Tc+\sum_{j=1}^{s_{\#}}s_j z_j\leq \left\{u_i, \sum_{j=1}^dp_j\right\} \quad \forall i\in[M]
    \end{align*}
    This way, the setups induced by a class being represented on a machine are included in the load bounds.
\end{proof}

\begin{theorem}\label{cor:setup}
    \tf{\P}{\setup}{\{\cmax,\cmin,\envy\}} can be solved in time
    \[\log(\max\{\n,\pmax,\smax\}))^{2^{\Oh{d}}}\enc{I}^{\Oh{1}},\]
    assuming that each class induces only one setup per machine.
\end{theorem}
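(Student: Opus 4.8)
The plan is to reuse the \pqmr constructed in \autoref{cor:pqsetup} and feed its parameters into \autoref{cor:pqviavertices}, exactly as in the proof of \autoref{thm:complex}. First I would recall the shape of the polytope $P$ from \autoref{cor:pqsetup}: a point of $P$ consists of a configuration part $c\in\R^d$, class-indicator parts $x,z\in\R^{s_{\#}}$ (where $s_{\#}$ denotes the number of classes), and a block of slack variables for the inequality constraints. Since the number of job types $d$ is at least the number of classes $s_{\#}$, the dimension is $N=\Oh{d}$; the number of constraints describing $P$ is $M^{(P)}=\Oh{s_{\#}}=\Oh{d}$ (a constant number of constraints per class to force $x_i=\mathbf{1}^Tc^{(i)}$ and to make $z_i$ the indicator of $x_i>0$, together with the load/objective constraint), and $Q$ has $M^{(Q)}=d$ equalities for $c=n$. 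Hence $K:=\max\{N,M^{(P)},M^{(Q)}\}=\Oh{d}$.

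Next I would bound $\norm{A^{(P)}}_\infty$ by inspecting the constraints. The objective constraint $\{\ell,0\}\le p^Tc+\sum_{j=1}^{s_{\#}}s_jz_j\le\{u,\sum_{j=1}^dp_j\}$ contributes coefficients of size at most $\pmax$ and $\smax$, while the indicator constraints $x_i\n\ge z_i$ and $z_i\n\ge x_i$ (carried over from \autoref{lem:pqcomplex}) contribute the coefficient $\n$; every remaining coefficient lies in $\{0,\pm1\}$. Thus $\norm{A^{(P)}}_\infty=\max\{\n,\pmax,\smax\}$. Since $\enc{P}$, $\enc{Q}$ and $\log(m)$ are all $\enc{I}^{\Oh{1}}$, \autoref{cor:pqviavertices} then yields running time
\[\left(\log\left(\norm{A^{(P)}}_\infty\right)\right)^{2^{\Oh{K}}}(\enc{P}\enc{Q}\log(m))^{\Oh{1}}=\left(\log\left(\max\{\n,\pmax,\smax\}\right)\right)^{2^{\Oh{d}}}\enc{I}^{\Oh{1}},\]
which is the claimed bound; as for the other objectives in this paper, passing from the decision to the optimization version costs nothing extra here since the $\envy$-threshold is part of the input.

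I expect the only real work to be the bookkeeping for $\norm{A^{(P)}}_\infty$ and $M^{(P)}$: one must check that no setup-specific constraint introduces a coefficient exceeding $\max\{\n,\pmax,\smax\}$ and that the number of constraints stays linear in $d$. The one point deserving a sentence of justification is the appearance of $\n$: it enters solely through the big-coefficient indicator constraints that detect whether a class is represented on a machine, which is exactly why it shows up in the running time (and, like in \autoref{thm:complex}, only for that reason). If desired, $\n$ could be replaced by any a priori upper bound on the number of jobs of a single class on one machine, shrinking that term, but this refinement is not needed for the theorem as stated.
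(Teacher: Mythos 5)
Your proposal is correct and matches the paper's proof: both reuse the \pqmr from \autoref{cor:pqsetup}, plug its parameters ($K=\Oh{d}$, $\norm{A^{(P)}}_\infty=\max\{\n,\pmax,\smax\}$) into \autoref{cor:pqviavertices}, and note that the setup costs $s_j$ and the indicator-constraint coefficient $\n$ are the new entries in $A^{(P)}$. The paper states this more tersely ("similar to the proof of \autoref{thm:complex}"), but the substance is identical.
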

\begin{proof}
    Again, the proof is similar to the one of \autoref{thm:complex}. Just note that the setups $s_i$ appear in the matrix $A^{(P)}$. There are $\Oh{d+s_{\#}}=\Oh{d}$ variables, so calculations similar to the ones in the proof of \autoref{thm:complex} yield the claimed running time.
\end{proof}

\subsection{Weighted Number of Late Jobs}
We now consider the objective $\sumwu$. In this context, we are only concerned with the early jobs (those that make their due date) and we can assume that these early jobs are scheduled in EDD order (i.e., in order of non-decreasing due dates) on each machine~\cite{LS20}. The late jobs may all be scheduled afterwards on an arbitrary machine. Note that in the problem \tf{\one}{\vec}{\sumwu}, a job $j$ has an $M$-dimensional processing time $p_j$, a (one-dimensional) due date $d_j$ and a weight $w_j$. So our parameter $d$ (the number of different job types) naturally extends to the number of different $(p_j,d_j,w_j)$-tuples. So if we denote the number of different processing times, due dates and weights by $p_{\#}$, $d_{\#}$ and $w_{\#}$, respectively, $d$ may be as large as $p_{\#}d_{\#}w_{\#}\leq \pmax^M\dmax\wmax$. We are given a threshold $T$ for the total weight of the late jobs.
\begin{lemma}\label{lem:pqpsumwu}
    The problem \tf{\P}{\vec}{\sumwu} has a \pqmr.
\end{lemma}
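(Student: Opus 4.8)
The plan is to mirror the construction of \autoref{lem:pqcomplex}, adapting it to the fact that for $\sum w_jU_j$ we only have to decide which jobs run \emph{early}: hence the polytope $P$ will keep only the deadline (EDD) constraints and no longer force every job to be placed, while the global condition in $Q$ changes from ``$c=n$'' to ``$c\le n$ together with a lower bound on the total weight of the placed jobs''. Write $W:=\sum_{i=1}^dn_iw_i$ for the total weight of all jobs, let $d_{\#}$ be the number of distinct due dates (ordered $d_1\le\cdots\le d_{d_{\#}}$), and set $m$ to the number of machines. I would take
\[P=\left\{\left.\matrix{c\\s}\in\R_{\geq0}^{d+d_{\#}M}\,\right|\,\sum_{k=1}^{j}(p_{k,i})^Tc^{(k)}+s_{j,i}=d_j\ \ \forall j\in[d_{\#}],i\in[M]\right\},\]
where, as in \autoref{lem:pqcomplex}, $c^{(k)}$ is the sub-vector of $c$ for job types of due date $d_k$, $p_{k,i}$ collects their dimension-$i$ processing times, and the $s_{j,i}$ are slack variables, together with
\[Q=\left\{\left.\matrix{c\\s}\in\R_{\geq0}^{d+d_{\#}M}\,\right|\,c\le n,\ \ w^Tc\ge W-T\right\}.\]
Boundedness of $P$ follows from the $j=d_{\#}$, $i=1$ constraint, which forces $\mathbf{1}^Tc\le\sum_k(p_{k,1})^Tc^{(k)}\le d_{d_{\#}}$ since every entry of every $p_j$ is at least $1$; this in turn bounds all slack variables. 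When later feeding this into \autoref{thm:main} or \autoref{cor:pqviavertices}, the inequalities of $Q$ are rewritten in equality-plus-slack form, adding only $\Oh{d}$ coordinates, which does not affect the asymptotics.

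I would then verify both directions of \autoref{def:pqrep}. Given a schedule with $\sum w_jU_j\le T$: for each machine $\ell\in[m]$ let $c$ record its early jobs per type; since a set of jobs that can be completed before their due dates may be run in EDD order (\cite{LS20}), the deadline constraints hold for suitable non-negative slacks, giving a point $y^{(\ell)}\in P\cap\Z^{d+d_{\#}M}$. Their sum $y:=\sum_{\ell=1}^m y^{(\ell)}$ lies in $Q$ because its $c$-part equals the vector of early jobs, which is $\le n$ coordinate-wise and has weight $w^Tc\ge W-T$ (the late jobs having weight at most $T$); and $y\in\textup{int.cone}(P\cap\Z^{d+d_{\#}M})$ as a sum of $m$ integral points of $P$. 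Conversely, from $y=\sum_{\ell=1}^m y^{(\ell)}\in\textup{int.cone}(P\cap\Z^{d+d_{\#}M})\cap Q$ with each $y^{(\ell)}=(c^{(\ell)},s^{(\ell)})\in P\cap\Z^{d+d_{\#}M}$, the deadline constraints guarantee that the jobs encoded by $c^{(\ell)}$ can be scheduled early on machine $\ell$ in EDD order; the bound $c\le n$ from $Q$ ensures no job type is over-used, and the remaining jobs are appended afterwards on an arbitrary machine, contributing total late weight $W-w^Ty\le T$. Hence the instance is positive, completing the equivalence.

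The step I expect to need the most care is exactly the one already flagged in \autoref{lem:pqcomplex}: arguing cleanly that the scalar, per-dimension deadline constraints $\sum_{k\le j}(p_{k,i})^Tc^{(k)}\le d_j$ faithfully capture EDD-feasibility of vector jobs in \emph{both} directions. Everything else---boundedness of $P$, the slack bookkeeping converting $Q$'s inequalities to the equality form required by the framework, and tracking which coordinates are constrained in $P$ versus $Q$---is routine and entirely parallel to the earlier proofs in this section.
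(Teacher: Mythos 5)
Your proposal is correct and matches the paper's construction almost exactly: same $P$-polytope (deadline constraints per due date and dimension, with slacks making them equalities, boundedness from $p_j\in\N^M_{>0}$), same role for $Q$ (enforcing $c\le n$ and a lower bound on the early weight), and the same correspondence argument via EDD-order. The one cosmetic difference is that the paper introduces an extra coordinate $x$ in $P$ with the constraint $w^Tc=x$ and then puts the threshold $x\ge W-T$ in $Q$, whereas you impose $w^Tc\ge W-T$ directly on the $c$-part in $Q$; both are equivalent (the sum over blocks of $w^Tc^{(\ell)}$ equals $w^T\!\sum_\ell c^{(\ell)}$), and neither changes the stated running time since $\wmax$ already appears in $\norm{A^{(P)}}_\infty$ via the objective constraint in the paper's version. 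You also wrote the total weight as $W=\sum_i n_iw_i$, which is the intended quantity (the paper's $\sum_i w_i$ omits the multiplicities).
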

\begin{proof}
    Set $m$ to the number of machines,
    \[P=\left\{\left.\begin{pmatrix}c\\x\\s\end{pmatrix}\in\R_{\geq0}^{N}\,\right|\,
    A
    \begin{pmatrix}
        c\\x\\s
    \end{pmatrix}
    = b\right\}\]
    and
    \[Q=\left\{\left.\begin{pmatrix}c\\x\\s\end{pmatrix}\in\R_{\geq0}^N\,\right|\,x\geq\sum_{i=1}^dw_i-T,\,\mathbf{0}\leq c\leq n\right\},\]
    where $N:=d+1+d_{\#}M$ and $A$ and $b$ are given by the following constraints:
    \begin{align*}
        \sum_{k=1}^j(p_{k,i})^Tc^{(k)}&\leq d_j && \forall j\in[d_{\#}],i\in[M] &&&\textup{(due dates)} \\
        w^T c&= x &&  &&&\textup{(objective)}
    \end{align*}
    Here, $p_{k}$ is a matrix containing the vectors of processing times that correspond to the jobs of the $k$-th due date. Then $p_{k,i}$ contains the processing times in dimension $i$ corresponding to the jobs of due date $k$. Similarly, $c^{(k)}$ includes the job types that have due date $d_k$.

    As we showed similarly in the proof of \autoref{lem:pqcomplex}, the $c$-part of the integral vectors in $P$ are valid configurations. In this case, \enquote{valid} means that the jobs in $c$ can be scheduled on the same machine (in EDD order) without missing any due dates (in any of the $M$ dimensions). The objective constraint sets $x$ to the total weight of all the jobs selected by $c$. The due date constraints make sure that no job misses its due date in any dimension if they are scheduled in EDD order. Given a threshold $T$ for the desired weight, a solution of the scheduling problem boils down to a subset of jobs that can be scheduled early and their assignment to the machines such that their total weight is at least $\sum_{i=1}^dw_i-T$ (and hence the weight of all the late jobs is at most $T$). This is exactly what a point in $\textup{int.cone}(P\cap\Z^{N})\cap Q$ corresponds to: A set of configurations (containing the early jobs, hence the inequality $c\leq n$ instead of $c=n$ in the description of $Q$) that have total weight at least $\sum_{i=1}^dw_i-T$.
\end{proof}
We can now get a parameterized algorithm for \tf{\P}{}{\sumwu}:

\begin{theorem}\label{thm:vectorsumwu}
    The problem \tf{\P}{\vec}{\sumwu} can be solved in time
    \[(\log(\max\{\pmax,\wmax\}))^{2^{\Oh{d+d_{\#}M}}}\enc{I}^{\Oh{1}}.\]
\end{theorem}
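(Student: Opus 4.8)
The plan is to feed the \pqmr from \autoref{lem:pqpsumwu} into the solver of \autoref{cor:pqviavertices}, in exactly the same way as \autoref{thm:complex} is derived. Concretely, I would (i) bring the \pqmr into the standard form required by \autoref{cor:pqviavertices}, (ii) bound the relevant quantities $K=\max\{N,M^{(P)},M^{(Q)}\}$ and $\norm{A^{(P)}}_\infty$, (iii) check that $\enc{P},\enc{Q},\log(m)$ are polynomial in $\enc{I}$, and (iv) substitute.

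For step (i): the polytope $P$ of \autoref{lem:pqpsumwu} is already written with equalities (the slack vector $s$ absorbs the $d_{\#}M$ due-date inequalities, and the objective relation $w^Tc=x$ is an equality), so $P$ is in standard form with $N=d+1+d_{\#}M$ coordinates and $M^{(P)}=\Oh{d_{\#}M}$ constraints. The polytope $Q$ is cut out by the single inequality $x\geq\sum_iw_i-T$ and the $d$ box inequalities $c\leq n$; introducing $\Oh{d}$ further slack coordinates (which are also appended to $P$, where they are bounded by the corresponding $n_i$ respectively $\sum_iw_i$) turns these into equalities without changing the order of magnitude of any dimension. Hence after this routine massaging we still have $N,M^{(P)},M^{(Q)}=\Oh{d+d_{\#}M}$, so $K=\Oh{d+d_{\#}M}$.

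For step (ii): every coefficient of $A^{(P)}$ is either a processing time $p_{k,i}\leq\pmax$, a weight $w_i\leq\wmax$, or a $0/1$ entry (the coefficient of $x$ in $w^Tc=x$ and the slack coefficients), so $\norm{A^{(P)}}_\infty\leq\max\{\pmax,\wmax\}$. The key point — and the reason the three tools were developed — is that the genuinely large numbers of the instance, namely the due dates $d_j$ sitting in $b^{(P)}$ and the quantities $n_i$ and $\sum_iw_i-T$ sitting in $b^{(Q)}$, do \emph{not} enter the running time of \autoref{cor:pqviavertices}; they are present only through $\enc{P}$ and $\enc{Q}$, and hence only logarithmically. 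Since the instance $I$ already encodes $d$, the $M$-dimensional processing times, the due dates, the weights, the threshold $T$ and the number $m$ of machines, we indeed have $\enc{P},\enc{Q},\log(m)\leq\enc{I}^{\Oh{1}}$, which settles step (iii).

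Plugging these bounds into \autoref{cor:pqviavertices} then yields a running time of $(\log(\norm{A^{(P)}}_\infty))^{2^{\Oh{K}}}(\enc{P}\enc{Q}\log(m))^{\Oh{1}}=(\log(\max\{\pmax,\wmax\}))^{2^{\Oh{d+d_{\#}M}}}\enc{I}^{\Oh{1}}$, which is the claimed bound; as usual, \autoref{lem:logbound} may additionally be applied to it as in \autoref{sect:introduction}. I do not expect a real obstacle here: the substantive content (removing the dependence on $\norm{b}_\infty$, i.e.\ on the large due dates and weight sums) has already been carried out in \autoref{thm:main} and \autoref{thm:vertexboundimproved}, and what remains is the bookkeeping of (i)--(iv). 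If anything, the only point requiring a moment of care is the standard-form conversion of $Q$ in step (i), where one has to make sure the extra slack coordinates are added to $P$ as well and are kept bounded so that $P$ remains a polytope.
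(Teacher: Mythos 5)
Your proposal is correct and follows essentially the same route as the paper: read off the parameters $K=\max\{N,M^{(P)},M^{(Q)}\}=\Oh{d+d_{\#}M}$, $\norm{A^{(P)}}_\infty=\max\{\pmax,\wmax\}$, and $\enc{P},\enc{Q},\log(m)\le\enc{I}^{\Oh{1}}$ from the \pqmr of \autoref{lem:pqpsumwu} and substitute them into \autoref{cor:pqviavertices}. Your extra remark on converting the inequality description of $Q$ into the standard equality form (and appending the corresponding bounded slack coordinates to $P$) is a small piece of bookkeeping that the paper's proof silently absorbs into the asymptotics, but it does not change the approach or the resulting bound.
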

\begin{proof}
    The parameters in our \pqmr in \autoref{lem:pqpsumwu} are: 
    \begin{itemize}
        \item $\enc{P},\enc{Q},\log(\m)=\enc{I}^{\Oh{1}}$,
        \item $\norm{A^{(P)}}_{\infty}=\max\{\pmax,\wmax\}$ and
        \item $K=\max\{M^{(P)},M^{(Q)},N\}=\Oh{d+d_{\#}M}$.
    \end{itemize}
    Hence, \autoref{cor:pqviavertices} yields running time:
    \begin{align*}                          &\left(\log\left(\norm{A^{(P)}}_\infty\right)\right)^{2^{\Oh{K}}}\enc{P}^{\Oh{1}}\enc{Q}^{\Oh{1}}\log(\m)^{\Oh{1}}\\
    =&(\log(\max\{\pmax,\wmax\}))^{2^{\Oh{d+d_{\#}M}}}\enc{I}^{\Oh{1}}
    \end{align*}
\end{proof}

\begin{corollary}\label{thm:sumwu}
    The problem \tf{\P}{}{\sumwu} can be solved in time
    \[(\log(\max\{\pmax,\wmax\}))^{2^{\Oh{d}}}\enc{I}^{\Oh{1}}.\]
\end{corollary}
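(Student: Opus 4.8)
The plan is to derive this as the one-dimensional special case of \autoref{thm:vectorsumwu}. Indeed, an instance of \tf{\P}{}{\sumwu} is precisely an instance of \tf{\P}{\vec}{\sumwu} in which the processing-time vectors $p_j$ are scalars, i.e.\ $M=1$. So the first step is simply to invoke \autoref{thm:vectorsumwu} with $M=1$, which gives a running time of
\[
(\log(\max\{\pmax,\wmax\}))^{2^{\Oh{d+d_{\#}M}}}\enc{I}^{\Oh{1}}=(\log(\max\{\pmax,\wmax\}))^{2^{\Oh{d+d_{\#}}}}\enc{I}^{\Oh{1}},
\]
where $d_{\#}$ denotes the number of distinct due dates and $\pmax$ is now just the largest processing time.

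The second step is to absorb the $d_{\#}$ term into $d$. Since two jobs with the same processing time, due date and weight are of the same type, the number $d$ of job types is at least the number $d_{\#}$ of distinct due dates (indeed at least $\max\{p_{\#},d_{\#},w_{\#}\}$, in the notation used before \autoref{lem:pqpsumwu}). Hence $d+d_{\#}\leq 2d$, so $2^{\Oh{d+d_{\#}}}=2^{\Oh{d}}$, and the bound collapses to $(\log(\max\{\pmax,\wmax\}))^{2^{\Oh{d}}}\enc{I}^{\Oh{1}}$, as claimed. The encoding-length factor is unaffected.

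There is essentially no obstacle here: the corollary is a straightforward instantiation, and the only thing one has to be careful about is checking that specializing to $M=1$ does not change any of the other parameters in a way that would worsen the exponent — which it does not, since $\enc{P},\enc{Q},\log(m)$ are still $\enc{I}^{\Oh{1}}$ and $\norm{A^{(P)}}_\infty=\max\{\pmax,\wmax\}$ is unchanged. If one instead wanted a self-contained argument, one could alternatively specialize \autoref{lem:pqpsumwu} to $M=1$ to get the \pqmr directly and then apply \autoref{cor:pqviavertices}, but going through \autoref{thm:vectorsumwu} is the shortest route.
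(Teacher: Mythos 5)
Your proposal matches the paper's own proof: the paper obtains the corollary by specializing \autoref{thm:vectorsumwu} to $M=1$ and then noting (in the sentence immediately following the corollary) that $d_{\#}\leq d$, so the exponent $2^{\Oh{d+d_{\#}M}}$ collapses to $2^{\Oh{d}}$. Your added sanity check that the other parameters ($\enc{P},\enc{Q},\log(m),\norm{A^{(P)}}_\infty$) are unaffected by setting $M=1$ is correct and harmless, but the argument is the same.
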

Note that $d_{\#}\leq d$.

\subsection{Uniform N-Fold ILPs}
\autoref{cor:pqviavertices} can also be used to solve uniform $n$-fold ILPs.
In general, these ILPs have the following form:
\begin{align*}
    \max&\,\sum_{i=1}^n(c^{(i)})^Tx^{(i)}\\
    \matrix{C^{(1)} & \hdots & C^{(n)} \\ B^{(1)} & & \\ & \ddots & \\ & & B^{(n)}} \matrix{x^{(1)}\\\vdots\\x^{(n)}} &= \matrix{b^{(0)} \\ b^{(1)} \\\vdots\\b^{(n)}} \\
    \mathbf{0}\leq x^{(i)}&\leq u^{(i)} &&\forall i\in[n]\\
    x^{(i)}&\in\Z^t &&\forall i\in[n]
\end{align*}
where $C^{(i)}\in\Z^{r\times t}$, $B^{(i)}\in\Z^{s\times t}$, $b^{(0)}\in\Z^r$, $b^{(i)}\in\Z^s$, $c\in\Z^t$ and $u^{(i)}\in(\Z_{>0}\cup\{\infty\})^t$. 
Here, we consider the special case with $C^{(1)}=\hdots=C^{(n)}$, $B^{(1)}=\hdots=B^{(n)}$, $b^{(1)}=\hdots=b^{(n)}$, $c^{(1)}=\hdots=c^{(n)}$ and $u^{(1)}=\hdots=u^{(n)}$.
By standard bounds for the solutions of (I)LPs, we know that the objective value is bounded by some $U$ that has an encoding length bounded by $\enc{I}^{\Oh{1}}$ (see Schrijver~\cite{Schrijver86} for bounds on the $\ell_{\infty}$-norm of (I)LP solutions; one could even use proximity arguments like in \cite{JLR20}), if the bounds $u$ were infinite in some directions. So by doing a binary search over the objective values, we can assume that we are looking for a solution $x$ with $\sum_{i=1}^ncx^{(i)}\geq T$ for some $T\in[U]$. If the ILP is bounded, the objective value is also bounded and doing a binary search over the possible objective values only adds a factor of $\enc{I}^{\Oh{1}}$.
\begin{lemma}\label{lem:pqnfold}
    Uniform $n$-fold ILPs have a \pqmr with $m=n$,
    \[P=\left\{\left.\matrix{x\\z\\g\\v}\in\R_{\geq0}^{2t+r+1}\,\right|\,\matrix{B&0&0&0\\I&I&0&0\\C&0&-I&0\\c^T&0&0&-1}\matrix{x\\z\\g\\v}=\matrix{b^{(1)}\\u\\\mathbf{0}\\0}\right\}\]
    and
    \[Q=\left\{\left.\matrix{x\\z\\g\\v}\in\R_{\geq0}^{2t+r+1}\,\right|\,v\geq T,\,g=b^{(0)}\right\}.\]
\end{lemma}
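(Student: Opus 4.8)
The plan is to follow the same template as the earlier \pqmr constructions (e.g.\ the proofs of \autoref{lem:pqcomplex} and \autoref{lem:pqpsumwu}): take $m=n$ and show that the integral points of $P$ are in bijection with the feasible single-block solutions $x^{(i)}$ of the uniform $n$-fold ILP, augmented by bookkeeping coordinates, while membership in $Q$ encodes the global linking constraint $\sum_{i=1}^n Cx^{(i)}=b^{(0)}$ together with the objective threshold $\sum_{i=1}^n c^Tx^{(i)}\ge T$. For an integral point $(x,z,g,v)\in P$: the rows $Bx=b^{(1)}$ force $x$ to satisfy the local constraints; the rows $x+z=u$ with $x,z\ge\mathbf 0$ force $\mathbf 0\le x\le u$; the rows $Cx-g=\mathbf 0$ pin $g$ to the linking contribution $Cx$ of this block; and the row $c^Tx-v=0$ pins $v$ to the objective contribution $c^Tx$. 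So $z$, $g$, $v$ are slack quantities determined by $x$, and the integral points of $P$ are exactly the feasible single-block solutions.

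Assume for the moment that $C$, $c$ and $b^{(0)}$ are nonnegative (the reduction to this case is discussed below). First I would check that $P$ is a polytope: $x+z=u$ with $x,z\ge\mathbf 0$ bounds $x$ (hence $z$) entrywise, whence $g=Cx$ and $v=c^Tx$ are bounded too; if some coordinate of $u$ equals $\infty$, replace it beforehand by the standard $\ell_\infty$-bound on a vertex solution (of encoding length $\enc{I}^{\Oh{1}}$), which is exactly the reduction already used just before the lemma to make the objective value bounded. Then I would close both directions. Forward: a feasible solution $x^{(1)},\dots,x^{(n)}$ with $\sum_{i=1}^n c^Tx^{(i)}\ge T$ yields the $m=n$ integral points $y^{(i)}:=(x^{(i)},\,u-x^{(i)},\,Cx^{(i)},\,c^Tx^{(i)})\in P\cap\Z^{2t+r+1}$, whose sum $y=\sum_{i=1}^n y^{(i)}$ has $g$-part $b^{(0)}$ and $v$-part $\ge T$, so $y\in\text{int.cone}(P\cap\Z^{2t+r+1})\cap Q$. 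Backward: if $y=y^{(1)}+\dots+y^{(n)}\in\text{int.cone}(P\cap\Z^{2t+r+1})\cap Q$ with each $y^{(i)}\in P\cap\Z^{2t+r+1}$, then the $x$-parts $x^{(i)}$ satisfy $Bx^{(i)}=b^{(1)}$ and $\mathbf 0\le x^{(i)}\le u$ by $P$-membership, while $g=b^{(0)}$ and $v\ge T$ in $Q$ give $\sum_{i=1}^n Cx^{(i)}=b^{(0)}$ and $\sum_{i=1}^n c^Tx^{(i)}\ge T$; hence $(x^{(1)},\dots,x^{(n)})$ is a feasible solution with objective value at least $T$.

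The delicate point is removing the nonnegativity assumption on $C$, $c$, $b^{(0)}$ (which is forced by $P,Q\subseteq\R_{\ge0}^{2t+r+1}$). To handle the general case I would split the $g$- and $v$-coordinates into a positive and a negative part, adding the a priori bounds $\norm{Cx^{(i)}}_\infty\le\norm{C}_\infty\, t\,\norm{u}_\infty$ and $|c^Tx^{(i)}|\le\norm{c}_\infty\, t\,\norm{u}_\infty$ (valid since $\mathbf 0\le x^{(i)}\le u$) as explicit caps on both parts so that $P$ remains bounded; the sum constraints in $Q$ then still read off $\sum_i Cx^{(i)}=b^{(0)}$ and $\sum_i c^Tx^{(i)}\ge T$, because only the \emph{differences} of the two parts are pinned to $Cx^{(i)}$, $c^Tx^{(i)}$. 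Carrying out this modification while keeping $\max\{N,M^{(P)},M^{(Q)}\}=\Oh{r+s+t}$ and $\norm{A^{(P)}}_\infty=\max\{\norm{B}_\infty,\norm{C}_\infty,\norm{c}_\infty\}$ (so that the running time obtained from \autoref{cor:pqviavertices} comes out as claimed) is the main obstacle; everything else parallels the earlier \pqmr proofs.
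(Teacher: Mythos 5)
Your argument in the nonnegative case is exactly the paper's own proof: forward, augment each brick $(x^*)^{(i)}$ to $\big(x^{(i)},\,u-x^{(i)},\,Cx^{(i)},\,c^Tx^{(i)}\big)\in P\cap\Z^{2t+r+1}$ and observe the sum lies in $Q$; backward, read off the $x$-parts of the $y^{(i)}$ and verify local, box, linking and threshold constraints from membership in $P$ and $Q$. You also correctly note the need to replace $u=\infty$ by the standard vertex bound, which the paper addresses in the paragraph preceding the lemma.

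More importantly, you have identified a genuine gap that the paper's own proof does not address. The lemma asserts $P,Q\subseteq\R_{\geq0}^{2t+r+1}$, yet the uniform $n$-fold ILP allows $C$, $c$ and $b^{(0)}$ to have negative entries, in which case $g=Cx^{(i)}$ and $v=c^Tx^{(i)}$ can be negative --- so the paper's construction does not land in $P$ --- and if any coordinate of $b^{(0)}$ is negative, $Q$ is simply empty. The paper's proof silently sets $g=Cx^{(*)(i)}$ and $v=c^T x^{(*)(i)}$ without acknowledging this. Your split-variable fix (representing $g$ and $v$ as differences of nonnegative parts with explicit caps derived from $\norm{C}_\infty t\norm{u}_\infty$ and $\norm{c}_\infty t\norm{u}_\infty$) would repair the statement while keeping $N,M^{(P)},M^{(Q)}=\Oh{r+s+t}$ and $\norm{A^{(P)}}_\infty=\max\{\norm{B}_\infty,\norm{C}_\infty,\norm{c}_\infty\}$, so \autoref{thm:uniformnfoldilps} would go through unchanged; you only sketch this step. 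A slightly cleaner repair that avoids doubling the $g,v$ coordinates is to shift: set $g=Cx+t\norm{u}_\infty\norm{C}_\infty\mathbf 1$ and $v=c^Tx+t\norm{u}_\infty\norm{c}_\infty$, adjusting the right-hand sides in $P$ and replacing $g=b^{(0)}$, $v\geq T$ in $Q$ by $g=b^{(0)}+nt\norm{u}_\infty\norm{C}_\infty\mathbf 1$ and $v\geq T+nt\norm{u}_\infty\norm{c}_\infty$. This touches only right-hand sides, which do not enter the running time of \autoref{cor:pqviavertices}, so all downstream bounds are preserved. Either way, the lemma as printed is incorrect for general sign patterns, and you were right to flag it.
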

\begin{proof}
    Suppose the given uniform $n$-fold ILP has a solution $x^*\in\Z^{nt}$ with objective value $\sum_{i=1}^nc^T(x^*)^{(i)}\geq T$. Then each of the $n$ bricks $(x^*)^{(i)}\in\Z^{t}$ can be augmented to a vector $\matrix{x&z&g&v}^T\in P$ by setting the value $v=c^T(x^*)^{(i)}$, the solution $x=(x^*)^{(i)}$, the slack $z=u-(x^*)^{(i)}$ and the global part $g=C(x^*)^{(i)}$. The sum of all these augmented vectors is in $Q$: The values for $x$ and $z$ are unconstrained; we have $\sum_{i=1}^nC(x^*)^{(i)}=b^{(0)}$, which is exactly the sum of all the $g$-entries; the value is $\sum_{i=1}^nc^T(x^*)^{(i)}$, which is at least $T$ by assumption. So $\textup{int.cone}(P\cap\Z^{2t+r+1})\cap Q$ is not empty.

    Now suppose $\textup{int.cone}(P\cap\Z^{2t+r+1})\cap Q$ is not empty but contains a vector $y=y^{(1)}+\hdots+y^{(n)}$ such that $$y^{(i)}=\matrix{x^{(i)}&z^{(i)}&g^{(i)}&v^{(i)}}^T\in P\cap\Z^{2t+r+1}$$ for all $i\in[n]$. Set $x^*$ to be the concatenation of the $x^{(i)}$-parts, i.e., $x^*=\matrix{x^{(1)},\hdots,x^{(n)}}^T$. We claim that $x^*$ is a solution of the uniform $n$-fold ILP that has objective value $\sum_{i=1}^nc^T(x^*)^{(i)}\geq T$. Due to the last constraint in $P$ that sets the last coordinate of $y^{(i)}$ to $c^Ty^{(i)}$ and the constraint in $Q$, we have $\sum_{i=1}^nc^T(x^*)^{(i)}=\sum_{i=1}^nc^Ty^{(i)}\geq T$. The third group of constraints in $P$ sets the $g$-part of the vectors to $Cy^{(i)}$ and since $y\in Q$, we have $\sum_{i=1}^nCy^{(i)}=b^{(0)}$, so the global constraints $\matrix{C&\hdots&C}x^*=b^{(0)}$ are fulfilled. The local constraints $B(x^*)^{(i)}=b^{(1)}$ are also fulfilled because of the first group of constraints in $P$ and $y^{(i)}\in P$ for all $i\in[n]$. Similarly, we have $0\leq (x^*)^{(i)}\leq u$ for each $i\in[n]$, thanks to the second group of constraints in $P$. Clearly, $x^*$ is integer. So $x^*$ is a solution of the given uniform $n$-fold ILP with good enough objective value.
\end{proof}
We can now use this to obtain an algorithm for uniform $n$-fold ILPs:

\begin{theorem}\label{thm:uniformnfoldilps}
    Uniform $n$-fold ILPs can be solved in time
    \[(\log(\max\{\norm{C}_{\infty},\norm{B}_{\infty},\norm{c}_{\infty}\}))^{2^{\Oh{r+s+t}}}\enc{I}^{\Oh{1}}.\]
\end{theorem}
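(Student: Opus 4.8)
The plan is to instantiate \autoref{cor:pqviavertices} with the \pqmr built in \autoref{lem:pqnfold}, after a short preprocessing step that turns $P$ into an honest polytope. First I would deal with the possibly infinite entries of $u$: assuming the given uniform $n$-fold ILP is bounded (otherwise we report unboundedness), the classical bounds on solutions of bounded (I)LPs (see Schrijver~\cite{Schrijver86}) give a bound $B^*$ with $\log(B^*)\leq\enc{I}^{\Oh{1}}$ such that some optimal solution has all coordinates at most $B^*$; replacing each infinite entry of $u$ by $B^*$ leaves the relevant solutions intact, makes $P$ bounded, and adds only $\log(B^*)\leq\enc{I}^{\Oh{1}}$ to $\enc{P}$. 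As discussed before \autoref{lem:pqnfold}, the optimum is then bounded by some $U$ with $\log(U)\leq\enc{I}^{\Oh{1}}$, so a binary search over $T\in[U]$ reduces the optimization problem to $\enc{I}^{\Oh{1}}$ many feasibility queries, each of which is exactly the \pqmr of \autoref{lem:pqnfold}.

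Next I would read off the parameters of that \pqmr. Since $P\subset\R^{2t+r+1}$ we have $N=2t+r+1=\Oh{r+s+t}$; the defining system of $P$ consists of $s$ rows from the $B$-block, $t$ rows from the slack-coupling block, $r$ rows from the $C$-block and one row for the objective variable, so $M^{(P)}=r+s+t+1=\Oh{r+s+t}$; and $Q$ is cut out by the $r$ equalities $g=b^{(0)}$ and the single inequality $v\geq T$ (one extra slack variable, which does not change $N$ up to constants), so $M^{(Q)}=\Oh{r}$. Hence $K:=\max\{N,M^{(P)},M^{(Q)}\}=\Oh{r+s+t}$. Every entry of $A^{(P)}$ is either $0$, $\pm1$, or an entry of $B$, $C$ or $c$, so $\norm{A^{(P)}}_\infty=\max\{\norm{B}_\infty,\norm{C}_\infty,\norm{c}_\infty,1\}$, which for a nontrivial instance equals $\max\{\norm{B}_\infty,\norm{C}_\infty,\norm{c}_\infty\}$. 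Finally, all numbers appearing in $P$, $Q$ and $m=n$ are part of the input or are $T\leq U$ resp.\ $B^*$, so $\enc{P},\enc{Q},\log(m)\leq\enc{I}^{\Oh{1}}$.

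Plugging this into \autoref{cor:pqviavertices} yields
\[
\left(\log\left(\norm{A^{(P)}}_\infty\right)\right)^{2^{\Oh{K}}}(\enc{P}\enc{Q}\log(m))^{\Oh{1}}
=\left(\log\left(\max\{\norm{C}_\infty,\norm{B}_\infty,\norm{c}_\infty\}\right)\right)^{2^{\Oh{r+s+t}}}\enc{I}^{\Oh{1}},
\]
the claimed bound. I expect the only subtle point to be the first paragraph: one must make sure that the large right-hand sides $b^{(0)}$ and $T$ and the (a priori infinite) bounds $u$ affect the running time only through the polynomial factor $\enc{I}^{\Oh{1}}$ and never creep into the base $\log(\norm{A^{(P)}}_\infty)$ of the doubly-exponential term — which is precisely why the \pqmr of \autoref{lem:pqnfold} places those quantities into the right-hand sides of $P$ and $Q$ rather than into $A^{(P)}$.
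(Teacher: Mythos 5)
Your proposal is correct and follows essentially the same route as the paper: instantiate \autoref{cor:pqviavertices} with the \pqmr from \autoref{lem:pqnfold}, read off $N$, $M^{(P)}$, $M^{(Q)}$ and $\norm{A^{(P)}}_\infty$, and note that the binary search over $T$ and the handling of infinite bounds only contribute $\enc{I}^{\Oh{1}}$ factors — exactly as the paper remarks just before \autoref{lem:pqnfold}. Your parameter bookkeeping ($N=2t+r+1$, $M^{(P)}=r+s+t+1$, $M^{(Q)}=\Oh{r}$, $\norm{A^{(P)}}_\infty=\max\{\norm{B}_\infty,\norm{C}_\infty,\norm{c}_\infty\}$) matches the paper's.
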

\begin{proof}
    The parameters in our \pqmr in \autoref{lem:pqnfold} are: 
    \begin{itemize}
        \item $\enc{P},\enc{Q},\log(\m)=\enc{I}^{\Oh{1}}$,
        \item $\norm{A^{(P)}}_{\infty}=\max\{\norm{C}_{\infty},\norm{B}_{\infty},\norm{c}_{\infty}\}$ and
        \item $K=\max\{M^{(P)},M^{(Q)},N\}=\Oh{r+s+t}$.
    \end{itemize}
    Hence, \autoref{cor:pqviavertices} yields running time:
    \begin{align*}                          &\left(\log\left(\norm{A^{(P)}}_\infty\right)\right)^{2^{\Oh{K}}}\enc{P}^{\Oh{1}}\enc{Q}^{\Oh{1}}\log(\m)^{\Oh{1}}\\
    =&(\log(\max\{\norm{C}_{\infty},\norm{B}_{\infty},\norm{c}_{\infty}\}))^{2^{\Oh{r+s+t}}}\enc{I}^{\Oh{1}}
    \end{align*}
\end{proof}

\subsection{\texorpdfstring{\mswbp}{MinSumWeightedBinPacking}}
In the \mswbp problem, similar to scheduling, we are given an item vector $n\in\N^d$, a corresponding size vector $p\in\N^d_{>0}$, a weight vector $w\in\N^d_{>0}$ and a bin capacity $B\in\N_{>0}$. The goal is to pack all items into an unlimited number of bins such that the cost of all used bins is minimized; a bin with index $i$ costs $i$ times the total weight of the items in the bin. This problem has, e.g., been studied in~\cite{EL08}.


Note that it makes no sense to pack the $\n$ items into more than $\n$ bins, so we can via binary search (invoking an additional cost of $\log(\n)$) assume that we know how many bins are used. Like in the other problems, we are looking at the decision variant, so we are also given a threshold for the objective value. Let $T$ be this objective value and let $m$ be the number of bins used by some feasible packing. 
\begin{lemma}\label{lem:pqmswbp}
    \mswbp has a \pqmr.
\end{lemma}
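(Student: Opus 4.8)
The plan is to follow the pattern of \autoref{lem:pqcomplex}, \autoref{lem:pqpsumwu} and \autoref{lem:pqnfold}: realize each of the $m$ summands $y^{(i)}$ of \autoref{def:pqrep} as a single bin, and let $Q$ glue the bins together so that all items are packed and the cost stays below the threshold $T$. The one genuinely new difficulty is that the cost of a bin depends on its \emph{index}, so the objective is not symmetric in the bins. First I would record the combinatorial fact that, for a fixed packing into $m$ bins with weights $W_1,\dots,W_m$, the cost $\sum_{i} i\,W_{(i)}$ (sorted weights $W_{(1)}\ge\cdots\ge W_{(m)}$) is attained by listing the bins in order of non-increasing weight; consequently the instance is positive if and only if there is a packing \emph{together with} an assignment $\sigma$ of the $m$ bins to the positions $1,\dots,m$ such that $\sum_{i}\sigma(i)\,W_i\le T$. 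Since empty bins contribute $0$ and can be pushed to the last positions, it is without loss of generality to fix the number of bins to the value $m$ coming from the binary search and to allow some of them to be empty.

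Next I would build $P$. A point of $P\cap\Z^N$ encodes a bin configuration $c\in\N^d$, a unary position indicator $z\in\{0,1\}^m$ with $\sum_{j}z_j=1$, and auxiliary variables $u\in\N^m$ whose intended value is $u_j=z_j\,(w^Tc)$. The constraints of $P$ are $0\le z_j\le 1$ and $\sum_{j}z_j=1$; the validity constraints $\mathbf 1^Tc\le\n$ and $\sum_{k}p_k c_k\le B$ (which bound $c$, hence $P$, and guarantee $w^Tc\le M$ for $M:=\wmax\min\{B,\n\}$); and the McCormick linearization $u_j\le w^Tc$, $u_j\le M z_j$, $u_j\ge w^Tc-M(1-z_j)$, $u_j\ge 0$ for every $j\in[m]$; slack variables turn the inequalities into equalities. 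For an integral point, $\sum_j z_j=1$ forces a unique $z_{j^*}=1$, and then the McCormick inequalities force $u_{j^*}=w^Tc$ and $u_j=0$ for $j\ne j^*$, so $\sum_{j}j\,u_j$ equals the position $j^*$ of the bin times its weight $w^Tc$. The polytope $Q$ is $\bigl\{(c,z,u)\in\R_{\ge0}^{N}: c=n,\ z=\mathbf 1,\ \sum_{j}j\,u_j\le T\bigr\}$: the equality $c=n$ says all items are packed, the equality $z=\mathbf 1$ says the $m$ bins occupy the positions $1,\dots,m$ each exactly once (a perfect matching between bins and positions), and the inequality bounds the total cost.

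With $P$ and $Q$ fixed I would verify \autoref{def:pqrep} in both directions exactly as in the earlier lemmas. From a feasible packing plus a position assignment of cost $\le T$, build for each bin the vector $(c,z,u)$ as above, check it lies in $P\cap\Z^N$, and observe that the sum over bins lies in $Q$. Conversely, from $y=y^{(1)}+\cdots+y^{(m)}\in\text{int.cone}(P\cap\Z^N)\cap Q$ with all $y^{(i)}\in P\cap\Z^N$, read off $m$ bins, the position $\sigma(i)$ of bin $i$ being the unique hot coordinate of $z^{(i)}$; the $Q$-constraints then say that all items are packed, that the positions form a permutation of $[m]$, and — via the identity $\sum_{j}j\,u^{(i)}_j=\sigma(i)\,W_i$ (with $W_i:=w^Tc^{(i)}$) proved above — that $\sum_{i}\sigma(i)\,W_i\le T$, i.e.\ the packing has cost at most $T$. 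One also checks that $P$ is bounded, which is immediate once $c$, $z$, $u$ and the slacks are bounded.

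The main obstacle is exactly the index-dependence of the objective: a naive encoding would need the bilinear term (position)$\times$(bin weight), which is expressible neither in $P$ nor in $Q$. The resolution — a unary position indicator together with a McCormick linearization using $m$ scalar auxiliary variables — keeps the dimension at $N=\Oh{d+m}$ and the largest entry of $A^{(P)}$ at $\max\{\pmax,\wmax\min\{B,\n\}\}$, which is what makes the later application of \autoref{cor:pqviavertices} deliver the intended running time; the coefficients $j\le m$ occurring in the cost constraint live only in $A^{(Q)}$ and in $\enc{Q}$, so they do not enter the base of the tower.
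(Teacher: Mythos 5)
Your construction is correct and takes essentially the same route as the paper's: each bin is a point of $P$ carrying the configuration, a one-hot position indicator over $[m]$, and $m$ auxiliary scalars that linearize the bilinear term (position)$\times$(bin weight), while $Q$ glues the $m$ bins together so that all items are placed, the one-hot indicators aggregate to $\mathbf{1}$, and the total cost is at most $T$. Two small design differences are worth noting. The paper uses only the one-sided linearization $t_i \ge v - M(1-x_i)$ together with the box $0\le t_i\le M$ (where $v=w^Tc$ and $M=\wmax\min\{B,\n\}$), so the auxiliary scalars merely \emph{upper-bound} the cost contribution of the selected position; this is sufficient because the decision question is one-sided, and it saves two inequalities per position compared with your full McCormick system, which pins $u_j=z_j\,(w^Tc)$ exactly. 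Conversely, your choice of putting the cost bound $\sum_j j\,u_j\le T$ directly as a $Q$-constraint rather than introducing a $P$-variable $z=\sum_i i\,t_i$ (as the paper does) is the cleaner one for the running-time analysis: the coefficients $1,\dots,m$ then appear only in $A^{(Q)}$, so $\norm{A^{(P)}}_\infty=\max\{\pmax,\wmax\min\{B,\n\}\}$ outright, which is exactly the quantity that \autoref{cor:pqviavertices} feeds into the running time of \autoref{thm:mswbp}; with the paper's placement one must separately observe that $m$ does not dominate that maximum.
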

\begin{proof}
    Set $m$ to the number of used bins (which we know via binary search) and let
    \[P=\left\{\left.\begin{pmatrix}c\\x\\t\\v\\z\end{pmatrix}\in\R_{\geq0}^{N}\,\right|\,A\matrix{c\\x\\t\\v\\z}\leq b\right\}\]
    and
    \[Q=\left\{\left.\begin{pmatrix}c&x&t&v&z\end{pmatrix}^T\in\R_{\geq0}^{N}\,\right|\,c=n,\,x=\mathbf{1},\,0\leq z\leq T\right\},\]
    where $N:=d+2m+2$ and $A$ and $b$ are defined by the following constraints:
    \begin{align*}
        p^Tc &\leq B \\
        \sum_{i=1}^m x_i &= 1 \\
        \sum_{i=1}^d w_ic_i &= v \\
        t_i+\wmax\min\{B,\n\}&\geq v+x_i\wmax\min\{B,\n\}  &&\forall i\in[m] \\ 
        t_i&\leq \wmax\min\{B,\n\}  &&\forall i\in[m] \\
        \sum_{i=1}^m i t_i &= z
    \end{align*}
    The integer vectors in $P$ fulfill the load constraint, i.e., the items (given by the vector $c$) fit into the bin. The second constraint assures that if $x_i$ is non-negative and integer, exactly one component is equal to $1$ and the others are zero. This selects the bin in which the items are to be packed. The third constraint just sets the variable $v$ to the weight of the items in the bin/configuration. The fourth constraint assures that if $x_i=1$, then $t_i\geq v$; otherwise, the constraint is always fulfilled, as $v\leq \wmax\min\{B,\n\}$. The sixth constraint stores the actual value of the configuration (depending on the bin index and the weight of the items in the bin), while the fifth constraint just assures the boundedness of $P$ (note that the other variables are also all bounded).

    A solution of a given \mswbp instance using $m$ bins can be modelled by creating $m$ vectors $y^{(1)},\hdots,y^{(m)}\in P\cap\Z^{N}$, each of which represents a packing of a singular bin of the form $\matrix{c&x&t&v&z}^T$. Vector $y^{(k)}$ is defined as follows: 
    \begin{itemize}
        \item The $c$-part represents the packed items,
        \item $x_k=1$ and $x_i$ for $i\neq k$ is set to $0$,
        \item $t_k=v$ is set to the total weight of the items in the configuration, i.e., $\sum_{i=1}^d w_ic_i$ and the other $t_i$ are set to $0$ and
        \item $z$ is set to $k v$, the weight of the configuration multiplied by its bin index.
    \end{itemize}
    Then each item is contained in one of the configurations, so the $c$-parts sum up to $(n_1,\hdots,n_d)^T$; there is exactly one $y^{(i)}$-vector for each bin, meaning the $x$-vectors sum up to $\mathbf{1}$; and finally, the objective values of all the configurations sum up to some value $\leq T$. So $y:=y^{(1)}+\hdots+y^{(m)}\in \textup{int.cone}(P\cap\Z^{N})\cap Q$.

    For the other direction, consider a vector $y\in \textup{int.cone}(P\cap\Z^{N})\cap Q$ with $y:=y^{(1)}+\hdots+y^{(m)}$ for some vectors $y^{(i)}\in P\cap\Z^{N}$. By the above observations, each of these $y^{(i)}$ corresponds to a valid packing of a bin such that the $z$-entry contains an upper bound for the value this bin contributes to the overall objective. Since the sum of the $c$-vectors is equal to $(n_1,\hdots,n_d)^T$, all items are packed and since the $z$-values sum up to a value $\leq T$, the packing has a good enough objective value.
\end{proof}

\begin{theorem}\label{thm:mswbp}
    \mswbp can be solved in time
    \[(\log(\max\{\pmax,\wmax\min\{B,\n\}\}))^{2^{\Oh{d+m}}}\enc{I}^{\Oh{1}}.\]
\end{theorem}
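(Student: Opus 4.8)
The plan is to prove \autoref{thm:mswbp} by plugging the \pqmr $(P,Q,m)$ from \autoref{lem:pqmswbp} into \autoref{cor:pqviavertices}, after first reducing the optimization problem to a bounded number of decision instances. The first step is to observe that no reasonable packing uses more than $\n$ bins and that the objective value of any packing is at most $\n^2\wmax$, whose encoding length is $\enc{I}^{\Oh{1}}$; hence a binary search over the number $m\in[\n]$ of used bins and over the target value $T$ (both ranging over intervals of encoding length $\enc{I}^{\Oh{1}}$) reduces the task to solving $\enc{I}^{\Oh{1}}$ many decision instances, each of which is captured by a \pqmr via \autoref{lem:pqmswbp}.

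The second step is to read off the parameters of $(P,Q,m)$, taking care that \autoref{cor:pqviavertices} wants $P$ and $Q$ in equality form, so one adds one slack variable per $\leq$-constraint. Counting the constraints in \autoref{lem:pqmswbp} (the load constraint, the bin-selection constraint, the weight-accumulation constraint, the $m$ lower bounds and $m$ upper bounds on the $t_i$, and the value constraint, plus those cutting out $Q$), one gets $N=\Oh{d+m}$, $M^{(P)}=\Oh{m}$ and $M^{(Q)}=\Oh{d+m}$, hence $K:=\max\{N,M^{(P)},M^{(Q)}\}=\Oh{d+m}$ and $2^{\Oh{K}}=2^{\Oh{d+m}}$. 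The coefficients appearing in the matrix of $P$ are the sizes $p_i\leq\pmax$, the weights $w_i\leq\wmax$, the constants $\wmax\min\{B,\n\}$ from the $t_i$-constraints, the bin indices $i\leq m$ from the value constraint, and $\pm 1$ (including the slack coefficients), so $\norm{A^{(P)}}_\infty\leq\max\{\pmax,\wmax\min\{B,\n\},m\}$; the large right-hand-side entries $\wmax\min\{B,\n\}$ and $T$ are irrelevant here because \autoref{thm:main}, and hence \autoref{cor:pqviavertices}, is designed precisely to remove the dependence on the right-hand side. Together with $\enc{P},\enc{Q},\log(m)\leq\enc{I}^{\Oh{1}}$, \autoref{cor:pqviavertices} then yields running time $(\log(\max\{\pmax,\wmax\min\{B,\n\},m\}))^{2^{\Oh{d+m}}}\enc{I}^{\Oh{1}}$.

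The last step is to absorb the $m$ inside the logarithm: since $m\leq\n$ and $\log(\n)\leq\enc{I}^{\Oh{1}}$, and using $(a+b)^k\leq 2^k(a^k+b^k)$, one checks that $(\log(\max\{\pmax,\wmax\min\{B,\n\}\})+\log(m))^{2^{\Oh{d+m}}}$ is bounded by $(\log(\max\{\pmax,\wmax\min\{B,\n\}\}))^{2^{\Oh{d+m}}}$ up to a factor that is either $\enc{I}^{\Oh{1}}$ or absorbed into the leading term whenever the instance is non-degenerate, with the degenerate case $\pmax,\wmax\min\{B,\n\}=\Oh{1}$ handled directly; as for \tf{\P}{}{\{\cmax,\cmin,\envy\}} in \autoref{sect:introduction}, \autoref{lem:logbound} may additionally be applied afterwards. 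I expect the main obstacle to be nothing conceptual but rather this bookkeeping -- pinning down the constraint and slack-variable counts, and checking that the bin-index coefficients $i\leq m$ in the value constraint do not inflate the base beyond $\max\{\pmax,\wmax\min\{B,\n\}\}$ -- since all the real work already resides in \autoref{lem:pqmswbp} and in the general machinery behind \autoref{cor:pqviavertices}.
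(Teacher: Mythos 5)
Your proposal is correct and follows the same route as the paper: read off $N=\Oh{d+m}$, $M^{(P)},M^{(Q)}=\Oh{d+m}$, and $\norm{A^{(P)}}_\infty$ from \autoref{lem:pqmswbp} and plug into \autoref{cor:pqviavertices}. You additionally notice that the bin-index coefficients $i\leq m$ in the value constraint $\sum_{i=1}^m i\,t_i=z$ mean that strictly speaking $\norm{A^{(P)}}_\infty\leq\max\{\pmax,\wmax\min\{B,\n\},m\}$ -- the paper's proof silently drops the $m$ -- and your argument that this extra term is harmless (the factor $(\log m)^{2^{\Oh{d+m}}}=2^{2^{\Oh{d+m}}}$ depends only on $d$ and $m$ and is absorbed once the base is at least $2$) correctly patches that gap.
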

\begin{proof}
    The parameters in our \pqmr in \autoref{lem:pqmswbp} are:
    \begin{itemize}
        \item $\enc{P},\enc{Q},\log(\m)=\enc{I}^{\Oh{1}}$,
        \item $\norm{A^{(P)}}_{\infty}= \max\{\pmax,\wmax\min\{B,\n\}\}$ and 
        \item $K=\max\{M^{(P)},M^{(Q)},N\}=\Oh{d+m}$.
    \end{itemize}
    Hence, \autoref{cor:pqviavertices} yields running time:
    \begin{align*}                          &\left(\log\left(\norm{A^{(P)}}_\infty\right)\right)^{2^{\Oh{K}}}\enc{P}^{\Oh{1}}\enc{Q}^{\Oh{1}}\log(\m)^{\Oh{1}}\\
    =&(\log(\max\{\pmax,\wmax\min\{B,\n\}\}))^{2^{\Oh{d+m}}}\enc{I}^{\Oh{1}}
    \end{align*}
\end{proof}

\section{Additive Approximation Scheme}\label{sect:additiveapprox}
Approximation algorithms have an error measure $\eps$ as part of their input and they usually produce a solution with value at most $(1+\eps)\opt$, where $\opt$ is the optimum (for a minimization problem). An additive approximation algorithm on the other hand produces a solution with value at most $\opt + \eps k$, where $k$ is some parameter. A classical approximation algorithm can be seen as an additive approximation algorithm with additive error $\eps\opt$. 
So if we had two approximation algorithms with the same running time but one with additive error and one with multiplicative error, the one with additive error would be better, as long as $k<\opt$. 
Here, we consider parameter $k=\pmax$ for the problems \tf{\P}{}{\{\cmax,\cmin,\envy\}}. 

In the additive approximation schemes presented in \cite{buchem}, Buchem~\etal address load balancing problems on identical machines by introducing slot-MILPs, where jobs are organized in job size classes and fractionally assigned to machines such that the fractional components sum up to integral numbers of slots within each class. The idea is to first determine the number of slots to reserve for each class on each machine and to find an integral assignment to these slots in a second step. On each machine, the load is constrained to the target load interval $\closedinterval{\ell,u}$ for $\ell \in \Nwithzero$ and $u \in \Nwithoutzero$. The problem of finding an assignment for $\nlm$ jobs to $\mlm$ machines under the given load constraints is referred to as the \textit{target load balancing problem}. The decision versions of \tf{\P}{}{\{\cmax,\cmin,\envy\}} are special cases of this problem. An approximation scheme for each of these is included in~\cite{buchem} and also briefly described in \cref{subsec:application}. In this section, we present an adaptation of the algorithm to a high-multiplicity setting. We slightly improve the running time from $\originaladditiveapproxrt$ to $\additiveapproxrt$ and present a way to speed up the entire algorithm even more in case better running times can be achieved for the dynamic program making up the first part. First, we provide some more details on the original algorithm.

For $k \in \closedinterval{ \epsfrac}, \epsfrac \in \Nwithoutzero$, a job size class $\jk$ refers to the set of jobs $\{ j \in \J \mid p_j \in ((k-1)\eps \pmax, k \eps \pmax]\}$. Formally, the $\slotmilp$ is defined as follows. 

\begin{align*}
    \sum\limits_{i \in \M} x_{i,j} &= 1 & \forall j \in \J\\
    \ell \leq \sum\limits_{j \in \J} p_j x_{i,j} &\leq u & \forall i \in \M\\
    \sum\limits_{i \in \jk} x_{i,j} &= y_{i,k} & \forall i \in \M, k \in \closedinterval{\epsfrac}\\
    x_{i,j}&\geq 0 & \forall j \in \J, i \in \M\\
    y_{i,k}&\in \Nwithzero& \forall i \in \M, k \in \closedinterval{\epsfrac}
\end{align*}

\sloppy A weaker version, referred to as the $\slotmilpdelta$, admits an error to the upper and lower bound respectively, using the redefined bounds $u'  = u + \delta \pmax$ and $\ell' = \ell - \delta \pmax$ where $1 > \delta > 0$. While Buchem~\etal~\cite{buchem} dedicate most of the paper to the $\delta$-approximation, they also prove a structural property of the $\slotmilp$ allowing to compute an exact solution in $\mlm^{\O({2^{\epsfrac}})} \nlm^{\O({ \epsfrac 2^{\epsfrac}})}$. As our approach is based on the $\delta$-approximation, we will almost exclusively refer to the $\slotmilpdelta$ here.  

Roughly outlined, the algorithm proceeds as follows to obtain a solution to the target load balancing problem:
\begin{enumerate}
    \item For each $i \in \M$ compute 
    \item[] \begin{itemize}
        \item a vector $y \in {\closedinterval{\nlm}}^{\epsfrac}$ specifying how many jobs of each job size class are assigned to machine $m_i$ and
        \item a vector $z \in \R^{\epsfrac}$ specifying the average processing time of the jobs assigned to $m_i$ within each job size class.
    \end{itemize}
    \item[] This step is achieved in $\originaladditiveapproxrt$ time using a dynamic program which relies on some structural properties of an existing solution. If the $\slotmilpdelta$ is feasible, the DP is guaranteed to find a solution, otherwise it states that the $\slotmilpdelta$ is infeasible. As the vector $z$ is only needed for the computation steps of the DP, it is discarded afterwards. 
    \item From the obtained vector $y$, assign jobs integrally to the slots using a local search algorithm described in \cref{subsec:localsearch}. This step has a running time of $\O(\nlm^5)$ and introduces an additional absolute error of at most $\eps \pmax$.
\end{enumerate}
The overall running time is determined by the DP and amounts to $\originaladditiveapproxrt$. In some cases, we can substantially decrease the number of jobs to be considered. Key to our approach is the following result by Govzmann~\etal~\cite{GMO23}.

\balancinglemma*

As described in more detail in~\cref{subsec:balancinglemma}, this result follows from a couple of insights making it possible to preassign $\left \lceil \frac{n_p}{\mlm}\right \rceil - \pmax$ jobs for each $p \in \D$ to each machine and obtain a kernel consisting only of $\O(\mlm \jobtypes \pmax)$ jobs. Creating one single assignment vector to be shared among all machines is achieved in $\O(\jobtypes)$. Replacing $\nlm$ by $\mlm \jobtypes \pmax$ immediately yields the following: 

\begin{corollary}\label{cor:slotmilp}
        There exists an algorithm that, given an instance with $\jobtypes$ different processing times, computes a solution to the $\slotmilpdelta$ in time $\O(\dprt)$ or asserts that the $\slotmilpdelta$ is infeasible.
\end{corollary}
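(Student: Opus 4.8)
The plan is to precede the dynamic program of Buchem \etal~\cite{buchem} that solves the $\slotmilpdelta$ by the preassignment step behind the balancing result (\autoref{lem:balancing}), so that the job count fed into the dynamic program drops from $\nlm$ to $\O(\mlm\jobtypes\pmax)$; substituting this into $\originaladditiveapproxrt$ then yields the claimed bound. Concretely, I would first compute, for every processing time $p\in\D$, the shared quantity $q_p:=\max\{0,\lceil n_p/\mlm\rceil-\pmax\}$ and place $q_p$ jobs of type $p$ on \emph{every} machine --- a single vector $(q_p)_{p\in\D}$, computable in $\O(\jobtypes)$ time. Writing $B:=\sum_{p\in\D}q_p p$ for the common load this adds to each machine and $n'_p:=n_p-\mlm q_p$ for the residual multiplicities, one checks $0\le n'_p\le\mlm\pmax$ for each $p$ (using $n_p\le\mlm\lceil n_p/\mlm\rceil$), so the residual instance has at most $\jobtypes\mlm\pmax=\O(\mlm\jobtypes\pmax)$ jobs; its target interval is $\closedinterval{\ell-B,u-B}$, and since the $\delta$-relaxation shifts both endpoints by the same $\delta\pmax$, the residual $\slotmilpdelta$ is just the original one with $\ell,u$ replaced by $\ell-B,u-B$.

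The heart of the argument is the equivalence: the original $\slotmilpdelta$ (bounds $\ell,u$) is feasible if and only if the residual $\slotmilpdelta$ (bounds $\ell-B,u-B$) is, and a solution of one yields, in polynomial time, a solution of the other. One direction is direct: given a residual solution $(x',y')$, add $q_p$ integrally assigned jobs of type $p$ to each machine; every machine's load grows by exactly $B$, so it stays inside $\closedinterval{\ell',u'}$, and because all $q_p$ jobs of type $p$ fall into a single job size class $\jk$ the slot counts change only by the integer $\sum_{p\in\jk}q_p$ --- hence all $\slotmilpdelta$-constraints are preserved and exactly $n_p$ jobs of each type are placed. The converse direction is where I expect the real work: from an arbitrary feasible original solution one must reach a residual one, which amounts to showing that one may without loss of generality assume the preassignment lies below every feasible $\slotmilpdelta$-solution. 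This is precisely the structural content of \autoref{lem:balancing} (see~\cref{subsec:balancinglemma}); the point to double-check is that its re-routing argument, phrased there for the integral scheduling problem, survives for the fractional $\slotmilpdelta$ --- which it does because moving whole jobs of a fixed type keeps every per-class slot count integral, so one can route away any deficit of type $p$ on a machine that is below its preassignment quota while staying within $\closedinterval{\ell',u'}$.

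With the equivalence in hand, I would run the dynamic program of Buchem \etal on the residual instance: it either produces a $\slotmilpdelta$-solution or certifies infeasibility, and the equivalence translates this verdict back to the original instance. Its running time $\originaladditiveapproxrt$, with $\nlm$ replaced by $\O(\mlm\jobtypes\pmax)$, becomes $\mlm\,(\mlm\jobtypes\pmax)^{\O(\epsfrac)}=(\mlm\jobtypes\pmax)^{\O(\epsfrac)}=\O(\dprt)$, which dominates the $\O(\jobtypes)$ preassignment time and the polynomial translation overhead. The main obstacle is the one flagged above: confirming that the preassignment is genuinely without loss of generality at the level of the fractional $\slotmilpdelta$ --- not merely for the integral problem that \autoref{lem:balancing} is stated for --- and that its interaction with Buchem \etal's job size classes never breaks the integrality of the slot counts $y_{i,k}$.
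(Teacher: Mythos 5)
Your approach matches the paper's exactly: apply the kernelization from \autoref{lem:balancing} to preassign $\lceil n_p/\mlm\rceil-\pmax$ jobs of each type to every machine (in $\O(\jobtypes)$ time), shrinking the job count to $\O(\mlm\jobtypes\pmax)$, then feed the residual instance into Buchem et al.'s dynamic program and substitute into $\originaladditiveapproxrt$. The paper's ``proof'' is the single sentence ``Replacing $\nlm$ by $\mlm\jobtypes\pmax$ immediately yields the following,'' so the equivalence between the original and residual $\slotmilpdelta$ instances that you carefully lay out is simply taken for granted there; the concern you flag about whether the integral re-routing of \autoref{lem:balancing} carries over to the fractional $\slotmilpdelta$ is legitimate (and not addressed in the paper either), and your resolution --- that whole-job moves of a fixed type keep the per-class slot counts $y_{i,k}$ integral, and that the directly-used direction (a residual integral solution exists by \autoref{lem:balancing} whenever the original integral problem is feasible, hence the residual $\slotmilpdelta$ is feasible) is what the downstream \autoref{thm:additiveapprox} actually needs --- is sound.
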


\begin{remark}
    The running time specified in~\cite{buchem} is $\mlm^2  \left(\frac{\nlm}{\delta \eps}\right)^{\O\left(\epsfrac\right)}$. The full version of the paper \cite{fullversion} includes $K$ different target load intervals $[\ell_i,u_i]$ for the machines. In other words, $K$ is defined by $|\{[\ell_i,u_i]\mid i \in \M\}|$. For each machine, the DP includes one additional step which consists of guessing the number of machines for each target load interval. This results in a running time of $\mlm^K  \mlm  \left(\frac{\nlm}{\delta \eps}\right)^{\O\left(\epsfrac\right)} = \mlm^{K + 1}  \left(\frac{\nlm}{\delta \eps}\right)^{\O\left(\epsfrac\right)}$. It appears that $K$ was replaced by 1 in the simplified version with only one target load interval, leading to the squared number of machines. However, this would imply guessing the number of machines for the remaining target load interval. Since the step of guessing the number of machines can be entirely omitted, it seems that the running time should be $\mlm  \left(\frac{\nlm}{\delta \eps}\right)^{\O\left(\epsfrac\right)}$ instead.
\end{remark}

As the running time of the DP hides the running time of the local search algorithm used to find an integral assignment to the slots, using \autoref{cor:slotmilp} directly yields (note that $d\leq\pmax$):
\thmadditiveapprox*

 However, it might be possible to adapt the DP even further such that the number of machines disappears from the exponential term. In the hope that further research will reveal ways of speeding up the DP enough for the running time of the local search to become relevant, we present an enhanced local search algorithm. 

\subsection{Local search algorithm}\label{subsec:localsearch}

Given an exact solution or a $\delta$-approximation to the $\slotmilp$, Buchem~\etal~\cite{buchem} use a local search algorithm to compute an integral solution with error at most $\eps  \pmax$ to the upper and lower bounds respectively. More precisely, from an exact solution to the $\slotmilp$, the algorithm computes an integral schedule such that $\sum\limits_{j \in \J} p_j x_{i,j} \in [\ell - \eps  \pmax, u + \eps  \pmax]$ holds for each machine $i \in \M$. 

Starting from a $\delta$-approximation, the local search adds an error of $\eps  \pmax$ to the error already existing due to the DP. Each machine has therefore a load in the interval $[\ell - \delta  \pmax- \eps  \pmax, u + \delta  \pmax + \eps  \pmax]$ on termination of the local search. The error contributed by the $\delta$-approximation can be compensated by setting $\delta = \frac{\eps^*}{2}, \eps = \frac{\eps^*}{2}$ for some final error parameter $\eps^* > 0$.

Based on the vector $y$, the algorithm arbitrarily assigns $y_{i,k}$ jobs of class $\jk$ to machine $i$ for each $i \in \M, k \in \closedinterval{\epsfrac}$. This does not guarantee a load within the interval $[\ell - \eps  \pmax, u + \eps  \pmax]$ on any machine though. Therefore, the algorithm proceeds by exchanging jobs on overloaded machines for smaller jobs of the same class on machines with load at most $u$. Machines with a load between $u+1$ and $u + \eps  \pmax$ cannot provide any smaller jobs for an exchange because they might become overloaded themselves. When such machines are encountered during the search, they are repaired in a recursive procedure. Dealing with the overloaded machines is referred to as the first stage. In a second stage, an analogous procedure is performed for the machines with a load smaller than $\ell - \eps  \pmax$. Since both stages follow the same principles, we will only discuss the first stage.

\begin{figure}
  \captionsetup[subfigure]{font=footnotesize}
  \centering
  \subcaptionbox{A graph representation of the local search algorithm. A:~Machines with load at most $u$. B:~Machines with a load in the interval $\closedinterval{u+1,u+\eps \pmax}$. C:~Machines with a load exceeding $u + \eps \pmax$. In this example, $m_1$ cannot exchange any jobs before $m_4$ or $m_2$ is repaired.\label{subfig:localsearch1}}[.47\textwidth]{%

  \begin{tikzpicture}
    [scale=0.76,auto=left,every node/.style={circle,draw,minimum size=10mm}]
    \node (s) at (0,0) {s};

    \foreach \id/\x/\y/\name in {
      a/-2/2.5/1,
      b/0/2.5/2,
      c/2/2.5/3,
      d/-2/5.2/4,
      e/0/5.2/5,
      f/2/5.2/6,
      g/-2/7.9/7,
      h/0/7.9/8,
      i/2/7.9/9
    }
    \node (\id) at (\x,\y) {$m_{\name}$};

    \foreach \from/\to/\weight/\x/\y/\br in {
      s/a/0/0.2/0.1/0,
      s/b/0/0.3/-0.1/0,
      s/c/0/0/-0.5/0,
      b/c/1/0/-0.3/0,
      b/e/1/-0.3/0/0,
      e/g/1/0.2/0.8/0,
      a/d/1/0.3/0/0,
      f/c/1/-0.4/0/0,
      b/a/1/0/0.3/0,
      a/b/1/0/-0.3/-20, 
      d/g/1/0.3/0/0,
      c/h/1/0.4/0.5/0,
      h/e/1/-0.6/0/0,
      i/f/1/-0.3/0/0,
      h/f/1/-0.1/-0.3/0,
      e/b/1/0/0/45,
      d/e/1/0/-0.2/0,
      b/h/1/0.6/-0.3/20,
      f/h/1/0.1/0.9/20,
      f/i/1/0.6/0/30  
      }
      \draw [-latex] (\from) to [bend right=\br,edge node={node [shift={(\x,\y)},draw=none] {\weight}}] (\to);

      \node[draw,dashed,inner sep=10pt,thick, rectangle,fit= (g) (h) (i)] (A) {};

      \node[draw,dashed,inner sep=10pt,thick,rectangle, fit= (d) (e) (f)] (B) {};

      \node[draw,dashed,inner sep=10pt,thick,rectangle,fit= (a) (b) (c)] (C) {};

      \foreach \rect in {A,B,C}
      \node[anchor = north,shift={(-0.3,0)},draw = none] at (\rect.north west) {\rect}; 
      
  \end{tikzpicture}
  }
  \hfill
  %
  %
  %
  %
  %
  %
  %
  \subcaptionbox{A close up view showing the connections between two machines $m_1$ and $m_2$. Bullet points: Jobs assigned to slots. Gray lines: Edges of weight 0 connecting jobs on one machine. In this example, the job of class $J_3$ in row 3 on $m_1$ has the same size as the one of the same class in row 1 on $m_2$. Therefore, there is no edge between the corresponding bullet points.\label{subfig:localsearch2}}[.47\textwidth]{

  \begin{tikzpicture}
  \pgfdeclarelayer{background}
\pgfdeclarelayer{foreground}
\pgfsetlayers{background,main,foreground}

\tikzset{
    col/.style = {
        draw=none,minimum width =1.2cm, minimum height =0.4cm,anchor=base
    },
    row/.style = {
        draw=none,minimum width =0.6cm, minimum height =0.5cm
    },
    hidden/.style = {
        gray!60
    }
    }

    \matrix [nodes=draw,very thin,minimum width =1.2cm, minimum height =0.6cm,ampersand replacement=\&] (machine0)
    {
      \node[col] {$\mathcal{J}_1$}; \& \node[col] {$\mathcal{J}_2$}; \& \node[col] {$\mathcal{J}_3$}; \& \node[col] {$\mathcal{J}_4$};  \\
      \node[] (a0) { $\bullet$}; \&  \node[] (b0) { $\bullet$}; \&  \node[] (c0) { $\bullet$}; \&  \node[] (d0) { $\bullet$};  \\

      \node[] (e0) { $\bullet$}; \& \node[] (f0) {}; \&  \node[] (g0) { $\bullet$}; \&  \node[] (h0) {};  \\

      \node[] (i0) { $\bullet$}; \&  \node[] (j0) {}; \&  \node[] (k0) { $\bullet$}; \&  \node[] (l0) {};  \\

      \node[] (m0) {}; \&  \node[] (n0) {}; \&  \node[] (o0) {}; \&  \node[] (p0) {};  \\
    };

    \foreach \from in {a0,b0,c0,d0,e0,g0,i0,k0}{
    \foreach \to in {a0,b0,c0,d0,e0,g0,i0,k0}{

    \ifthenelse{\equal{\to}{\from}}{}{
    
    \begin{pgfonlayer}{background} 
    \draw[hidden] (\from.center) to [bend right=20,edge node={node [draw=none] {}}] (\to.center);
    \end{pgfonlayer}{background} 
    
    }
    }}

    \matrix [below=1.5cm of machine0,nodes=draw,very thin,minimum width =1.2cm, minimum height =0.6cm,ampersand replacement=\&] (machine1)
    {
      
      \node[] (a1) { $\bullet$}; \&  \node[] (b1)  {}; \&  \node[] (c1) { $\bullet$}; \&  \node[] (d1) {};  \\

      \node[] (e1) { $\bullet$}; \& \node[] (f1) {}; \&  \node[] (g1) { $\bullet$}; \&  \node[] (h1) {};  \\

      \node[] (i1) {}; \&  \node[] (j1) {}; \&  \node[] (k1) { $\bullet$}; \&  \node[] (l1) {};  \\

      \node[] (m1) {}; \&  \node[] (n1) {}; \&  \node[] (o1) {$\bullet$}; \&  \node[] (p1) {};  \\
      \node[col] {$\mathcal{J}_1$}; \& \node[col] {$\mathcal{J}_2$}; \& \node[col] {$\mathcal{J}_3$}; \& \node[col] {$\mathcal{J}_4$};  \\
    };

    \foreach \from in {a1,c1,e1,g1,k1,o1}{
    \foreach \to in {a1,c1,e1,g1,k1,o1}{

    \ifthenelse{\equal{\to}{\from}}{}{
    
    \begin{pgfonlayer}{background} 
    \draw[hidden] (\from.center) to [bend right=20] (\to.center);
    \end{pgfonlayer}{background} 
    
    }
    }}

    \foreach \from/\to/\x/\y in {
      a0/a1/-0.1/-0.4,
      e0/a1/0.1/-0.3,
      e1/i0/0.1/0}{
    \draw[-{Stealth[length=3mm, width=2mm]}] (\from.center) to [bend right=20,edge node={node [shift={(\x,\y)},draw=none] {1}}] (\to.center);}

    \foreach \from/\to/\x/\y/\br in {
      c1/c0/-0.4/-0.4/-20,
      g1/c0/-0.4/-0.4/-20,
      k1/c0/-0.3/-0.4/-20,
      o1/c0/-0.2/-0.4/-20,
      c1/g0/-0.1/-0.1/20,
      g1/g0/-0.2/-0.1/20,
      k1/g0/-0.3/-0.1/20,
      o1/g0/-0.5/-0.1/20,
      g1/k0/0.4/0.1/60,
      k1/k0/0.3/0.1/60,
      o1/k0/0.2/0.1/60
      }{
       
    \draw[-{Stealth[length=3mm, width=2mm]}] (\from.center) to [bend right=\br,edge node={node [shift={(\x,\y)},draw=none] {1}}] (\to.center);}

    \node[anchor = north,shift={(0.3,0)},draw = none] at (d0.north east) {$\mathbf{m_1}$}; 
    \node[anchor = north,shift={(0.3,0)},draw = none] at (d1.north east) {$\mathbf{m_2}$}; 

  \end{tikzpicture}
  }
  \end{figure}

Buchem \etal~\cite{buchem} provide a representation of the local search algorithm as a weighted directed graph $G = (V,E)$. For each slot on each machine, a vertex is created. A machine is described by a clique where each pair of vertices is connected by a bidirectional edge of weight 0. For each pair of slots $v,w \in V$, the graph contains an edge from $v$ to $w$ of weight 1 if and only if both slots are in the same job size class and on different machines and $w$ contains a job smaller than the one assigned to $v$, that is, if slot $w$ contains a possible swap candidate for $v$. Furthermore, $V$ contains an additional source vertex $s$ with an edge of weight 0 to all vertices on all machines with a load exceeding $u + \eps \pmax$. For an illustration, see \cref{subfig:localsearch1} and \cref{subfig:localsearch2}. The local search algorithm is equivalent to a breadth-first search (BFS) starting at $s$ where the vertices representing one machine are collected all at once as soon as one vertex of the clique is reached. A swap of jobs between two slots is expressed by redirecting all incoming and outgoing edges accordingly. Whenever two jobs are exchanged, the search restarts at vertex $s$. Denote by $\delta(v)$ the number of edges of weight 1 on a path from $s$ to vertex $v$ and let $j_1, \dots, j_{\n}$ be the jobs in increasing order of job size. In \cite{buchem}, it is shown that the value of the following potential function increases with each swap:

$$\sum\limits_{i = 1}^{\nlm} i  \delta(j_i)$$

From this function, an upper bound of $\nlm^2  \mlm$ can be derived ($\nlm^3$ in the original paper). A BFS can be carried out in $\O(\nlm^2)$. The preprocessing reduces the number of jobs considered in the local search algorithm. This improves the running time $\O(\nlm^4  \mlm)$ to $\O( (\mlm  \jobtypes  \pmax)^4  \mlm )$. However, we can do even better by exploiting some structural properties of the graph given in this specific setting.

\begin{lemma}\label{lem:bfspathlength}
    $\delta(v) < \jobtypes$ for each $v \in V$.
\end{lemma}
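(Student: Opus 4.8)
The plan is to fix a vertex $v$ (which we may assume is reachable from $s$, since otherwise there is nothing to prove), pick a shortest $s$--$v$ path $P$ in $G$, and extract from it $\delta(v)+1$ jobs whose processing times are pairwise distinct; as there are only $\jobtypes$ distinct processing times, this forces $\delta(v)+1\le\jobtypes$, i.e.\ $\delta(v)<\jobtypes$.

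First I would set up notation from $P$. Since each machine induces a weight-$0$ clique that the BFS absorbs in a single step and $s$ is incident only to weight-$0$ edges, $P$ runs through machines $M_0,M_1,\dots,M_\ell$ in order with $\ell=\delta(v)$, where $M_0$ is overloaded (its slots are exactly the ones joined to $s$) and $v$ lies on $M_\ell$, and between $M_{i-1}$ and $M_i$ it uses exactly one edge of weight $1$, say from a slot $x_i$ on $M_{i-1}$ to a slot $y_i$ on $M_i$; by the definition of such an edge, $x_i$ and $y_i$ lie in a common job size class $c_i$ and the job in $y_i$ is strictly smaller than the job in $x_i$. The crucial consequence of minimality is that $G$ contains \emph{no} weight-$1$ edge from $M_i$ to $M_j$ once $j\ge i+2$: such an edge would reach $M_j$ using only $i+1$ edges of weight $1$, contradicting that the prefix of $P$ ending at $M_j$ is itself a shortest path (so all slots of $M_j$ have level exactly $j$). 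Unpacking this back into jobs: for every job size class $c$ and all $i<j$ with $j\ge i+2$, every class-$c$ job on $M_i$ has processing time at most that of every class-$c$ job on $M_j$; write $\max_c(M_i)\le\min_c(M_j)$ for this (with the obvious conventions when a machine has no class-$c$ job).

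The heart of the argument is to show that the $\ell$ target jobs $y_1,\dots,y_\ell$ have pairwise distinct processing times; then appending $x_\ell$ gives $\ell+1$ distinct values, because the job in $x_\ell$ is strictly larger than that in $y_\ell$ (hence, by the first part, strictly larger than the job in every $y_i$ lying in class $c_\ell$) and, lying possibly in another class, it automatically differs from all remaining $y_i$ since distinct job size classes occupy disjoint intervals of sizes. For the distinctness of the $y_i$: if $c_i\neq c_j$ the sizes lie in disjoint intervals and differ automatically, so the real work is the case $c_i=c_j=:c$ with $i<j$, where one shows $p_{y_i}<p_{y_j}$. If the job in $y_i$ is not the largest class-$c$ job on $M_i$, then $p_{y_i}<\max_c(M_i)\le\min_c(M_j)\le p_{y_j}$ by the no-long-edge inequality between $M_i$ and $M_j$ (valid since $j\ge i+2$; the remaining case $j=i+1$ is handled directly via $p_{y_i}<p_{x_i}\le\max_c(M_{i-1})\le\min_c(M_{i+1})\le p_{y_{i+1}}$, using $M_{i-1}$ and $M_{i+1}$). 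If the job in $y_i$ \emph{is} the largest class-$c$ job on $M_i$, then instead $p_{y_i}<p_{x_i}\le\max_c(M_{i-1})\le\min_c(M_j)\le p_{y_j}$, now applying the inequality between $M_{i-1}$ and $M_j$. I expect this last case analysis --- squeezing a \emph{strict} inequality out of the (weak) no-long-edge bounds, and in particular rerouting through $x_i$ when $y_i$ is class-maximal --- together with the observation that same-class monotonicity combined with disjointness of class ranges already suffices, to be the main obstacle; once it is pinned down, counting the $\ell+1$ distinct processing times yields $\delta(v)+1\le\jobtypes$, i.e.\ $\delta(v)<\jobtypes$.
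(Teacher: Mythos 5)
Your proof is correct, and it takes a genuinely different route from the paper's. The paper's argument rests on the claim that two slots not adjacent to $s$ with equal processing times have the same set of weight-$1$ in-neighbours, hence lie in the same BFS layer; together with the clique observation, this is taken to imply that a shortest path meets each processing time at most once. You instead fix a shortest path $M_0 \to M_1 \to \dots \to M_\ell$ and extract from BFS-minimality the ``no long edge'' property (no weight-$1$ edge from $M_i$ to $M_j$ for $j \geq i+2$, hence $\max_c(M_i) \leq \min_c(M_j)$ for every class $c$), and then show by a direct chain of inequalities that the entry slots $y_1, \dots, y_\ell$ together with $x_\ell$ carry $\ell+1$ pairwise distinct processing times. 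The paper's route is shorter and more conceptual, but yours is more explicit and arguably more robust: when $v_1, v_2$ sit on different machines their weight-$1$ in-neighbourhoods are \emph{not} literally equal (each excludes slots on its own machine), so the paper's key claim needs additional care, and it is also not obvious how the paper's argument produces the strict bound $\delta(v) \leq d-1$ rather than merely $\delta(v) \leq d$; your counting of $\ell+1$ values handles both points cleanly. One simplification worth making: the chain $p_{y_i} < p_{x_i} \leq \max_c(M_{i-1}) \leq \min_c(M_j) \leq p_{y_j}$ is already valid for every $j \geq i+1$ (it needs only the no-long-edge bound between $M_{i-1}$ and $M_j$, which are at least two apart), so the separate treatment of $j=i+1$ and the Case~A/B split for $j \geq i+2$ can both be dropped in favour of this single chain.
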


\begin{proof}
    For the proof, we have to distinguish between vertices adjacent to $s$ and the remaining ones. If a vertex $v \in V$ corresponds to a slot on a machine with a load exceeding $u + \eps \pmax$, then $\delta(v) = 0$ regardless of the processing time of the job stored in the slot. Let $v_{1}, v_{2} \in V$ and $ (s,v_{1}), (s,v_{2}) \notin E$ and denote by $p(v')$ the processing time of the job assigned to the slot represented by a vertex $v' \in V$. Let $w(e)$ be the weight of an edge $e \in E$. Then from $p(v_{1}) = p(v_{2})$ follows the equality of the sets $\{ x \in V \mid (x,v_{1}) \in E \wedge w((x,v_{1})) = 1 \}$ and $\{ x \in V \mid (x,v_{2}) \in E \wedge w((x,v_{2})) = 1 \}$, implying that vertices with distance $>0$ storing identical processing times belong to the same BFS layer. At the same time, all vertices inside a machine clique also belong to the same BFS layer, implying that a path does not contain any detours inside cliques. Therefore, each processing time can appear at most once along a path. From this, we can conclude that $\delta(v)$ is at most $\jobtypes-1$ for each $v \in V$.
\end{proof}

Based on this insight, we can make the following adjustment to the potential function. 

\begin{equation}
\label{eq:potentialfunction}
\sum\limits_{i = 1}^{\jobtypes} i \sum\limits_{j = 1}^{n_i}  \delta(j_i)    
\end{equation}

where $1, \dots, \jobtypes$ represents the sequence of indices of job sizes arranged in ascending order. 

Because at most $2  \pmax$ jobs of one specific size are assigned to each machine, we have $n_i \leq 2  \mlm  \pmax$ for each $1 \leq i \leq \jobtypes$. The first sum contributes a factor $\jobtypes^2$ and $\delta(j_i)$ is bounded by $\jobtypes$ due to \cref{lem:bfspathlength}. From this, we ac bound the potential function by $\jobtypes^3  2  \mlm  \pmax$. 

\begin{lemma}
\label{lem:numberofswaps}
The value of \labelcref{eq:potentialfunction} increases with each swap.
\end{lemma}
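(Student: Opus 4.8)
The plan is to mirror the proof of the corresponding statement for the original potential $\sum_{i=1}^{\nlm} i\,\delta(j_i)$ given in~\cite{buchem} (write $\Phi$ for the quantity in~\eqref{eq:potentialfunction}); the only new ingredient is that a job now contributes its rank among the $\jobtypes$ \emph{distinct} processing times rather than its rank among all $\nlm$ jobs. First I would pin down what one swap does at the level of the potential. A swap exchanges a job $B$ stored in a slot $v$ with a strictly smaller job $S$ of the same job size class stored in a slot $w$ on another machine, where $(v,w)$ is the weight-$1$ edge along which the breadth-first search first reaches the clique of $w$'s machine; thus $\delta(w)=\delta(v)+1$ in the graph before the swap. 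A swap changes no processing time, so the multiplicities $n_i$ and the rank $r(j)\in\{1,\dots,\jobtypes\}$ of every job are the same before and after — only the slot occupied by each job and, after restarting the search, the distances $\delta(\cdot)$ can change. Finally, since $p(S)<p(B)$, the jobs $B$ and $S$ have different processing times and hence distinct ranks $r_B>r_S$; this is the one and only point where distinctness of two weights enters, and it is forced by the definition of a weight-$1$ edge.

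Second, I would reuse the quantitative part of the argument in~\cite{buchem}: that proof does not merely assert that the potential grows, it controls how the individual BFS distances move when a swap is carried out and the search is restarted. The facts to carry over are that, after the swap, the slot now holding $B$ lies strictly deeper in the search tree than $v$ did, and that the total rank-weighted change of the distances of $S$ together with all other jobs is at least $-r_S$. Granting these, and using that $\Phi$ is integer-valued, the rank-weighted distance changes combine to give
\[
\Phi_{\mathrm{after}}-\Phi_{\mathrm{before}}\;\ge\;r_B-r_S\;\ge\;1 .
\]
Combined with the bound $\Phi\le 2\mlm\pmax\jobtypes^{3}$, which follows from \cref{lem:bfspathlength} and $n_i\le 2\mlm\pmax$, this caps the number of swaps by $2\mlm\pmax\jobtypes^{3}$.

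I expect the middle step to be the real obstacle: re-checking that the distance bookkeeping of~\cite{buchem} survives the coarser weighting. With the original, all-distinct weights $1,\dots,\nlm$, a shrunk job's possibly large downward move through the BFS layers is dominated automatically; with our weighting several jobs share a weight, so one must verify that the amortization in~\cite{buchem} uses nothing about the weights beyond their being non-decreasing in the job size (together with the fact, established above, that the two swapped jobs have different weight). If that inspection goes through, the lemma follows; should some estimate there secretly rely on strict monotonicity, the fallback is to redo the accounting directly with a tailored invariant, again using only monotonicity of the weight in the job size and the fact that \cref{lem:bfspathlength} keeps every layer below $\jobtypes$.
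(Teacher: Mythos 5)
Your plan is the same as the paper's, and you correctly isolate the one place where distinctness of weights is essential: the two jobs actually exchanged have different processing times, hence strictly different ranks in $[\jobtypes]$. However, what you deliver is a roadmap, not a proof — you explicitly leave the "middle step" (that the BFS-distance bookkeeping of~\cite{buchem} survives the coarser weighting) unchecked, hedged with "if that inspection goes through." That step is exactly where the paper invests most of the argument, and it cannot simply be cited; it has to be reproven in this setting.

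Concretely, the paper's proof first establishes $\delta'(u)\ge\delta(u)$ for every vertex $u$ after a swap, by a case analysis on how the reversal of $(\fromvertex,\tovertex)$ and the exchange of weight-$1$ edges between $\fromvertex$ and $\tovertex$ could create new paths (using $\delta(\tovertex)=\delta(\fromvertex)+1$ and the transitivity of the weight-$1$ relation to derive contradictions). This part of the argument never references the weights $\pi(\cdot)$ at all — so your worry about non-strict monotonicity of the weights turns out to be unfounded, but you do not establish that. Once $\delta'\ge\delta$ is in hand, the contribution of all unmoved jobs can only grow, and the contribution of the swapped pair is handled by a direct rearrangement computation: with $\pi(j_{\fromvertex})>\pi(j_{\tovertex})$ and $\delta(j_{\tovertex})=\delta(j_{\fromvertex})+1$, one gets
$\pi(j_{\fromvertex})\delta'(j_{\fromvertex})+\pi(j_{\tovertex})\delta'(j_{\tovertex}) > \pi(j_{\fromvertex})\delta(j_{\fromvertex})+\pi(j_{\tovertex})\delta(j_{\tovertex})$,
an increase of at least $\pi(j_{\fromvertex})-\pi(j_{\tovertex})\ge 1$, matching the $r_B-r_S\ge 1$ you anticipate. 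To complete your proof you would need to actually carry out the distance-monotonicity argument (or observe, and justify, that it is weight-agnostic) and then write the explicit rearrangement step, rather than leaving both as conditional claims.
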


\begin{proof}
Let $j_{\fromvertex},j_{\tovertex} \in \J$ be two jobs to be swapped and let $\fromvertex,\tovertex \in V$ be the corresponding vertices. Let $p_{j_{\fromvertex}},p_{j_{\tovertex}}$ be the processing times of jobs $j_{\fromvertex}$ and $j_{\tovertex}$ and let $p_{j_{\fromvertex}} > p_{j_{\tovertex}}$. As in \cite{buchem}, we will start by showing that a swap does not decrease the distance to any vertex. Note that no edges outgoing from $s$ are added, since a swap is only carried out with machines with load at most $u$ and the difference in processing time between two jobs of one size class is at most $\eps \pmax - 1$.
Exchanging the jobs between $\fromvertex$ and $\tovertex$ causes a reversal of the edge $(\fromvertex,\tovertex)$ and a complete exchange of both the sets of outgoing edges of weight 1 and the sets of incoming edges of weight 1 between $\fromvertex$ and $\tovertex$. 

Let $\delta(u), \delta'(u)$ denote the distance from $s$ to some vertex $u \in V$ before and after the swap, respectively. It holds that $\delta(\tovertex) = \delta(\fromvertex) + 1$, otherwise both $\fromvertex$ and $\tovertex$ would belong to one and the same BFS layer, preventing the swap between $\fromvertex$ and $\tovertex$ in the first place. Another important observation is that the edges within the cliques do not change. A new path to $\fromvertex$ can therefore only be provided by an edge previously incident to $\tovertex$ and vice versa. 
Suppose that the distance to $\fromvertex$ decreases. Then there must have been a path leading to $\tovertex$ such that $\delta(\tovertex) \leq \delta(\fromvertex) - 1$, a contradiction to $\delta(\tovertex) = \delta(\fromvertex) + 1$. Suppose that the distance to $\tovertex$ decreases. Then one of the edges originally incident to $\fromvertex$ yields a new path to $\tovertex$ such that $\delta'(\tovertex) \leq \delta(\tovertex) - 1 = \delta(\fromvertex)$. Let $(\thirdvertex,\fromvertex)$ be this edge before the swap. Because of the transitive relation between all the edges of weight 1, there must have been an edge from $\thirdvertex$ to $\tovertex$ too, implying $\delta(\tovertex) = \delta(\fromvertex)$. This scenario is illustrated in \cref{subfig:swap1} and \cref{subfig:swap2}.

\begin{figure}[ht]
  \captionsetup[subfigure]{font=footnotesize}
  \centering
  \subcaptionbox{Two machines $m_1$ and $m_2$ before the swap. The edge $\mathbf{(\thirdvertex,\tovertex)}$ is implied by $\mathbf{(\thirdvertex,\fromvertex)}$ and $\mathbf{(\fromvertex,\tovertex)}$ because all edges of weight 1 are based on the \hyphenation{tran-si-ti-ve}transitive relation $>$. $\gamma_1$:~shortest path to $\mathbf{\fromvertex}$. $\gamma_2$:~shortest path to $\mathbf{\thirdvertex}$. For the swap to be carried out, $\gamma_1$ must be shorter than $|\gamma_2| + 1$.\label{subfig:swap1}}[.47\textwidth]{%

  \begin{tikzpicture}
  \pgfdeclarelayer{background}
\pgfdeclarelayer{foreground}
\pgfsetlayers{background,main,foreground}

\tikzset{
    col/.style = {
        draw=none,minimum width =1.2cm, minimum height =0.4cm,anchor=base
    },
    row/.style = {
        draw=none,minimum width =0.6cm, minimum height =0.5cm
    },
    hidden/.style = {
        gray!60
    }
    }

    \matrix [nodes=draw,very thin,minimum width =1.2cm, minimum height =0.6cm,ampersand replacement=\&] (machine0)
    {
      \node[col] {$\mathcal{J}_1$}; \& \node[col] {$\mathcal{J}_2$}; \& \node[col] {$\mathcal{J}_3$}; \& \node[col] {$\mathcal{J}_4$};  \\
      \node[] (a0) { $\bullet$}; \&  \node[] (b0) { }; \&  \node[] (c0) { $\bullet$}; \&  \node[] (d0) { $\bullet$};  \\

      \node[] (e0) { $\bullet$}; \& \node[] (f0) {}; \&  \node[] (g0) { $\bullet$}; \&  \node[] (h0) {};  \\

      \node[] (i0) { }; \&  \node[] (j0) {}; \&  \node[font = {\huge}] (k0) {$\bullet$}; \&  \node[] (l0) {};  \\

      \node[] (m0) {}; \&  \node[] (n0) {}; \&  \node[] (o0) {}; \&  \node[] (p0) {};  \\
    };

    \foreach \from in {a0,c0,d0,e0,g0,k0}{
    \foreach \to in {a0,c0,d0,e0,g0,k0}{

    \ifthenelse{\equal{\to}{\from}}{}{
    
    \begin{pgfonlayer}{background} 
    \draw[hidden] (\from.center) to [bend right=20] (\to.center);
    \end{pgfonlayer}{background} 
    
    }
    }}

    \matrix [below=1cm of machine0,nodes=draw,very thin,minimum width =1.2cm, minimum height =0.6cm,ampersand replacement=\&] (machine1)
    {
      
      \node[] (a1) { $\bullet$}; \&  \node[] (b1)  {$\bullet$}; \&  \node[] (c1) { $\bullet$}; \&  \node[] (d1) {};  \\

      \node[] (e1) { $\bullet$}; \& \node[] (f1) {}; \&  \node[] (g1) { $\bullet$}; \&  \node[] (h1) {};  \\

      \node[] (i1) {}; \&  \node[] (j1) {}; \&  \node[font = {\huge}] (k1) {$\bullet$}; \&  \node[] (l1) {};  \\

      \node[] (m1) {}; \&  \node[] (n1) {}; \&  \node[] (o1) {}; \&  \node[] (p1) {};  \\
      \node[col] (j1) {$\mathcal{J}_1$}; \& \node[col] (j2) {$\mathcal{J}_2$}; \& \node[col] (j3) {$\mathcal{J}_3$}; \& \node[col] (j4) {$\mathcal{J}_4$};  \\
    };

    \node[above=8mm of c0,draw = none] (n) {$\bullet$}; 
    \node[below=5mm of j1,draw = none,font = {\Huge}] (s) {$\bullet$}; 
    \node[right=-2mm of s,draw = none,font = {\Large}] {$\mathbf{s}$}; 
    \node[below=5mm of j3,draw = none] (u) {$\bullet$};
    \node[right=-2mm of u,draw = none,font = {\Large}] {$\mathbf{\thirdvertex}$}; 
    \node[right=of k1.center,draw = none,font = {\Large},shift={(-1,0)}] {$\mathbf{\fromvertex}$}; 
    \node[right=of k0.center,draw = none,font = {\Large},shift={(-1,0)}] {$\mathbf{\tovertex}$}; 

    \foreach \from in {a1,b1,c1,e1,g1,k1}{
    \foreach \to in {a1,b1,c1,e1,g1,k1}{

    \ifthenelse{\equal{\to}{\from}}{}{
    
    \begin{pgfonlayer}{background} 
    \draw[hidden] (\from.center) to [bend right=20] (\to.center);
    \end{pgfonlayer}{background} 
    
    }
    }}

    \begin{pgfonlayer}{background} 
    \foreach \from/\to/\x/\y/\br in {
      a1/a0/-0.1/-0.5/-10,
      a1/e0/0.1/-0.4/10,
      e1/e0/0.1/-0.4/20
      }{
    \draw[-{Stealth[length=5mm, width=2mm]},gray!60] (\from.center) to [bend right=\br,edge node={node [shift={(\x,\y)},draw=none] {1}}] (\to.center);}
    \end{pgfonlayer}{background} 

    \begin{pgfonlayer}{background} 
    \foreach \from/\to/\x/\y/\br in {
      c1/c0/-0.3/-0.5/-20,
      g1/c0/-0.3/-0.5/-20,
      k1/c0/-0.2/-0.5/-20,
      c1/g0/-0.2/-0.2/20,
      g1/g0/-0.3/-0.2/20,
      k1/g0/-0.4/-0.2/20,
      g1/k0/0.1/0.2/40
      }{
    \draw[-{Stealth[length=5mm, width=2mm]},hidden] (\from.center) to [bend right=\br,edge node={node [shift={(\x,\y)},draw=none] {1}}] (\to.center);}
    \end{pgfonlayer}{background} 
     \foreach \from/\to/\x/\y/\br in {
        k1/k0/-0.2/0/-45,
        u/k1/0.1/0.4/20,
        u/k0/0.2/-0.2/40,
        n/k0/-0.1/-0.1/20
    }{
    \draw[-{Stealth[length=5mm, width=2mm]}] (\from.center) to [bend right=\br,edge node={node [shift={(\x,\y)},draw=none] {1}}] (\to.center);}

    \draw (b1.center) to [bend right=20,edge node={node [shift={(-0.2,0)},draw=none] {0}}] (k1.center) {};

     \draw[-{Stealth[length=3mm, width=2mm]},dotted,ultra thick] (s.center) to [bend right=20,edge node={node [shift={(-0.1,0.6)},draw=none,font = {\Large}] {$\gamma_1$}}] (b1.center);

     \draw[-{Stealth[length=3mm, width=2mm]},dotted,ultra thick] (s.center) to [bend right=-30,edge node={node [shift={(0,-0.3)},draw=none,font = {\Large}] {$\gamma_2$}}] (u.center);

    \node[anchor = north,shift={(0.3,0)},draw = none] at (d0.north east) {$\mathbf{m_1}$}; 
    \node[anchor = north,shift={(0.3,0)},draw = none] at (d1.north east) {$\mathbf{m_2}$}; 

  \end{tikzpicture}

  }
  \hfill
    \subcaptionbox{The machines after the swap. All the incoming edges of weight 1 have been exchanged between $\mathbf{\fromvertex}$ and $\mathbf{\tovertex}$ and the edge $\mathbf{(\fromvertex,\tovertex)}$ has been flipped.\label{subfig:swap2}}[.47\textwidth]{%

  \begin{tikzpicture}
  \pgfdeclarelayer{background}
\pgfdeclarelayer{foreground}
\pgfsetlayers{background,main,foreground}

\tikzset{
    col/.style = {
        draw=none,minimum width =1.2cm, minimum height =0.4cm,anchor=base
    },
    row/.style = {
        draw=none,minimum width =0.6cm, minimum height =0.5cm
    },
    hidden/.style = {
        gray!60
    }
    }

    \matrix [nodes=draw,very thin,minimum width =1.2cm, minimum height =0.6cm,ampersand replacement=\&] (machine0)
    {
      \node[col] {$\mathcal{J}_1$}; \& \node[col] {$\mathcal{J}_2$}; \& \node[col] {$\mathcal{J}_3$}; \& \node[col] {$\mathcal{J}_4$};  \\
      \node[] (a0) { $\bullet$}; \&  \node[] (b0) { }; \&  \node[] (c0) { $\bullet$}; \&  \node[] (d0) { $\bullet$};  \\

      \node[] (e0) { $\bullet$}; \& \node[] (f0) {}; \&  \node[] (g0) { $\bullet$}; \&  \node[] (h0) {};  \\

      \node[] (i0) { }; \&  \node[] (j0) {}; \&  \node[font = {\huge}] (k0) {$\bullet$}; \&  \node[] (l0) {};  \\

      \node[] (m0) {}; \&  \node[] (n0) {}; \&  \node[] (o0) {}; \&  \node[] (p0) {};  \\
    };

    \foreach \from in {a0,c0,d0,e0,g0,k0}{
    \foreach \to in {a0,c0,d0,e0,g0,k0}{

    \ifthenelse{\equal{\to}{\from}}{}{
    
    \begin{pgfonlayer}{background} 
    \draw[hidden] (\from.center) to [bend right=20] (\to.center);
    \end{pgfonlayer}{background} 
    
    }
    }}

    \matrix [below=1cm of machine0,nodes=draw,very thin,minimum width =1.2cm, minimum height =0.6cm,ampersand replacement=\&] (machine1)
    {
      
      \node[] (a1) { $\bullet$}; \&  \node[] (b1)  {$\bullet$}; \&  \node[] (c1) { $\bullet$}; \&  \node[] (d1) {};  \\

      \node[] (e1) { $\bullet$}; \& \node[] (f1) {}; \&  \node[] (g1) { $\bullet$}; \&  \node[] (h1) {};  \\

      \node[] (i1) {}; \&  \node[] (j1) {}; \&  \node[font = {\huge}] (k1) {$\bullet$}; \&  \node[] (l1) {};  \\

      \node[] (m1) {}; \&  \node[] (n1) {}; \&  \node[] (o1) {}; \&  \node[] (p1) {};  \\
      \node[col] (j1) {$\mathcal{J}_1$}; \& \node[col] (j2) {$\mathcal{J}_2$}; \& \node[col] (j3) {$\mathcal{J}_3$}; \& \node[col] (j4) {$\mathcal{J}_4$};  \\
    };

    \node[above=8mm of c0,draw = none] (n) {$\bullet$}; 
    \node[below=5mm of j1,draw = none,font = {\Huge}] (s) {$\bullet$}; 
    \node[right=-2mm of s,draw = none,font = {\Large}] {$\mathbf{s}$}; 
    \node[below=5mm of j3,draw = none] (u) {$\bullet$};
    \node[right=-2mm of u,draw = none,font = {\Large}] {$\mathbf{\thirdvertex}$}; 
    \node[right=of k1.center,draw = none,font = {\Large},shift={(-1,0)}] {$\mathbf{\fromvertex}$}; 
    \node[right=of k0.center,draw = none,font = {\Large},shift={(-1,0)}] {$\mathbf{\tovertex}$}; 

    \foreach \from in {a1,b1,c1,e1,g1,k1}{
    \foreach \to in {a1,b1,c1,e1,g1,k1}{

    \ifthenelse{\equal{\to}{\from}}{}{
    
    \begin{pgfonlayer}{background} 
    \draw[hidden] (\from.center) to [bend right=20] (\to.center);
    \end{pgfonlayer}{background} 
    
    }
    }}

    \begin{pgfonlayer}{background}
    \foreach \from/\to/\x/\y/\br in {
      a1/a0/-0.1/-0.5/-10,
      a1/e0/0.1/-0.4/10,
      e1/e0/0.1/-0.4/20
      }{
    \draw[-{Stealth[length=5mm, width=2mm]},hidden] (\from.center) to [bend right=\br,edge node={node [shift={(\x,\y)},draw=none] {1}}] (\to.center);}

    \foreach \from/\to/\x/\y/\br in {
      c1/c0/-0.3/-0.5/-20,
      g1/c0/-0.3/-0.5/-20,
      k1/c0/-0.2/-0.5/-20,
      c1/g0/-0.2/-0.2/20,
      g1/g0/-0.3/-0.2/20,
      k1/g0/-0.4/-0.2/20,
      g1/k0/0.1/0.2/40
      }{
    \draw[-{Stealth[length=5mm, width=2mm]},hidden] (\from.center) to [bend right=\br,edge node={node [shift={(\x,\y)},draw=none] {1}}] (\to.center);}
    \end{pgfonlayer}{background}

     \foreach \from/\to/\x/\y/\br in {
        k0/k1/-0.2/0/50,
        u/k1/0.1/0.4/20,
        u/k0/0.2/-0.2/40,
        n/k1/0.1/-0.1/15
    }{
    \draw[-{Stealth[length=5mm, width=2mm]}] (\from.center) to [bend right=\br,edge node={node [shift={(\x,\y)},draw=none] {1}}] (\to.center);}

    \draw (b1.center) to [bend right=20,edge node={node [shift={(-0.2,0)},draw=none] {0}}] (k1.center) {};

     \draw[-{Stealth[length=3mm, width=2mm]},dotted,ultra thick] (s.center) to [bend right=20,edge node={node [shift={(-0.1,0.6)},draw=none,font = {\Large}] {$\gamma_1$}}] (b1.center);

     \draw[-{Stealth[length=3mm, width=2mm]},dotted,ultra thick] (s.center) to [bend right=-30,edge node={node [shift={(0,-0.3)},draw=none,font = {\Large}] {$\gamma_2$}}] (u.center);

    \node[anchor = north,shift={(0.3,0)},draw = none] at (d0.north east) {$\mathbf{m_1}$}; 
    \node[anchor = north,shift={(0.3,0)},draw = none] at (d1.north east) {$\mathbf{m_2}$}; 

  \end{tikzpicture}

  }
  \end{figure}

All paths in the graph which include neither $\fromvertex$ now $\tovertex$ are unaffected by the swap. Those including at least one of the vertices $\fromvertex$ or $\tovertex$ cannot become shorter by the swap since neither $\delta(\fromvertex)$ nor $\delta(\tovertex)$ decreased.

We will now inspect the changes caused to the value of the potential function by the swap. Let $\pi : [\pmax] \longrightarrow [\jobtypes]$ be a function mapping a processing time to its position in an ascending sequence of all processing times in $\D$. As established before, $\delta'(u) \geq \delta(u)$ for all $u \in V$. Therefore, we have the following inequality for the potential without the swapped jobs. 
\begin{align*}
    &\sum\limits_{i = 1}^{\jobtypes} i \sum\limits_{j = 1}^{n_i} \delta(j_i) - \pi(j_{\fromvertex}) \delta(j_{\fromvertex}) - \pi(j_{\tovertex}) \delta(j_{\tovertex}) \\
    \leq &\sum\limits_{i = 1}^{\jobtypes} i \sum\limits_{j = 1}^{n_i} \delta'(j_i) - \pi(j_{\fromvertex})  \delta'(j_{\fromvertex}) - \pi(j_{\tovertex})  \delta'(j_{\tovertex})
\end{align*}

For the exchanged jobs, we can derive the following inequality from $\delta(j_{\tovertex}) = \delta(j_{\fromvertex}) + 1$. 

\begin{align}
&\pi(j_{\fromvertex})  \delta'(j_{\fromvertex}) + \pi(j_{\tovertex}) \delta'(j_{\tovertex})\\
= &\pi(j_{\fromvertex}) \delta(j_{\tovertex}) + \pi(j_{\tovertex}) \delta(j_{\fromvertex})\\
= &(\pi(j_{\fromvertex}) - \pi(j_{\tovertex})) \delta(j_{\tovertex}) + \pi(j_{\tovertex}) \delta(j_{\tovertex}) + \pi(j_{\tovertex}) \delta(j_{\fromvertex})\\
> &(\pi(j_{\fromvertex}) - \pi(j_{\tovertex})) \delta(j_{\fromvertex}) + \pi(j_{\tovertex}) \delta(j_{\tovertex}) + \pi(j_{\tovertex}) \delta(j_{\fromvertex})\\
= &\pi(j_{\fromvertex}) \delta(j_{\fromvertex}) - \pi(j_{\tovertex}) \delta(j_{\fromvertex}) + \pi(j_{\tovertex}) \delta(j_{\tovertex}) + \pi(j_{\tovertex}) \delta(j_{\fromvertex})\\
= &\pi(j_{\fromvertex}) \delta(j_{\fromvertex}) + \pi(j_{\tovertex}) \delta(j_{\tovertex})
\end{align}

Hence, the value of the potential function increases by  at least 1 with each swap, implying that the number of swaps is bounded by $\jobtypes^3  2  \mlm  \pmax$. 
\end{proof}

\begin{lemma}
\label{lem:oneswap}
    One BFS run and the subsequent swap can be completed in $\O(\swaprt)$ time.
\end{lemma}

\begin{proof}
The running time of a BFS is given by $\O(|V| + |E|)$. The naive approach consists of including all edges of the graph in the search. As $|E|$ is only bounded by $|V|^2$, this yields running time $\O((\mlm  \jobtypes   \pmax)^2)$ for one BFS run. However, we can speed up this procedure based on the following thoughts.
\begin{enumerate}
    \item The exact slots or size classes along one path are irrelevant before a machine with load at most $u$ is reached.\label{enum:item1}
    \item Edges of weight 1 exist only between slots of the same class.
    \item Let $\mathcal{P}_{i,k}$ be the partition assembling the vertices representing the $y_{i,k}$ slots of class $\jk$ for some machine $i \in \M$. The edges of weight 1 are based on the strict order $>$. Hence, it is sufficient to consider the outgoing edges of the vertex storing the biggest job of some class $\jk$ assigned to some machine $i \in \M$ in order to know all the destination machines of the outgoing edges from partition $\mathcal{P}_{i,k}$. \label{enum:item2}
    \item For the same reason, the incoming edges of the vertex storing the smallest job of some class $\jk$ assigned to some machine $i \in \M$ reveal all the machines from which partition $\mathcal{P}_{i,k}$ can be reached by at least one edge. \label{enum:item3}
\end{enumerate}

From observation \labelcref{enum:item1} we can conclude that it would be convenient to carry out the search on a graph similar to the one displayed in \cref{subfig:localsearch1}, that is, a graph where slots and size classes are abstracted away. 

In order to construct and maintain such a graph, we first create an auxiliary data structure consisting of $m$ arrays of size $\epsfrac$. Each array represents a machine and each cell represents a size class $\jk$ for $k \in \closedinterval{\epsfrac}$ and contains a min-max heap $T_{i,k}$ storing the jobs assigned to the $y_{i,k}$ slots on the corresponding machine $i \in \M$. Denote by $\min(T_{i',k'})$ $(\max(T_{i',k'}))$ the smallest (resp. biggest) element of some heap $T_{i',k'}$.
Let $i_1,i_2 \in \M, k_1,k_2 \in \closedinterval{\epsfrac}$. Denote by $C_i$ the clique describing machine $i \in \M$ in the original graph $G$. Then $G$ contains an edge leading from some vertex in $C_{i_1}$ to some vertex in $C_{i_2}$ if and only if $\max(T_{i_1,k_1}) > \min(T_{i_2,k_2})$ and $k_1 = k_2$.  Creating the auxiliary data structure requires the insertion of $\O(m  \jobtypes  \pmax)$ jobs into min-max heaps, which can be done in $\O(m  \jobtypes  \pmax  \log(m  \jobtypes  \pmax))$ time.  

We now have all the ingredients needed to construct our simplified graph $G' = (V',E')$ consisting of $m$ vertices, each one representing some machine $i \in \M$. Before each BFS run, we update $G'$ by iterating through the job size classes, comparing machines pairwise within each class. If the largest job on one machine is larger than the smallest job on some other machine, the edge in $G'$ is added accordingly.

The swap procedure as a whole now consists of the following steps:
\begin{itemize}
    \item Update $G'$ in $\O(\epsfrac  m^2)$.
    \item Perform the BFS in $\O(m^2)$.
    \item When some machine $i_1$ suitable for a swap is found, that is, when its load is at most $u$, then perform a swap between $i_1$ and its predecessor $i_2$ in the BFS. For this purpose, iterate through the $\epsfrac$ classes. For each $k \in \closedinterval{\epsfrac}$ compare the biggest job of $i_2$ with the smallest job on $i_1$ using the auxiliary data structure described above. If $\max(T_{i_2,k}) > \min(T_{i_1,k})$, then perform the swap of the two just compared elements. Iterating through the classes contributes an additive term of $\epsfrac$ to the running time. Swapping jobs involves two remove and insert operations, each of them running in $\O(\log(\jobtypes  \pmax))$. Therefore, this step adds $\O(\epsfrac + \log(\jobtypes  \pmax))$ to the running time.
\end{itemize}

Assembling all the terms yields running time $\O(\epsfrac  m^2) + \O(m^2) + \O(\epsfrac + \log(\jobtypes  \pmax))= \O(\swaprt)$ for one BFS run and the subsequent swap.

\end{proof}

\begin{proposition}
    Given a solution to the $\slotmilp$, an integral solution to the target load balancing problem with an absolute error of at most $\eps \pmax$ can be computed in time $\O(\localsearchrt)$.
\end{proposition}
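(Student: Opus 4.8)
The plan is to assemble \cref{lem:bfspathlength}, \cref{lem:numberofswaps} and \cref{lem:oneswap} together with the correctness of the original local search of Buchem \etal~\cite{buchem}. For correctness I would first note that our variant carries out exactly the same sequence of swaps as the original algorithm and halts under the same condition (no machine has load outside $[\ell-\eps\pmax,\,u+\eps\pmax]$); only the data structures used to locate the next swap change, not which swap is performed. Hence the output is still an integral assignment in which every machine has load in $[\ell-\eps\pmax,\,u+\eps\pmax]$, i.e. an integral solution to the target load balancing problem with absolute error at most $\eps\pmax$.

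For the running time I would proceed in three steps. First, constructing the auxiliary structure of $\mlm$ arrays of size $\epsfrac$, each cell a min-max heap holding the jobs of one size class on one machine, costs $\Oh{\mlm\jobtypes\pmax\log(\mlm\jobtypes\pmax)}$, as shown in the proof of \cref{lem:oneswap}. Second, the algorithm then alternates a BFS on the contracted graph $G'$ with one swap; by \cref{lem:oneswap} each such round, including the rebuild of $G'$ from the heaps, runs in $\Oh{\swaprt}$ time. Third, by \cref{lem:numberofswaps} the potential \labelcref{eq:potentialfunction} strictly increases with every swap, and it is bounded above by $2\mlm\jobtypes^3\pmax$ (a factor $\jobtypes^2$ from the outer index, a factor $\jobtypes$ from \cref{lem:bfspathlength}, and a factor $2\mlm\pmax$ from $n_i\le 2\mlm\pmax$); so there are $\Oh{\mlm\jobtypes^3\pmax}$ swaps, and the symmetric second stage repairing underloaded machines contributes the same count. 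Multiplying the number of rounds by the per-round cost and adding the setup cost then yields
\begin{align*}
    \Oh{\mlm\jobtypes^3\pmax}\cdot\Oh{\swaprt}+\Oh{\mlm\jobtypes\pmax\log(\mlm\jobtypes\pmax)}
    &=\Oh{\mlm^3\jobtypes^3\pmax\epsfrac+\mlm\jobtypes^3\pmax\log(\jobtypes\pmax)}\\
    &=\Oh{\localsearchrt},
\end{align*}
where the last step absorbs the lower-order terms into $\localsearchrt=\mlm^3\jobtypes^3\pmax\epsfrac\log(\mlm\jobtypes\pmax)$ using $1\le\mlm^2\epsfrac$ and $\log(\jobtypes\pmax)\le\log(\mlm\jobtypes\pmax)$.

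The step I expect to be the main obstacle is not this arithmetic but verifying that the contraction from $G$ to $G'$ stays faithful across the whole run: one has to argue that rebuilding $G'$ before each BFS from the heaps in $\Oh{\epsfrac\mlm^2}$ time exactly reproduces the reachability relation of $G$ that the search needs (this is where observations \labelcref{enum:item1}--\labelcref{enum:item3} in the proof of \cref{lem:oneswap} are used), and that the BFS-layer structure relied on in \cref{lem:bfspathlength} and \cref{lem:numberofswaps} survives each swap -- which is exactly the distance-monotonicity claim $\delta'(u)\ge\delta(u)$ established there. Once those are in place, composing the running times as above is routine.
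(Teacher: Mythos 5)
Your proof is correct and takes the same route as the paper: combine \cref{lem:numberofswaps} (bound of $2\mlm\jobtypes^3\pmax$ on the number of swaps), \cref{lem:oneswap} (cost $\Oh{\swaprt}$ per BFS-plus-swap round), add the one-time setup cost $\Oh{\mlm\jobtypes\pmax\log(\mlm\jobtypes\pmax)}$, and simplify to $\Oh{\localsearchrt}$. The extra remarks on correctness (the modified algorithm performs the same swaps as the original) and on the faithfulness of the contracted graph $G'$ are already covered by the cited lemmas and the discussion in their proofs, and the paper indeed omits them here.
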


\begin{proof}
Due to \cref{lem:numberofswaps}, the number of swaps is bounded by $\jobtypes^3  2  m  \pmax$. By \cref{lem:oneswap}, one swap and the preceding BFS run can be carried out in $\O(\swaprt)$. Constructing the auxiliary data structure needed for the graph can be achieved in $\O(m  \jobtypes  \pmax  \log(m  \jobtypes  \pmax))$. Putting this together, we obtain $\O(m  \jobtypes  \pmax  \log(m  \jobtypes  \pmax)) + \jobtypes^3  2  m  \pmax  \O(\swaprt) = \O(\localsearchrt)$.   
\end{proof}

\subsection{Application}\label{subsec:application}

The idea by Buchem~\etal~\cite{buchem} is to guess the target load intervals for either \tf{\P}{}{\cmax}, \tf{\P}{}{\cmin} or \tf{\P}{}{\envy} in steps of $\eps  \pmax$. The local search is then performed for the best solution obtained by the DP. Based on our insights, we can make the following adjustments to the running times presented in \cite{buchem}. 

For \tf{\P}{}{\cmax}, $\ell$ is set to 0 and $u$ is guessed within the interval $$\left[\frac{1}{m}\sum\limits_{j = 1}^{\nlm} p_j, \frac{1}{m}\sum\limits_{j = 1}^{\nlm} p_j + \pmax\right].$$
\begin{corollary}
    There exists an algorithm for \tf{\P}{}{\cmax} with running time $\additiveapproxrt$ computing a schedule with makespan at most $\opt + \eps \pmax$.
\end{corollary}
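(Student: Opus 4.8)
The plan is to assemble three ingredients already established: the slot-MILP dynamic program of \autoref{cor:slotmilp}, the local-search rounding proposition from \cref{subsec:localsearch}, and the guessing scheme for the target load interval described just before the corollary (and underlying \autoref{thm:additiveapprox}). First I would fix the desired error parameter $\eps>0$ and split it: run the $\slotmilpdelta$ with $\ell=0$ and the slack parameter $\delta=\eps/4$, and use class granularity $\eps/4$ as well, so that a $\delta$-approximate slot-MILP solution made integral by the local search yields a schedule whose load on every machine lies in $[\,0-\tfrac{\eps}{2}\pmax,\ u+\tfrac{\eps}{2}\pmax\,]$. Then I would let $u$ range over the grid $\lceil\tfrac1m\sum_{j=1}^{\nlm}p_j\rceil + k\lceil\tfrac{\eps}{4}\pmax\rceil$ for $k=0,1,2,\dots$ until $u>\tfrac1m\sum_{j=1}^{\nlm}p_j+\pmax$; this is $\Oh{\epsfrac}$ candidates (and if $\eps\pmax$ is tiny one may instead try all $\le\pmax+1$ integers in the interval, which is equally affordable). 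For each candidate $u$ I would call \autoref{cor:slotmilp}; whenever it certifies feasibility of the $\slotmilpdelta$, I would run the local search to obtain an integral schedule, and finally output the one of smallest makespan.

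For correctness I would first argue $\opt\in[\tfrac1m\sum_{j}p_j,\ \tfrac1m\sum_{j}p_j+\pmax]$: the lower bound is the averaging bound, and the upper bound holds because any list (greedy) schedule has makespan at most $\tfrac1m\sum_j p_j+\pmax$. Hence the grid contains a value $u^\star$ with $\opt\le u^\star\le\opt+\tfrac{\eps}{2}\pmax$. An optimal schedule, viewed as a $0/1$-assignment $x_{i,j}$, is a feasible point of the $\slotmilp$ with bounds $[0,u^\star]$ (the class-sum variables $y_{i,k}$ are integral, the load constraint holds since every load is $\le\opt\le u^\star$), hence also of the weaker $\slotmilpdelta$. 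Therefore for $u^\star$ the DP of \autoref{cor:slotmilp} returns a solution, and the local search turns it into an integral schedule with makespan at most $u^\star+\tfrac{\eps}{2}\pmax\le\opt+\eps\pmax$; rescaling $\eps$ by the constant $4$ at the outset absorbs all the slack, giving makespan $\le\opt+\eps\pmax$ as claimed.

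For the running time, each of the $\Oh{\epsfrac}$ iterations costs $\Oh{\dprt}$ for the DP — here \autoref{lem:balancing} has already shrunk the instance to $\Oh{m\jobtypes\pmax}$ jobs — plus $\Oh{\localsearchrt}$ for the local search, which is polynomial in $m,\jobtypes,\pmax,\epsfrac$. Since $\jobtypes\le\pmax$, we have $\dprt=(m\jobtypes\pmax)^{\Oh{\epsfrac}}\le(m\pmax)^{\Oh{\epsfrac}}$, which dominates the polynomial local-search term, and multiplying by the $\Oh{\epsfrac}$ (or $\Oh{\pmax}$) guesses still leaves $(m\pmax)^{\Oh{\epsfrac}}=\additiveapproxrt$.

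The main obstacle is bookkeeping rather than any single hard step: one has to verify that the three sources of additive error — the granularity of the guessed $u$, the $\delta$-relaxation of the slot-MILP, and the $\eps_{\mathrm{int}}\pmax$ error of the local search — can simultaneously be pushed below $\eps\pmax$ by choosing the constants suitably, and that the feasibility of the relaxed slot-MILP for $u^\star$ is genuinely witnessed by the integral optimum so that the DP is guaranteed to succeed on that guess. Everything else is a direct composition of \autoref{lem:balancing}, \autoref{cor:slotmilp}, and the local-search proposition of \cref{subsec:localsearch}.
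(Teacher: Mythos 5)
Your proof is correct and follows the same route the paper takes: reduce via \autoref{lem:balancing}, guess the target makespan $u$ on a grid inside $[\tfrac1m\sum_j p_j,\ \tfrac1m\sum_j p_j+\pmax]$, run the DP of \autoref{cor:slotmilp} on the kernel, and round with the local search. The paper presents the corollary as an immediate instantiation of \autoref{thm:additiveapprox} (guessing ``in steps of $\eps\pmax$'') without spelling out the error bookkeeping; you make the three sources of additive error (grid step, $\delta$-relaxation of the slot-MILP, local-search rounding) explicit and split $\eps$ into equal quarters, plus you handle the degenerate case $\eps\pmax<4$ by falling back to enumerating all integers in the interval. That bookkeeping is genuinely useful, since the paper's own suggested constant-setting ($\delta=\eps^*/2$, class granularity $\eps^*/2$) compensates for the DP and rounding errors only and quietly leaves the grid-step error to be absorbed by an unstated rescaling. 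The running-time analysis (both the $\Oh{\epsfrac}$ or $\Oh{\pmax}$ guesses being absorbed into $(m\pmax)^{\Oh{\epsfrac}}$ via $\jobtypes\le\pmax$, and the local-search term being polynomial and thus dominated) matches the paper's argument.
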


For \tf{\P}{}{\cmin}, $u$ is set to $\frac{1}{m}\sum\limits_{j = 1}^{\nlm} p_j$ and $\ell$ is guessed within the interval $\left[\frac{1}{m}\sum\limits_{j = 1}^{\nlm} p_j - \pmax, \frac{1}{m}\sum\limits_{j = 1}^{\nlm} p_j\right]$.

\begin{corollary}
     There exists an algorithm for \tf{\P}{}{\cmin} with running time $\additiveapproxrt$ computing a schedule where each machine has load at least $\opt - \eps \pmax$.
\end{corollary}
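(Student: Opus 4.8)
This is the $\cmin$-instantiation of the additive scheme of \autoref{thm:additiveapprox}, and I would prove it by spelling out the choice of target load intervals sketched above. Write $A:=\tfrac{1}{\m}\sum_{j\in\J}p_j$ for the average load. The two facts I need are $\opt\le A$ — immediate, since the minimum of the loads is at most their average — and $\opt\ge A-\pmax$. For the latter I would analyse greedy list scheduling (repeatedly place the next job on a currently least-loaded machine): if $\n\ge\m$ then every machine receives at least one job, and if $i^{*}$ attains the minimum final load $L$, then every other machine $i$ received its last job while being least loaded, so its load never later than that was at most $L$ and its final load is at most $L+\pmax$; summing over all $\m$ machines gives $\m A\le\m(L+\pmax)$, i.e.\ $\opt\ge L\ge A-\pmax$. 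If $\n<\m$ then $\opt=0$ while $A<\pmax$, so the bound is trivial. Hence $\opt$ is one of at most $\pmax+1$ integer values, namely the integers in $\{\max\{0,\lceil A\rceil-\pmax\},\dots,\lfloor A\rfloor\}$.

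\textbf{The algorithm and its running time.} For each candidate lower bound $\ell$ in that range I would call the $\slotmilpdelta$-solver of \autoref{cor:slotmilp} with lower bound $\ell$ and an upper bound $u$ chosen large enough never to be binding; its running time $\O(\dprt)$ already folds in the kernelization of \autoref{lem:balancing} (so it does not depend on $\n$) and is independent of the magnitude of $u$, so picking $u$ vacuously large costs nothing. A returned $\delta$-solution is then turned into an integral schedule by the local search of \cref{subsec:localsearch}. Setting $\delta=\eps/2$ and running the local search with error parameter $\eps/2$, the two sources of error add up to at most $\eps\pmax$. Over all $\ell$ I keep the integral schedule of largest minimum load. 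One iteration costs $\O(\dprt)+\O(\localsearchrt)$, which is $(\m\pmax)^{\O(\epsfrac)}$ because $\jobtypes\le\pmax$; the $\O(\pmax)$ iterations multiply this by only a polynomial factor, so the total running time is $\additiveapproxrt$.

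\textbf{Correctness and the main obstacle.} For the guess $\ell=\opt$, an optimal $\cmin$-schedule witnesses feasibility of the corresponding target load balancing problem, hence of its $\delta$-relaxation, so \autoref{cor:slotmilp} does not report infeasibility and returns a solution in which every machine has load at least $\opt-\tfrac{\eps}{2}\pmax$; the local search preserves this up to an extra $\tfrac{\eps}{2}\pmax$, giving minimum load at least $\opt-\eps\pmax$, and the schedule finally output is at least as good. The delicate point — and the one I would flag as the main obstacle — is the upper bound $u$: unlike in the $\cmax$ case there is no harmless \emph{tight} upper bound, and the value $u=A$ from the outline would in general make the $\delta$-relaxed interval $[\opt-\delta\pmax,\;A+\delta\pmax]$ infeasible. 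I would handle this by a short rebalancing argument: starting from an optimal $\cmin$-schedule, repeatedly move a job from any machine of load $>A+\pmax$ to one of load $<A$; since the receiving machine stays below $A+\pmax$ and above $\opt$, while the source machine stays above $A\ge\opt$, a potential function such as $\sum_i\max\{0,L_i-\lceil A\rceil-\pmax\}$ strictly decreases, so the process terminates with an optimal schedule all of whose loads lie in $[\opt,A+\pmax]$. Thus $u=\lceil A\rceil+\pmax$ — or indeed any vacuously large $u$ — is a safe choice, and the remainder is bookkeeping.
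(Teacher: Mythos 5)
Your proof is correct and takes the same route as the paper — guess $\ell$, invoke the kernelized slot-MILP$^{\delta}$ DP of \autoref{cor:slotmilp} plus the local search of \cref{subsec:localsearch} for each guess, keep the best — but you also surface a genuine subtlety that the paper's one-line proof glosses over. The paper fixes $u$ at the average load $A:=\tfrac{1}{m}\sum_j p_j$. Since machine loads always sum to $mA$, any schedule whose minimum load is strictly below $A$ must have some machine loaded strictly above $A$; hence the exact slot-MILP with target interval $[\opt,A]$ is infeasible whenever $\opt<A$, and even the $\delta$-relaxation can be infeasible for \emph{every} guess $\ell\in[A-\pmax,A]$. (Take $p=(3)$, $n=(4)$, $m=3$: $A=4$, $\opt=3$, but integrality of the $y$'s forces all loads to be multiples of $3$, so $\max_i L_i\ge 6 = A+\tfrac23\pmax$, ruling out any slot-MILP$^{\delta}$ solution for $\delta<\tfrac23$.) Your fix — taking $u\ge A+\pmax$, justified by a rebalancing argument that in substance re-derives \autoref{lem:completionbounds}, which the paper already has — is exactly what makes the feasibility claim at $\ell=\opt$ go through. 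The remaining ingredients (the greedy bound $\opt\ge A-\pmax$, the $\delta=\eps/2$ error split, the running time dominated by the DP of \autoref{cor:slotmilp}) are correct, and iterating over the $\O(\pmax)$ integer candidates for $\ell$ rather than the $\O(1/\eps)$ multiples of $\eps\pmax$ is a cosmetic difference that still fits inside $\additiveapproxrt$.
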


For \tf{\P}{}{\envy}, both $\ell$ and $u$ are guessed from the intervals above.

\begin{corollary}
     There exists an algorithm for \tf{\P}{}{\envy} with running time $\additiveapproxrt$ computing a schedule with envy value at most $\opt + \eps \pmax$.
\end{corollary}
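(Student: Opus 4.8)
The plan is to derive this as the $\envy$-instance of the guess‑and‑solve scheme of Buchem~\etal~\cite{buchem} — equivalently, to make the $\envy$-case of \autoref{thm:additiveapprox} explicit — using \autoref{cor:slotmilp} for the first (dynamic‑programming) stage and the local search analysed above for the second. Concretely, I would rely on the standard structural fact that some optimal schedule has both its minimum and its maximum machine load within $\pmax$ of the average load $A:=\frac{1}{m}\sum_j p_j$, and then enumerate every pair $(\ell,u)$ where $\ell$ is a multiple of the guessing step $g$ in $[A-\pmax,A]$ and $u$ is a multiple of $g$ in $[A,A+\pmax]$. For each pair I would call the dynamic program of \autoref{cor:slotmilp} on the target load interval $[\ell,u]$ to solve the $\slotmilpdelta$ with some slack parameter $\delta$; whenever it reports feasibility, I convert its solution into an integral schedule with the local‑search routine from the preceding proposition, which produces a schedule whose loads overshoot $[\ell,u]$ by at most $(\delta+\eps_{\mathrm{ls}})\pmax$ on each side (here $\eps_{\mathrm{ls}}$ is the local‑search accuracy). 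Among all integral schedules obtained this way I return the one of smallest realised envy; the three internal parameters $g$, $\delta$, $\eps_{\mathrm{ls}}$ will be set to suitably small constant fractions of $\eps$ (e.g.\ $g=\eps\pmax/6$, $\delta=\eps_{\mathrm{ls}}=\eps/6$) at the end.

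For correctness I would fix a structurally nice optimum with extreme loads $\cmin^{\ast}\le\cmax^{\ast}$, so $\cmax^{\ast}-\cmin^{\ast}=\opt$ and $\cmin^{\ast},\cmax^{\ast}\in[A-\pmax,A+\pmax]$. Among the enumerated pairs there is one with $\ell\le\cmin^{\ast}\le\cmax^{\ast}\le u$ and $\cmin^{\ast}-\ell,\,u-\cmax^{\ast}\le g$, hence $u-\ell\le\opt+2g$. The optimal schedule itself witnesses that the target load balancing problem with bounds $[\ell,u]$ is feasible, so a fortiori the $\slotmilpdelta$ is feasible and the DP returns a solution; the local search then turns it into an integral schedule with all loads in $[\ell-(\delta+\eps_{\mathrm{ls}})\pmax,\,u+(\delta+\eps_{\mathrm{ls}})\pmax]$, whose envy is therefore at most $(u-\ell)+2(\delta+\eps_{\mathrm{ls}})\pmax\le\opt+2g+2(\delta+\eps_{\mathrm{ls}})\pmax$. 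With the choices above this is $\opt+\eps\pmax$, and since every schedule produced during the enumeration is a genuine feasible schedule, the returned one is at least this good.

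For the running time I would argue that there are $\Oh{(\epsfrac)^2}$ candidate pairs, each handled by one call to \autoref{cor:slotmilp} costing $\Oh{\dprt}$ plus one run of the local search costing $\Oh{\localsearchrt}$. Because the number of distinct processing times satisfies $\jobtypes\le\pmax$, the first bound is $(m\pmax)^{\Oh{\epsfrac}}$, and the local‑search bound is polynomial in $m,\pmax,\epsfrac$ and hence dominated; finally $(\epsfrac)^2\le 2^{\Oh{\epsfrac}}\le(m\pmax)^{\Oh{\epsfrac}}$, so the whole procedure runs in $(m\pmax)^{\Oh{\epsfrac}}=\additiveapproxrt$, the extra constant in the exponent introduced by rescaling $\eps$ being irrelevant.

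I expect this to be essentially bookkeeping rather than a new idea, so the main obstacle is not technical depth but getting two things exactly right: (i) the error budget, where three contributions — the $\delta$‑slack of the relaxed slot‑MILP, the $\eps_{\mathrm{ls}}\pmax$ error of the local search, and the $g$‑granularity of the guesses for $\ell$ and $u$ — must be balanced so that they sum to at most $\eps\pmax$; and (ii) the feasibility direction, i.e.\ verifying that the chosen optimal schedule really is a valid witness for the relaxed slot‑MILP of the "right" guessed pair. The only external input is the structural claim that an optimal $\envy$-schedule can be taken with both extreme loads within $\pmax$ of $A$; I would either cite this from \cite{buchem} or reprove it by a short exchange argument (repeatedly moving a job from an above‑average machine to a below‑average one strictly shrinks the load gap once it exceeds $\pmax$).
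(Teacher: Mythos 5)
Your proposal is correct and follows essentially the same route as the paper: guess both $\ell$ and $u$ on a grid inside $[A-\pmax,A+\pmax]$ (this is exactly what the paper's \Cref{subsec:application} prescribes, supported by \Cref{lem:completionbounds}), invoke the DP of \Cref{cor:slotmilp} for each pair, run the local search, and balance the three error contributions by rescaling $\eps$ — the extra $\Oh{(\epsfrac)^2}$ factor from the grid being absorbed by $(m\pmax)^{\Oh{\epsfrac}}$. The only difference is one of exposition: the paper states the corollary as an immediate instantiation of the guess-and-solve scheme and leaves the constant-splitting implicit, whereas you spell out the choice $g=\eps\pmax/6$, $\delta=\eps_{\mathrm{ls}}=\eps/6$ and verify the budget explicitly, which is a cleaner and more self-contained presentation of the same argument.
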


\section{Omitted proofs}\label{sect:omittedproofs}
Here, we give some proofs that were omitted due to space constraints.
\subsection{Miscellaneous}\label{subsect:misc}
This part covers several smaller observations like solving optimization problems via decision problems.
\paragraph*{Frank Tardos}
\franktardos*
\begin{proof}
    Apply \autoref{thm:franktardos} to the vector $\begin{pmatrix}w \\ b\end{pmatrix}$ and the bound $N\Delta+1$ to obtain a vector $\begin{pmatrix}\bar{w} \\ \bar{b}\end{pmatrix}$. Now, let $x\in[-\Delta,\Delta]^N$ and consider the vector $x'=\begin{pmatrix}x\\-1\end{pmatrix}$. Then $\norm{x'}_1=\norm{x}_1+1\leq N\Delta+1$, so we have $\textup{sign}\left(\begin{pmatrix}w \\ b\end{pmatrix}^Tx'\right)=\textup{sign}\left(\begin{pmatrix}\bar{w} \\ \bar{b}\end{pmatrix}^Tx'\right)$ by \autoref{thm:franktardos}. Now:
    \begin{align*}
        w^Tx\leq b \iff w^Tx-b\leq 0 &\iff \matrix{w \\ b}^T\matrix{x \\ -1} \leq 0 \\
        &\iff \matrix{w \\ b}^Tx' \leq 0 \\
        &\iff \textup{sign}\left(\begin{pmatrix}w \\ b\end{pmatrix}^Tx'\right)\in\{0,+1\}\\ 
        &\iff \textup{sign}\left(\begin{pmatrix}\bar{w} \\ \bar{b}\end{pmatrix}^Tx'\right)\in\{0,+1\} \\
        &\iff \matrix{\bar{w} \\ \bar{b}}^Tx' \leq 0 \\
        &\iff \matrix{\bar{w} \\ \bar{b}}^T\matrix{x \\ -1} \leq 0 \\
        &\iff \bar{w}^Tx-\bar{b}\leq 0 \\
        &\iff \bar{w}^Tx\leq \bar{b} 
    \end{align*}
    The running time is the same as in \autoref{thm:franktardos}, namely $(N+1)^{\Oh{1}}=N^{\Oh{1}}$. We have $\norm{\bar{w}}_{\infty},|b|\leq (N\Delta+1)^{\Oh{(N+1)^3}}=(N\Delta)^{\Oh{N^3}}$.
\end{proof}

\paragraph*{The Exercise From \cite{CFKLMPPS15} (Hint 3.18)}
\begin{lemma}\label{lem:paralgbound}
    For computable $f$ and any $\beta,\gamma\in\N$, we have $\log(\beta)^{f(\gamma)}\leq 2^{\frac{(f(\gamma))^2}{2}}\beta^{o(1)}$.
\end{lemma}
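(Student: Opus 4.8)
The plan is to rewrite the left-hand side as a single power of $2$, split the exponent using the elementary inequality $ab\le\frac{a^2+b^2}{2}$ (which is just $(a-b)^2\ge 0$), and then observe that the term involving $\beta$ collapses into a $\beta^{o(1)}$ factor because $(\log\log\beta)^2$ grows much slower than $\log\beta$.

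Concretely, first I would write $\log(\beta)^{f(\gamma)} = 2^{\log(\log(\beta))\,f(\gamma)}$, simply moving the exponentiation into base $2$. Then I would apply $ab\le\frac{a^2+b^2}{2}$ with $a = \log(\log(\beta))$ and $b = f(\gamma)$ to get
\[
\log(\beta)^{f(\gamma)} \;\le\; 2^{\frac{(\log(\log(\beta)))^2 + (f(\gamma))^2}{2}} \;=\; 2^{\frac{(\log(\log(\beta)))^2}{2}}\cdot 2^{\frac{(f(\gamma))^2}{2}}.
\]
It remains to identify the first factor with $\beta^{o(1)}$. Since $(\log(\log(\beta)))^2 = o(\log(\beta))$ as $\beta\to\infty$, we have $\frac{(\log(\log(\beta)))^2}{2} = o(\log(\beta))$, hence $2^{\frac{(\log(\log(\beta)))^2}{2}} = 2^{o(\log(\beta))} = \beta^{o(1)}$. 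Substituting this back yields exactly $\log(\beta)^{f(\gamma)}\le 2^{\frac{(f(\gamma))^2}{2}}\beta^{o(1)}$, as claimed.

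The only genuinely delicate point is the meaning of the $o(1)$ statement: it is asymptotic in $\beta$ (with $\gamma$, and hence $f(\gamma)$, held fixed), so I would make explicit that the hidden function in $\beta^{o(1)}$ does not depend on $\gamma$ and that for each fixed $\gamma$ the bound holds for all $\beta$. For small $\beta$ (e.g. $\beta\le 2$ where $\log\log\beta$ is undefined or nonpositive) one should either restrict to $\beta$ large enough or absorb finitely many cases into the implicit constant; I expect this edge-case bookkeeping to be the main (minor) obstacle, everything else being a one-line computation.
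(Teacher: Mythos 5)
Your argument is exactly the paper's proof: rewrite $\log(\beta)^{f(\gamma)}=2^{\log(\log(\beta))f(\gamma)}$, split the exponent via $ab\le\frac{a^2+b^2}{2}$, and absorb $2^{(\log\log\beta)^2/2}$ into $\beta^{o(1)}$ using $(\log\log\beta)^2=o(\log\beta)$. Your extra remark about small $\beta$ and the quantifier order of the $o(1)$ is reasonable bookkeeping that the paper leaves implicit, but the substance of the proof is the same.
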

\begin{proof}
    We have 
    \begin{align*}
        \log(\beta)^{f(\gamma)}=2^{\log(\log(\beta))f(\gamma)}\leq2^{\frac{(\log(\log(\beta)))^2+(f(\gamma))^2}{2}}&=2^{\frac{(\log(\log(\beta)))^2}{2}}2^{\frac{(f(\gamma))^2}{2}}\\
        &=2^{\frac{(f(\gamma))^2}{2}}\beta^{o(1)}.
    \end{align*}
    The second transformation uses that $ab\leq\frac{a^2+b^2}{2}$ (which is equivalent to $(a-b)^2\geq0$) and the last transformation uses $\log(\log(\beta))^2=\oh{\log(\beta)}$.
\end{proof}

\paragraph*{Optimization vs. Decision Problems}\label{par:optvsdec}
The optimization problems considered in this paper can be easily solved by solving several decision problems, simply by doing a binary search over some interval $[0,v]$. The key ingredient is that the objective value is always bounded by some value $v$ that depends on the numbers in the input and hence $\log(v)\leq\enc{I}^{\Oh{1}}$.
For uniform $n$-fold ILPs, \mswbp and $\sumwu$, we just have to find a solution that has at least (or at most) some value $T$ that is given by the binary search. For $\cmax$ and $\cmin$, we then have an upper bound $u$ or a lower bound $\ell$ for the load of the machines and have to decide whether a schedule exists with load values in $[0,u]$ or $[\ell,v]$. For $\envy$, it is a little more tricky than for $\cmax$ and $\cmin$, because a \enquote{guess} $T$ for the envy-value only gives us the difference $u-\ell$, not the actual values $\ell$ and $u$ we will need for the framework. The following lemma solves this predicament:
\begin{lemma}\label{lem:completionbounds}
    For \tf{\P}{}{\{\cmax,\cmin,\envy\}}, there always exists an optimal solution in which each machine $i$ has load $L_i\in[L-\pmax,L+\pmax]$, where $L$ is the sum of all processing times divided by $\m$.
\end{lemma}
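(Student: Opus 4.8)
The plan is to prove this by a local exchange (swap) argument. Since the number of jobs (counted with multiplicities) and machines is finite, there are only finitely many schedules, so the set of optimal schedules is finite and nonempty; I would fix one optimal schedule that, in addition, minimizes the potential
$\Phi := \sum_{i=1}^{m}\left(L_i - L\right)^2$,
where $L_i$ denotes the load of machine $i$ and $L=\frac1m\sum_j p_j$, so that $\sum_i L_i = mL$ because every job is scheduled. The goal is to show that this particular optimal schedule already satisfies $L_i\in[L-\pmax,\,L+\pmax]$ for all $i$.

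Assume for contradiction that some machine lies outside this interval. If some machine $h$ has $L_h > L+\pmax$, then an averaging argument gives a machine $\ell$ with $L_\ell < L$; symmetrically, if some machine $\ell$ has $L_\ell < L-\pmax$, then some machine $h$ has $L_h > L$. In either case one obtains $L_\ell+\pmax < L_h$, hence $L_\ell + p_j < L_h$ for every job $j$ currently on $h$ (and $h$ carries at least one job, as $L_h>L\ge 0$). I would then move one such job $j$ from $h$ to $\ell$; the two changed loads $L_h-p_j$ and $L_\ell+p_j$ lie strictly between $L_\ell$ and $L_h$, and all other loads are unchanged.

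Next I would verify that optimality is preserved for each objective. For $\cmax$: the only load that grew is that of $\ell$, and $L_\ell+p_j < L_h\le \cmax$, so $\cmax$ does not increase, hence stays optimal. For $\cmin$: the only load that dropped is that of $h$, and $L_h-p_j > L_\ell\ge \cmin$, so $\cmin$ does not decrease, hence stays optimal. For $\envy=\cmax-\cmin$ the two bounds combine. Finally, writing $q=p_j$, a one-line computation gives the change in $\Phi$ as $2q\big(q-(L_h-L_\ell)\big)$, which is strictly negative because $q\le\pmax < L_h-L_\ell$. This produces an optimal schedule with strictly smaller $\Phi$, contradicting the choice of our schedule; therefore no machine lies outside $[L-\pmax,\,L+\pmax]$.

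The step I expect to need the most care is checking that the objective value is genuinely \emph{preserved} (not merely not worsened) for all three objectives at once: this relies both on the starting schedule being optimal — so $\cmax$ cannot fall below, and $\cmin$ cannot rise above, their optimal values — and on the sharpened inequality $L_\ell+p_j < L_h$, which ensures the moved job neither pushes $\ell$ past the current maximum nor pulls $h$ below the current minimum. The remaining ingredients (existence of the pair $(h,\ell)$, finiteness of the schedule space, and the $\Phi$-computation) are routine.
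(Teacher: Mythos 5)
Your proof is correct and follows the same local-exchange strategy as the paper's proof of this lemma: move a job from an over-loaded machine to an under-loaded one and check that none of the three objectives can worsen. The one place you are more careful than the paper is termination: the paper merely asserts that ``iterating this procedure will lead to a schedule that has the desired property,'' whereas your extremal choice of an optimal schedule minimizing $\Phi=\sum_i(L_i-L)^2$, together with the computation showing $\Phi$ drops by $2p_j\bigl(p_j-(L_h-L_\ell)\bigr)<0$ per swap, supplies exactly the argument the paper leaves implicit.
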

\begin{proof}
    Suppose there is an optimal schedule $\sigma$ that does not fulfill this property. Then $\sigma$ can be improved by moving any job from a machine $j$ with the highest load to a machine $k$ with the lowest load. This increases the load of $k$ by at most $\pmax$ and reduces the load of $j$ by at most $\pmax$. Note that the load of $k$ has to be $\leq L$ and the load of $j$ has to be $\geq L$. So this modification of the schedule can only either move the load value of the two machines into the interval $[L-\pmax,L+\pmax]$ or move it around inside; a load value cannot fall out of this interval.
    Note that this modification can only improve the objective value (w.r.t. either of the three objectives). So iterating this procedure will lead to a schedule that has the desired property and is still optimal.
\end{proof}
This implies that we can assume a given upper/lower bound $u$ or $\ell$ for the load value to lie in the interval $[L-\pmax,L+\pmax]$. Otherwise, we can decide the feasibility of the problem in polynomial time (with a greedy algorithm). 
In particular, for the $\cmin$-objective, we only have to consider configurations with a load value that is at most $\ell+2\pmax$.
Note that by the same argument, one can even show that the difference between the largest load and the smallest load can be at most $\pmax$ (without the factor $2$).

For objectives $\cmax$ and $\cmin$, reducing optimization to decision adds a factor $\log(\pmax)\leq\enc{I}^{\Oh{1}}$
to the running time. For $\envy$, we guess the minimum load $\ell$, which can have at most $\Oh{\pmax}$ different values. We then set $u=\ell+T$ and solve the decision problem. Overall, this adds the same factor $\enc{I}^{\Oh{1}}$ for the binary search, and a factor $\Oh{\pmax}$ to the running time, which yields a total factor of $\enc{I}^{\Oh{1}}\pmax$. Note that this only works if there are no additional constraints like class or capacity constraints or setup times, as \autoref{lem:completionbounds} no longer works directly. We might have to try out more values in those cases to solve the optimization problem.

\subsection{Tool 2: Proximity}
We now give proofs for \autoref{thm:balancinggeneral} and \autoref{thm:main}.

\balancinggeneral*
\begin{proof}
    First, consider the formulation of the ILP in \autoref{lem:pqilp} and its parameters.
    We have $n=\Oh{m}$ as the number of blocks, $\tau=2$ as the number of different block combinations, $t=N$ as the number of variables per block, $s=M^{(P)}$ as the number of constraints per block, $r=M^{(Q)}+N$ as the number of global constraints and the largest entry in the matrix is $\Delta=\max\left\{\norm{A^{(P)}}_\infty, \norm{A^{(Q)}}_\infty\right\}$.  It should be noted that there are upper bounds $x^{(i)}\leq u^{(i)}$ in the convexified relaxation, but the variables in the ILP do not have explicit upper bounds. However, as we know by the definition of \pqmrs, $P$ itself is bounded. Hence, with Cramer's rule~\cite{Cramer1750} we can bound the points in $P$ and indirectly also the solutions of the ILP. In particular, the solutions are bounded by $\Delta^{N-1}\max\left\{\norm{b^{(P)}}_1,\norm{b^{(Q)}}_1\right\}$ if we use the Hadamard inequality~\cite{Hadamard1893} to bound the determinants. This yields uniform artificial upper bounds $u^{(i)}=u$ with $\norm{u}_\infty=\Delta^{N-1}\max\left\{\norm{b^{(P)}}_1,\norm{b^{(Q)}}_1\right\}$.
    Hence, the largest number in the formulation is bounded by 
    $$K\leq\max\left\{\norm{A^{(P)}}_\infty, \norm{A^{(Q)}}_\infty, \norm{b^{(P)}}_\infty,\norm{b^{(Q)}}_\infty\right\}^{O(N)}.$$
    
    So using \autoref{prop:relaxation}, we can solve the convexified relaxation in time:
    \begin{align*}
        &\tau t^{\Oh{t}}2^{\Oh{r^2}} (sr\log(Kn))^{\Oh{1}}\\
        =&2 N^{\Oh{N}}2^{\Oh{(M^{(Q)}+N)^2}}
        \left(M^{(P)}(M^{(Q)}+N)\log\bigg(\max\Big\{\Delta, \right.\\
        &\left.\left.\left.         \norm{b^{(P)}}_\infty,\norm{b^{(Q)}}_\infty\right\}^{O(N)}m\right)\right)^{\Oh{1}}\\
        =& 2^{\Oh{(M^{(Q)}+N)^2}} \left(M^{(P)}\log\Big(\max\Big\{\Delta,\norm{b^{(P)}}_\infty, \right.\left.\left.\left.\norm{b^{(Q)}}_\infty\right\}m\right)\right)^{\Oh{1}}
    \end{align*}

    Suppose that $x^*$ is a solution of the convexified relaxation such that $(x^*)^{(1)}=\hdots=(x^*)^{(m)}$. This can be obtained from any solution computed by the algorithm described in \autoref{prop:relaxation}: 
    Just replace the first $m$ solution parts by their overall average. As they all lie in the convex hull of the same points (see the description of the algorithm before \autoref{prop:relaxation}), the average also lies in the convex hull and is hence feasible. 
    
    By \autoref{thm:convrelproximity}, there is a solution $z^*$ to the ILP from \autoref{lem:pqilp} such that 
    \begin{align*}
        \norm{x^*-z^*}_1
        \leq (r\Delta (s\Delta)^{\Oh{s}})^{\Oh{r}}
        &= (rs\Delta)^{\Oh{rs}} \\
        &\leq \left((M^{(Q)}+N)M^{(P)}\Delta\right)^{\Oh{(M^{(Q)}+N)M^{(P)}}}
    \end{align*}
    The rest of this proof is illustrated by \autoref{fig:proximity}.
    Let $$D:=\left((M^{(Q)}+N)M^{(P)}\Delta\right)^{\Oh{(M^{(Q)}+N)M^{(P)}}}.$$ 
    Then we can assume that the variables in the ILP are bounded by $(x^*)^{(i)}-D\leq z^{(i)}\leq (x^*)^{(i)}+D$, meaning that we can assume that $\max\left\{(x^*)^{(i)}-D,\mathbf{0}\right\}$ is certainly part of the solution (where the $\max$ is component-wise). We then only need to look for the rest of the solution, say $y^{(i)}$, which corresponds to $z^{(i)}-\max\left\{(x^*)^{(i)}-D,\mathbf{0}\right\}$. Then $z^{(i)}=y^{(i)}+\max\left\{(x^*)^{(i)}-D,\mathbf{0}\right\}$ is the overall solution.
    Hence, we can focus on the following ILP, where $\max\left\{(x^*)^{(i)}-D,\mathbf{0}\right\}$ is already a fixed part of the solution and we are looking for the remaining part $y^{(i)}$, which has to lie inside the $\ell_\infty$-box with side length $D$:
    \begin{align*}
        A^{(P)}\left(y^{(i)}+\max\left\{(x^*)^{(i)}-D,\mathbf{0}\right\}\right)&=b^{(P)} \\
        y^{(i)}&\in\N^N
    \end{align*}
    And $y^{(i)}$ is implicitly upper bounded by $2D=D$. Hence, we can modify the right-hand-side from $b^{(P)}$ to $b^{(P)}-A^{(P)}\max\left\{(x^*)^{(i)}-D,\mathbf{0}\right\}$ and then the variables are bounded by $D$ in $\ell_\infty$-norm. Note that by assumption, the solutions $(x^*)^{(i)}$ are the same for all blocks but the last one. So modifying the right-hand-side and re-translating the ILP to a \pqmr (using \autoref{lem:pqilp}) yields one modified polytope $P$.
\end{proof}

\begin{figure}
    \centering
    \begin{tikzpicture}
        \draw[-{Stealth[length=3mm]}] (0,0) -- (7.5,0);
        \draw[-{Stealth[length=3mm]}] (0,0) -- (0,4);

        \draw (-0.5,3) -- (6,-0.5);
        
        \draw (5.08,0) pic[rotate = 0] {cross=4pt};
        \node at (5.4,0.3) {$x^*$};
        
        \draw (3,0) pic[rotate = 0] {cross=4pt};
        \node at (3,-0.3) {$\max\{x^*-D,\mathbf{0}\}$};
        
        \draw (3.98,0.6) pic[rotate = 0] {cross=4pt};
        \node at (4.3,0.9) {$z^*$};

        \draw[-{Stealth[length=3mm]}] (3,0) -- (3.98,0.6) node[midway, above, sloped] {$y$};
        
        \draw[dotted] (3,-2) -- (7,-2) -- (7,2) -- (3,2) -- (3,-2);
        \draw [decorate, decoration = {calligraphic brace}] (7,-2.1) -- (5,-2.1) node[midway, below] {$D$};
    \end{tikzpicture}
    \caption{After computing the solution $x^*$ of the relaxation, $\max\{x^*-D,\mathbf{0}\}$ is taken as a fixed part of the solution $z^*$ and only the bounded rest $y$ has to be computed.}
    \label{fig:proximity}
\end{figure}

\maintheorem*
\begin{proof}
    Let $\Delta:=\max\left\{\norm{A^{(P)}}_\infty,\norm{A^{(Q)}}_\infty\right\}$.
    First, we use \autoref{thm:balancinggeneral} to solve the convexified relaxation of the ILP for $(P,Q,m)$, modify $b^{(P)}$ in time 
    \begin{align*}
        &2^{\Oh{(M^{(Q)}+N)^2}} \left(M^{(P)}\log\left(\max\left\{\norm{A^{(P)}}_\infty,\norm{A^{(Q)}}_\infty, \right.\right.\right.\\
        &\left.\left.\left.\norm{b^{(P)}}_\infty,\norm{b^{(Q)}}_\infty\right\}m\right)\right)^{\Oh{1}}
    \end{align*}
    and restrict ourselves to points in $P$ that have $\ell_\infty$-norm at most 
    \[D:=\left((M^{(Q)}+N)M^{(P)}\Delta\right)^{\Oh{(M^{(Q)}+N)M^{(P)}}}\]
    by adding explicit upper bounds to the constraints in $A^{(P)}x= b^{(P)}$. Note that this increases the number of constraints to $M^{(P)}+N$ and the number of variables to $2N$ (due to adding slack variables). We then use the result by Frank and Tardos (see \autoref{cor:franktardos}) to reduce the coefficients in time $N^{\Oh{1}}$. Afterwards, the coefficients in $A^{(P)}$ and $b^{(P)}$ are bounded by $(ND)^{\Oh{N^3}}$.
    Using \autoref{prop:jkgr}, the given problem can be solved in time 
    $\enc{P'}^{2^{\Oh{N}}}\enc{Q}^{\Oh{1}}\log(m)^{\Oh{1}}$.
    Note that the modified polytope $P'$ has encoding length at most 
    \begin{align*}
        \enc{P'} &\leq O\left(2N(M^{(P)}+N)\log\left(\max\left\{\norm{A^{(P)}}_{\infty},\norm{b^{(P)}}_{\infty}\right\}\right)\right)\\
        &=O\left(2N(M^{(P)}+N)\log\left((ND)^{\Oh{N^3}}\right)\right)\\
        &=\Oh{N^5M^{(P)}\log(ND)}
    \end{align*}

    The problem can now be solved in time: 
    
    \begin{align*}
        &\enc{P'}^{2^{\Oh{N}}}\enc{Q}^{\Oh{1}}\log(m)^{\Oh{1}}\\
        =&(N^5M^{(P)}\log(ND))^{2^{\Oh{N}}}\enc{Q}^{\Oh{1}}\log(m)^{\Oh{1}}\\
        \leq&\bigg(M^{(P)}\log\bigg(\left((M^{(Q)}+N)M^{(P)}\max\left\{\norm{A^{(P)}}_\infty, \right.\right.\\
        &\left.\left.\left.\left.\norm{A^{(Q)}}_\infty\right\}\right)^{\Oh{(M^{(Q)}+N)M^{(P)}}}\right)\right)^{2^{\Oh{N}}}\enc{Q}^{\Oh{1}}\log(m)^{\Oh{1}}\\
        \leq&\left(M^{(P)} M^{(Q)}\log\left(\Delta\right)\right)^{2^{\Oh{N}}}(\enc{Q}\log(m))^{\Oh{1}}
    \end{align*}
    Altogether (with the preprocessing), this yields an algorithm running in time
    \begin{align*}
        &\left(M^{(P)} M^{(Q)}\log\left(\Delta\right)\right)^{2^{\Oh{N}}}
        (\enc{Q}\log(m))^{\Oh{1}}+2^{\Oh{(M^{(Q)}+N)^2}} \\
        &\left(M^{(P)}\log\left(\max\left\{\norm{A^{(P)}}_\infty,\right.\right.\right.\left.\left.\left.\norm{A^{(Q)}}_\infty,\norm{b^{(P)}}_\infty,\norm{b^{(Q)}}_\infty\right\}m\right)\right)^{\Oh{1}}\\
        &+N^{\Oh{1}} \\
        =&\left(\left(M^{(P)} M^{(Q)}\log\left(\max\left\{\norm{A^{(P)}}_\infty, \norm{A^{(Q)}}_\infty\right\}\right)\right)^{2^{\Oh{N}}}\right.+2^{\Oh{(M^{(Q)})^2}}\bigg)\\
        &(\enc{P}\enc{Q}\log(m))^{\Oh{1}}
    \end{align*}
    Afterwards, we re-add the solution parts $\max\left\{(x^*)^{(i)}-D,\mathbf{0}\right\}$ and obtain a solution.
\end{proof}

As a side note, \autoref{thm:balancinggeneral} is essentially a preprocessing algorithm like the one by Govzmann \etal~\cite{GMO23} and produces a kernel for certain problems that have a \pqmr:
\begin{corollary}
    Consider a problem with a \pqmr $(P,Q,m)$, given by polyhedra $P=\left\{\left.x\in\R_{\geq0}^N\,\right|\,A^{(P)}x=b^{(P)}\right\}$ and $Q=\left\{\left.x\in\R_{\geq0}^N\,\right|\,A^{(Q)}x=b^{(Q)}\right\}$ with $M^{(P)}$ and $M^{(Q)}$ constraints, respectively. Suppose that this \pqmr resembles a \confip in the sense that vectors in $P$ are (extended) configurations and that $Q$ contains a job/item vector.
    Then by solving the convexified relaxation of the corresponding ILP from \autoref{lem:pqilp} in time 
    \begin{align*}
        &2^{\Oh{(M^{(Q)}+N)^2}} \left(M^{(P)}\log\left(\max\left\{\norm{A^{(P)}}_\infty, \norm{A^{(Q)}}_\infty, \right.\right.\right.\\
        &\left.\left.\left.\norm{b^{(P)}}_\infty,\norm{b^{(Q)}}_\infty\right\}m\right)\right)^{\Oh{1}},
    \end{align*}
    we obtain a reduced instance where the number of jobs/items in a configuration is bounded by 
    \[\left((M^{(Q)}+N)M^{(P)}\max\left\{\norm{A^{(P)}}_\infty, \norm{A^{(Q)}}_\infty\right\}\right)^{\Oh{(M^{(Q)}+N)M^{(P)}}}.\]
\end{corollary}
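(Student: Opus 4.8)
The plan is to derive this corollary almost immediately from \autoref{thm:balancinggeneral}, reinterpreting the $\ell_\infty$-bound it provides for the points of $P$ as a bound on the number of jobs/items per configuration. First I would apply \autoref{thm:balancinggeneral} verbatim to the \pqmr $(P,Q,m)$: solving the convexified relaxation of the ILP from \autoref{lem:pqilp} in the stated time $2^{\Oh{(M^{(Q)}+N)^2}}\big(M^{(P)}\log(\max\{\ldots\}m)\big)^{\Oh{1}}$ and then modifying the right-hand-side $b^{(P)}$ (by subtracting $A^{(P)}\max\{(x^*)^{(i)}-D,\mathbf 0\}$, where $D$ is the proximity bound), we may assume that every point of the modified polytope $P$ has $\ell_\infty$-norm at most
\[D=\left((M^{(Q)}+N)M^{(P)}\max\left\{\norm{A^{(P)}}_\infty,\norm{A^{(Q)}}_\infty\right\}\right)^{\Oh{(M^{(Q)}+N)M^{(P)}}}.\]

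Next I would translate this into the language of configurations. By assumption the \pqmr resembles a \confip, so each integral point of $P$ encodes an (extended) configuration whose job/item coordinates $c_1,\dots,c_k$ (a subset of the $N$ coordinates of $P$) count how many jobs/items of each type the configuration uses; hence the number of jobs/items in the configuration equals $\sum_i c_i=\norm{c}_1\le N\norm{c}_\infty\le ND$. Since $N\le (M^{(Q)}+N)M^{(P)}$, the extra factor $N$ is absorbed into the constant hidden in the exponent of $D$, so $ND$ is still bounded by an expression of the form $\big((M^{(Q)}+N)M^{(P)}\max\{\norm{A^{(P)}}_\infty,\norm{A^{(Q)}}_\infty\}\big)^{\Oh{(M^{(Q)}+N)M^{(P)}}}$, which is exactly the claimed bound.

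Finally I would make the notion of a \emph{reduced instance} precise in the kernelization sense of \autoref{lem:balancing}: the proof of \autoref{thm:balancinggeneral} shows that the part $\max\{(x^*)^{(i)}-D,\mathbf 0\}$ of each block of the ILP solution can be fixed in advance and only the remaining part (living in an $\ell_\infty$-box of side length $D$) must still be determined; re-translating the modified ILP back to a \pqmr via \autoref{lem:pqilp} gives a new, equivalent \pqmr in which every configuration uses at most $ND$ jobs/items -- i.e.\ the large \enquote{bulk} of every configuration has been pre-assigned, leaving a kernel of small configurations. As for difficulties, there really is no substantial obstacle here, since everything hard is already carried out inside \autoref{thm:balancinggeneral}; the only points requiring a little care are (i) observing that \enquote{number of jobs/items in a configuration} is an $\ell_1$- rather than an $\ell_\infty$-quantity and that the conversion factor $N$ disappears into the exponent, and (ii) identifying the pre-assigned part $\max\{(x^*)^{(i)}-D,\mathbf 0\}$ from the proof of \autoref{thm:balancinggeneral} as the object that turns the bound into an actual kernelization.
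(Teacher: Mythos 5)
Your proposal is correct and is essentially the argument the paper has in mind (the paper states this corollary as a side note after \autoref{thm:main} without writing out a proof). You apply \autoref{thm:balancinggeneral} verbatim, identify the pre-assigned block $\max\{(x^*)^{(i)}-D,\mathbf 0\}$ as the kernelization step, and observe that the $\ell_\infty$-bound $D$ on the residual configurations translates into a bound on the job/item count; the only subtlety you need, and correctly handle, is that the $\ell_1$/$\ell_\infty$ conversion factor $N$ is dominated by $(M^{(Q)}+N)M^{(P)}$ and hence disappears into the $\Oh{\cdot}$ in the exponent.
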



\subsection{Balancing lemma}\label{subsec:balancinglemma}
Here, we give a proof of the balancing result by Govzmann \etal~\cite{GMO23} (\autoref{lem:balancing}). We follow their general proof structure, but notation has been aligned with the rest of this work.

In the following, $L_{\pi,i}$ denotes the total load assigned to some machine $i$ by some schedule $\pi$ and $x_i^{(\pi)}$ refers to the vector of job multiplicities assigned to machine $i$ by schedule $\pi$. For $j \in [d]$, the $j$\textsuperscript{th} component of $x_i^{(\pi)}$ is denoted by $x_{i,j}^{(\pi)}$. 

    For machines $i_1,i_2 \in \M$, schedule $\pi$ and some job type $j$, define the \textit{gap} $\Delta_{i_1,i_2}^{\pi,j} = x_{i_1,j}^{(\pi)} - x_{i_2,j}^{(\pi)}$. The \textit{gap size} of $\pi$ is defined by $\mu(\pi) = \max_{i_1,i_2,j} \Delta_{i_1,i_2}^{\pi,j}$ and the set of \textit{gap constellations} by
    
    $$I(\pi) = \{ (i_1,i_2,j) \in \M \times \M \times \D \mid \mu(\pi) = \Delta_{i_1,i_2}^{\pi,j}\}.$$

     Govzman \etal~\cite{GMO23} show that in an arbitrary schedule where the gap size exceeds $\pmax$, the assignment vectors can be rebalanced by performing a finite sequence of exchanges between pairs of machines without affecting the load on either of the machines. The result is a schedule with gap size at most $\pmax$. A consequence is that there exists an optimal integral schedule such that the assignment vectors are component-wise close to those of a fractional schedule where all jobs are equally distributed among all machines. The original proof is broken down into a number of steps starting with one important observation.

    \begin{lemma}[Govzmann \etal~\cite{GMO23}]
        \label{obs:obs2}
        Let $b, n \in \Nwithoutzero$ with $b \leq n$ and let $S = \{s_1 , \dots , s_n \} \subset \Nwithoutzero$ be a multiset of natural numbers. Then there exists a non-empty subset $S' \subseteq S$ with $|S'| \leq b$ and some $\alpha \in \Nwithoutzero$ such that $\sum\limits_{s \in S'} s = \alpha  b$.
    \end{lemma}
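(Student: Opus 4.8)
The plan is to run a pigeonhole argument on prefix sums taken modulo $b$. Since $b \le n$, the multiset $S$ contains at least $b$ elements; fix an arbitrary enumeration $s_1, \dots, s_n$ of $S$ and define the prefix sums $P_0 := 0$ and $P_k := \sum_{i=1}^{k} s_i$ for $k \in [b]$. This produces $b+1$ integers $P_0, P_1, \dots, P_b \in \Nwithzero$.

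Next I would reduce these $b+1$ values modulo $b$. As there are only $b$ residue classes, by the pigeonhole principle there exist indices $0 \le j_1 < j_2 \le b$ with $P_{j_1} \equiv P_{j_2} \pmod{b}$. Hence the difference $P_{j_2} - P_{j_1} = \sum_{i = j_1 + 1}^{j_2} s_i$ is a nonnegative multiple of $b$, say $\alpha \cdot b$ for some $\alpha \in \Nwithzero$.

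Then I would set $S' := \{ s_{j_1 + 1}, \dots, s_{j_2} \}$, viewed as a sub-multiset of $S$. Its cardinality is $j_2 - j_1$, which satisfies $1 \le j_2 - j_1 \le b$; in particular $S'$ is non-empty and $|S'| \le b$. Since every $s_i \in \Nwithoutzero$ and $S'$ is non-empty, we get $\sum_{s \in S'} s = \alpha \cdot b > 0$, which forces $\alpha \in \Nwithoutzero$. This is exactly the desired statement.

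The argument is immediate once the prefix-sum/pigeonhole idea is identified, so there is no genuine obstacle here. The only points that deserve a moment's care are bookkeeping ones: ``subset'' must be read as ``sub-multiset'' (equivalently, a contiguous block of indices in the fixed enumeration), and the hypothesis $b \le n$ is precisely what guarantees that $S$ has enough elements to form all of $P_0, \dots, P_b$.
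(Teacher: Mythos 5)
Your proof is correct and follows essentially the same approach as the paper: define the prefix sums $P_0,\dots,P_b$ over the first $b$ elements, apply the pigeonhole principle to their residues modulo $b$, and take the resulting contiguous block as $S'$. The paper's proof uses the same argument, with the only difference being that you make explicit the (correct) observation that positivity of the $s_i$ forces $\alpha>0$.
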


\begin{proof}
For $\ell \in \{0, \dots, b\}$, define $r_{\ell} = \sum\limits_{k = 1}^{\ell} s_k \mod b$. Because $r_{\ell} \in \{0, \dots, b - 1\}$ for all $\ell$, by pigeonhole principle there exist $\gamma \in \Nwithzero, \delta \in \Nwithoutzero$ with $0 \leq \gamma < \delta \leq b$ such that $r_{\gamma} = r_{\delta}$. Therefore, $\sum\limits_{k = \gamma + 1}^{\delta} s_k = \alpha  b$ for some $\alpha \in \N$. Choosing $S' = \{s_{\gamma+ 1}, \dots, s_{\delta}\}$ yields the claim.         
\end{proof}

    The idea is to apply this result to a vector of job multiplicities $z = (z_1, \dots, z_d)$, which can also be described by a multiset with $d$ different elements and $n'$ elements in total for some $d \leq n'$. Provided that the multiset has at least $p_j$ elements, it is possible by \autoref{obs:obs2} to find a subset of jobs whose processing times sum up to a multiple of $p_j$. This is expressed by the following lemma. 

    \begin{lemma}[Govzmann \etal~\cite{GMO23}]
        \label{lem:govzmanl3}
        Fix some job type $j \in \D$ and let $z = (z_1, \dots, z_d)$ be a vector with $\sum\limits_{j' \in D} z_{j'} \geq p_{j}$. Then there exists a vector $z' \neq \mathbf{0}$ such that
        \begin{itemize}
            \item $0 \leq z'_{j'} \leq z_{j'}$ for all $j' \in [d]$,
            \item $\sum\limits_{j' \in [d]}p_{j'}z'_{j'} = \alpha p_j$ for some $\alpha \in \N$ and
            \item $\sum\limits_{j' \in [d]}z'_{j'} \leq p_j$.
        \end{itemize}
    \end{lemma}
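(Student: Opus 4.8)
The plan is to reduce the statement directly to \autoref{obs:obs2}. Recall that $z$ is a vector of job multiplicities, i.e.\ $z\in\Nwithzero^d$. First I would encode $z$ as a multiset of processing times: let $S$ be the multiset that contains exactly $z_{j'}$ copies of $p_{j'}$ for every $j'\in[d]$, so that $|S|=\sum_{j'\in[d]}z_{j'}=:n'$. Since $p\in\N_{>0}^d$, every element of $S$ is a positive integer, i.e.\ $S\subset\Nwithoutzero$, and by hypothesis $n'\ge p_j$. Hence \autoref{obs:obs2} applies with $n=n'$ and $b=p_j$ (the requirement $b\le n$ is exactly $p_j\le n'$), yielding a non-empty subset $S'\subseteq S$ with $|S'|\le p_j$ and some $\alpha\in\Nwithoutzero$ such that $\sum_{s\in S'}s=\alpha p_j$.

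Next I would translate $S'$ back into a vector. For each $j'\in[d]$, let $z'_{j'}$ be the number of copies of $p_{j'}$ chosen by $S'$; this is well defined and satisfies $0\le z'_{j'}\le z_{j'}$, since $S$ contains only $z_{j'}$ copies of $p_{j'}$ to begin with. Because $S'$ is non-empty, $z'\neq\mathbf{0}$. The remaining two properties are then immediate: $\sum_{j'\in[d]}p_{j'}z'_{j'}=\sum_{s\in S'}s=\alpha p_j$ with $\alpha\in\Nwithoutzero\subseteq\N$, and $\sum_{j'\in[d]}z'_{j'}=|S'|\le p_j$. This establishes all three claimed properties.

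The argument is short and I do not expect a genuine obstacle; the only points requiring care are the minor hypotheses of \autoref{obs:obs2} — namely $b\le n$ (which is the assumption $\sum_{j'}z_{j'}\ge p_j$) and that $S$ consists of positive integers (which uses $p\in\N_{>0}^d$) — together with the bookkeeping that $S'$ selects at most $z_{j'}$ copies of each value $p_{j'}$, so that $z'$ really is bounded component-wise by $z$. It is also worth recording, for the subsequent steps of the balancing argument, that the multiplier $\alpha$ is \emph{strictly} positive, so that the partial job set encoded by $z'$ is non-empty and the exchange it induces between two machines is non-trivial.
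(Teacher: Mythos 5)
Your proof is correct and follows essentially the same route as the paper: encode $z$ as the multiset with $z_{j'}$ copies of $p_{j'}$, invoke \autoref{obs:obs2} with $b=p_j$, and read the resulting subset $S'$ back off as the vector $z'$. Your extra bookkeeping (verifying the hypotheses of \autoref{obs:obs2} explicitly and noting that $\alpha>0$) is harmless and in fact slightly more careful than the paper's version.
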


    \begin{proof}
        From $z$, construct the multiset $S$ containing $z_{j'}$ copies of processing time $p_{j'}$ for each $j' \in [d]$. Set $b = p_j$. Now \autoref{obs:obs2} can be applied, yielding a non-empty subset $S' \subseteq S$ with $|S'| \leq p_j$ such that $\sum\limits_{p \in S'} p = \alpha  b$ for some $\alpha \in \N$. Build a vector $z'$ where $z'_{j'}$ equals the number of copies of $p_{j'}$ in $S'$ for each $j' \in \D$. As $S'$ contains a subset of the jobs in $z$, $0 \leq z'_{j'} \leq z_{j'}$ holds for all $j' \in [d]$. The condition $z' \neq \mathbf{0}$ is met because $S'$ is non-empty. $\sum\limits_{j' \in [d]}z'_{j'} \leq p_j$ holds because of $|S'| \leq p_j$.
    \end{proof}

    These insights can now be used for the proof of one of the key results.   

    \begin{lemma}[Govzmann \etal~\cite{GMO23}]
        \label{lem:govzmanl4}
        For \tf{\P}{}{\{\cmax,\cmin,\envy\}}\footnotemark{}, there exists an optimal schedule $\pi$ such that the gap size $\mu(\pi)$ is at most $\pmax$.
    \end{lemma}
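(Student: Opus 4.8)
The plan is a standard extremal argument combined with a local load-preserving exchange. First I would fix, among all optimal schedules for the chosen objective, a schedule $\pi$ that minimizes lexicographically the pair $(\mu(\pi),|I(\pi)|)$; such a $\pi$ exists because there are only finitely many schedules. Assume for contradiction that $\mu(\pi) > \pmax$ and pick a gap constellation $(i_1,i_2,j) \in I(\pi)$, so that $x_{i_1,j}^{(\pi)} - x_{i_2,j}^{(\pi)} = \mu(\pi) > \pmax \ge p_j$; in particular $x_{i_1,j}^{(\pi)} > \pmax$, i.e.\ machine $i_1$ carries strictly more than $\pmax$ jobs of type $j$. The key observation that makes the argument work for all three objectives simultaneously is that any exchange which moves a job multiset of total processing time $L$ from one machine to another and a job multiset of total processing time $L$ back leaves the load vector $(L_{\pi,1},\dots,L_{\pi,m})$ entirely unchanged, hence also $\cmax$, $\cmin$ and $\envy$; so it suffices to produce such an exchange that strictly decreases $(\mu(\pi),|I(\pi)|)$ while keeping all multiplicities non-negative.

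For the exchange I would invoke \autoref{lem:govzmanl3}. In the main case, where $\sum_{j'} x_{i_2,j'}^{(\pi)} \ge p_j$, apply it to $z := x_{i_2}^{(\pi)}$ and job type $j$: this yields a non-zero $z'$ with $0 \le z'_{j'} \le x_{i_2,j'}^{(\pi)}$ for all $j'$, with $\sum_{j'} p_{j'} z'_{j'} = \alpha p_j$ for some $\alpha \in \N$, and with $\sum_{j'} z'_{j'} \le p_j$; since $\alpha p_j = \sum_{j'} p_{j'} z'_{j'} \le \pmax \sum_{j'} z'_{j'} \le \pmax p_j$ we get $\alpha \le \pmax$. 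Let $\pi'$ be obtained from $\pi$ by moving $\alpha$ jobs of type $j$ from $i_1$ to $i_2$ (feasible because $x_{i_1,j}^{(\pi)} > \pmax \ge \alpha$) and moving the multiset described by $z'$ from $i_2$ to $i_1$ (feasible because $z' \le x_{i_2}^{(\pi)}$ component-wise). Each direction transports total processing time exactly $\alpha p_j$, so $L_{\pi',i} = L_{\pi,i}$ for every $i$ and $\pi'$ is again optimal. Only the vectors $x_{i_1}$ and $x_{i_2}$ change, and the gap of $(i_1,i_2,j)$ becomes $\mu(\pi) - 2\alpha + 2 z'_j$; since $p_j z'_j \le \sum_{j'} p_{j'} z'_{j'} = \alpha p_j$ we have $z'_j \le \alpha$, so the new gap is at most $\mu(\pi)$, and strictly smaller unless $z' = \alpha e_j$ — a degenerate case in which the exchange is trivial and which I would exclude by re-choosing $z'$ (or passing to another machine in place of $i_2$, using that $i_1$ must then still carry many type-$j$ jobs above the average).

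The main obstacle is the bookkeeping over all the remaining constellations. Moving $z'$ onto $i_1$ raises $x_{i_1,j'}^{(\pi)}$ by $z'_{j'} \le p_j < \mu(\pi)$ for the types $j' \ne j$ occurring in $z'$ (and lowers $x_{i_2,j'}^{(\pi)}$ by the same amount), so one must show that no gap grows beyond $\mu(\pi)$, i.e.\ $\mu(\pi') \le \mu(\pi)$, and that the total number of size-$\mu(\pi)$ constellations strictly drops, contradicting minimality of $\pi$. I expect this to require either iterating the exchange (for each type $j'$ performing a further exchange between $i_1$ and the currently most extreme machine for $j'$, each shifting at most $p_j \le \pmax < \mu(\pi)$ so that no fresh size-$\mu(\pi)$ constellation can survive) or refining the progress measure (e.g.\ to the sorted multiset of all gaps, compared lexicographically). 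A secondary obstacle is the small-$i_2$ case $\sum_{j'} x_{i_2,j'}^{(\pi)} < p_j$: there one instead applies \autoref{lem:govzmanl3} to $x_{i_1}^{(\pi)}$ (which has at least $x_{i_1,j}^{(\pi)} > p_j$ jobs) and must separately argue that such a configuration — an almost empty machine $i_2$ beside a machine $i_1$ overloaded with more than $\pmax$ jobs of a single type — is either directly fixable by a single job move or cannot arise in an optimal schedule for $\cmax$, $\cmin$ or $\envy$. Once the progress measure is shown to strictly decrease, finitely many such exchanges bring the gap size down to at most $\pmax$ while remaining optimal, which is the claim.
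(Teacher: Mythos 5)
The overall extremal/exchange strategy — fix a schedule minimizing $(\mu,|I|)$ lexicographically, then contradict minimality with a local exchange supplied by \autoref{lem:govzmanl3} — is the same as the paper's, but the central technical choice, the vector $z$ to which \autoref{lem:govzmanl3} is applied, is wrong, and that is exactly what causes the bookkeeping problem you flag as "the main obstacle." You apply the lemma to the full job vector $z = x_{i_2}^{(\pi)}$, so $z'$ may contain jobs of a type $j'$ for which $i_2$ does \emph{not} outnumber $i_1$. Moving such jobs onto $i_1$ can push $x_{i_1,j'}$ past $x_{i_2,j'}$ and past the current global maximum for type $j'$, which is precisely how a new constellation of size $\geq\mu(\pi)$ can appear and why your progress measure need not drop — iterating or refining the measure does not obviously repair this. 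The paper instead applies \autoref{lem:govzmanl3} to the componentwise \emph{excess} $z_{j'} := \max\{0,\, x_{i_2,j'}^{(\pi)} - x_{i_1,j'}^{(\pi)}\}$. That single change buys three things simultaneously: after the swap, $x_{i_1,j'}$ and $x_{i_2,j'}$ stay inside the interval spanned by their old values for every $j'\neq j$, so no new size-$\mu(\pi)$ constellation can arise and the constellation count drops strictly; $z'_j = 0$ automatically (since $x_{i_1,j}>x_{i_2,j}$ kills that coordinate of the excess), so the degenerate case $z'=\alpha e_j$ that you try to rule out by "re-choosing $z'$" simply cannot occur; and the hypothesis $\sum_{j'} z_{j'}\geq p_j$ is delivered by a clean inequality chain using the load bound, not by your cruder test $\sum_{j'} x_{i_2,j'}\geq p_j$.

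The second gap is the case split. The paper conditions on the \emph{load difference}, not a job count: if $L_{\pi,i_1}-L_{\pi,i_2}\geq p_j$, no load-preserving exchange is needed — moving a single type-$j$ job from $i_1$ to $i_2$ brings the two loads strictly closer and therefore cannot worsen $\cmax$, $\cmin$, or $\envy$, and it strictly shrinks the $(i_1,i_2,j)$ gap. Your framework insists that the only safe moves are load-preserving, so it has no analogue of this case and leaves the "almost empty $i_2$" situation as an unresolved speculation. In fact that speculation is moot: $\sum_{j'} x_{i_2,j'}<p_j$ together with $x_{i_1,j}>\pmax$ forces $L_{\pi,i_1}-L_{\pi,i_2}>p_j$, placing you squarely in the simple single-move case. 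It is only in the complementary case $L_{\pi,i_1}-L_{\pi,i_2}<p_j$ that the load-preserving swap is needed, and there the load bound is exactly the ingredient that makes $\sum_{j'} z_{j'}p_{j'}>\pmax p_j$ (hence $\sum_{j'} z_{j'}\geq p_j$) hold for the excess vector. With that $z$, a single exchange already makes $(\mu,|I|)$ drop lexicographically, which is the conclusion you wanted.
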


\footnotetext{The original document includes one more objective, $\sum\limits_{i \in \M} f(L_i^{\pi})$, where $f$ is some convex oracle function evaluating the load on machine $i \in \M$ for some schedule $\pi$.}

    \begin{proof}
        Let $\pi_1$ be an optimal schedule with $\mu(\pi_1) > \pmax$. Govzman \etal~\cite{GMO23} show that $\pi_1$ can be transformed into a new schedule $\pi_2$ such that 
        \begin{enumerate}
            \item $\mu(\pi_2) \leq \mu(\pi_1)$ and \label[cons]{cons:gapsize}
            \item if $\mu(\pi_2) = \mu(\pi_1)$ then the set $I(\pi_2)$ of gap constellations is strictly smaller than $I(\pi_1)$. \label[cons]{cons:gapconst}
        \end{enumerate}

        As $\mu(\pi_1) \leq \max_{j \in [d]}n_j$ and $|I(\pi_1)| \leq m^2\pmax$, the number of such transformation steps needed to obtain the target schedule is bounded. 
        For one transformation step, choose $i_1,i_2 \in \M, j \in [d]$ such that $\Delta_{i_1,i_2}^{\pi_1,j} = x_{i_1,j}^{(\pi_1)} - x_{i_2,j}^{(\pi_1)} = \mu(\pi_1)$. There are two cases to consider:

        If $L_{\pi_1,i_1} - L_{\pi_1,i_2} \geq p_j$, then one job of type $j$ can be transferred from $i_1$ to $i_2$. This yields a schedule $\pi_2$ with $L_{\pi_2,i_1} = L_{\pi_1,i_1} - p_j$ and $L_{\pi_2,i_2} = L_{\pi_1,i_2} + p_j$. Note that for the objective functions $\cmin,\cmax$ and $\envy$, the value of $\pi_2$ can only improve or remain unchanged compared to $\pi_1$.  Hence, $\pi_2$ is still an optimal schedule. Moreover, $\Delta_{i_1,i_2}^{\pi_2,j}$ is strictly smaller than $\Delta_{i_1,i_2}^{\pi_1,j}$. Therefore, $\mu(\pi_2) \leq \mu(\pi_1)$ and if $\mu(\pi_2) = \mu(\pi_1)$, then $|I(\pi_2)| < |I(\pi_1)|$.

        In the other case, where $L_{\pi_1,i_1} - L_{\pi_1,i_2} < p_j$, Govzman \etal~\cite{GMO23} extract a vector $z$ of jobs which can safely be moved from $i_2$ to $i_1$ without widening the gap size. This vector is described by $z_{j'} = \max\{0, x_{i_2,j'}^{(\pi_1)} - x_{i_1,j'}^{(\pi_1)}\}$ for each $j' \in [d]$. In other words, it contains only jobs of those types where $i_2$ outnumbers $i_1$. Next, the following inequality is derived for the processing time of $z$:

        \begin{align}
            \sum\limits_{j' \in [d]}z_{j'}p_{j'} &= \sum\limits_{j' \in [d]}\max\{0, x_{i_2,j'}^{(\pi_1)} - x_{i_1,j'}^{(\pi_1)}\}p_{j'}\label[ineq]{ineq:i1} \\
            &= \sum\limits_{j' \in [d] \setminus \{j\}}\max\{0, x_{i_2,j'}^{(\pi_1)} - x_{i_1,j'}^{(\pi_1)}\}p_{j'} + \max\{0, x_{i_2,j}^{(\pi_1)} - x_{i_1,j}^{(\pi_1)}\}p_{j}\\
            &\geq \sum\limits_{j' \in [d] \setminus \{j\}}( x_{i_2,j'}^{(\pi_1)} - x_{i_1,j'}^{(\pi_1)})p_{j'} + \max\{0, x_{i_2,j}^{(\pi_1)} - x_{i_1,j}^{(\pi_1)}\}p_{j}\\
            &= \sum\limits_{j' \in [d] \setminus \{j\}} x_{i_2,j'}^{(\pi_1)}p_{j'} - \sum\limits_{j' \in [d] \setminus \{j\}} x_{i_1,j'}^{(\pi_1)}p_{j'} + \max\{0, x_{i_2,j}^{(\pi_1)} - x_{i_1,j}^{(\pi_1)}\}p_{j}\\
            &= L_{\pi_1,i_2} - x_{i_2,j}^{(\pi_1)}p_{j} - (L_{\pi_1,i_1} - x_{i_1,j}^{(\pi_1)}p_{j}) + \max\{0, x_{i_2,j}^{(\pi_1)} - x_{i_1,j}^{(\pi_1)}\}p_{j}\\
            &= L_{\pi_1,i_2} - L_{\pi_1,i_1} + (x_{i_1,j}^{(\pi_1)} - x_{i_2,j}^{(\pi_1)})p_j + \max\{0, x_{i_2,j}^{(\pi_1)} - x_{i_1,j}^{(\pi_1)}\}p_{j}\\
            &\geq L_{\pi_1,i_2} - L_{\pi_1,i_1} + (x_{i_1,j}^{(\pi_1)} - x_{i_2,j}^{(\pi_1)})p_j\\
            &\geq L_{\pi_1,i_2} - L_{\pi_1,i_1} + (\pmax + 1)p_j \label[ineq]{ineq:i8}\\
            &> -p_j + (\pmax + 1)p_j\label[ineq]{ineq:i9}\\ 
            &= \pmax p_j
        \end{align}

        \cref{ineq:i8} holds because of the assumption $x_{i_2,j}^{(\pi_1)} - x_{i_1,j}^{(\pi_1)}  = \Delta_{i_1,i_2}^{\pi_1,j} = \mu(\pi_1) > \pmax$ and \cref{ineq:i9} follows from $L_{\pi_1,i_1} - L_{\pi_1,i_2} < p_j$. 

        The next step is to show that the requirement for \autoref{lem:govzmanl3} is met:

        $$\sum\limits_{j' \in [d]}z_{j'} = \sum\limits_{j' \in [d]}z_{j'} \frac{p_{j'}}{p_{j'}} \geq \sum\limits_{j' \in [d]}z_{j'} \frac{p_{j'}}{\pmax} = \frac{1}{\pmax} \sum\limits_{j' \in [d]}z_{j'} p_{j'} \geq p_j$$
    
       At this point, \autoref{lem:govzmanl3} can be applied, yielding a vector $z' \in \N^d$ with $z' \neq \mathbf{0}$ such that $0 \leq z'_{j'} \leq z_{j'}$ for all $j' \in [d]$ and $\sum\limits_{j' \in [d]}p_{j'}z'_{j'} = \alpha p_j$ for some $\alpha \in \N$ and $\sum\limits_{j' \in [d]}z'_{j'} \leq p_j$. This vector represents the set of jobs from $i_2$ to be be swapped with $\alpha$ jobs of type $p_j$ from machine $i_1$. Since the load remains unchanged on both machines, $\pi_2$ is also an optimal schedule. 

       Recall \cref{cons:gapsize,cons:gapconst}, claiming a gap size of at most $\mu(\pi_1)$ for $\pi_2$ and a decreased number of gap constellations in case the gap size did not shrink. Govzman \etal~\cite{GMO23} proceed by showing that the swap did not move too many jobs of type $j$ from $i_2$ to $i_1$ and thereby reverse the gap. More formally:
       $$\alpha = \frac{1}{p_j} \sum\limits_{j' \in [d]}p_{j'}z'_{j'} \leq \frac{1}{p_j} \sum\limits_{j' \in [d]}\pmax z'_{j'} = \frac{\pmax}{p_j} \sum\limits_{j' \in [d]} z'_{j'} \leq \pmax$$

       The last inequality holds because of $\sum\limits_{j' \in [d]}z'_{j'} \leq p_j$. Note that $\alpha$ cannot exceed the number of jobs of type $p_j$ because of $\alpha \leq \pmax < \Delta_{i_1,i_2}^{\pi_1,j} = x_{i_1,j}^{(\pi_1)} - x_{i_2,j}^{(\pi_1)} = \mu(\pi_1)$.
       Now recall that $\Delta_{i_1,i_2}^{\pi_1,j}$ was strictly greater than $\pmax$. For $\Delta_{i_2,i_1}^{\pi_2,j}$, this implies:
       \begin{align*}
       x_{i_2,j}^{(\pi_2)} - x_{i_1,j}^{(\pi_2)} = x_{i_2,j}^{(\pi_1)} + \alpha - (x_{i_1,j}^{(\pi_1)} - \alpha) &= x_{i_2,j}^{(\pi_1)} - x_{i_1,j}^{(\pi_1)} + 2 \alpha \\ &< - \pmax + 2 \alpha \\ &\leq - \pmax + 2 \pmax \\ &= \pmax 
       \end{align*}
       
       Furthermore, the number of jobs on $i_1$ strictly decreases because of $\alpha > 0$. 
       It remains to prove that the number of gap constellations involving other machines did not increase in $\pi_2$. Let $\Delta_{i_3,i_4}^{\pi_2,j'}$ be some gap constellation introduced by the swap. Then either $i_3 = i_1$ and $x_{i_3,j'}^{(\pi_2)} = x_{i_2,j'}^{(\pi_1)}$ or $i_4 = i_2$ and $x_{i_4,j'}^{(\pi_2)} = x_{i_2,j'}^{(\pi_1)}$. In other words, the transformation of $\pi_1$ into $\pi_2$ only caused some gap constellations to be swapped between $i_1$ and $i_2$ instead of creating new ones. This settles the proof of \autoref{lem:govzmanl4}.

    \end{proof}

    Let $\hat{\pi}$ be a fractional assignment where all jobs are equally distributed among all machines, i.e., where each machine gets assigned $\frac{n_p}{m}$ jobs of processing time $p$ for each $p \in D$. Govzman \etal~\cite{GMO23} use this optimal fractional schedule $\hat{\pi}$ as a reference for an optimal integral schedule and establish the following.
    \begin{lemma}[Govzmann \etal~\cite{GMO23}]
        \label{lem:govzmanl5}
        For \tf{\P}{}{\{\cmax,\cmin,\envy\}}, there exists an optimal schedule $\pi$ such that $|x_{i,j}^{(\pi_1)} - x_{i,j}^{(\hat{\pi})}| \leq \pmax$ for all $j \in [d], i \in \M$.
    \end{lemma}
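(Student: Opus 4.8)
The plan is to derive this directly from \autoref{lem:govzmanl4}, which already guarantees an optimal schedule with small gap size, and then to use the simple fact that the coordinatewise average of the machine load vectors is exactly the fractional assignment $\hat{\pi}$.

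First I would invoke \autoref{lem:govzmanl4} to fix an optimal schedule $\pi$ with $\mu(\pi) \le \pmax$. By definition of the gap size, this means that for every pair of machines $i_1, i_2 \in \M$ and every job type $j \in [d]$ we have $|x_{i_1,j}^{(\pi)} - x_{i_2,j}^{(\pi)}| = |\Delta_{i_1,i_2}^{\pi,j}| \le \mu(\pi) \le \pmax$; in particular, for each fixed $j$, the numbers $x_{1,j}^{(\pi)}, \dots, x_{m,j}^{(\pi)}$ satisfy $\max_{i} x_{i,j}^{(\pi)} - \min_{i} x_{i,j}^{(\pi)} \le \pmax$.

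Next, fix a job type $j$ and observe that $\sum_{i \in \M} x_{i,j}^{(\pi)} = n_j$ because $\pi$ schedules all jobs, so the average value is $\frac{n_j}{m}$, which is precisely $x_{i,j}^{(\hat{\pi})}$ for every $i$ by the definition of $\hat{\pi}$. Since an average always lies between the minimum and the maximum of the values being averaged, we get $\min_{i'} x_{i',j}^{(\pi)} \le \frac{n_j}{m} \le \max_{i'} x_{i',j}^{(\pi)}$. Hence for every machine $i$,
\[
x_{i,j}^{(\pi)} - x_{i,j}^{(\hat{\pi})} \;=\; x_{i,j}^{(\pi)} - \tfrac{n_j}{m} \;\le\; \max_{i'} x_{i',j}^{(\pi)} - \min_{i'} x_{i',j}^{(\pi)} \;\le\; \pmax,
\]
and symmetrically $x_{i,j}^{(\hat{\pi})} - x_{i,j}^{(\pi)} \le \pmax$, so $|x_{i,j}^{(\pi)} - x_{i,j}^{(\hat{\pi})}| \le \pmax$. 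As $i$ and $j$ were arbitrary, this proves the claim.

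There is essentially no hard step here: the entire content is packaged into \autoref{lem:govzmanl4}, and the remaining argument is the elementary observation that a bounded spread around the common sum forces each coordinate to be close to the mean. The only point to be careful about is that $x_{i,j}^{(\hat{\pi})} = n_j/m$ need not be integral, but that is harmless since $\hat{\pi}$ is a fractional assignment and the bound is stated as an absolute value of a real difference.
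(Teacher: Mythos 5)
Your proof is correct and uses essentially the same approach as the paper: both invoke \autoref{lem:govzmanl4} to get $\mu(\pi)\le\pmax$ and then exploit the fact that $x_{i,j}^{(\hat\pi)}=n_j/m$ is the average of the $x_{i,j}^{(\pi)}$ over machines. The only cosmetic difference is that the paper realizes the averaging observation by picking explicit witness machines $i_1,i_2$ on either side of the mean and telescoping through them, whereas you state directly that the mean is sandwiched between the coordinate min and max; these are the same argument.
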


    \begin{proof}

        Consider an optimal integral schedule $\pi$ with $\mu(\pi) \leq \pmax$. Such a schedule exists by \autoref{lem:govzmanl4}. Let $j \in [d]$ and let $i_1, i_2 \in \M$ be two distinct machines such that $x_{i_1,j}^{(\pi)} \leq x_{i_1,j}^{(\hat{\pi})}$ and $x_{i_2,j}^{(\pi)} \geq x_{i_2,j}^{(\hat{\pi})}$. Note that such a pair of machines exists for each job type. For any machine $i \in \M$,         
        \begin{align}
            x_{i,j}^{(\hat{\pi})} - x_{i,j}^{(\pi)} &= x_{i,j}^{(\hat{\pi})} - x_{i_2,j}^{(\pi)} + x_{i_2,j}^{(\pi)} - x_{i,j}^{(\pi)}\label[equal]{eq:i20} \\
            &\leq x_{i,j}^{(\hat{\pi})} - x_{i_2,j}^{(\pi)} + \pmax\label[ineq]{ineq:i21}\\
            &= x_{i_2,j}^{(\hat{\pi})} - x_{i_2,j}^{(\pi)} + \pmax\label[equal]{eq:i22}\\
            &\leq \pmax \label[ineq]{ineq:i23}
        \end{align}
        \cref{ineq:i21} follows from \autoref{lem:govzmanl4}. \cref{eq:i22} holds because $x_{i',j'}^{(\hat{\pi})} = x_{i'',j'}^{(\hat{\pi})}$ for each $j' \in [d], i', i'' \in \M$. Analogously:
        \begin{align}
             x_{i,j}^{(\pi)} - x_{i,j}^{(\hat{\pi})} &= 
             x_{i,j}^{(\pi)} - x_{i_1,j}^{(\pi)} + x_{i_1,j}^{(\pi)} -  x_{i,j}^{(\hat{\pi})}\label[equal]{eq:i24}\\
            &\leq \pmax + x_{i_1,j}^{(\pi)} -  x_{i,j}^{(\hat{\pi})}\\
            &= \pmax + x_{i_1,j}^{(\pi)} - x_{i_1,j}^{(\hat{\pi})}\\
            &\leq \pmax\label[ineq]{ineq:i27}
        \end{align}
        
    \end{proof}

From these insights, the balancing result can finally be derived. 

    \balancinglemma*

\begin{proof}

As in the proof of \autoref{lem:govzmanl5}, consider an optimal integral schedule $\pi$ with $\mu(\pi) \leq \pmax$ and let $i_1, i_2 \in \M$ be two distinct machines such that $x_{i_1,j}^{(\pi)} \leq x_{i_1,j}^{(\hat{\pi})}$ and $x_{i_2,j}^{(\pi)} \geq x_{i_2,j}^{(\hat{\pi})}$ for some $j \in \D$. Then by \labelcref{eq:i20} to \labelcref{ineq:i23}, $x_{i,j}^{(\pi)}$ is at least $x_{i,j}^{(\hat{\pi})} - \pmax = \frac{n_j}{m} - \pmax$. Because the jobs are integrally assigned and $\left \lceil \frac{n_j}{m}\right \rceil - 1 - \pmax$ is strictly smaller than $\frac{n_j}{m} - \pmax$, $x_{i,j}^{(\pi)}$ must be at least $\left \lceil \frac{n_j}{m}\right \rceil - \pmax$. For the same reason, $x_{i,j}^{(\pi)}$ cannot exceed $\left \lfloor \frac{n_j}{m}\right \rfloor + \pmax$. This makes it possible to preprocess problem instances \tf{\P}{}{\{\cmax,\cmin,\envy\}} by assigning $\left \lceil \frac{n_j}{m}\right \rceil - \pmax$ jobs of each type to each machine. The outcome is a kernel with at most $2\pmax  m$ jobs of each job type. Moreover, because of $x_{i,j}^{(\pi)} \leq \left \lfloor \frac{n_j}{m}\right \rfloor + \pmax$, at most $2  \pmax$ of the remaining jobs of each type $j \in \D$ need to be assigned to any machine $i \in \M$. For $d$ job types, each with processing time at most $\pmax$, this limits the load on every machine on the remaining instance to $2  \pmax^2d$. The running time of this preprocessing step is $\Oh{d}$. 

\end{proof}

\section{Complexity}\label{sect:complexity}
This section covers lower bounds and a connection between uniform and identical machines. In the context of uniform machines ($\Q$), we have $\tau$ different machine types (and hence $m\in\N^\tau$ is a vector), each with its own speed $s_i$, and processing a job of type $j$ on a machine of type $i$ induces load $\frac{p_j}{s_i}$. So $\P$ is the special case of $\Q$ where $\tau=1$.

\subsection{Connection Between Uniform and Identical Machines}
It is a long standing question whether \tf{\P}{}{\cmax} is FPT w.r.t. parameter $d$ (with high-multiplicity encoding). The following result shows that this question is equivalent to asking whether \tf{\Q}{}{\cmax} is FPT with parameters $d$ and $\tau$. 

\begin{restatable}{theorem}{connectioncmax}
    The following statements are equivalent:
    \begin{enumerate}
        \item \tf{\P}{}{\cmax} is FPT w.r.t. $d$ (in the high-multiplicity setting).
        \item \tf{\Q}{}{\cmax} is FPT w.r.t. $d$ and $\tau$ (in the high-multiplicity setting, for jobs and machines).
    \end{enumerate}
\end{restatable}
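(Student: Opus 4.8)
The direction ``2 $\Rightarrow$ 1'' is immediate: \tf{\P}{}{\cmax} is exactly the restriction of \tf{\Q}{}{\cmax} to instances with $\tau=1$ and a single speed $s_1=1$, so an FPT algorithm for \tf{\Q}{}{\cmax} parameterized by $d+\tau$ runs in time $f(d+1)\enc{I}^{\Oh{1}}$ on such instances, which is FPT with respect to $d$.

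For ``1 $\Rightarrow$ 2'' the plan is to give a polynomial-time (in the high-multiplicity encoding) reduction turning a \tf{\Q}{}{\cmax} decision instance into a \tf{\P}{}{\cmax} decision instance with only $d+\tau$ job types and mapping the parameter $(d,\tau)$ to $d+\tau$; FPT then transfers. Fix a candidate makespan $T$ (one binary-searches over the $\enc{I}^{\Oh{1}}$ many relevant values of $T$, using that an optimal makespan is the ratio of an integer bounded by $\sum_j n_jp_j$ over an integer at most $\smax$). A machine of type $i$ with speed $s_i$ can, within makespan $T$, host original jobs of total processing time at most $c_i:=s_iT$. I would keep all $\sum_{i=1}^{\tau}m_i$ machines, set the target makespan of the identical-machines instance to $C:=2\smax T+1$, and take as job types the original $d$ types (same processing times $p_j$, same multiplicities $n_j$) together with $\tau$ new \emph{blocker} types, where blocker $i$ has processing time $b_i:=C-c_i$ and multiplicity $m_i$. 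Since $b_i\ge C-\smax T=\smax T+1>C/2$, no machine can carry two blockers; as there are exactly $\sum_i m_i$ blockers and the same number of machines, every machine carries exactly one blocker, and a machine carrying a type-$i$ blocker has residual capacity $C-b_i=c_i$. Hence the multiset of residual capacities of the identical machines is precisely the multiset containing $c_i$ with multiplicity $m_i$ for each $i\in[\tau]$, i.e.\ the capacity profile of the uniform instance, and scheduling the original jobs within these capacities is equivalent on both sides; thus the \tf{\P}{}{\cmax} instance has makespan at most $C$ iff the \tf{\Q}{}{\cmax} instance has makespan at most $T$. Applying the assumed algorithm for \tf{\P}{}{\cmax} (with $d+\tau$ job types) and undoing the binary search yields the FPT algorithm for \tf{\Q}{}{\cmax}.

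The points that a careless version would get wrong — and therefore the parts to treat most carefully — are (i) choosing $C$ large enough that blockers are pairwise non-combinable \emph{and} strictly positive, so that the blockers force exactly the right residual-capacity multiset with no machine accidentally left blocker-free (which would correspond to a phantom speed-$\smax$ machine), while keeping $\log C\le\enc{I}^{\Oh{1}}$ so the reduction stays polynomial in the high-multiplicity encoding; and (ii) the bookkeeping between the decision problems and the underlying optimization problems, i.e.\ that the search over $T$ only costs an $\enc{I}^{\Oh{1}}$ factor (cf.\ \autoref{sect:omittedproofs}). The same construction, with $c_i$ playing the role of a lower load bound and a symmetric choice of blocker sizes, handles objective $\cmin$, giving the analogous equivalence of \autoref{thm:fptrelated}. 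I expect no deep obstacle beyond these checks: once the blocker sizes are pinned down, correctness is a short case analysis showing that every identical machine receives exactly one blocker.
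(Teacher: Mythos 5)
Your proposal is correct and matches the paper's proof essentially line for line: the same trivial specialization for $2\Rightarrow1$, and for $1\Rightarrow2$ the same reduction with target makespan $u'=2\smax u+1$, blocker/dummy jobs of size $u'-s_iu$ with multiplicity $m_i$, and the same "$>u'/2$" argument forcing exactly one blocker per machine so that the residual-capacity multiset reproduces the uniform capacity profile. The remarks about the binary search over $T$ and the analogous construction for $\cmin$ also align with what the paper does in its omitted-proofs section.
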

\begin{proof}
    $(2.\implies 1.)$ This direction is trivial: An algorithm with running time $f(d,\tau)\enc{I}^{\Oh{1}}$ for \tf{\Q}{}{\cmax} solves \tf{\P}{}{\cmax} in time $f(d,1)\enc{I}^{\Oh{1}}=f(d)\enc{I}^{\Oh{1}}$.

    $(1.\implies 2.)$ We give a reduction from \tf{\Q}{}{\cmax} to \tf{\P}{}{\cmax}. For an instance $I$ of \tf{\Q}{}{\cmax}, consisting of a processing time vector $p\in\N^d_{>0}$, job multiplicity vector $n\in\N^d_{>0}$, speed vector $s\in\N^\tau_{>0}$, machine multiplicity vector $m\in\N^\tau_{>0}$ and makespan threshold $u\in\N_{>0}$, we construct an instance $I'$ of \tf{\P}{}{\cmax} like this: We set the new makespan threshold to $u':=2\smax u+1$. The new processing times are $p_1,\hdots,p_d,u'-s_1u,\hdots,u'-s_\tau u$, the multiplicities are $n_1,\hdots,n_d,$ $m_1,\hdots,m_\tau$ and there are $m_1+\hdots+m_\tau$ (identical) machines.

    Suppose that the \tf{\Q}{}{\cmax} instance $I$ is feasible, i.e., there is a schedule $\sigma$ with makespan at most $u$. We construct a schedule $\sigma'$ for the \tf{\P}{}{\cmax} instance $I'$ in the following way: For every machine $k$ of type $i\in[\tau]$ in $I$, take all the jobs that $\sigma$ schedules on $k$ and place them on any one of the machines in $I'$, together with one of the dummy jobs that has processing time $u'-s_iu$. Note that there are as many dummy jobs as there are machines in $I$, matching the number of machines of each type. We schedule the jobs from all machines this way, obtaining schedule $\sigma'$; now, since $\sigma$ is feasible (so each machine has jobs with total processing time at most $s_iu$), the jobs scheduled on some machine $l\in[m_1+\hdots+m_\tau]$ in $\sigma'$ have total processing time at most $u'-s_iu+s_iu=u'$, if the original machine had type $i$. So in $\sigma'$, each machine has completion time at most $u'$ and hence $I'$ is feasible. 

    For the other direction, suppose that the constructed instance $I'$ is feasible, i.e., there is a schedule $\sigma'$ with makespan at most $u'$. The key observation is that no two dummy jobs can be scheduled on the same machine, since $u'-s_iu+u'-s_ju\geq2u'-2s_{\min}u=4\smax u+2-2s_{\min}u\geq2\smax u+2>2\smax u+1=u'$ holds for any machine types $i,j\in[\tau]$. Construct a schedule $\sigma$ for $I$ as follows: For each machine $k\in[m_1+\hdots+m_\tau]$, take all non-dummy jobs that are scheduled on this machine and place them on a machine of type $i\in[\tau]$, where $i$ is the type corresponding to the (singular) dummy job placed on machine $k$ in $\sigma'$. Doing this for each machine, we obtain a schedule $\sigma$ for all jobs. Now, consider some machine $k$ of type $i\in[\tau]$ and the jobs scheduled on it by $\sigma$. Their total processing time is at most $u'-(u'-s_iu)=s_iu$, which means that in $\sigma$, for every type $i\in[\tau]$, each machine of type $i$ has load at most $s_iu$, and hence every job is completed within time $u$. So $I$ is feasible.

    Now, consider the parameters of the constructed \tf{\P}{}{\cmax} instance. We have parameters
    \begin{itemize}
        \item $\pmax'=\Oh{\smax u}$,
        \item $d'=d+\tau$,
        \item $u'=\Oh{\smax u}$,
        \item $m'=\Oh{\tau m_{\max}}$ and
        \item $\nmax'=\max\{\nmax,m_{\max}\}$.
    \end{itemize}
    So the encoding length $\enc{I'}$ is bounded by:
    \begin{align*}
        \enc{I'}\leq&\Oh{\log(m')+d'(\log(\pmax')+\log(\nmax'))+\log(u')} \\
        =&\Oh{\log(\tau m_{\max})+(d+\tau)(\log(\smax u)\\
        &+\log(\max\{\nmax,m_{\max}\}))+\log(\smax u)} \\
        =&\enc{I}^{\Oh{1}}
    \end{align*}
    Now, suppose there exists an algorithm for \tf{\P}{}{\cmax} with running time $f(d')\enc{I'}^{\Oh{1}}$. Given an instance $I$ of \tf{\Q}{}{\cmax}, we can construct an instance $I'$ of \tf{\P}{}{\cmax} using the above reduction, in time $\enc{I}^{\Oh{1}}$. We then apply the algorithm for \tf{\P}{}{\cmax} to $I'$ in time $f(d')\enc{I'}^{\Oh{1}}=f(d+\tau)\enc{I}^{\Oh{1}}$ and transform the solution into a solution for $I'$, in time $\enc{I}^{\Oh{1}}$. In total, this yields an algorithm for \tf{\Q}{}{\cmax} that runs in time $\enc{I}^{\Oh{1}}+f(d+\tau)\enc{I}^{\Oh{1}}+\enc{I}^{\Oh{1}}=f(d+\tau)\enc{I}^{\Oh{1}}$, which concludes the proof.
\end{proof}

A similar reduction also works for the objective $\cmin$:
\begin{restatable}{theorem}{connectioncmin}
    The following statements are equivalent:
    \begin{enumerate}
        \item \tf{\P}{}{\cmin} is FPT w.r.t. $d$ (in the high-multiplicity setting).
        \item \tf{\Q}{}{\cmin} is FPT w.r.t. $d$ and $\tau$ (in the high-multiplicity setting, for jobs and machines).
    \end{enumerate}
\end{restatable}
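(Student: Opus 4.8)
The plan is to prove the equivalence by the same strategy used for $\cmax$: the implication $(2.\implies 1.)$ is immediate, since an $f(d,\tau)\enc{I}^{\Oh{1}}$-time algorithm for \tf{\Q}{}{\cmin} solves \tf{\P}{}{\cmin} by setting $\tau=1$. For $(1.\implies 2.)$ I would give a parameterized reduction from \tf{\Q}{}{\cmin} to \tf{\P}{}{\cmin} that, from an instance with processing times $p\in\N^d_{>0}$, multiplicities $n\in\N^d_{>0}$, speeds $s\in\N^\tau_{>0}$, machine multiplicities $m\in\N^\tau_{>0}$ and load threshold $\ell$, produces an identical-machines instance with $d+\tau$ job types. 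As in the $\cmax$ case the idea is to attach to each (future) machine exactly one \emph{filler job} whose size encodes the speed of the uniform machine it represents: a machine of speed $s_i$ must carry a processing-time mass of at least $\ell s_i$ to reach load $\ell$, so the filler of type $i$ should make up the difference to a common target.

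Concretely, I would first discard trivially infeasible inputs: \tf{\Q}{}{\cmin} is feasible only if $\sum_{j\in[d]}p_jn_j\geq \ell\sum_{i\in[\tau]}m_is_i$, which is checkable in polynomial time. Assuming this holds, set the new threshold $\ell':=\sum_{j\in[d]}p_jn_j+1$, introduce for each $i\in[\tau]$ a new job type of processing time $\ell'-\ell s_i$ (a positive integer, since $\ell'>\ell\sum_i m_is_i\geq \ell\smax\geq \ell s_i$) with multiplicity $m_i$, keep the original $d$ job types with multiplicities $n_j$, and use $\sum_{i\in[\tau]}m_i$ identical machines. The key structural observation, replacing the ``no two fillers fit'' argument of the $\cmax$ reduction, is a counting argument: in any schedule of the new instance with all machine loads at least $\ell'$, every machine must contain a filler job, because $\ell'$ strictly exceeds the total mass $\sum_j p_jn_j$ of the original jobs; since there are exactly $\sum_i m_i$ fillers and exactly $\sum_i m_i$ machines, every machine contains exactly one filler.

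With this in hand, correctness is routine. Given a \tf{\Q}{}{\cmin}-schedule of value $\geq\ell$, for each machine of type $i$ I place its jobs (mass $\geq \ell s_i$) together with one filler of type $i$ on a fresh identical machine, giving load $\geq \ell s_i + (\ell'-\ell s_i)=\ell'$; all jobs and all fillers are used and the machine count matches. Conversely, given an identical-machines schedule of value $\geq \ell'$, each machine holds exactly one filler, say of type $i_k$ on machine $k$; the remaining mass on $k$ is at least $\ell'-(\ell'-\ell s_{i_k})=\ell s_{i_k}$, so assigning those jobs to a fresh speed-$s_{i_k}$ machine yields load $\geq \ell$, and since there are $m_i$ fillers of type $i$ the resulting schedule uses exactly $m_i$ machines of type $i$. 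For the parameters, $d'=d+\tau$ and every number in the new instance ($\ell'$, the filler sizes, the machine count $\sum_i m_i$, the multiplicities) is bounded by a polynomial in the numbers of the original instance, so $\enc{I'}\leq\enc{I}^{\Oh{1}}$; an $f(d')\enc{I'}^{\Oh{1}}$-time algorithm for \tf{\P}{}{\cmin} thus solves \tf{\Q}{}{\cmin} in time $f(d+\tau)\enc{I}^{\Oh{1}}$.

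I expect the main obstacle to be exactly the point where the $\cmin$ argument diverges from the $\cmax$ one: for $\cmax$ the filler sizes are chosen large enough that two fillers overflow the makespan bound, but \tf{\P}{}{\cmin} has no upper load constraint, so nothing stops fillers from piling up on one machine as far as the objective is concerned. The fix is to make the common target $\ell'$ larger than the \emph{total} mass of the original jobs, so that a machine without a filler cannot reach load $\ell'$; this is what forces the one-filler-per-machine structure by pigeonhole. The remaining bookkeeping — positivity of the filler sizes (handled by the preliminary feasibility check) and possible collisions between filler sizes or with original processing times (harmless, only lowering the job-type count) — is the kind of routine detail I would verify but do not anticipate as a genuine difficulty.
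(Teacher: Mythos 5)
Your proof is correct and follows essentially the same route as the paper: a filler job per uniform machine of size $\ell'-\ell s_i$, with the forcing argument that any machine without a filler has load at most $S:=\sum_j p_jn_j<\ell'$ and hence cannot meet the threshold, so by counting each machine carries exactly one filler. The only small deviation is the choice of $\ell'$: the paper takes $\ell'=S+\smax\ell+1$, which makes the filler sizes $\ell'-s_i\ell\geq S+1>0$ positive unconditionally, whereas you take the tighter $\ell'=S+1$ and guard positivity by the (necessary) feasibility pre-check $S\geq\ell\sum_i m_is_i$. Both yield the same encoding-length bound and the same conclusion; your variant trades a slightly smaller threshold for an extra polynomial-time check, which is harmless.
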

\begin{proof}
    $(2.\implies 1.)$ This direction is again trivial: An algorithm with running time $f(d,\tau)\enc{I}^{\Oh{1}}$ for \tf{\Q}{}{\cmin} can solve an instance of \tf{\P}{}{\cmin} in time $f(d,1)\enc{I}^{\Oh{1}}=f(d)\enc{I}^{\Oh{1}}$.

    $(1.\implies 2.)$ We give a reduction from \tf{\Q}{}{\cmin} to \tf{\P}{}{\cmin}. For an instance $I$ of \tf{\Q}{}{\cmin}, consisting of a processing time vector $p\in\N^d_{>0}$, job multiplicity vector $n\in\N^d_{>0}$, speed vector $s\in\N^\tau_{>0}$, machine multiplicity vector $m\in\N^\tau_{>0}$ and completion time lower bound $\ell$, we construct an instance $I'$ of \tf{\P}{}{\cmin} like this: Let $S:=\sum_{i=1}^dn_ip_i$ be the sum of all processing times in the given instance. We define the new completion time lower bound as $\ell':=S+\smax \ell+1$ and the new processing times as $p_1,\hdots,p_d,\ell'-s_1l,\hdots,\ell'-s_\tau \ell$. The multiplicities are $n_1,\hdots,n_d,m_1,\hdots,m_\tau$ and there are $m_1+\hdots+m_\tau$ (identical) machines.

    Suppose that the \tf{\Q}{}{\cmin} instance $I$ is feasible, i.e., there is a schedule $\sigma$ with completion time at least $\ell$ on each machine. We construct a schedule $\sigma'$ for the \tf{\P}{}{\cmin} instance $I'$ in the following way: For every machine $k$ of type $i\in[\tau]$ in $I$, take all the jobs that $\sigma$ schedules on $k$ and place them on any one of the machines in $I'$, together with one of the dummy jobs that has processing time $\ell'-s_i\ell$. Note that there are as many dummy jobs as there are machines in $I$, matching the number of machines of each type. We schedule the jobs from all machines this way, obtaining schedule $\sigma'$; now, since $\sigma$ is feasible, the jobs scheduled on some machine $k\in[m_1+\hdots+m_\tau]$ in $\sigma'$ have completion time at least $s_i\ell+\ell'-s_i\ell=\ell'$, if the original machine had type $i$. So in $\sigma'$, each machine has completion time at least $\ell'$ and hence $I'$ is feasible. 

    For the other direction, suppose that the constructed instance $I'$ is feasible, i.e., there is a schedule $\sigma'$ with completion time at least $\ell'=S+\smax \ell+1$ on each machine. The key observation is that there has to be at least one dummy job on each machine, as all non-dummy jobs together have processing time $S$, which is strictly smaller than $\ell'$ (even without taking the speeds into account). Since there are as many dummy jobs as machines, there has to be exactly one dummy job on each machine. Construct a schedule $\sigma$ for $I$ as follows: For each machine $k\in[m_1+\hdots+m_\tau]$, take all non-dummy jobs that are scheduled on this machine and place them on a machine of type $i\in[\tau]$, where $i$ is the type corresponding to the (singular) dummy job placed on machine $k$ in $\sigma'$. Doing this for each machine, we obtain a schedule $\sigma$ for all jobs. Now, consider some machine $k$ of type $i\in[\tau]$ and the jobs scheduled on it by $\sigma$. Their total processing time is at least $\ell'-(\ell'-s_i\ell)=s_i\ell$, which means that in $\sigma$, for every type $i\in[\tau]$, each machine of type $i$ has load at least $s_i\ell$, and hence the time needed to process these jobs is at least $\ell$. So $I$ is feasible.

    Now, consider the parameters of the constructed \tf{\P}{}{\cmin} instance. We have parameters
    \begin{itemize}
        \item $\pmax'=S+\smax \ell+1=\Oh{\n\pmax+\smax \ell}$,
        \item $d'=d+\tau$,
        \item $\ell'=S+\smax \ell+1=\Oh{\n\pmax+\smax \ell}$,
        \item $m'=\Oh{\tau m_{\max}}$ and
        \item $\nmax'=\max\{\nmax,m_{\max}\}$.
    \end{itemize}
    So the encoding length $\enc{I'}$ is bounded by:
    \begin{align*}
        \enc{I'}\leq&\Oh{\log(m')+d'(\log(\pmax')+\log(\nmax'))+\log(\ell')} \\
        =&\Oh{\log(\tau m_{\max})+(d+\tau)(\log(\n\pmax+\smax \ell)\\
        &+\log(\max\{\nmax,m_{\max}\}))\\
        &+\log(\n\pmax+\smax \ell)} \\
        =&\enc{I}^{\Oh{1}}
    \end{align*}
    Now, suppose there exists an algorithm for \tf{\P}{}{\cmin} with running time $f(d')\enc{I'}^{\Oh{1}}$. Given an instance $I$ of \tf{\Q}{}{\cmin}, we can construct an instance $I'$ of \tf{\P}{}{\cmin} using the above reduction, in time $\enc{I}^{\Oh{1}}$. We then apply the algorithm for \tf{\P}{}{\cmin} to $I'$ in time $f(d')\enc{I'}^{\Oh{1}}=f(d+\tau)\enc{I}^{\Oh{1}}$ and transform the solution into a solution for $I'$, in time $\enc{I}^{\Oh{1}}$. In total, this yields an algorithm for \tf{\Q}{}{\cmin} that runs in time $\enc{I}^{\Oh{1}}+f(d+\tau)\enc{I}^{\Oh{1}}+\enc{I}^{\Oh{1}}=f(d+\tau)\enc{I}^{\Oh{1}}$, which concludes the proof.
\end{proof}
Note that these reductions also yield algorithms for \tf{\Q}{}{\{\cmax,\cmin\}} by applying the reduction and then solving the resulting instance with an algorithm for \tf{\P}{}{\{\cmax,\cmin\}}.

For the $\envy$-objective, the above reductions do not seem to work directly. Having only the one dummy job from the $\cmin$-reduction is not enough, as we cannot upper bound the processing time of jobs coming from machines of a specific speed value $s_i$. One could fix this by adding another dummy job for each machine (so that there are two dummy jobs on each machine). Making sure that there are exactly two dummy jobs on each machine should be easy by setting the processing times and the thresholds accordingly, but we have to assure that there are always two dummy jobs \emph{of the same type $i$} on the same machine (essentially making it a machine of type $i$). This seems to be much more complicated.

\subsection{Lower Bounds for \texorpdfstring{\tf{\P}{}{\cmax}}{P||Cmax}}
Chen \etal~\cite{CJZ18} gave two reductions from 3-SAT to \tf{\P}{}{\cmax} to show lower bounds for approximation algorithms:
\begin{theorem}[First reduction by Chen \etal~\cite{CJZ18}]\label{thm:chenreductionone}
	For every $\delta>0$ with $\frac{1}{\delta}$ integer, \textsc{3-Sat}' (a slightly restricted variant of \textsc{3-Sat} that retains the same ETH-hardness as \textsc{3-Sat}) with $N$ variables can be reduced to \tf{\P}{}{\cmax} with:
	\begin{itemize}
		\item $d=\n=\Oh{\frac{N}{\delta}}=\Oh{N}$ jobs (each with its own processing time),
		\item $m=\Oh{\frac{N}{\delta}}=\Oh{N}$ machines,
		\item makespan threshold $u=\Oh{2^{\frac{3}{\delta}}N^{1+\delta}}=\Oh{N^{1+\delta}}$,
		\item largest processing time $\pmax=\Oh{N^{1+\delta}}$ and
		\item encoding length
  \begin{align*}
      \enc{I}&=\Oh{d\log(\pmax)+\log(m)}+\log(u)\\
      &=O\left(\frac{N}{\delta}\log(N^{1+\delta})+\log\left(\frac{N}{\delta}\right)\right)+\log(N^{1+\delta})\\
      &=\Oh{N\log(N)},
  \end{align*}
	\end{itemize}
	assuming that $\delta$ is set to some constant.
\end{theorem}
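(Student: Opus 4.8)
The plan is to recall the first gadget reduction of Chen, Jansen and Zhang~\cite{CJZ18} essentially verbatim, and then to re-derive the listed parameter bounds, with the only genuinely new step being the computation of the high-multiplicity encoding length $\enc{I}$, which is not tracked explicitly in~\cite{CJZ18}.

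First I would recall the construction. Starting from an instance of the ETH-equivalent restricted variant \textsc{3-Sat}' on $N$ variables (and, after the standard sparsification/normalization, $\Theta(N)$ clauses of bounded occurrence), Chen et al.\ build an instance of \tf{\P}{}{\cmax} whose processing times encode variable and clause gadgets as numbers written in a base of size $2^{\Theta(1/\delta)}$; bundling $\Theta(1/\delta)$ bits of the formula into a single ``digit'' is precisely what introduces the factor $\frac{1}{\delta}$ into the job and machine counts and the factor $2^{\Theta(1/\delta)}$ into the makespan threshold. A key structural point for us is that every job is given its own distinct processing time, so in the high-multiplicity encoding all multiplicities equal $1$ and the number of job \emph{types} $d$ coincides with the total number of jobs $\n$, both $\Theta(N/\delta)$. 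The number of machines is likewise $\Theta(N/\delta)$, the makespan threshold is $u=\Theta(2^{3/\delta}N^{1+\delta})$, and since the processing times used are $\Oh{u}$ we get $\pmax=\Oh{N^{1+\delta}}$ once $\delta$ is treated as a constant so that $2^{3/\delta}=\Oh{1}$. Feasibility equivalence between the \textsc{3-Sat}' instance and the constructed scheduling instance is inherited directly from~\cite{CJZ18}.

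The only computation I would carry out in detail is the encoding length. In the high-multiplicity encoding, an instance of \tf{\P}{}{\cmax} is given by the $d$ processing times, the $d$ multiplicities (here all equal to $1$), the number $m$ of machines, and the threshold $u$, so $\enc{I}=\Oh{d\log(\pmax)+d\log 1+\log(m)+\log(u)}=\Oh{d\log(\pmax)+\log(m)+\log(u)}$. Substituting $d=\Theta(N/\delta)$, $\pmax=\Oh{N^{1+\delta}}$, $m=\Theta(N/\delta)$, $u=\Oh{N^{1+\delta}}$ and fixing $\delta$ to a constant yields $\enc{I}=O\!\bigl(\tfrac{N}{\delta}\log(N^{1+\delta})+\log(\tfrac{N}{\delta})+\log(N^{1+\delta})\bigr)=\Oh{N\log N}$, as claimed.

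I expect the main obstacle to be purely bookkeeping rather than conceptual: one has to pull out of~\cite{CJZ18} exactly which quantities are $\Theta(\cdot)$ in $N$ and in $\tfrac{1}{\delta}$, and in particular verify that the hidden exponential dependence on $\tfrac{1}{\delta}$ appearing inside $u$ (and hence $\pmax$) is genuinely of the form $2^{\Theta(1/\delta)}$, since this is what allows us to absorb it into $\Oh{1}$ and conclude $\pmax=\Oh{N^{1+\delta}}$ and $\enc{I}=\Oh{N\log N}$ for constant $\delta$. Beyond that, the statement is essentially a citation of~\cite{CJZ18} augmented by the one-line encoding-length estimate above.
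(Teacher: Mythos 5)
Your proposal is correct and takes essentially the same approach as the paper: the paper itself treats this as a citation of Chen et al.\ with the parameter bounds from their first reduction read off directly, and the encoding-length estimate $\enc{I}=\Oh{d\log(\pmax)+\log(m)+\log(u)}=\Oh{N\log N}$ (for constant $\delta$) is exactly the inline computation shown in the theorem statement. The extra detail you supply about the base-$2^{\Theta(1/\delta)}$ gadget structure and why all multiplicities equal $1$ is consistent with, and helps justify, the bounds that the paper imports from~\cite{CJZ18}.
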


\begin{theorem}[Second reduction by Chen \etal~\cite{CJZ18}]\label{thm:chenreductiontwo}
	\textsc{3-Sat} with $N$ variables can be reduced to \tf{\P}{}{\cmax} with:
	\begin{itemize}
		\item An arbitrary number $m$ of machines (this can be freely chosen),
		\item $\n=\Oh{N+m}$ jobs,
		\item $d=\Oh{N}$ different processing times,
        \item $\nmax=\Oh{m}$ as the highest multiplicity,
		\item $\pmax=2^{O\left(\frac{N\log^2(m)}{m}\right)}$ and hence 
		\item encoding length:
  \begin{align*}
      \enc{I}&=\Oh{d\log(\pmax+\nmax)+\log(m)}\\
      &=O\left(N\log\left(2^{O\left(\frac{N\log^2(m)}{m}\right)}+m\right)+\log(m)\right)\\
      &=O\left(N\frac{N\log^2(m)}{m}\log(m)+\log(m)\right)\\
      &=O\left(\frac{N^2\log^3(m)}{m}\right)
  \end{align*}
	\end{itemize}
        Moreover, the constructed instances have total processing time at most $P\leq Nm^{O\left(\frac{N\log(m)}{m}\right)}$ if $m>1$. The makespan threshold $u$ can also be bounded by this value, hence adding $\log(u)$ to the encoding length does not change $\enc{I}$.
\end{theorem}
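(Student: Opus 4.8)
The statement is the second reduction of Chen \etal~\cite{CJZ18}; the plan is to recall its construction and then verify that the parameters behave as claimed once the produced scheduling instance is read in the high-multiplicity encoding. First I would apply the sparsification lemma to the input \textsc{3-Sat} formula, so that without loss of ETH-hardness it has $N$ variables and $\Oh{N}$ clauses. Fixing the (freely chosen) target $m$ of machines, the reduction partitions the variables into blocks of size $\Oh{N/m}$ and builds $m$ machines together with $\Oh{N}$ gadget jobs so that a schedule of makespan at most $u$ forces each machine to commit to a block-assignment and so that, collectively, these commitments satisfy every clause. The commitments and the clause bookkeeping are carried by a positional number system in a base $B=\mathrm{poly}(m)$, chosen larger than the number of gadget jobs that can meet on a single machine, so that the per-machine load equation may be read off digit by digit with no carries; the number of digit positions needed to separate the $m$ blocks and to account for the $\Oh{N}$ clauses is $\Oh{(N/m)\log m}$. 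Consequently every processing time and the threshold $u$ is at most $B^{\Oh{(N/m)\log m}}=2^{\Oh{(N/m)\log^2 m}}$, i.e.\ $\pmax=2^{\Oh{N\log^2(m)/m}}$, and the equivalence ``the formula is satisfiable $\iff$ the scheduling instance admits makespan at most $u$'' is exactly the argument of Chen \etal~\cite{CJZ18}.

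Next I would read off the high-multiplicity parameters. The construction uses only a constant number of gadget families per block, per clause, plus a bounded number of filler families whose sizes depend on the digit position and not on the block contents, so the number $d$ of distinct processing times is $\Oh{N}$ and the total number of jobs is $\n=\Oh{N+m}$. Since each gadget job occurs at most once per machine, the largest multiplicity is $\nmax=\Oh{m}$. Summing the $\Oh{N+m}$ jobs, each of size at most $2^{\Oh{(N/m)\log^2 m}}$, bounds the total processing time by $P\le N\,m^{\Oh{(N\log m)/m}}$ for $m>1$ (the case $m=1$ being trivial), and the makespan threshold $u$ is bounded by $P$ as well, so $\log(u)\le\log(P)$ does not change the order of $\enc{I}$.

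Finally, the encoding length is obtained by substituting $d=\Oh{N}$, $\pmax=2^{\Oh{N\log^2(m)/m}}$ and $\nmax=\Oh{m}$ into $\enc{I}=\Oh{d\log(\pmax+\nmax)+\log(m)}$, which yields $\enc{I}=\Oh{N^2\log^3(m)/m}$ after simplification. The step I expect to be the main obstacle is precisely this bookkeeping in the high-multiplicity model: the reduction of Chen \etal~\cite{CJZ18} is most naturally described in the low-multiplicity encoding, so one must check that no gadget hides more than $\Oh{N}$ distinct processing times or a multiplicity exceeding $\Oh{m}$, and --- most delicately --- that the exponent of $\pmax$ is genuinely $\Oh{(N/m)\log^2 m}$ rather than $\Oh{N}$, i.e.\ that the positional encoding really needs only $\Oh{(N/m)\log m}$ digits and not one per clause. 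Once those points are pinned down, the remaining estimates are routine.
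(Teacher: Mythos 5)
The paper does not prove this theorem: it is a restatement of the second ETH-reduction of Chen, Jansen and Zhang~\cite{CJZ18}, and the only material contributed by the present authors is the translation of that reduction's parameters into the high-multiplicity encoding, namely the four-line computation of $\enc{I}$ inside the theorem statement and the bound $P\le Nm^{\Oh{(N\log m)/m}}$. Your proposal, by contrast, tries to rebuild the Chen--Jansen--Zhang construction from scratch (sparsification, block partition of variables, positional encoding in base $\mathrm{poly}(m)$, digit count $\Oh{(N/m)\log m}$). That outline is plausible in broad strokes, but it is a conjectural reconstruction of a proof that is not actually in this paper, and it would have to be checked against~\cite{CJZ18} directly. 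In particular, partitioning $N$ variables into $m$ blocks of size $\Oh{N/m}$ is not meaningful across the theorem's full parameter range (the paper later chooses $m$ ``large enough'', possibly $m\gg N$), and the exact digit base, the number of filler-job families, and the clause-to-machine routing determine whether $d=\Oh{N}$ and $\nmax=\Oh{m}$ really hold; none of these can be read off your high-level sketch. The final arithmetic is consistent with the paper's own bookkeeping --- your substitution into $\enc{I}=\Oh{d\log(\pmax+\nmax)+\log m}$ gives the same $\Oh{N^2\log^3(m)/m}$ bound, and the paper's middle step tacitly uses $X+Y\le 2XY$ for $X,Y\ge 1$ to get a single product term, which you should be explicit about --- but the substance of the reduction lives in~\cite{CJZ18} and should simply be cited rather than re-derived from memory.
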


These reductions can be used to show the following lower bounds:
\exactlowerbound*
\begin{proof}
        Let $c\in\R$ be a constant such that $$\pmax^{\Oh{1}}=\left(2^{O\left(\frac{N\log^2(m)}{m}\right)}\right)^{\Oh{1}}=2^{c\frac{N\log^2(m)}{m}}.$$ Then for any constant $\delta>0$, we can set $m$ so large that $2^{c\frac{N\log^2(m)}{m}}\leq2^{\delta N}$. Hence, multiplying $\pmax^{\Oh{1}}$ with any of the terms below yields a better-than-ETH running time.
	
        As we set $m$ large enough, we also get $\frac{\log^3(m)}{m}=\Oh{1}$ and hence $\enc{I}=\Oh{N^2}$.
	
	1.) We have for some constant $c$ that
	\begin{align*}
		\enc{I}^{\Oh{d^{1-\eps}}}=N^{\Oh{N^{1-\eps}}}=N^{cN^{1-\eps}}=(2^{\log(N)})^{cN^{1-\eps}}=2^{\oh{N}}
	\end{align*}
	and such a running time would contradict the ETH.
	
	2.) We have for some constants $c_1,c_2,c_3$ that
	\begin{align*}
		\enc{I}^{\Oh{1}}d^{\Oh{d^{1-\eps}}}&=N^{\Oh{1}}\Oh{N}^{\Oh{N}^{1-\eps}}\\
  &=N^{c_1}(c_2N)^{(c_3N)^{1-\eps}}\\
  &=2^{c_1\log(N)}(2^{\log(c_2N)})^{(c_3N)^{1-\eps}}\\
		&=2^{c_1\log(N)}2^{\log(c_2N)(c_3N)^{1-\eps}}\\
		&=2^{\oh{N}}
	\end{align*}
	and such a running time would contradict the ETH.
	
	3.) We have
	\begin{align*}
		\enc{I}^{\oh{\frac{d}{\log(d)}}}=N^{\oh{\frac{N}{\log(N)}}}=(2^{\log(N)})^{\oh{\frac{N}{\log(N)}}}&=2^{\log(N)\oh{\frac{N}{\log(N)}}}\\
        &=2^{\oh{N}}
	\end{align*}
	and such a running time would contradict the ETH.

    4.) Similarly, we have
    \begin{align*}
        \pmax^{O(d^{1-\eps})}=(N^{1+\delta})^{O(N^{1-\eps})}=2^{\log(N)(1+\delta)O(N^{1-\eps})}=2^{o(N)}
    \end{align*}
    using the first reduction.
\end{proof}

With the first reduction (\autoref{thm:chenreductionone}), one can also show the following:
\begin{corollary}\label{cor:lb2exact}
	Let $\delta>0$. Then \tf{\P}{}{\cmax} cannot be solved in time $2^{\Oh{d+m+l+u+\pmax+\enc{I}}^{1-\delta}}$ unless the ETH fails. 
\end{corollary}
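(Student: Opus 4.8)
The plan is to instantiate the first reduction of Chen \etal (\autoref{thm:chenreductionone}) with its free parameter tuned to the given $\delta$, and then observe that the hypothetical algorithm would solve \textsc{3-Sat}' in subexponential time, contradicting the ETH. First I would dispose of the trivial case: for $\delta\geq 1$ the claimed bound $2^{\Oh{(d+m+l+u+\pmax+\enc{I})^{1-\delta}}}$ is at most $2^{\Oh{1}}$, so the statement is immediate; hence assume $\delta\in(0,1)$. Next I would fix a constant $\delta'>0$ with $1/\delta'$ integral and $(1+\delta')(1-\delta)<1$ — such a $\delta'$ exists since $(1+\delta')(1-\delta)=(1-\delta)+\delta'(1-\delta)<1$ exactly when $\delta'<\delta/(1-\delta)$, and integrality of $1/\delta'$ is arranged by shrinking $\delta'$ further. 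This $\delta'$ is the parameter fed into \autoref{thm:chenreductionone}.

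Then, starting from a \textsc{3-Sat}' instance on $N$ variables, the reduction yields in polynomial time an instance $I$ of \tf{\P}{}{\cmax} with $d,m=\Oh{N}$, with $u,\pmax=\Oh{N^{1+\delta'}}$, with lower bound $l=0$ (the $\cmax$-case), and with $\enc{I}=\Oh{N\log N}$, whence
\[
d+m+l+u+\pmax+\enc{I}=\Oh{N^{1+\delta'}}.
\]
Composing the reduction with a hypothetical $2^{\Oh{(d+m+l+u+\pmax+\enc{I})^{1-\delta}}}$-time algorithm for \tf{\P}{}{\cmax} would therefore decide \textsc{3-Sat}' in time
\[
\poly{N}+2^{\Oh{\left(N^{1+\delta'}\right)^{1-\delta}}}=2^{\Oh{N^{(1+\delta')(1-\delta)}}}=2^{\oh{N}},
\]
the last equality by the choice of $\delta'$. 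Since \textsc{3-Sat}' retains the ETH-hardness of \textsc{3-Sat}, this contradicts the ETH.

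The only delicate point is the order of quantifiers: $\delta$ is given first, and only afterwards do we choose the reduction's constant $\delta'$ depending on $\delta$, so that the ``$1-\delta$''-type saving in the running time and the ``$1+\delta'$''-type blow-up in the parameters compound to a genuine $N^{1-\Omega(1)}$ exponent. Beyond this bookkeeping, the argument is a direct substitution of the parameter bounds from \autoref{thm:chenreductionone} and is entirely analogous to the ETH lower bounds proved earlier for \tf{\P}{}{\cmax}; I expect no further obstacle.
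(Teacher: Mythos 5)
Your proof is correct and takes essentially the same approach as the paper: instantiate the first Chen \etal reduction (\autoref{thm:chenreductionone}) on a \textsc{3-Sat}$'$ instance, bound $d+m+l+u+\pmax+\enc{I}$ polynomially in $N$ with a small enough exponent, and conclude that the hypothetical algorithm would refute the ETH. The only difference is that you make explicit the choice of the reduction's parameter $\delta'$ (so that $(1+\delta')(1-\delta)<1$) and the trivial case $\delta\geq 1$, whereas the paper's proof leaves the tuning of $\delta'$ implicit and states the sum as $\Oh{N^{1+\delta'}\log N}$ rather than absorbing the logarithm into $N^{1+\delta'}$; both formulations yield $2^{\oh{N}}$.
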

\begin{proof}
    Note that in \autoref{thm:chenreductionone}, $$d+m+l+u+\pmax+\enc{I}=O(N^{1+\delta'}\log(N)),$$ where $\delta'$ is the constant from the reduction. So a running time
    $$2^{\Oh{d+m+l+u+\pmax+\enc{I}}^{1-\delta}}=2^{o(N)}$$ would contradict the ETH.
\end{proof}
Note that one can also replace the $2$ in the basis by any constant $>2$, in fact even by anything that is at most $N$ in the reduction.

\begin{corollary}\label{cor:lb2approx}
Let $\delta>0$. Then \tf{\P}{}{\cmax} cannot be approximated with additive error $\varepsilon \pmax$ in time $2^{\Oh{\frac{1}{\varepsilon}+d+m+l+u+\pmax+\enc{I}}^{1-\delta}}$ unless the ETH fails. 
\end{corollary}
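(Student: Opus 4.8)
The plan is to derive this as a short corollary of the exact lower bound \autoref{cor:lb2exact}: if we choose $\varepsilon$ small enough that the additive error $\varepsilon\pmax$ drops below $1$, then — because any schedule has an integral makespan — an additive $\varepsilon\pmax$-approximation is forced to output an \emph{optimal} schedule, so it solves \tf{\P}{}{\cmax} exactly. Feeding the instances produced by the first reduction of Chen \etal (\autoref{thm:chenreductionone}) into such an algorithm then reproduces the ETH contradiction from the proof of \autoref{cor:lb2exact}, with the extra term $\frac{1}{\varepsilon}$ being harmless because our choice will make it comparable to $\pmax$.

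Concretely, suppose for contradiction that there is an algorithm $\mathcal{A}$ which, given an instance $I$ of \tf{\P}{}{\cmax} together with a parameter $\varepsilon>0$, outputs a schedule of makespan at most $\opt+\varepsilon\pmax$ in time $2^{\Oh{\frac{1}{\varepsilon}+d+m+l+u+\pmax+\enc{I}}^{1-\delta}}$. Given a \textsc{3-Sat}' instance with $N$ variables, apply \autoref{thm:chenreductionone} with a constant $\delta'>0$ to be fixed later, obtaining an instance $I$ with $d,m=\Oh{N}$, $l=0$, $\pmax,u=\Oh{N^{1+\delta'}}$ and $\enc{I}=\Oh{N\log N}$. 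Now run $\mathcal{A}(I,\varepsilon)$ with $\varepsilon:=\frac{1}{\pmax+1}$, which is legitimate because $\varepsilon$ is part of the algorithm's input and may depend on $I$. Then $\varepsilon\pmax<1$, so the returned schedule has makespan in $[\opt,\opt+\varepsilon\pmax)\subseteq[\opt,\opt+1)$; since that makespan is an integer $\geq\opt$, it equals $\opt$. Hence $\mathcal{A}$ solves \tf{\P}{}{\cmax} exactly on $I$ and thereby decides the \textsc{3-Sat}' instance.

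For the running time, the choice $\varepsilon=\frac{1}{\pmax+1}$ gives $\frac{1}{\varepsilon}=\pmax+1=\Oh{N^{1+\delta'}}$, and every term among $d,m,l,u,\pmax,\enc{I}$ is $\Oh{N^{1+\delta'}\log N}$, so
\[\frac{1}{\varepsilon}+d+m+l+u+\pmax+\enc{I}=\Oh{N^{1+\delta'}\log N}.\]
Choosing $\delta'$ small enough that $(1+\delta')(1-\delta)<1$ (possible since $\delta>0$) yields $\left(N^{1+\delta'}\log N\right)^{1-\delta}=\oh{N}$, and therefore the total time for "apply the reduction, then run $\mathcal{A}$" is $\enc{I}^{\Oh{1}}+2^{\oh{N}}=2^{\oh{N}}$, contradicting the ETH. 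As in \autoref{cor:lb2exact}, the base $2$ may be replaced by any larger constant (indeed by anything at most $N$ in the reduction).

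I do not expect a genuine obstacle here: the statement is essentially a one-line consequence of \autoref{cor:lb2exact}. The only points requiring minor care are the elementary observation that additive error strictly below $1$ against an integral optimum gives an exact algorithm, and the bookkeeping of the reduction constant $\delta'$ against $\delta$ (and the harmless $\log N$ factor) when verifying $\left(N^{1+\delta'}\log N\right)^{1-\delta}=\oh{N}$.
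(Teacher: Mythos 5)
Your proof is correct and takes essentially the same approach as the paper: both set $\varepsilon=\frac{1}{\pmax+1}$ so the additive error drops below $1$, observe that this forces an exact solution against the integral optimum, note that $\frac{1}{\varepsilon}=\pmax+1$ is absorbed into the existing parameter sum, and then invoke \autoref{cor:lb2exact} (equivalently, the Chen \etal reduction \autoref{thm:chenreductionone}) to contradict ETH. You spell out the parameter bookkeeping in more detail, but the argument is the same.
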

\begin{proof}
	Setting $\varepsilon:=\frac{1}{\pmax+1}$ yields $OPT+\varepsilon \pmax=OPT+\frac{1}{\pmax+1} \pmax<OPT+1$ and hence for this choice of $\varepsilon$, such an additive approximation yields an optimal solution. So with $\varepsilon:=\frac{1}{\pmax+1}$, a running time $2^{\Oh{\frac{1}{\varepsilon}+d+m+u+\pmax+\enc{I}}^{1-\delta}}$ would give an exact algorithm with running time $2^{\Oh{d+m+l+u+\pmax+\enc{I}}^{1-\delta}}$, contradicting the ETH via \autoref{cor:lb2exact}.
\end{proof}

Note that the running time of the additive approximation algorithm is $\additiveapproxrt$. So the following lower bound shows that the dependency on $\eps$ and $m$ is almost optimal:
\thmlowerboundapprox*
\begin{proof}
    Again, setting $\varepsilon:=\frac{1}{\pmax+1}$ yields $OPT+\varepsilon \pmax=OPT+\frac{1}{\pmax+1} \pmax<OPT+1$ and hence for this choice of $\varepsilon$, such an additive approximation yields an optimal solution. So with $\varepsilon:=\frac{1}{\pmax+1}$, an algorithm with running time $(m\pmax)^{\Oh{(\frac{1}{\varepsilon})^{1-\delta}}}$ for the approximation problem would give an exact algorithm for \textsc{3-Sat} with running time $2^{\Oh{\log(NN^{1+\delta'})(N^{1+\delta'})^{1-\delta}}}=2^{\oh{N}}$ via \autoref{thm:chenreductionone} (with $\delta'$ being the constant from the reduction), contradicting the ETH.
\end{proof}

Similarly, we can also show a lower bound that has the form of an EPTAS running time, i.e., $f(\frac{1}{\eps})\enc{I}^{O(1)}$:
\begin{theorem}\label{thm:eptaslb}
    Let $\delta>0$. Then \tf{\P}{}{\cmax} cannot be approximated with additive error at most $\eps \pmax$ in time $\left(\frac{1}{\eps}\right)^{O(\frac{1}{\eps})^{1-\delta}}\enc{I}^{O(1)}$, unless the ETH fails.
\end{theorem}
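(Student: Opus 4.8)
The plan is to mimic the proofs of \autoref{thm:lowerboundapprox} and \autoref{cor:lb2approx}: choose the error parameter $\eps$ so tiny that an additive $\eps\pmax$-approximation is forced to be exact, and then feed a hypothetical fast approximation algorithm into one of the Chen \etal\ reductions to obtain a sub-exponential algorithm for \textsc{3-Sat}, contradicting the ETH. I would use the first reduction (\autoref{thm:chenreductionone}), since it keeps both $\pmax$ and $\enc{I}$ polynomial in the number $N$ of variables (and each job has its own processing time, so there is no high-multiplicity subtlety), which makes the estimates cleanest.

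First I would assume towards a contradiction that for the given $\delta>0$ there is an algorithm that, on every instance, computes a schedule with additive error at most $\eps\pmax$ in time $\left(\tfrac1\eps\right)^{O((1/\eps)^{1-\delta})}\enc{I}^{O(1)}$. Next I would fix a constant $\delta'>0$ with $1/\delta'$ integral and $\delta'$ small enough that $(1+\delta')(1-\delta)<1$; for instance any $\delta'<\tfrac{\delta}{2(1-\delta)}$ works when $\delta<1$ (and when $\delta\ge1$ the claimed running time is polynomial, so an exact algorithm already contradicts the ETH). Applying \autoref{thm:chenreductionone} with this $\delta'$ turns a \textsc{3-Sat}$'$ instance on $N$ variables, in polynomial time, into a \tf{\P}{}{\cmax} instance $I$ with $\pmax=O(N^{1+\delta'})$ and $\enc{I}=O(N\log N)$.

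Now set $\eps:=\tfrac1{\pmax+1}$. Since the optimal makespan is an integer and $\opt+\eps\pmax=\opt+\tfrac{\pmax}{\pmax+1}<\opt+1$, any schedule with additive error at most $\eps\pmax$ has makespan exactly $\opt$, so running the assumed algorithm on $I$ decides the \textsc{3-Sat}$'$ instance. Its running time is
\[
(\pmax+1)^{O((\pmax+1)^{1-\delta})}\,\enc{I}^{O(1)}
= 2^{O(\log N)\cdot O\!\left(N^{(1+\delta')(1-\delta)}\right)}\cdot 2^{O(\log N)}
= 2^{\oh{N}},
\]
where the last step uses $(1+\delta')(1-\delta)<1$, so that $\log(N)\cdot N^{(1+\delta')(1-\delta)}=\oh{N}$, together with $\enc{I}^{O(1)}=2^{O(\log N)}$. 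A $2^{\oh{N}}$-time algorithm for \textsc{3-Sat}$'$, which by \autoref{thm:chenreductionone} retains the ETH-hardness of \textsc{3-Sat}, contradicts the ETH.

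The main obstacle is just the bookkeeping in the last display: one has to pin down $\delta'$ relative to $\delta$ so that the exponent $(1+\delta')(1-\delta)$ drops strictly below $1$, and check that the polynomial factor $\enc{I}^{O(1)}$ and the hidden constants in the $O(\cdot)$'s do not spoil the $2^{\oh{N}}$ bound — both routine once $\delta'$ is fixed. One could alternatively run the argument through the second reduction (\autoref{thm:chenreductiontwo}), choosing the number of machines large to push $\pmax$ down, but the resulting estimates are less transparent, so I would stick with the first reduction.
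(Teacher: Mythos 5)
Your proof is correct and follows essentially the same route as the paper: set $\eps:=\frac{1}{\pmax+1}$ so that an additive $\eps\pmax$-approximation becomes exact, plug the hypothetical algorithm into the first Chen \etal\ reduction (\autoref{thm:chenreductionone}), and conclude a $2^{\oh{N}}$-time \textsc{3-Sat} algorithm, contradicting the ETH. You are in fact a little more careful than the paper's writeup, which just refers to ``the constant $\delta'$ from the reduction'' without spelling out that $\delta'$ must be chosen small enough (depending on $\delta$) so that $(1+\delta')(1-\delta)<1$ — that choice is what makes the exponent genuinely sublinear, and your explicit condition $\delta'<\frac{\delta}{2(1-\delta)}$ closes this small gap.
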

\begin{proof}
    An algorithm for \tf{\P}{}{\cmax} with running time $\left(\frac{1}{\eps}\right)^{O(\frac{1}{\eps})^{1-\delta}}\enc{I}^{O(1)}$ and additive error $\eps\pmax$, where $\eps:=\frac{1}{\pmax+1}$ would yield an exact algorithm for \textsc{3-Sat} with running time:
    \begin{align*}
        \left(\frac{1}{\eps}\right)^{O(\frac{1}{\eps})^{1-\delta}}\enc{I}^{O(1)} &=\pmax^{O(\pmax)^{1-\delta}}\enc{I}^{O(1)} \\
        &=(N^{1+\delta'})^{O(N^{1+\delta'})^{1-\delta}}(N\log(N))^{O(1)} \\
        &=(N^{1+\delta'})^{O(N^{1+\delta'})^{1-\delta}}(N\log(N))^{O(1)} \\
        &=2^{o(N)}
    \end{align*}
    Again, $\delta'$ is the constant from \autoref{thm:chenreductionone}.
\end{proof}

Koutecký and Zink showed the following:\footnote{The algorithm even works for objectives $\sumwc$ and $\elltwo$ (minimizing the $\ell_2$-norm of the machine load vector).}
\begin{theorem}[Koutecký \& Zink~\cite{KZ20}]
    \tf{\{\unrel,\Q\}}{}{\{\cmax\}} can be solved in time $m\n^{\Oh{d}}$ via dynamic programming.
\end{theorem}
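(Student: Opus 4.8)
The plan is to solve the more general problem \tf{\unrel}{}{\cmax} by a machine-by-machine dynamic program over vectors of job multiplicities; the uniform case \tf{\Q}{}{\cmax} is the special case in which a machine of type $i$ running a configuration $c$ incurs load $\tfrac{1}{s_i}\,p^{T}c$ (if one insists on integers, scale the whole instance and the threshold by $\operatorname{lcm}(s_1,\dots,s_\tau)$), and \tf{\P}{}{\cmax} is a further special case. In the high-multiplicity encoding there are $d$ job types, $n_i$ jobs of type $i$, $\n$ jobs in total, and machines $M_1,\dots,M_m$ listed by their types (since an empty machine never increases the makespan, one may assume $m\le\n$, but this is not needed for the time bound). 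The only quantity that will matter is the number of \emph{configurations}, i.e.\ of vectors $c\in\Nwithzero^d$ with $c\le n$ componentwise: there are $\prod_{i=1}^d(n_i+1)\le(\n+1)^d=\n^{\Oh{d}}$ of them, and the load of $c$ on $M_k$ is the inner product $(p^{(k)})^{T}c$ of $d$ input numbers, computable with $\enc{I}^{\Oh{1}}$ bit operations.

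Next I would set up the table. For $k\in\{0,1,\dots,m\}$ and $v\in\Nwithzero^d$ with $v\le n$, let $f(k,v)$ be the minimum, over all assignments of a multiset of jobs with type-count vector $v$ to the machines $M_1,\dots,M_k$, of the largest load among $M_1,\dots,M_k$; set $f(0,\mathbf{0})=0$ and $f(0,v)=+\infty$ for $v\neq\mathbf{0}$. Splitting off the configuration $c$ that is placed on $M_k$ yields
\[
f(k,v)=\min_{\mathbf{0}\le c\le v}\max\bigl\{\,(p^{(k)})^{T}c,\ f(k-1,\,v-c)\,\bigr\},
\]
and a short induction on $k$ shows this is correct, with $\OPT=f(m,n)$. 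In particular the instance admits a schedule of makespan at most a given threshold $u$ if and only if $f(m,n)\le u$, so the decision and optimization versions have the same cost and no separate binary search is required.

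Finally I would bound the running time. The table has $m+1$ values of $k$ and at most $\prod_{i=1}^d(n_i+1)\le(\n+1)^d$ values of $v$, hence $m\cdot\n^{\Oh{d}}$ entries; evaluating one entry enumerates the at most $\prod_{i=1}^d(v_i+1)\le(\n+1)^d$ vectors $c$ with $\mathbf{0}\le c\le v$, performing one table lookup, one load evaluation and one comparison for each, i.e.\ $\enc{I}^{\Oh{1}}$ work per $c$. This amounts to $m\cdot\n^{\Oh{d}}\cdot\n^{\Oh{d}}\cdot\enc{I}^{\Oh{1}}=m\,\n^{\Oh{d}}\enc{I}^{\Oh{1}}$, matching the claimed $m\,\n^{\Oh{d}}$ up to the usual polynomial factor in the encoding length. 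I do not expect a real obstacle: the only point requiring care is to keep everything in terms of count vectors rather than individually listed jobs, and the single place where smallness of $d$ is exploited is the bound $\n^{\Oh{d}}$ on the number of configurations per machine.
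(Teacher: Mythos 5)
The paper does not prove this statement; it cites it as a known result of Koutecký and Zink~\cite{KZ20}, so there is no in-paper proof to compare against. Your dynamic program is the natural reconstruction and it is correct: the state $f(k,v)$ is well-defined, the recurrence $f(k,v)=\min_{\mathbf{0}\le c\le v}\max\{(p^{(k)})^{T}c,\ f(k-1,v-c)\}$ follows by conditioning on the configuration placed on machine $M_k$, and the base case and the reading-off of $\OPT=f(m,n)$ are right. The state and transition counts are each bounded by $\prod_i(n_i+1)\le(\n+1)^d$, giving $m\,\n^{\Oh{d}}\cdot\n^{\Oh{d}}\cdot\enc{I}^{\Oh{1}}=m\,\n^{\Oh{d}}\enc{I}^{\Oh{1}}$, which matches the claim up to the polynomial factor that such statements customarily suppress. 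Your remarks that the uniform case is the special case $p^{(k)}_j=p_j/s_{t(k)}$ (with optional scaling by the lcm of the speeds) and that no binary search is needed because the DP returns the optimum directly are both accurate. This is almost certainly the same argument Koutecký and Zink had in mind; no gap.
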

With either reduction from \cite{CJZ18}, one can show that this running time is essentially tight, even for identical machines:
\begin{restatable}{corollary}{kzlowerbound}
    \tf{\P}{}{\cmax} cannot be solved in time $m\n^{\Oh{d^{1-\eps}}}$ for any $\eps>0$, unless the ETH fails.
\end{restatable}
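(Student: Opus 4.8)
The plan is to use one of the two reductions from Chen \etal~\cite{CJZ18}, most naturally the second reduction (\autoref{thm:chenreductiontwo}), and to argue that a hypothetical algorithm running in time $m\n^{\Oh{d^{1-\eps}}}$ would collapse to a subexponential-time algorithm for \textsc{3-Sat}, contradicting the ETH. First I would recall the parameters produced by \autoref{thm:chenreductiontwo} applied to a \textsc{3-Sat} instance with $N$ variables: we have $d=\Oh{N}$ distinct processing times, $\n=\Oh{N+m}$ jobs, $\nmax=\Oh{m}$, and the number of machines $m$ can be chosen freely. The key idea is to pick $m$ large enough (but still only polynomial, or at worst quasi-polynomial, in $N$) so that $\pmax=2^{\Oh{N\log^2(m)/m}}$ becomes tiny — in fact we only need $\n=\Oh{N+m}=m^{\Oh{1}}$, which is automatic, and $\log(m)=\Oh{\log N}$, i.e.\ $m=N^{\Oh{1}}$. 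Then $\n\leq m^{\Oh{1}}=N^{\Oh{1}}$ and $\log(\n)=\Oh{\log N}$.

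With $d=\Oh{N}$ and $\log\n=\Oh{\log N}$ in hand, I would substitute into the assumed running time:
\begin{align*}
    m\,\n^{\Oh{d^{1-\eps}}} &= N^{\Oh{1}}\cdot\left(N^{\Oh{1}}\right)^{\Oh{N^{1-\eps}}} \\
    &= N^{\Oh{1}}\cdot 2^{\Oh{\log(N)\,N^{1-\eps}}} \\
    &= 2^{\oh{N}},
\end{align*}
since $\log(N)\,N^{1-\eps}=\oh{N}$ for every fixed $\eps>0$. Because the reduction runs in time polynomial in the size of the produced instance (hence polynomial in $N$ once $m=N^{\Oh{1}}$), composing the reduction with such an algorithm would decide \textsc{3-Sat} on $N$ variables in time $2^{\oh{N}}$, violating the ETH. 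This yields the claimed lower bound.

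I expect the only subtle point — and the main thing to get right — is the choice of $m$: it must be large enough to push $\pmax$ down (so that $\pmax$ does not secretly smuggle an exponential factor into $\n$ or the instance size) while remaining small enough, namely $N^{\Oh{1}}$, that $\log\n=\Oh{\log N}$ and the reduction stays polynomial-time. The bound $\nmax=\Oh{m}$ and $\n=\Oh{N+m}$ from \autoref{thm:chenreductiontwo} make this bookkeeping routine, but one should double-check that with $m=N^{\Oh{1}}$ the exponent $N\log^2(m)/m$ in $\pmax$ is indeed $\oh{1}$ (so $\pmax=N^{\oh{1}}$, certainly harmless), and that $\log(u)$ does not change $\enc{I}$ asymptotically — which \autoref{thm:chenreductiontwo} already guarantees via the bound on the total processing time $P$. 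An entirely analogous argument can be run from the first reduction (\autoref{thm:chenreductionone}), where $d=\n=\Oh{N}$ directly, giving $\n^{\Oh{d^{1-\eps}}}=N^{\Oh{N^{1-\eps}}}=2^{\oh{N}}$ and $m=\Oh{N}$, so the $m$-factor is negligible; I would mention this as the fallback in case one prefers to avoid tuning $m$.
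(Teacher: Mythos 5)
Your proposal is correct and follows essentially the paper's approach: a reduction from Chen \etal, substitution of the parameters $d=\Oh{N}$, $\n=N^{\Oh{1}}$, $m=N^{\Oh{1}}$, and the observation that $m\n^{\Oh{d^{1-\eps}}}=2^{\oh{N}}$ would contradict the ETH, with the first reduction noted as the simpler fallback. The one thing worth flagging is that your concern about $\pmax$ is a red herring here: the hypothetical running time $m\n^{\Oh{d^{1-\eps}}}$ does not involve $\pmax$, and even with $m$ set to a constant (as the paper does, giving $\pmax=2^{\Oh{N}}$) the binary encoding length $\enc{I}=\Oh{d\log\pmax}=\Oh{N^2}$ remains polynomial in $N$, so the reduction is still polynomial-time and the argument goes through without any tuning of $m$.
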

\begin{proof}
    If we use the first reduction (\autoref{thm:chenreductionone}), $m=\Oh{N}$ and $\n=d=\Oh{N}$. This yields
    \[m\n^{\Oh{d^{1-\eps}}}=NN^{\Oh{N^{1-\eps}}}=2^{\oh{N}},\]
    which contradicts the ETH. Similarly, with the second reduction (\autoref{thm:chenreductiontwo}), we have some number of machines $m$ that we can freely choose and set to a constant, $\n=\Oh{N+m}=\Oh{N}$ jobs and $d=\Oh{N}$ different processing times. This yields
    \[m\n^{\Oh{d^{1-\eps}}}=N^{\Oh{N^{1-\eps}}}=2^{\oh{N}},\]
    which also contradicts the ETH.
\end{proof}

\end{document}